\newtheorem{lemma}{Lemma}
\newtheorem{lemma*}[lemma]{Lemma*}
\numberwithin{lemma}{section}
\newtheorem{theorem}[lemma]{Theorem}
\newtheorem{theorem*}[lemma]{Theorem*}
\newtheorem{corollary}[lemma]{Corollary}
\newtheorem{definition}[lemma]{Definition}
\newtheorem{observation}[lemma]{Observation}
\newtheorem{fact}[lemma]{Fact}
\newtheorem{claim}[theorem]{Claim}
\newenvironment{claimproof}{\begin{proof}}{\end{proof}}
\newcommand{\tw}{\mathbf{tw}}
\newcommand{\tor}{\mathsf{tor}}
\title{Robust Contraction Decomposition\\for Minor-Free Graphs and its Applications}
\author[1]{Sayan Bandyapadhyay}
\author[2]{William Lochet}
\author[3]{Daniel Lokshtanov}
\author[4]{D{\'{a}}niel Marx}
\author[5]{Pranabendu Misra}
\author[6]{Daniel Neuen}
\author[7]{Saket Saurabh}
\author[8]{Prafullkumar Tale}
\author[9]{Jie Xue}
\affil[1]{Portland State University, USA}
\affil[2]{LIRMM, Université de Montpellier, CNRS, France}
\affil[3]{University of California, USA}
\affil[4]{CISPA Helmholtz Center for Information Security, Germany}
\affil[5]{Chennai Mathematical Institute, India}
\affil[6]{Max Planck Institute for Informatics, Germany}
\affil[7]{Institute of Mathematical Sciences, India}
\affil[8]{Indian Institute of Science Education and Research Bhopal, India}
\affil[9]{New York University Shanghai, China}
\begin{document}

\maketitle

\begin{abstract}
 We prove a robust contraction decomposition theorem for $H$-minor-free graphs, which states that given an $H$-minor-free graph $G$ and an integer $p$, one can partition in polynomial time the vertices of $G$ into $p$ sets $Z_1,\dots,Z_p$ such that $\tw(G/(Z_i \setminus Z')) = O(p + |Z'|)$ for all $i \in [p]$ and $Z' \subseteq Z_i$.
 Here, $\tw(\cdot)$ denotes the treewidth of a graph and $G/(Z_i \setminus Z')$ denotes the graph obtained from $G$ by contracting all edges with both endpoints in $Z_i \setminus Z'$.
    
 Our result generalizes earlier results by Klein~[SICOMP 2008] and Demaine~et~al.~[STOC 2011] based on partitioning $E(G)$, and some recent theorems for planar graphs by Marx et al.\ [SODA 2022], for bounded-genus graphs (more generally, almost-embeddable graphs) by Bandyapadhyay et al.\ [SODA 2022], and for unit-disk graphs by Bandyapadhyay et al.\ [SoCG 2022].

 The robust contraction decomposition theorem directly results in parameterized algorithms with running time $2^{\widetilde{O}(\sqrt{k})} \cdot n^{O(1)}$ or $n^{O(\sqrt{k})}$ for every vertex/edge deletion problems on $H$-minor-free graphs that can be formulated as \textsc{Permutation CSP Deletion} or \textsc{2-Conn Permutation CSP Deletion}.
 Consequently, we obtain the first subexponential-time parameterized algorithms for \textsc{Subset Feedback Vertex Set}, \textsc{Subset Odd Cycle Transversal}, \textsc{Subset Group Feedback Vertex Set}, \textsc{2-Conn Component Order Connectivity} on $H$-minor-free graphs.
 For other problems which already have subexponential-time parameterized algorithms on $H$-minor-free graphs (e.g., \textsc{Odd Cycle Transversal}, \textsc{Vertex Multiway Cut}, \textsc{Vertex Multicut}, etc.), our theorem gives much simpler algorithms of the same running time.
\end{abstract}

\section{Introduction}

Baker's layering technique is a fundamental tool for a certain type of algorithms on planar graphs.
It was first used implicitly by Baker~\cite{Baker94} and can be formalized as the following statement: given a planar graph $G$ and an integer $p$, we can find in polynomial-time a partitioning $Z_1,\dots,Z_p$ of the vertex set of $G$ such that $G-Z_i$ has treewidth $O(p)$ for every $i \in [p]$.
This decomposition can be used for problems where we can argue that there exist an $i \in [p]$ such that $Z_i$ is irrelevant to the solution (or has negligible contribution to it) and hence $Z_i$ can be deleted from $G$.
Then we are left with an instance $G-Z_i$ having treewidth $O(p)$ and techniques for bounded-treewidth graphs can be used.
This approach has been used in the design of polynomial-time approximation schemes (PTAS) \cite{Baker94,DemaineHK05,KhannaM96} and parameterized algorithms \cite{BuiP92,DornFLRS13,Eppstein99,FominLMPPS16,Grohe03,Tazari12} for planar graphs.
Moreover, this decomposition algorithm can be further generalized to $H$-minor free graphs \cite{DemaineHK05}.

\paragraph{Contraction Decomposition Theorems.}
There are many natural problems where a set $Z_i$ cannot be removed even if it is irrelevant to the solution: most notably, problems involving connectivity typically have this property.
To handle such problems, Klein~\cite{Klein08} introduced a dual version of Baker's layering technique, which is based on contractions of edges.
This \emph{(Edge) Contraction Decomposition Theorem} was later generalized to $H$-minor free graphs by Demaine, Hajiaghayi, and Kawarabayashi~\cite{DemaineHK11}.

\begin{theorem}[Demaine, Hajiaghayi, and Kawarabayashi~\cite{DemaineHK11}]
 \label{thm-edge-contraction}
 Let $H$ be a fixed graph.
 Given an $H$-minor-free graph $G$ and an integer $p$, one can compute in polynomial time a partition of $E(G)$ into $p$ sets $Z_1,\dots,Z_p$ such that $\tw(G/Z_i) = O(p)$ for all $i \in [p]$, where the constant hidden in $O(\cdot)$ depends on $H$.
\end{theorem}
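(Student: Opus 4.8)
The plan is to peel the problem apart using the Robertson--Seymour structure theorem and reduce, in stages, to a clean layering argument on surfaces. At the top level, invoke the (polynomial-time constructible) structure theorem: an $H$-minor-free graph $G$ admits a tree-decomposition whose torsos are $h$-almost-embeddable for some $h = h(H)$, with all adhesion sets of size at most $h$. I would first establish the theorem for graphs embedded on a surface of genus at most $g$, then lift it to almost-embeddable graphs (adding apices and vortices), and finally recombine the pieces across the clique-sum tree-decomposition. Throughout, each ``real'' edge of $G$ lives in a unique torso, so the edge partition of $G$ will be built by partitioning the edges inside each torso and making the choices globally coherent.

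For the surface case, fix a BFS layering $L_0, L_1, \dots$ of $G$ from an arbitrary root; every edge lies within one layer or between two consecutive layers. Orient each edge toward the deeper layer and assign the edge $uv$ with $u \in L_a$, $v \in L_b$, $a \le b$, to class $a \bmod p$; this is a genuine partition $E(G) = E_0 \cup \dots \cup E_{p-1}$. Contracting $E_i$ merges each layer $L_j$ with $j \equiv i \pmod p$ into its successor $L_{j+1}$ (and within itself), so between two consecutive ``merge zones'' only about $p$ consecutive layers survive intact. Using the standard fact that the subgraph induced by any window of $O(p)$ consecutive BFS layers of a genus-$g$ graph has treewidth $O(p+g)$ --- together with the fact that contractions never increase treewidth --- one assembles a path-like decomposition of $G/E_i$ of width $O(p+g)$, which for fixed $H$ is $O(p)$.

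To pass to $h$-almost-embeddable torsos, I would handle the bounded number of apex vertices by simply placing all of them into every bag of the decomposition produced above (cost $+O(h)$), and handle each vortex using its bounded-width path decomposition along the face it is glued to, threaded into the BFS layering of the embedded part so that collapsing a layer class interacts with only $O(h)$ vortex bags at a time (cost $+O(h)$); vortex edges are distributed into the classes along with the rest. This keeps $\tw(G/E_i) = O(p + h) = O(p)$. Finally, to reassemble along the clique-sum tree-decomposition, take the per-torso partitions of the real edges (virtual/adhesion edges carry no real edge of $G$, so they can be ignored, while their at most $h$ endpoints are simply forced into the relevant bags). Since adhesion sets have size at most $h$, the treewidth of a clique-sum is the maximum of the treewidths of the parts plus $O(h)$ provided each adhesion image sits in a common bag --- which we can always guarantee by adding those $\le h$ vertices to the bags --- so $\tw(G/E_i) \le \max_t \tw(G_t/E_i) + O(h) = O(p)$.

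The main obstacle is the vortex step: one must arrange the bounded-width path decomposition of each vortex to be compatible with the BFS layering of the embedded skeleton along their shared face, so that contracting the residue class $E_i$ in the embedded part does not smear a vortex across many surviving layers; getting this alignment right --- and checking that contraction does not create long-range adjacencies that defeat the window argument --- is the technical heart of the proof. The clique-sum recombination, by contrast, is essentially bookkeeping once one uses the behaviour of treewidth under clique-sums, and the surface layering step is a direct adaptation of the ideas of Baker and Klein.
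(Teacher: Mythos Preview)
The paper does not prove Theorem~\ref{thm-edge-contraction}; it is stated as a known result attributed to Demaine, Hajiaghayi, and Kawarabayashi~\cite{DemaineHK11} and used only as background to motivate the paper's main (vertex-based) theorem. There is therefore no ``paper's own proof'' of this statement to compare against.

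That said, your outline is the right shape and is broadly the strategy of~\cite{DemaineHK11}: decompose via Robertson--Seymour into almost-embeddable torsos, do a layering argument on the embedded part, absorb apices and vortices at bounded cost, and glue along small adhesions. Two places in your sketch would need real work. First, the surface step is not quite ``windows of $p$ BFS layers have treewidth $O(p+g)$''; you must argue about the \emph{contracted} graph $G/E_i$, and in particular show that after contraction the relevant vertex-face (radial) diameter is $O(p)$, which is where Eppstein's diameter--treewidth bound actually applies. The naive BFS argument you describe can be patched, but the contraction does create new adjacencies between formerly distant parts of the same layer, so the ``window'' picture requires justification. Second, the vortex alignment is indeed the delicate point you identify: one needs the path decomposition of each vortex to respect the layering so that contracting $E_i$ does not drag an unbounded portion of a vortex into a single bag; this is handled in~\cite{DemaineHK11} but is not automatic.
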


Contraction decomposition theorems of this form can be used as a building block in designing approximation schemes for, e.g., the {\sc Traveling Salesman Problem (TSP)} and {\sc Subset TSP}~\cite{BuiP92,DemaineHK11,DemaineHM10,Klein06,Klein08}, and parameterized algorithms for {\sc Bisection}, {\sc $k$-Way Cut}, {\sc Odd Cycle Transversal}, {\sc Subset Feedback Vertex Set} and many other problems \cite{BandyapadhyayLLSX24,KawarabayashiT11,MarxMNT22}.

\paragraph{Subexponential Parameterized Algorithms (Edge Problems).}

To illustrate this, let us briefly discuss how Theorem~\ref{thm-edge-contraction} can be used to obtain a subexponential-time parameterized algorithm for {\sc Edge Multiway Cut} in $H$-minor-free graphs.
In {\sc Edge Multiway Cut}, we are given a graph $G$, a set $T\subseteq V(G)$ of terminals, and an integer $k$, and the task is to find a set $S\subseteq E(G)$ of at most $k$ edges such that every component of $G \setminus S$ contains at most one terminal.

\begin{enumerate}[label = (\arabic*)]
 \item A result of Wahlström~\cite{Wahlstrom20} gives a randomized quasipolynomial kernel for the problem, which allows us to assume that $|V(G)|=k^{\textup{polylog}(k)}$.
 \item Given the decomposition $Z_1,\dots,Z_p$ of Theorem~\ref{thm-edge-contraction} where $p \coloneqq \lceil\sqrt{k}\rceil$, there is an $i \in [p]$ such that $|Z_i \cap S| \leq \sqrt{k}$ for the hypothetical solution $S$ of size $k$.
 \item Guess this $i \in [p]$ (there are $\sqrt{k}$ possibilities) and the set $Z_i \cap S$ (there are $|V(G)|^{O(\sqrt{k})}=k^{\sqrt{k}\cdot \textup{polylog}(k)}$ possibilities)
 \item Contracting $Z_i \setminus S$ does not change the problem at hand, since the solution $S$ if not affected.
 \item\label{i:rob} Since the graph $G/Z_i$ has treewidth $O(\sqrt{k})$, we get that the graph $G/(Z_i\setminus S)$ has treewidth $O(\sqrt{k}+|Z_i\cap S|)=O(\sqrt{k})$ (contracting an edge can decrease treewidth by at most one).
  Finally, we solve {\sc Edge Multiway Cut} on $G/(Z_i\setminus S)$ in time $2^{O(\sqrt{k}\log k)}\cdot |V(G)|^{O(1)}$ using standard bounded-treewidth techniques.
\end{enumerate}
Overall, we obtain a $2^{\sqrt{k}\cdot \textup{polylog}(k)}\cdot n^{O(1)}$ time randomized algorithm for {\sc Edge Multiway Cut}.
The same approach also works for many other problems \cite{BandyapadhyayLLSJ22,MarxMNT22}.

\paragraph{Subexponential Parameterized Algorithms (Vertex Problems).}
Let us try to apply a similar technique for the problem {\sc Vertex Multiway Cut}, where instead of deleting a set of edges, we need to delete now a set of vertices.
There are two main difficulties if we try to apply Theorem~\ref{thm-edge-contraction}.
First, it would be convenient to have a partition $Z_1,\dots,Z_p$ of \emph{vertices} such that for every $i \in [p]$ contracting $Z_i$ leaves a graph of treewidth $O(p)$.
Here contracting a vertex set $Z_i$ amounts to contracting all edges that have both endpoints in  the set $Z_i$.
Second, in Step~\ref{i:rob} above, we exploited that the decomposition of Theorem~\ref{thm-edge-contraction} is \emph{inherently robust}: if $Z_i'$ is obtained from $Z_i$ by omitting a set of $s$ edges, then the treewidth of $G/Z'_i$ is at most $s$ higher than the treewidth of $G/Z_i$.

The following example shows that an analogous statements for contracting vertex sets is not true.
Let $G$ be obtained from a $k\times k$ grid by adding two universal vertices $x$ and $y$.
Let $A$ and $B$ be the two bipartition classes of the $k\times k$ grid, and define $Z_1 \coloneqq A \cup \{x\}$ and $Z_2 \coloneqq B \cup \{y\}$.
Then $G/Z_i$ has treewidth 2 for both $i \in [2]$.
On the other hand, $G/(Z_1\setminus \{x\}) = G/A = G$ and $G/(Z_2\setminus \{y\}) = G/B = G$ which means the jump in treewidth can be unbounded even after omitting just a single vertex.
Note that $G$ is $K_7$-minor free, so we cannot hope that a vertex partition version of Theorem~\ref{thm-edge-contraction} is inherently robust even on $H$-minor-free graphs.

\paragraph{Robust Contraction Decomposition Theorem.}
The above naturally leads to the question of a \emph{robust} vertex contraction decomposition theorem, where omitting vertices from some $Z_i$ increases the treewidth only in a bounded way.
Such decomposition theorems were recently introduced independently for planar graphs \cite{MarxMNT22} and for bounded-genus graphs (and more generally, almost-embeddable graphs) \cite{BandyapadhyayLLSJ22} by subsets of the authors,
who used them to obtain the first subexponential-time (parameterized) algorithms for a number of problems, such as {\sc Edge Bipartization}, {\sc Odd Cycle Transversal}, {\sc Edge/Vertex Multiway Cut}, {\sc Edge/Vertex Multicut}, and {\sc Group Feedback Edge/Vertex Set} on the above graph classes.
More precisely, \cite{BandyapadhyayLLSJ22,MarxMNT22} design parameterized algorithms with running time $f(k)n^{O(\sqrt{k})}$ or even $2^{O(\sqrt{k})}n^{O(1)}$, where $k$ is the size of the solution (and $\widetilde{O}(\cdot)$ hides $\log k$ factors).

The main result of this paper is a \emph{vertex} contraction decomposition theorem on $H$-minor-free graphs that is \emph{robust} in the sense described above.

\begin{restatable}{theorem}{main}
 \label{thm-contraction}
 Let $H$ be a fixed graph.
 Given an $H$-minor-free graph $G$ and an integer $p$, one can compute in polynomial time a partition of $V(G)$ into $p$ sets $Z_1,\dots,Z_p$ such that $\tw(G/(Z_i \setminus Z')) = O(p + |Z'|)$ for all $i \in [p]$ and $Z' \subseteq Z_i$, where the constant hidden in $O(\cdot)$ depends on $H$.
\end{restatable}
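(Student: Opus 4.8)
The plan is to reduce the theorem to the Robertson--Seymour structure theorem, which states that every $H$-minor-free graph admits a tree decomposition in which every torso is $h$-almost-embeddable for some constant $h=h(H)$; that is, it can be obtained from a graph embedded on a surface of genus $h$ by attaching at most $h$ vortices of width $h$ along the embedding and adding at most $h$ apex vertices. The strategy is to combine three ingredients: (i) a \emph{robust contraction decomposition for almost-embeddable graphs} (essentially the result of Bandyapadhyay et al.\ [SODA 2022] cited in the excerpt, possibly re-derived in the appropriate form), (ii) a procedure for gluing such per-bag decompositions along the clique-sum tree in a way that keeps the parts consistent, and (iii) careful bookkeeping of apex vertices and adhesion sets, which are the objects that break naive approaches.

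\textbf{Step 1: Reduce to almost-embeddable torsos.} First I would invoke the polynomial-time version of the Robertson--Seymour decomposition (Grohe--Kawarabayashi--Reed, or Demaine--Hajiaghayi--Kawarabayashi) to obtain a tree decomposition $(T,\beta)$ of $G$ whose adhesion sets have size $O_H(1)$ and whose torsos are $h$-almost-embeddable. I would then want a partition of $V(G)$ into $p$ parts $Z_1,\dots,Z_p$ such that the restriction to each torso looks like a good almost-embeddable contraction partition, \emph{and} such that the partition is compatible across adhesion sets so that contracting $Z_i$ globally does not create large treewidth. A clean way to enforce compatibility is to process $T$ in a rooted fashion: the $O_H(1)$ vertices in each adhesion set have their part indices fixed by the parent bag, and inside each bag we extend this to a full partition of the torso.

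\textbf{Step 2: Robust contraction decomposition inside one almost-embeddable graph.} For a single $h$-almost-embeddable graph $G'$ (a torso), I would peel off the $\le h$ apex vertices and put \emph{each apex into its own singleton layer scheme}, so that an apex is either already contracted-away harmlessly or counts as part of the ``omitted'' set $Z'$; since there are only $O_H(1)$ apices per bag, this is affordable. On the remaining surface-embedded-plus-vortices part, I would use a BFS-layering / breadth-first contraction argument à la Klein and Demaine--Hajiaghayi--Kawarabayashi, but computed so that it is robust: the partition $Z_1,\dots,Z_p$ is obtained from radial BFS levels modulo $p$, and the key lemma to prove is that for any $Z'\subseteq Z_i$, the graph $G'/(Z_i\setminus Z')$ has treewidth $O(p+|Z'|)$. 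Robustness here comes from the fact that omitting a vertex from a contracted BFS layer can ``split'' a bag of the tree decomposition into at most a bounded number of pieces plus the omitted vertex; vortices and the genus contribute only additively $O_H(1)$ via the standard cutting-into-a-disk argument, and I would track that omitting $|Z'|$ vertices increases width by at most $O(|Z'|)$ by a telescoping/local-surgery argument on the tree decomposition certifying small width.

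\textbf{Step 3: Assemble along the clique-sum tree.} Having per-bag partitions that agree on adhesion sets, I would build a global tree decomposition of $G/(Z_i\setminus Z')$ by taking, for each node $t\in T$, the bag obtained from the per-torso tree decomposition of $\mathrm{torso}(t)/(Z_i\setminus Z')$ restricted appropriately, and connecting these exactly as in $T$. The width of this global decomposition is the max over bags, hence $O_H(p+|Z'|)$, provided the adhesion sets (size $O_H(1)$) are handled: since adhesion vertices are shared, they may need to appear in every bag of a connected region, which costs an additive $O_H(1)$ per bag, acceptable. I expect \textbf{the main obstacle} to be Step 2's robustness claim in the presence of vortices: a single omitted vertex inside a vortex, or a vertex whose removal disconnects a contracted layer that threads through many vortex bags, could a priori blow up treewidth, and one must show — via the bounded width and bounded number of vortices — that the damage is still only $O(|Z'|)$; a secondary subtlety is making sure the modular-BFS layering is defined consistently across bags so that Step 3's gluing actually yields a valid (connected, coherent) tree decomposition rather than just a collection of local ones.
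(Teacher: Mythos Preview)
Your Step~3 contains the fatal gap, and it is precisely the obstacle the paper isolates. You assert that the global width of $G/(Z_i\setminus Z')$ is bounded by the max over bags of $\tw(\tor(t)/(Z_i\setminus Z'))$, but this is false: the contracted torso $\tor(t)/(Z_i\cap\beta(t))$ is \emph{not} the torso of $t$ in the induced tree decomposition of $G/Z_i$. The reason is that $\tor(t)$ contains ``fake'' edges (those added to make child adhesions cliques) which need not be edges of $G$; contracting $Z_i$ inside $\tor(t)$ contracts those fake edges, whereas contracting $Z_i$ in $G$ does not. The paper gives an explicit counterexample: take an $m\times m$ grid, subdivide every edge, and use a star-shaped tree decomposition whose root bag is the set of original grid vertices and whose leaves are the $3$-vertex bags for each subdivided edge. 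Any per-torso RCD (the root torso is just the grid, the leaf torsos have three vertices) can be made locally valid, yet each $Z_i$ is an independent set in $G$, so $G/Z_i=G$ and the global treewidth is $\Theta(m)$. Thus ``per-bag RCD $+$ agree on adhesions $+$ glue'' does not work, and this is unrelated to vortices or apices --- the example is planar.

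What the paper actually does is impose an extra \emph{connectivity constraint} on the per-torso RCD: for every fake edge $uv$ of $\tor(t)[Z_i^{(t)}]$, the endpoints must lie in the same component of $G[Z_i]$ using only vertices strictly below $t$. This guarantees that all but $O(|Z'|)$ fake edges in $\tor(t)[Z_i^{(t)}\setminus Z']$ remain ``safe'' to contract, which is what lets the gluing go through. Localising this condition reduces it to: when constructing $Z_1^{(t)},\dots,Z_p^{(t)}$, one designated class must connect the vertices of $\sigma(t)$ that belong to it. The existing almost-embeddable RCD of Bandyapadhyay et~al.\ does \emph{not} provide this (their BFS layers separate the graph, so the complement is highly disconnected), and the bulk of the paper's technical work is a new construction (Lemma~\ref{lem-key} and Corollary~\ref{cor-decomp2}) that breaks the BFS layers along carefully chosen ``monotone'' paths to restore connectivity while preserving the treewidth bound. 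Your proposal does not anticipate this connectivity requirement, and the obstacle you flag in Step~2 (robustness under vortices) is not the bottleneck --- that part was already known.
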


Theorem~\ref{thm-contraction} generalizes the robust vertex contraction theorem for planar graphs \cite{MarxMNT22} and almost-embeddable graphs \cite{BandyapadhyayLLSJ22} to arbitrary $H$-minor-free graphs.
It also generalizes the edge contraction decomposition (Theorem~~\ref{thm-edge-contraction}) for $H$-minor free graphs by Demaine et al.~\cite{DemaineHK11}.

Combined with the results from~\cite{MarxMNT22}, Theorem~\ref{thm-contraction} directly yields parameterized algorithms of running time $2^{\widetilde{O}(\sqrt{k})} \cdot n^{O(1)}$ or $n^{O(\sqrt{k})}$ for all vertex/edge deletion problems on $H$-minor-free graphs that can be formulated as {\sc Permutation CSP Deletion} or {\sc 2-Conn Permutation CSP Deletion}, two broad classes of problems defined in \cite{MarxMNT22}.
Roughly speaking, a permutation CSP instance is a binary CSP instance satisfying certain nice properties, and thus corresponds to a graph in which the vertices are variables and the edges are constraints.
The {\sc Permutation CSP Edge/Vertex Deletion} problem basically asks whether one can delete at most $k$ vertices (resp., edges) from (the graph of) a permutation CSP instance such that every connected component in the resulting graph has a satisfying assignment.
The {\sc 2-Conn Permutation CSP Edge/Vertex Deletion} problem is defined similarly, but we only care about the satisfiability on every 2-connected component (instead of connected component).
We refer to Section~\ref{sec-app} for the precise definition of these problems.
Combining the results from~\cite{MarxMNT22} and Theorem~\ref{thm-contraction}, we get the following result.

\begin{theorem}\label{thm-app}
 Let $H$ be a fixed graph.
 Then {\sc Permutation CSP Edge/Vertex Deletion} and {\sc 2-Conn Permutation CSP Edge/Vertex Deletion} on $H$-minor-free instances $(\varGamma,w,\delta,k)$ can be solved in $(|\varGamma|+\delta)^{O(\sqrt{k})}$ time.
\end{theorem}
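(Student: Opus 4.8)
The plan is to combine Theorem~\ref{thm-contraction} with the bounded-treewidth dynamic-programming machinery for {\sc Permutation CSP Deletion} developed in \cite{MarxMNT22}; this is precisely the ``robust'' use case for which that machinery was designed, so the argument mirrors the ones in \cite{MarxMNT22,BandyapadhyayLLSJ22}. Given an input $(\varGamma,w,\delta,k)$, I would first let $G$ be the graph underlying $\varGamma$ (vertices are variables, edges are constraints), which is $H$-minor-free, and apply Theorem~\ref{thm-contraction} to $G$ with $p\coloneqq\lceil\sqrt k\rceil$ to obtain in polynomial time a vertex partition $V(G)=Z_1\cup\cdots\cup Z_p$ with $\tw(G/(Z_i\setminus Z'))=O(p+|Z'|)$ for all $i\in[p]$ and all $Z'\subseteq Z_i$.

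For {\sc Permutation CSP Vertex Deletion}, fix a hypothetical solution $S\subseteq V(G)$ with $|S|\le k$. By averaging, some index $i\in[p]$ satisfies $|S\cap Z_i|\le k/p\le\sqrt k$. The algorithm enumerates the $p$ choices of $i$ together with the $\binom{|V(G)|}{\le\sqrt k}\le(|\varGamma|+\delta)^{O(\sqrt k)}$ candidates $Z'\subseteq Z_i$ for $S\cap Z_i$, and for each pair $(i,Z')$ looks for a solution $S$ with $S\cap Z_i=Z'$. Since such an $S$ contains no vertex of $Z_i\setminus Z'$, contracting every edge with both endpoints in $Z_i\setminus Z'$ leaves the still-to-be-found part of the solution intact; invoking the structural properties of permutation CSPs from \cite{MarxMNT22} (binary constraints are bijections, so identifying variables along a non-solution constraint is well defined), this residual problem becomes an instance of (a mild generalization of) {\sc Permutation CSP Vertex Deletion} whose graph is $G/(Z_i\setminus Z')$ and therefore has treewidth $O(p+|Z'|)=O(\sqrt k)$. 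Running the treewidth dynamic program of \cite{MarxMNT22} on it costs $(|\varGamma|+\delta)^{O(\sqrt k)}$ (a bag of $O(\sqrt k)$ variables over a domain of size $\le\delta$), and summing over all guesses keeps the total at $(|\varGamma|+\delta)^{O(\sqrt k)}$.

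For the edge-deletion version only the counting step changes: when $S\subseteq E(G)$ with $|S|\le k$, each edge of $S$ has both endpoints in $Z_i$ for at most one $i$, so some $i$ has $|S\cap\binom{Z_i}{2}|\le\sqrt k$; I would guess this $i$ and this edge set (still $(|\varGamma|+\delta)^{O(\sqrt k)}$ choices), take $Z'$ to be its $\le 2\sqrt k$ endpoints, and contract $Z_i\setminus Z'$, which destroys no solution edge and again yields treewidth $O(\sqrt k)$. The 2-connected variants go through verbatim, except that the treewidth routine is replaced by the variant of \cite{MarxMNT22} that runs the dynamic program over a tree decomposition while simultaneously tracking the block--cut structure; it has the same $(|\varGamma|+\delta)^{O(\sqrt k)}$ running time on treewidth-$O(\sqrt k)$ graphs.

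The step I expect to be the main obstacle --- and the one genuinely imported from \cite{MarxMNT22} rather than re-proved here --- is the equivalence between ``find a solution of $\varGamma$ avoiding $Z_i\setminus Z'$'' and ``solve the contracted bounded-treewidth instance''. Making this precise requires checking that contracting a set of variables/constraints of a permutation CSP again yields (after routine bookkeeping) a permutation CSP, that this contraction commutes with deleting the guessed solution part $Z'$, and, in the 2-connected setting, that it interacts correctly with the decomposition into 2-connected blocks. Since all of this is already carried out in \cite{MarxMNT22} and Theorem~\ref{thm-contraction} delivers exactly the decomposition that their reduction consumes, Theorem~\ref{thm-app} follows by assembling these pieces.
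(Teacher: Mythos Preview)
Your proposal is correct and follows essentially the same approach as the paper: apply Theorem~\ref{thm-contraction} with $p=\lceil\sqrt{k}\rceil$, guess the index $i$ and the small intersection $Z_i\cap S$, contract $Z_i\setminus Z'$ to obtain treewidth $O(\sqrt{k})$, and invoke the bounded-treewidth machinery of \cite{MarxMNT22}. One inaccuracy worth flagging: for the 2-connected variants the paper (following \cite{MarxMNT22}) does \emph{not} simply swap in a block--cut-aware DP as you suggest; the algorithm there is genuinely more involved, using Theorem~\ref{thm-contraction} to derive a ``body guessing'' lemma that enumerates candidate 2-connected components before the DP. Since you ultimately defer this step to \cite{MarxMNT22} as a black box, your argument remains sound, but the phrase ``go through verbatim'' understates what is being imported.
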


The above theorem directly gives us parameterized algorithms of running time exponential in $O(\sqrt{k})$ for many problems on $H$-minor-free graphs.
Specifically, we obtain the first subexponential-time parameterized algorithms for the following fundamental problems on $H$-minor-free graphs.
\begin{itemize}
 \item $n^{O(\sqrt{k})}$ time algorithms for \textsc{Subset Feedback Vertex Set}, \textsc{Subset Odd Cycle Transversal}, \textsc{Subset Group Feedback Vertex Set}, \textsc{2-Conn Component Order Connectivity}, and the edge-deletion version of these problems.
 \item $2^{\widetilde{O}(\sqrt{k})} \cdot n^{O(1)}$ time algorithms for \textsc{Subset Feedback Vertex Set} and \textsc{Subset Feedback Edge Set}.
\end{itemize}

Additionally, the main algorithmic results from~\cite{BandyapadhyayLLSJ22} (such as subexponential-time parameterized algorithms for \textsc{Odd Cycle Transversal}, \textsc{Vertex Multiway Cut}, \textsc{Vertex Multicut}, etc.),
that required a complicated two-layered dynamic programming algorithm over the structural decomposition theorem of Robertson and Seymour~\cite{RobertsonS03a} for $H$-minor free graphs,
now admit much cleaner and more direct algorithms by exploiting Theorem~\ref{thm-contraction} (these problems belong to the \textsc{Permutation CSP Deletion} category in Theorem~\ref{thm-app}).

\paragraph{Other Relevant Literature.}
The study of subexponential-time parameterized algorithms on planar and $H$-minor free graphs has been one of the most active sub-areas of parameterized algorithms, which led to exciting results and powerful methods.
Examples include Bidimensionality~\cite{DemaineFHT05}, applications of Baker's layering technique~\cite{BuiP92,DornFLRS13,Eppstein99,FominLMPPS16,Tazari12}, bounds on the treewidth of the solution~\cite{FominLKPS20,KleinM12,KleinM14,MarxPP18}, and pattern coverage~\cite{FominLMPPS16,Nederlof20a}.
The central theme of all these results is that planar graphs exhibit the ``\emph{ square root phenomenon}'': parameterized problems whose fastest parameterized algorithm run in time $f(k)n^{O(k)}$ or $2^{O(k)}$ on general graphs admit $f(k)n^{O(\sqrt{k})}$ or even $2^{O(\sqrt{k})}n^{O(1)}$ time algorithms when input is restricted to planar or $H$-minor free graphs.
However, these techniques do not apply for designing subexponential time parameterized algorithms for cut and cycle hitting problems such as {\sc Multiway Cut}, {\sc Multicut}, or {\sc Odd Cycle Transversal}.
Robust vertex contraction decomposition theorems, first obtained in \cite{BandyapadhyayLLSJ22,MarxMNT22}, provide the necessary tools to design subexponential-time parameterized algorithms for some of the cut and cycle hitting problems on $H$-minor free graphs.

On the decomposition front, apart from edge contraction decomposition theorems, there are several other kind of decomposition theorems that have been used to design polynomial-time approximation schemes (PTASs) and FPT algorithms on planar graphs or more generally on $H$-minor free graphs.
Most of these decomposition theorems prove strengthening of the classic Baker's layering technique~\cite{Baker94}.
This generally yields, in what is known as (Vertex) Edge Decomposition Theorems~\cite{Baker94,DemaineHK05,DeVosDOSRSV04,Dvorak18,Eppstein00} (see~\cite{Panolan0Z19} for a detailed introduction).

Finally, let us also point to some recent developments on the structural theory of $H$-minor-free graphs \cite{KawarabayashiTW20,ThilikosW23}.
Our current proof of Theorem \ref{thm-contraction} mostly relies on the original Robertson-Seymour decomposition for $H$-minor-free graphs \cite{RobertsonS03a}, and is independent of \cite{KawarabayashiTW20,ThilikosW23}.
Reformulating some of our arguments in the language of \cite{KawarabayashiTW20,ThilikosW23} may allow us to obtain a more streamlined proof in the future.

\paragraph{Structure of Paper.}
The remainder of the paper is structured as follows.
In Section \ref{sec-overview} we give an informal overview of the proof of Theorem~\ref{thm-contraction}.
After giving additional preliminaries in Section \ref{sec-pre}, the formal proof of Theorem~\ref{thm-contraction} is given in Section~\ref{sec-proof}.
Finally, we discuss the algorithmic applications in Section \ref{sec-app}.

\subsection{Overview of Proof of Theorem~\ref{thm-contraction}}
\label{sec-overview}

In this section, we give an informal overview of our proof of Theorem~\ref{thm-contraction}.
For convenience, we say $Z_1,\dots,Z_p$ is a \emph{robust contraction decomposition (RCD)} of a graph $G$ if $Z_1,\dots,Z_p$ is a partition of $V(G)$ satisfying $\tw(G/(Z_i \setminus Z')) = O(p+|Z'|)$ for all $i \in [p]$ and $Z' \subseteq Z_i$.
A $p$-RCD simply refers to an RCD of size $p$.
Our goal is to compute a $p$-RCD of an $H$-minor-free graph $G$ for a given integer $p \geq 1$.

Our starting point is the well-known Robertson-Seymour decomposition for $H$-minor-free graphs.
The Robertson-Seymour decomposition of an $H$-minor-free graph $G$ is a tree decomposition $(T,\beta)$ of $G$ in which the \emph{torso} of each node $t \in V(T)$ is $h$-\emph{almost-embeddable} and the \emph{adhesion} of each node $t \in V(T)$ is of size at most $h$, for some constant $h$ depending on $H$.
Here, the adhesion of $t$, denoted by $\sigma(t)$, is the intersection of the bag $\beta(t)$ and the bag of the parent of $t$.
The torso of $t$, denoted by $\tor(t)$, is a supergraph of $G[\beta(t)]$ (the subgraph of $G$ induced by $\beta(t)$), obtained by adding edges to $G[\beta(t)]$ to make the adhesion of each child of $t$ a clique.
Almost-embeddable graphs generalize bounded-genus graphs.
Roughly speaking, an $h$-almost-embeddable graph has its main part embedded in a surface of genus at most $h$, and the remaining part consists of at most $h$ well-structured subgraphs (called \emph{vortices}) and a set of at most $h$ ``bad'' vertices (called \emph{apex vertices} or \emph{apices}).
In this overview, the reader can feel free to ignore the vortices/apices of an almost-embeddable graph and think about it as a graph embedded in a surface of bounded genus.
The formal definitions of tree decompositions and almost-embeddable graphs can be found in Section~\ref{sec-pre}.

Intuitively, the Robertson-Seymour decomposition partitions an $H$-minor-free graph into almost-embeddable ``pieces''.
Since it is known that almost-embeddable graphs admit RCDs \cite{BandyapadhyayLLSJ22}, a natural idea comes:
Can we exploit Robertson-Seymour decomposition to somehow ``lift'' the RCDs of the almost-embeddable pieces to the $H$-minor-free graph?
To explore this idea, let us consider an $H$-minor-free graph $G$ and the Robertson-Seymour decomposition $(T,\beta)$ of $G$.
Since the torsos of $(T,\beta)$ are $h$-almost-embeddable, we can compute a $p$-RCD $Z_1^{(t)},\dots,Z_p^{(t)} \subseteq \beta(t)$ for each torso $\tor(t)$ using the results of \cite{BandyapadhyayLLSJ22}.
Ideally, if these RCDs are \emph{consistent} in the sense that for every vertex $v \in V(G)$ there exists an index $i \in [p]$ such that $v \in Z_i^{(t)}$ for all nodes $t \in V(T)$ whose bag contains $v$, then we can combine them to obtain a partition $Z_1,\dots,Z_p$ of $V(G)$ where $Z_i = \bigcup_{t \in V(T)} Z_i^{(t)}$ and hope it to be an RCD of $G$.
(Of course, as we will see later, such a na{\"i}ve construction of $Z_1,\dots,Z_p$ does not work. But it can provide us some useful intuition towards a proof of the theorem.)
Obtaining consistent RCDs for the torsos is actually not difficult, by properly using the small adhesion size of $(T,\beta)$.
The basic idea is the following.
For each node $t \in V(T)$, instead of computing a $p$-RCD for $\tor(t)$, we only compute a $p$-RCD for $\tor(t) - \sigma(t)$, and how the vertices in $\sigma(t)$ belong to the $p$ classes is determined by the RCDs on the ancestors of $t$.
The decompositions constructed in this way are consistent.
Furthermore, since $|\sigma(t)| \leq h$, given a $p$-RCD for $\tor(t) - \sigma(t)$, no matter how we assign the vertices in $\sigma(t)$ to the $p$ classes, the resulting decomposition is always a $p$-RCD for $\tor(t)$.
Thus, we obtain consistent RCDs for the torsos, which give us the aforementioned partition $Z_1,\dots,Z_p$ of $V(G)$.
The remaining question is then whether $Z_1,\dots,Z_p$ is an RCD of $G$, i.e., whether $\tw(G/(Z_i \setminus Z')) = O(p+|Z'|)$ for all $i \in [p]$ and $Z' \subseteq Z_i$.

The only reason for why $Z_1,\dots,Z_p$ could be an RCD of $G$ is clearly the fact that every $Z_i$ satisfies the RCD condition ``locally'', i.e., when restricted to a torso.
Specifically, let $i \in [p]$ and $Z' \subseteq Z_i$.
Set $Z = Z_i \setminus Z'$ for convenience.
What we want is $\tw(G/Z) = O(p+|Z'|)$.
For each $t \in V(T)$, we have $Z \cap \beta(t) = Z_i^{(t)} \setminus (Z' \cap Z_i^{(t)})$, and thus $\tw(\tor(t)/(Z \cap \beta(t))) = O(p+|Z'|)$.
How can we go from the locally bounded treewidth of each $\tor(t)/(Z \cap \beta(t))$ to the global treewidth of $G/Z$?
The following observation (given in Lemma~\ref{lem-torsotw}) seems useful.
\begin{itemize}
    \item[] If a graph admits a tree decomposition $\mathcal{T}$ in which each torso is of treewidth at most $w$, then the graph itself is of treewidth $O(w)$.
    Indeed, we can ``glue'' the width-$w$ tree decompositions of the torsos along the given tree decomposition $\mathcal{T}$ to obtain a width-$O(w)$ tree decomposition of the graph (mainly because in a torso the adhesions of the children are cliques).
\end{itemize}
This observation allows us to go from the treewidth of the torsos to the treewidth of the entire graph.
At the first glance, it does not directly apply to our situation here, because we are working on the \emph{contracted} graphs instead of the original ones: our goal is to lift the treewidth bound from the \emph{contracted} torsos $\tor(t)/(Z \cap \beta(t))$ to the \emph{contracted} graph $G/Z$.
However, it is a well-known fact (given in Fact~\ref{fact-inducedtd}) that a tree decomposition of $G$ naturally induces a tree decomposition of the contracted graph $G/Z$ by ``contracting'' each bag.
Specifically, consider the \emph{quotient} map $\pi\colon V(G) \to V(G/Z)$ which maps each vertex of $G$ to the corresponding vertex of $G/Z$.
Set $\beta^*(t) = \pi(\beta(t))$ for all $t \in V(T)$.
Then $(T,\beta^*)$ is a tree decomposition of $G/Z$.
Intuitively, the tree decomposition $(T,\beta^*)$ should allow us to apply the above observation to lift the treewidth bound from $\tor(t)/(Z \cap \beta(t))$ to $G/Z$, and eventually show that $Z_1,\dots,Z_p$ is an RCD of $G$.

Although the above argument seems promising, a closer inspection reveals that it actually has a fatal issue, which is also the main barrier to proving Theorem \ref{thm-contraction}.
The issue comes from a somewhat counter-intuitive fact: \emph{the contracted torsos are not identical to the torsos of the contracted graph}!
Formally, let $\tor^*(t)$ denote the torso of $t \in V(T)$ in the tree decomposition $(T,\beta^*)$ of $G/Z$.
Then we have $\tor(t)/(Z \cap \beta(t)) \neq \tor^*(t)$ in general, and even worse, the treewidth of $\tor^*(t)$ can be unbounded even if the treewidth of $\tor(t)/(Z \cap \beta(t))$ is bounded.
The main reason for why this happens is the following.
The edges of $\tor(t)$ do not necessarily appear in $G$: some of them are manually added to make the adhesions of the children of $t$ cliques (for convenience, we call them \emph{fake edges}).
When we contract $G$ to $G/Z$, only the edges appearing in $G$ can get contracted.
However, when we contract $\tor(t)$ to $\tor(t)/(Z \cap \beta(t))$, all fake edges in $\tor(t)[Z \cap \beta(t)]$ also get contracted.
This over-contracting can make $\tor(t)/(Z \cap \beta(t))$ much ``smaller'' than $\tor^*(t)$.
To see an example, suppose $G$ is the graph obtained from an $m \times m$ grid by subdividing each edge into two edges (with an intermediate vertex); see Figure~\ref{fig-grid}.
So we have $n = O(m^2)$.
Consider a tree decomposition $(T,\beta)$ of $G$ defined as follows.
The tree $T$ consists of a root with some children.
The bag of the root contains the $m^2$ grid vertices.
Each child of the root corresponds to an edge $e$ of the grid, whose bag contains the two endpoints of $e$ (two grid vertices) and the intermediate vertex for subdividing $e$.
Now the adhesion of each child of the root consists of the two endpoints of the corresponding grid edge.
Let $t \in V(T)$ be the root and $Z \subseteq V(G)$ consist of the grid vertices.
Then $\tor(t)$ is an $m \times m$ grid and $\tor(t)/(Z \cap \beta(t)) = \tor(t)/Z$ is a single vertex.
However, $Z$ is actually an independent set in $G$.
Thus, $G/Z = G$ and we have $\tor^*(t) = \tor(t)$, which is of treewidth $m$.
In fact, this simple example not only demonstrates that $\tor^*(t)$ can have much higher treewidth than $\tor(t)/(Z \cap \beta(t))$, but also directly shows that a partition $Z_1,\dots,Z_p$ satisfying the RCD condition locally (i.e., at each torso) may be not a global RCD in general (and thus our previous construction fails).
Suppose now we have an RCD $Z_1^{(t)},\dots,Z_p^{(t)}$ of $\tor(t)$ for the root $t \in V(T)$.
For each child $s$ of $t$, we arbitrarily assign the only vertex in $\beta(s) \setminus \sigma(s)$ to a class $Z_i$ different from the classes the two vertices in $\sigma(s)$ belong to.
Note that, since $|\beta(s)| = 3$, every partition of $\beta(s)$ is actually an RCD of $\tor(s)$.
In this way, we obtain a partition $Z_1,\dots,Z_p$ of $V(G)$ that is an RCD when restricted to every torso.
However, each $Z_i$ is an independent set of $G$, and thus $G/Z_i = G$.
So unless $p = \Omega(n)$, $Z_1,\dots,Z_p$ is not an RCD of $G$.

\begin{figure}
 \centering
 \includegraphics[height=3.6cm]{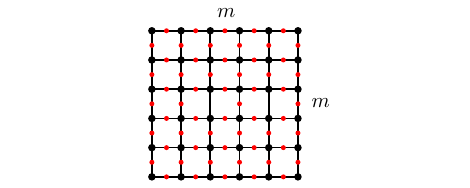}
 \caption{An $m \times m$ grid with subdivided edges. The black vertices are the grid vertices.}
 \label{fig-grid}
\end{figure}

The main technical contribution in our proof is to break the above barrier by constructing $Z_1,\dots,Z_p$ in a much more sophisticated way.
On a high-level, we still follow the idea of constructing RCDs locally for each torso and combine them to obtain $Z_1,\dots,Z_p$.
So in what follows, we always use $Z_1^{(t)},\dots,Z_p^{(t)}$ to denote the restriction of $Z_1,\dots,Z_p$ to $\tor(t)$, i.e., $Z_i^{(t)} \coloneqq Z_i \cap \beta(t)$.
To avoid ambiguity, for a subset $Z \subseteq V(G)$, we use $(T,\beta_Z^*)$ to denote the tree decomposition of $G/Z$ induced by $(T,\beta)$, and use $\tor_Z^*(t)$ to denote the torso of a node $t \in V(T)$ in $(T,\beta_Z^*)$.
We have seen that the main reason for why $\tor_Z^*(t)$ can have a much higher treewidth than $\tor(t)/(Z \cap \beta(t))$ is the fake edges in $\tor(t)$.
So ideally, if $\tor(t)/(Z \cap \beta(t))$ does not contract any fake edge in $\tor(t)$, we are happy.
But this is unlikely, because in the worst case (e.g., the grid example above) every edge in $\tor(t)$ is fake.
Fortunately, the fake edges in $\tor(t)$ are not always ``uncontractable''.
Indeed, if a fake edge in $\tor(t)$ has two endpoints lying in the same connected component of $G[Z]$, then we can safely contract it because its two endpoints are also contracted to one vertex in $G/Z$.
Therefore, it is enough to guarantee that $\tor(t)/(Z \cap \beta(t))$ only contracts these safe fake edges, which is equivalent to saying that the two endpoints of every edge of $\tor(t)[Z \cap \beta(t)]$ lie in the same connected component of $G[Z]$.
(If this holds, we should be able to somehow relate $\tw(\tor_Z^*(t))$ to $\tw(\tor(t)/(Z \cap \beta(t)))$ and make the previous proof work.)
However, this is still too difficult, because we are not dealing with a single set $Z$.
Instead, we have to take care of $Z = Z_i \setminus Z'$ for all $i \in [p]$ and $Z' \subseteq Z_i$, i.e., all subsets of $Z_1,\dots,Z_p$.
Even if the construction of $Z_i$ makes every fake edge $uv$ of $\tor(t)[Z_i \cap \beta(t)]$ have its endpoints $u$ and $v$ in the same connected component of $G[Z_i]$, when a fraction $Z' \subseteq Z_i$ is taken away from $Z_i$, it can happen that $u$ and $v$ lie in different connected components of $G[Z_i \setminus Z']$ (for example, when $Z'$ is a cut of $u$ and $v$ in $G[Z_i]$) and thus the fake edge $uv$ is no longer safe for contraction when considering $Z = Z_i \setminus Z'$.

The key observation to get rid of this issue is that we do not necessarily need to make \emph{every} fake edge of $\tor(t)[(Z_i \setminus Z') \cap \beta(t)] = \tor(t)[Z_i^{(t)} \setminus Z']$ safe.
Indeed, the desired bound for $\tw(\tor_{Z_i \setminus Z'}^*(t))$ and $\tw(G/(Z_i \setminus Z'))$ is $O(p+|Z'|)$, which also depends on $|Z'|$.
It turns out that as long as $\tor(t)[Z_i^{(t)} \setminus Z']$ only contains $O(|Z'|)$ ``unsafe'' fake edges (i.e., those with two endpoints in different connected components of $G[Z_i \setminus Z']$), we can obtain the bound $\tw(\tor_{Z_i \setminus Z'}^*(t)) = O(p+|Z'|)$.
To see this, let $Z'' \subseteq (Z_i \setminus Z') \cap \beta(t) = Z_i^{(t)} \setminus Z'$ consist of the endpoints of the $O(|Z'|)$ unsafe fake edges in $\tor(t)[Z_i^{(t)} \setminus Z']$, so we have $|Z''| = O(|Z'|)$.
Now every edge in $\tor(t)[Z_i^{(t)} \setminus (Z' \cup Z'')]$ has its two endpoints in the same connected component of $G[Z_i \setminus Z']$, and is thus safe.
Since $\tor(t)/(Z_i^{(t)} \setminus (Z' \cup Z''))$ only contracts safe edges, it can be shown that $\tor_{Z_i \setminus Z'}^*(t)$ is (almost\footnote{More precisely, $\tor_{Z_i \setminus Z'}^*(t)$ is a minor of a graph obtained from $\tor(t)/(Z_i^{(t)} \setminus (Z' \cup Z''))$ by adding few edges.}) a minor of $\tor(t)/(Z_i^{(t)} \setminus (Z' \cup Z''))$.
Thus, we can bound $\tw(\tor_{Z_i \setminus Z'}^*(t))$ using $\tw(\tor(t)/(Z_i^{(t)} \setminus (Z' \cup Z''))$, where the latter is $O(p+|Z' \cup Z''|) = O(p+|Z'|)$, under the assumption that $Z_1^{(t)},\dots,Z_p^{(t)}$ is an RCD of $\tor(t)$.
To summarize, $Z_1,\dots,Z_p$ is an RCD of $G$ if the following conditions hold.
\begin{enumerate}[label = (\Alph*)]
 \item\label{item:overview-a} $Z_1,\dots,Z_p$ is an RCD when restricted to every torso, i.e., $Z_1^{(t)},\dots,Z_p^{(t)}$ is an RCD of $\tor(t)$.
 \item\label{item:overview-b} All but at most $O(|Z'|)$ edges in $\tor(t)[Z_i^{(t)} \setminus Z']$ have both endpoints in the same connected component of $G[Z_i \setminus Z']$, for all $t \in V(T)$, $i \in [p]$, and $Z' \subseteq Z_i$.
\end{enumerate}
Condition~\ref{item:overview-a} naturally holds as long as we insist on constructing $Z_1,\dots,Z_p$ by combining local RCDs.
So the crucial question now is how to satisfy condition~\ref{item:overview-b}.

\paragraph{From Global to Local.}
Condition~\ref{item:overview-b} is not tractable, because it is a global constraint on $Z_1,\dots,Z_p$.
So we need to somehow reduce it to a local constraint on the RCD of each torso.
Let us fix a node $t \in V(T)$ and an index $i \in [p]$.
To have condition~\ref{item:overview-b}, the first thing we need is that (almost) every edge of $\tor(t)[Z_i^{(t)}]$ has its endpoints in the same connected component of $G[Z_i]$, which is the special case $Z' = \emptyset$.
But only having this is not enough.
We need in addition that loosing $k$ vertices in $Z_i$ only makes $O(k)$ fake edges become ``unsafe''.
To achieve this, our main idea is to require the two endpoints of every edge in $\tor(t)[Z_i^{(t)}]$ to be connected in $G[Z_i]$ by \emph{only} the vertices ``strictly below $t$'' in the tree decomposition $(T,\beta)$, that is, the vertices that only appear in the bags of descendants of $t$.
Formally, for every child $s$ of $t$, denote by $\gamma(s)$ the union of all bags in $T_s$, the subtree of $T$ rooted at $s$.
Then our requirement is that for every edge $uv$ of $\tor(t)[Z_i^{(t)}]$, $u$ and $v$ lie in the same connected component of $G[((\gamma(s) \setminus \sigma(s)) \cap Z_i) \cup \{u,v\}]$ for every child $s$ of $t$ such that $u,v \in \sigma(s)$.
Note that every non-fake edge trivially satisfies the requirement.

We show that this requirement is sufficient to guarantee condition~\ref{item:overview-b} above for all $Z' \subseteq Z_i$.
The definition of tree decompositions guarantees that the sets $\gamma(s) \setminus \sigma(s)$ are disjoint for different children $s$ of $t$.
We say a child $s$ of $t$ is \emph{bad} if $Z'$ contains at least one vertex in $\gamma(s) \setminus \sigma(s)$.
By the disjointness of the sets $\gamma(s) \setminus \sigma(s)$, the number of bad children of $t$ is at most $|Z'|$.
For convenience, we say a child $s$ of $t$ \emph{witnesses} an edge $uv$ of $\tor(t)[Z_i^{(t)}]$ if $u,v \in \sigma(s)$.
Note that each child $s$ of $t$ can witness at most $O(h^2)$ edges, because $|\sigma(s)| \leq h$.
Due to our requirement, if an edge $uv$ of $\tor(t)[Z_i^{(t)} \setminus Z']$ violates condition~\ref{item:overview-b} above, i.e., $u$ and $v$ lie in different connected components of $G[Z_i \setminus Z']$, then $Z'$ must contain at least one vertex in $\gamma(s) \setminus \sigma(s)$ for every child $s$ of $t$ satisfying $u,v \in \sigma(s)$, and in particular $(u,v)$ is witnessed by a bad node.
Since $t$ has at most $|Z'|$ bad children and each of them can witness $O(h^2)$ edges, the total number of edges of $\tor(t)[Z_i^{(t)} \setminus Z']$ violating condition~\ref{item:overview-b} is bounded by $O(|Z'|)$.

Based on the above argument, it now suffices to construct $p$-RCDs for each torso of $(T,\beta)$ such that the partition $Z_1,\dots,Z_p$ obtained by combining these local RCDs satisfies the aforementioned requirement for all $i \in [p]$ and $t \in V(T)$.
However, the current form of our requirement is global, because whether it is satisfied at a node $t$ depends on the construction at not only $t$ but also \emph{all descendants} of $t$ in $T$.
So next, let us transform it to a ``local'' form which can be checked by looking at the construction at each torso \emph{independently}.
We first observe that the following condition implies our previous requirement (which can be shown by a simple induction argument).
\begin{itemize}
 \item[] For every edge $uv$ of $\tor(t)[Z_i^{(t)}]$, the endpoints $u$ and $v$ lie in the same connected component of $\tor(s)[(Z_i^{(s)} \setminus \sigma(s)) \cup \{u,v\}]$ for every child $s$ of $t$ such that $u,v \in \sigma(s)$.
\end{itemize}
The above is actually equivalent to saying that for every child $s$ of $t$, all $u,v \in \sigma(s) \cap Z_i^{(s)}$ lie in the same connected component of $\tor(s)[(Z_i^{(s)} \setminus \sigma(s)) \cup \{u,v\}]$.
Importantly, this condition \emph{only} depends on the construction of $Z_1^{(s)},\dots,Z_p^{(s)}$, the RCD of $\tor(s)$.
If this condition holds on every node of $T$, then our previous requirement is also satisfied for every node of $T$.
Therefore, we have our new goal.
For every node $t \in V(T)$, we want the following.

\begin{enumerate}[label = (\Alph*)]
 \item $Z_1^{(t)},\dots,Z_p^{(t)}$ is an RCD of $\tor(t)$.
 \item For all $i \in [p]$, any two vertices $u,v \in \sigma(t) \cap Z_i^{(t)}$ lie in the same connected component of $\tor(t)[(Z_i^{(t)} \setminus \sigma(t)) \cup \{u,v\}]$.
\end{enumerate}

Now the new conditions above are both local, which allows us to consider each torso individually.
We shall construct the RCDs of the torsos in a top-down manner from the root of $T$ to the leaves.
Suppose we are at the node $t \in V(T)$ and going to construct the RCD $Z_1^{(t)},\dots,Z_p^{(t)}$ of $\tor(t)$,
At this point, the RCD at the parent of $t$ has been constructed, so each vertex in $\sigma(t)$ has already been assigned to one of the $p$ classes $Z_1^{(t)},\dots,Z_p^{(t)}$.
We then compute a $p$-RCD of $\tor(t) - \sigma(t)$ with an additional property that the $i$-th class of the RCD ``connects'' the vertices in $\sigma(t) \cap Z_i^{(t)}$ for all $i \in [p]$.
Combining this with the partition of $\sigma(t)$ gives us the desired $Z_1^{(t)},\dots,Z_p^{(t)}$ satisfying the above two conditions.
As long as such a single step can be achieved, we can eventually construct $Z_1,\dots,Z_p$.
So from now, we can restrict ourselves to one torso.

\paragraph{RCD with a Connectivity Constraint.}
Before discussing how to construct the desired RCD of $\tor(t) - \sigma(t)$, we need another twist to simplify the task in hand.
We notice that constructing an RCD of $\tor(t) - \sigma(t)$ satisfying the additional connectivity property is actually \emph{impossible} if we do not add any assumption on how the vertices in $\sigma(t)$ belong to the $p$ classes.
To see this, consider the following simple example where $\tor(t)$ is a star of 5 vertices, one central vertex connecting with 4 other vertices.
All vertices are contained in $\sigma(t)$ except the central vertex.
In $\sigma(t)$, two vertices belong to the class $Z_1^{(t)}$ and the other two vertices belong to $Z_2^{(t)}$.
Now in order to connect the two vertices in $\sigma(t) \cap Z_1^{(t)}$, the RCD of $\tor(t) - \sigma(t)$ must assign the central vertex to $Z_1^{(t)}$.
But if this is the case, we fail to connect the two vertices in $\sigma(t) \cap Z_2^{(t)}$.
Therefore, in this example, it is impossible to construct the desired RCD of $\tor(t) - \sigma(t)$.
In order to fix this issue, we have to somehow guarantee the ``connectivity task'' that $\sigma(t)$ leaves to us is not too complicated.

Fortunately, using a stronger version of Robertson-Seymour decomposition, we are able to reach a nice situation where only \emph{one} class of vertices in $\sigma(t)$ need to be connected by the RCD of $\tor(t) - \sigma(t)$.
This version of Robertson-Seymour decomposition, as stated in \cite{DemaineHK05,DeVosDOSRSV04}, guarantees that for every node $t \in V(T)$ and every child $s$ of $t$, the adhesion $\sigma(s)$ contains at most \emph{three} vertices in the embeddable part of $\tor(t)$ (recall that $\tor(t)$ is an $h$-almost-embeddable graph consisting of an embeddable part, vortices, and apices).
To see why this result helps us, let us consider an ideal case where every torso in the Robertson-Seymour decomposition has no vortices and apices.
In this case, every adhesion $\sigma(t)$ is of size at most three, because all vertices in the torso $\tor(t')$ of the parent $t'$ of $t$ are in the embeddable part of $\tor(t')$ and thus $\sigma(t)$ can contain at most three of them.
Since $|\sigma(t)| \leq 3$, there can be at most one class $Z_i^{(t)}$ that contains at least two vertices in $\sigma(t)$.
Therefore, when constructing the RCD of $\tor(t) - \sigma(t)$, we only need the $i$-th class to connect the vertices in $\sigma(t) \cap Z_i^{(t)}$.
In the general case where the torsos have vortices and apices, the situation becomes complicated.
But we can still manage to achieve this, which requires us to construct the RCD of each torso more carefully and then use the ``three-vertex'' property as above.

Now our goal becomes to construct an RCD of $\tor(t) - \sigma(t)$ in which one class is required to connect the corresponding vertices in $\sigma(t)$.
Essentially, we achieve this by proving the following.

\begin{lemma}[simplified version of Corollary~\ref{cor-decomp2}]
 \label{lem-simplified}
 Given a connected $h$-almost-embeddable graph $G$, a set $\varPhi \subseteq V(G)$ of size $c$, and a number $p$, one can compute in polynomial time $p$ disjoint sets $Z_1,\dots,Z_p \subseteq V(G)$ such that the following two conditions hold\footnote{The actual result, Corollary~\ref{cor-decomp2}, is more complicated, in which the sets $Z_1,\dots,Z_p$ have some additional properties and also there is an additional assumption on the almost-embeddable graph $G$. As we are not going to cover these techinical details in this overview, considering this simplified version should be enough.}.
 \begin{enumerate}[label = (\arabic*)]
  \item\label{item-simplified-1} $\tw(G/(Z_i \setminus Z')) = O_{h,c}(p+|Z'|)$ for all $i \in [p]$ and $Z' \subseteq Z_i$.
  \item\label{item-simplified-2} $\varPhi$ is contained in one connected component of $G - \bigcup_{i=1}^p Z_i$.
 \end{enumerate}
\end{lemma}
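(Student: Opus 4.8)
The plan is to combine the existing robust contraction decomposition for $h$-almost-embeddable graphs from \cite{BandyapadhyayLLSJ22} with a preprocessing step that ``absorbs'' the set $\varPhi$ into a single connected component that is kept out of all the classes $Z_1,\dots,Z_p$. First I would use the fact that $G$ is connected to build a Steiner-type connector for $\varPhi$: pick a tree (or a small connected subgraph) $R$ of $G$ whose vertex set $V(R)$ contains $\varPhi$; since $\varPhi$ has bounded size $c$, one can choose $R$ so that $V(R)$ is spanned by $c-1$ paths in $G$. The difficulty is that such paths may be arbitrarily long, so $V(R)$ is not bounded; we cannot simply throw all of $V(R)$ into the ``forbidden'' part, or else the contracted graph could have unbounded treewidth after deleting it. The right move is to treat $V(R)$ at the level of the \emph{structure} of the almost-embeddable graph: each path of $R$ either lives essentially within one face/region of the surface embedding, within one vortex, or passes through one of the $\leq h$ apices. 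In all cases the path, once contracted, behaves like a single vertex or a bounded-length curve in the surface, and we can route the Baker-style concentric layering of \cite{BandyapadhyayLLSJ22} so that these $O(c)$ curves all sit in \emph{layer $0$}, i.e.\ the residual part that is excluded from $Z_1,\dots,Z_p$.

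Concretely, the core step is to re-run the construction underlying the almost-embeddable RCD of \cite{BandyapadhyayLLSJ22} but with the ``center'' of the concentric decomposition shifted: instead of measuring BFS-distance from an arbitrary root, measure it from (a contraction of) the connector $R$. The almost-embeddable RCD groups vertices into $p$ classes according to (roughly) their layer index modulo $p$, with the understanding that within each torso/vortex the layering is along radial BFS layers and the surface part is cut into bounded-treewidth annuli; by declaring $R$ to be ``layer $\le 0$'' and pushing everything else outward, the vertices of $R$ — and hence $\varPhi$ — end up in a bounded-treewidth residual piece $G - \bigcup_i Z_i$ that is connected (it contains the connected subgraph $R \supseteq \varPhi$ plus the surrounding layer-$0$ annulus, which we can always take to be connected since $G$ is connected). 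This gives condition~\ref{item-simplified-2}. Condition~\ref{item-simplified-1}, the robustness of the treewidth bound under removing any $Z' \subseteq Z_i$, is inherited verbatim from \cite{BandyapadhyayLLSJ22}: their decomposition already has the property that $\tw(G/(Z_i \setminus Z')) = O_{h}(p + |Z'|)$, and shifting the center of the layering does not affect this — the robustness argument there is local to how a class meets each annulus/vortex, and re-rooting only relabels layers. The extra additive dependence on $c$ in $O_{h,c}(\cdot)$ comes from the $O(c)$ connector curves we added to layer $0$ and to the boundary annulus, each of which contributes an additive constant to the treewidth of the residual piece and of each contracted class.

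The step I expect to be the main obstacle is handling the connector $R$ inside the \emph{vortices} and across \emph{apices}, rather than in the genuinely surface-embedded part. In the pure bounded-genus case, an $O(c)$-curve connector can be cut along with the rest of the surface into bounded-treewidth annuli without trouble. But a path of $R$ that wanders in and out of a vortex, or uses an apex as a shortcut between two far-apart surface regions, threatens to merge many annuli and blow up the treewidth of a single class. The fix — and the place where the ``additional assumption on $G$'' hidden in the footnote of Corollary~\ref{cor-decomp2} presumably enters — is to first argue that we may assume $\varPhi$ (and the connector) interacts with only $O_h(1)$ vortices and apices, e.g.\ by a reduction that contracts or deletes irrelevant parts, or by charging each vortex/apex touched by $R$ to an additive $O(1)$ term; then within each touched vortex we route the connector along the vortex's own path decomposition so it occupies only a bounded-width ``slab,'' and each apex is simply added to layer $0$ (there are $\le h$ of them anyway). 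Once these bounded-interaction guarantees are in place, gluing the layerings of the surface part, the vortices, and the apex set proceeds exactly as in \cite{BandyapadhyayLLSJ22}, and conditions~\ref{item-simplified-1} and~\ref{item-simplified-2} both follow.
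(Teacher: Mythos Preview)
Your proposal has a genuine gap at the step you treat as routine: the claim that ``shifting the center of the layering does not affect'' the robustness bound, so that condition~\ref{item-simplified-1} is ``inherited verbatim from \cite{BandyapadhyayLLSJ22}.'' If by re-rooting you mean contracting the connector $R$ to a vertex and running the radial layering of \cite{BandyapadhyayLLSJ22} on $G/R$, then the RCD you obtain bounds $\tw\big((G/R)/(Z_i\setminus Z')\big)=\tw\big((G/(Z_i\setminus Z'))/R\big)$, not $\tw\big(G/(Z_i\setminus Z')\big)$; since $R$ has unbounded size, un-contracting it can raise treewidth arbitrarily (think of a long path $R$ through a grid). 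If instead you mean taking BFS layers from $R$ in $G$ without contracting, those layers are not radial (vertex--face) layers at all: the robustness analysis in \cite{BandyapadhyayLLSJ22} rests on topological properties of $\partial o_t$---bounded vertex degrees, $O(g)$ singular faces, etc.\ (Lemma~\ref{lem-degree} here)---which hold because each $L_t$ is the boundary of the outer face of the remaining graph, not because it is a BFS shell around some subgraph. Neither reading of ``re-root and relabel'' is justified, and the paper's overview explicitly notes that the construction of \cite{BandyapadhyayLLSJ22} \emph{inherently} disconnects $G-\bigcup_i Z_i$ into $O(p)$-layer slabs, so one cannot simply reuse it.

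The paper's route is therefore different and more delicate. It keeps the original radial layering (from a fixed reference point, after first passing to a \emph{minimal} embedding via Lemma~\ref{lem-minimal}), and for each $v\in\varPhi$ builds a \emph{monotone} path $\pi_v$ from $v$ down to $L_1$, layer by layer; the sets $Z_i$ are the usual unions of equidistant layers with the vertices on these $c$ paths removed. The entire technical content is showing that one can walk inside each layer $L_t$ from an arbitrary entry point to an ``exit'' vertex adjacent to $L_{t-1}$ while damaging the layer's RCD contribution by only $O_{g,c}(1)$. This is Lemma~\ref{lem-key}: it constructs not only the in-layer piece of the path but an auxiliary set $L_t^+\subseteq L_t$ so that, after moving $L_t^+$ to the ``inside'' part, the two key properties (\ref{item:layering-1})--(\ref{item:layering-2}) used in the \cite{BandyapadhyayLLSJ22} analysis survive with $L_t\setminus(\pi\cup L_t^+)$ in place of $L_t$. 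Contrary to your diagnosis, the hard part is already the pure bounded-genus case; handling vortices and apices is comparatively straightforward once the surface argument is in place (see the passage from Lemma~\ref{lem-decomp1} to Corollary~\ref{cor-decomp2}).
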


To see why the above lemma achieves our goal, note that $\tor(t) - \sigma(t)$ is $h$-almost-embeddable.
Furthermore, by constructing the Robertson-Seymour decomposition $(T,\beta)$ carefully, we can always make $\tor(t) - \sigma(t)$ connected and make every vertex in $\sigma(t)$ have a neighbor in $\beta(t) \setminus \sigma(t)$ in the torso $\tor(t)$; see Lemma~\ref{lem-rsdecomp} and Observation~\ref{obs-noadhtos}.
Without loss of generality, suppose we want to connect the vertices in $\sigma(t) \cap Z_k^{(t)}$ using the $k$-th class of the RCD of $\tor(t) - \sigma(t)$.
For each $v \in \sigma(t) \cap Z_i^{(t)}$, we mark a vertex $v' \in \beta(t) \setminus \sigma(t)$ that is neighboring to $v$ in $\tor(t)$.
Let $\varPhi \subseteq \beta(t) \setminus \sigma(t)$ be the set of marked vertices.
Now apply Lemma~\ref{lem-simplified} with $G = \tor(t) - \sigma(t)$ and the set $\varPhi$.
We then get the disjoint sets $Z_1,\dots,Z_p \subseteq \beta(t) \setminus \sigma(t)$.
The vertices in $Z_i$ are assigned to the class $Z_i^{(t)}$ for all $i \in [p]$.
Finally, the vertices in $(\beta(t) \setminus \sigma(t)) \setminus (\bigcup_{i=1}^p Z_i)$ are assigned to the class $Z_k^{(t)}$.
Then condition~\ref{item-simplified-1} of Lemma~\ref{lem-simplified} implies that $Z_1^{(t)},\dots,Z_p^{(t)}$ is an RCD of $\tor(t)$ and condition~\ref{item-simplified-2} guarantees that all $u,v \in \sigma(t) \cap Z_k^{(t)}$ lie in the same connected component of $\tor(s)[(Z_k^{(t)} \setminus \sigma(t)) \cup \{u,v\}]$.
Next, we summarize our ideas for proving Lemma~\ref{lem-simplified}.

\paragraph{Proof Sketch of Lemma~\ref{lem-simplified}.}
As mentioned at the beginning, the recent work~\cite{BandyapadhyayLLSJ22} already gives RCDs for almost-embeddable graphs, that is, it proves the special case of Lemma~\ref{lem-simplified} without the set $\varPhi$ and condition~\ref{item-simplified-2}.
Therefore, it is natural to ask whether one can directly generalize (possibly with slight modifications) the proof in \cite{BandyapadhyayLLSJ22} to obtain a proof of Lemma~\ref{lem-simplified}.
Unfortunately, this is not the case.
A close look at the proof in \cite{BandyapadhyayLLSJ22} reveals that the construction of $Z_1,\dots,Z_p$ in \cite{BandyapadhyayLLSJ22} ``inherently'' prevents us from having condition~\ref{item-simplified-2} of Lemma~\ref{lem-simplified}.
To see this, we need to briefly review how \cite{BandyapadhyayLLSJ22} constructs the sets $Z_1,\dots,Z_p$.

For convenience, we discuss the proof of \cite{BandyapadhyayLLSJ22} on a surface-embedded graph instead of an almost-embeddable graph.
In fact, the most difficult part of the work \cite{BandyapadhyayLLSJ22} also lies in decomposing a surface-embedded graph (specifically, the embeddable part of an almost-embeddable graph), and generalizing to almost-embeddable graphs is somehow easy.
In \cite{BandyapadhyayLLSJ22}, the construction of $Z_1,\dots,Z_p$ itself is simple, while the analysis of the RCD condition is involved.
In the first step, it generalizes the \emph{outerplanar layering} for planar graphs to surface-embedded graphs.
Recall that the outerplanar layering partitions the vertices of a planar graph $G$ into layers $L_1,\dots,L_m \subseteq V(G)$, where $L_i$ consists of the vertices on the outer boundary of $G - \bigcup_{j=1}^{i-1} L_j$.
Thus, $L_1$ is outer boundary of $G$, $L_2$ is the outer boundary of $G - L_1$, and so forth.
If $G$ is a graph embedded in a surface $\varSigma$, the same layering procedure still applies.
Indeed, we can fix a reference point $x_0 \in \varSigma$ and define the outer boundary of a $\varSigma$-embedded graph as the boundary of the face containing $x_0$ (assuming the image of the embedding is disjoint from $x_0$).
Given this, we can layer $G$ in the same way as above, by defining $L_i$ to be the vertices on the outer boundary of $G - \bigcup_{j=1}^{i-1} L_j$.
We call this generalization \emph{radial layering} for surface-embedded graphs.
Now let $L_1,\dots,L_m$ be the radial layering of $G$.
The analysis in \cite{BandyapadhyayLLSJ22} implies the following property of the layers (though not stated explicitly in \cite{BandyapadhyayLLSJ22}).

\begin{lemma}\label{lem-equi}
 If $Z = L_{i_1} \cup \cdots \cup L_{i_k}$ where $i_1 < \cdots < i_k$ and $|i_j - i_{j-1}| = O(p)$ for all $j \in [k+1]$ (set $i_0 = 0$ and $i_{k+1} = m$ for convenience), then $\tw(G/(Z \setminus Z')) = O(p+|Z'|)$ for all $Z' \subseteq Z$.
\end{lemma}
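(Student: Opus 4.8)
The plan is to establish the bound through a ``band decomposition'' of $G$ along the chosen layers, relying on only two soft facts: (i) a graph embedded in a fixed surface in which every vertex lies within radial distance $O(p)$ of a reference face has treewidth $O(p)$, the hidden constant depending on the Euler genus (which is $O(1)$ here); and (ii) contracting an arbitrary vertex set never increases treewidth, since the result is a minor. Throughout write $Z'' := Z \setminus Z'$, so the goal is $\tw(G/Z'') = O(p + |Z'|)$.

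First I would set up the bands. For $j \in [k+1]$ let $B_j$ be the subgraph of $G$ induced by the layers $L_{i_{j-1}}, L_{i_{j-1}+1}, \dots, L_{i_j}$ (with $i_0 = 0$ and $i_{k+1} = m$). Since $i_j - i_{j-1} = O(p)$, each $B_j$ has radial depth $O(p)$, hence $\tw(B_j) = O(p)$ by (i), and therefore $\tw(B_j/W) = O(p)$ for every $W \subseteq V(B_j)$ by (ii). Every edge of $G$ joins vertices in equal or consecutive layers and so lies in some $B_j$; consecutive bands $B_j, B_{j+1}$ meet exactly in the chosen layer $L_{i_j}$; and (assuming, essentially without loss of generality, that the chosen layers are pairwise non-adjacent) every connected component of $G[Z]$, and hence of $G[Z'']$, is contained in a single layer. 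Consequently $G/Z''$ identifies vertices only within single layers and is obtained by gluing the contracted bands $B_j / (Z'' \cap V(B_j))$ along the images of the chosen layers. Since the components of a layer bound (roughly) pairwise interior-disjoint regions of the surface, the bands glue together in a \emph{tree-like} pattern: a band nested inside the region bounded by one component of the inner boundary of its parent band is attached only along the image of that component. In particular every component of $G[Z'']$ is a connected subgraph of each band it meets, so its footprint in that band's tree decomposition is connected.

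The heart of the proof is the gluing. I would fix for each $B_j$ a width-$O(p)$ tree decomposition of $B_j / (Z'' \cap V(B_j))$, chosen \emph{geometrically} so that bags sweep ``along the layers'', and glue them along the band tree. At the interface $L_{i_j}$ between $B_j$ and its children there are two effects of having deleted $Z'$. First, the at most $|Z'|$ vertices of $Z' \cap L_{i_j}$ survive uncontracted; I would insert each one into every bag of the bands on whose boundary it lies, which over all of $Z'$ adds at most $|Z'|$ to every bag. Second, the single vertex to which a component $C$ of $G[L_{i_j}]$ collapses may break into the vertices to which the components $C^1, C^2, \dots$ of $C - Z'$ collapse; in each bag that contained that vertex I would replace it by those $C^l$ that meet the bag and re-route incident edges accordingly. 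Because each bag started with $O(p + |Z'|)$ vertices it meets only $O(p + |Z'|)$ of the new vertices, so widths stay $O(p + |Z'|)$; and because each $C^l$ is connected its footprint stays connected. This produces a tree decomposition of $G/Z''$ of width $O(p + |Z'|)$. The Euler genus is absorbed by (i) inside each band, and for the global structure by first cutting the surface along $O(g) = O(1)$ noncontractible curves to reduce to a planar instance with a concentric layering, perturbing the layers and treewidth by an additive $O(g) = O(1)$; for $Z' = \emptyset$ the whole argument collapses to the contraction-decomposition bound $\tw(G/Z) = O(p)$ in the spirit of Theorem~\ref{thm-edge-contraction}.

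I expect the middle step to be the real obstacle: making it precise that after deleting $Z'$ the many pieces of a split layer-component still attach in a tree-like, bounded-width fashion. This requires an embedding-aware choice of the band tree decompositions together with a local-planarity estimate bounding, for each vertex $w$ of a band, the number of distinct pieces of a layer-component that $w$ can be adjacent to in terms of the number of deleted vertices lying ``around'' $w$ in the rotation at $w$; with that in hand the rest is bookkeeping.
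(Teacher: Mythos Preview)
Your high-level architecture matches the paper's (and \cite{BandyapadhyayLLSJ22}'s): build a tree whose nodes are the bands between consecutive chosen layers, with the children of a band being the connected components of the graph below its inner boundary layer, and bound the treewidth of each torso by $O(p+|Z'|)$. But the step you flag as the obstacle really is a gap in your write-up, and the fix is not the ``local-planarity estimate at each vertex $w$'' you propose.

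The claim ``because each bag started with $O(p+|Z'|)$ vertices it meets only $O(p+|Z'|)$ of the new vertices'' is unsupported: the number of vertices in a bag has no a~priori bearing on how many pieces of a split layer-component can be adjacent to the region that bag represents. What you actually need is an \emph{adhesion bound}: for each connected component $C$ of $G[L_{>i_j}]$, the number of neighbours of $C$ in $G/(L_{i_j}\setminus Z')$ is $O_g(|Z'|)$. The paper gets this from a structural fact about the face $o_{i_j}$ that is the outer face of $G[L_{\geq i_j}]$: every face $f\in F_\eta(\partial o_{i_j})\setminus\{o_{i_j}\}$ has a boundary $\partial f$ of maximum degree $O(g)$ with only $O(g)$ vertices of degree $\geq 3$ (Lemma~\ref{lem-degree}), so deleting $|Z'|$ vertices from $\partial f$ creates only $O_g(|Z'|)$ components (Fact~\ref{fact-components}). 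Since $N_G(C)\subseteq V(\partial f)$ for some such $f$ (property \ref{item:layering-1} in the overview), the adhesion $|\pi(N_G(C))|$ is $O_g(|Z'|)$. Your ``per-vertex'' framing is inverted: the bound is on the neighbourhood of the \emph{component below}, not on how many pieces a vertex \emph{above} sees.

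Once the adhesions are bounded, the paper does \emph{not} literally glue precomputed tree decompositions of the bands; rather it bounds each torso's treewidth directly via the weighted vertex-face diameter (Lemma~\ref{lem-twdiam2}), assigning weight $|\kappa(f)|=O_g(|Z'|)$ to the faces $f$ that carry adhesion cliques and weight $0$ elsewhere, and then invokes Lemma~\ref{lem-torsotw}. Your approach of taking a ``geometric'' tree decomposition of each band and surgically replacing interface vertices can presumably be made to work too, but you will still need Lemma~\ref{lem-degree} (or an equivalent) to control how many interface pieces land in a bag.
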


Then \cite{BandyapadhyayLLSJ22} simply defines each set $Z_i$ as the union of equidistant layers with distance $O(p)$ apart, that is, $Z_i = L_{q_i} \cup L_{\Delta+q_i} \cup L_{2 \Delta + q_i} \cup \cdots$ for some number $q_i \in [\Delta]$ where $\Delta = O(p)$.
If the choices of $q_1,\dots,q_p$ are different, $Z_1,\dots,Z_p$ are disjoint.
By Lemma~\ref{lem-equi}, $Z_1,\dots,Z_p$ satisfy the RCD condition.

Now let us see what is the difficulty to generalize this construction so that it further satisfies condition~\ref{item-simplified-2} of Lemma~\ref{lem-simplified}.
Consider the given set $\varPhi$ in Lemma~\ref{lem-simplified}, which is of size $c$.
We want $\varPhi$ to be contained in the complement $G-\bigcup_{i=1}^{p} Z_i$, and more importantly, $\varPhi$ has to be in one connected component of $G-\bigcup_{i=1}^{p} Z_i$.
To make $\varPhi$ contained in $G-\bigcup_{i=1}^{p} Z_i$ is easy.
We can mark the layers $L_i$ intersecting $\varPhi$ as ``bad'' layers.
There can be at most $c$ bad layers.
When constructing $Z_1,\dots,Z_p$, we do not include these bad layers.
According to Lemma~\ref{lem-equi}, we can still guarantee that $Z_1,\dots,Z_p$ satisfy condition~\ref{item-simplified-1} of Lemma~\ref{lem-simplified} even if we miss all bad layers, because the constant hidden in the bound of condition~\ref{item-simplified-1} can depend on $c$.
However, to further require the vertices in $\varPhi$ lying in the same connected component of $G-\bigcup_{i=1}^{p} Z_i$ is impossible.
Indeed, one can easily see that each layer $L_i$ is a separator in $G$ that separates the layers $L_1,\dots,L_{i-1}$ from the layers $L_{i+1},\dots,L_m$.
The layers in $Z_1,\dots,Z_p$ separate the remaining part of $G$ into many ``small'' pieces where each piece consists of at most $O(p)$ consecutive layers.
This highly-disconnected structure of $G-\bigcup_{i=1}^{p} Z_i$ naturally prevents the vertices in $\varPhi$ from lying in the same connected component, unless the layers containing $\varPhi$ are all close to each other.

Based on the above discussion, in order to satisfy both conditions in Lemma~\ref{lem-simplified}, we need to significantly modify the construction of \cite{BandyapadhyayLLSJ22} with various new ideas.
As we have seen, in the previous construction, the layers in $Z_1,\dots,Z_p$ serve as ``barriers'' that prevent the vertices in $\varPhi$ from being connected in the complement graph.
Therefore, a natural idea is to ``break'' these layers a little bit (i.e., remove some vertices from $Z_1,\dots,Z_p$) so that the vertices in $\varPhi$ can go across the barriers to connect with each other.
However, by doing this, the layers in each $Z_i$ become broken, and thus the new $Z_i$ might violate the RCD condition, i.e., condition~\ref{item-simplified-1} of Lemma~\ref{lem-simplified} (as we have fewer vertices to contract).
As such, we have to break the layers in some \emph{structured} way such that the vertices in $\varPhi$ can be connected in the complement graph and \emph{simultaneously} the RCD condition of $Z_1,\dots,Z_p$ preserves, which is the main challenge in our construction.
At this point, it is totally not clear how to achieve this goal.
So next, we begin with some simple intuitions.

\begin{figure}
 \centering
 \includegraphics[height=3.6cm]{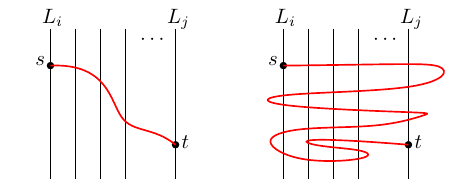}
 \caption{Monotone path vs. back-and-forth path}
 \label{fig-mono}
\end{figure}

Let us consider the simplest case where $\varPhi$ only contains two vertices $s$ and $t$.
In this case, it suffices to properly find an $(s,t)$-path $\pi$ in $G$ and then remove the vertices in $Z_1,\dots,Z_p$ on this path.
Since we do not want to lose the RCD condition of $Z_1,\dots,Z_p$, $\pi$ should break each layer in $Z_1,\dots,Z_p$ as ``little'' as possible.
So intuitively, a path that visits the layers $L_1,\dots,L_m$ \emph{monotonely} (left part of Figure~\ref{fig-mono}) is preferable to a path that goes back and forth across the layers many times (right part of Figure~\ref{fig-mono}).
If we do not have any requirement on the embedding of $G$ in $\varSigma$, it is not always possible to find a monotone (or mostly monotone) path connecting $s$ and $t$.
Thus, at the beginning (before the radial layering), we need to first modify the embedding of $G$ to a nice one, and do everything with respect to the nice embedding.
For surface graphs, a \emph{$2$-cell} embedding, in which each face is homeomorphic to a disk, is the nice embedding we need.
One can always compute a $2$-cell embedding for $G$ in polynomial time as long as the genus of $G$ is a constant.
(For almost-embeddable graphs, however, the situation is more involved, as we are not able to arbitrarily change the embedding of the embeddable part because of the vortices.
We have to define another type of embedding for which we can safely modify a given embedding of the embeddable part to.)
Now if $L_1,\dots,L_m$ are the radial layers for a 2-cell embedding of $G$, then one can go from every vertex in a layer $L_j$ to the previous layer $L_{j-1}$ by walking around the boundary of a face in between $L_{j-1}$ and $L_j$.
It turns out that for every vertex $v \in L_j$ and index $i \leq j$, there exists a path from $v$ to (some vertex in) $L_i$ that visits $L_j,L_{j-1},\dots,L_i$ monotonely.
Suppose $s \in L_i$ and $t \in L_j$ where $i \leq j$.
Note that although we can go from $t$ to $L_i$ monotonely, there does not necessarily exist a monotone path from $t$ to $s$.
So the key idea here is to, instead of using one path connecting $s$ and $t$, construct two monotone paths $\pi_s$ and $\pi_t$, where $\pi_s$ (resp., $\pi_t$) goes from $s$ (resp., $t$) to $L_1$.
Then we do not include the layer $L_1$ in any of $Z_1,\dots,Z_p$ (which is fine according to Lemma~\ref{lem-equi}).
Because $L_1$ is the outer boundary of $G$ and the embedding of $G$ is 2-cell, $G[L_1]$ is connected.
Therefore, $L_1$ together with the two paths $\pi_s$ and $\pi_t$ connects $s$ and $t$.
The same idea directly genearlizes to the case where $\varPhi$ has more than two vertices.
For each vertex $v \in \varPhi$, we try to construct a path $\pi_v$ which goes monotonely from $v$ to $L_1$.
Now $L_1$ together with all the paths connect everything in $\varPhi$.
It then remains to show how to construct these $c$ monotone paths carefully so that they do not break the layers in $Z_1,\dots,Z_p$ too much.

Since $c$ is treated as a constant in Lemma~\ref{lem-simplified}, if we are able to construct one path, it is not surprising that the same idea generalizes to constructing $c$ paths.
Thus, in what follows, we focus on the case $c = 1$.
We have $\varPhi = \{v\}$.
Since the path $\pi$ from $v$ to $L_1$ we are going to construct is monotone, it crosses each layer $L_i$ at most once, that is, after it leaves $L_i$ for $L_{i-1}$, it never comes back to $L_i$.
Therefore, we construct each part $L_i \cap \pi$ of $\pi$ from larger $i$ to smaller $i$ iteratively.
When constructing $L_i \cap \pi$, we basically need to consider how to walk in the layer $L_i$ from an arbitrary starting vertex $s \in L_i$ to an ``exit'' vertex that is adjacent to $L_{i-1}$ so that the walk does not break $L_i$ too much.
To this end, we have to figure out what ``too much'' means.
In other words, how much can we break each layer $L_i$ so that $Z_1,\dots,Z_p$ still satisfy the RCD condition?
By checking the proof of Lemma~\ref{lem-equi} in \cite{BandyapadhyayLLSJ22}, we see that the bound in Lemma~\ref{lem-equi} mainly relies on the following two properties of each layer $L_i$ (where $L_{>i}$ is the union of $L_{i+1},\dots,L_m$ and $L_{\leq i}$ is the union of $L_1,\dots,L_i$).
\begin{enumerate}[label = (\Roman*)]
 \item\label{item:layering-1} The neighbors of each connected component of $G[L_{>i}]$ lie in one face of $G[L_{\leq i}]$.
 \item\label{item:layering-2} For every $L' \subseteq L_i$, each connected component of $G[L_{>i}]$ is adjacent to $O(|L'|)$ vertices in the graph $G/(L_i \setminus L')$.
\end{enumerate}
Using the above two properties and the fact that the layers in each of $Z_1,\dots,Z_p$ are only $O(p)$ distance apart, one can finally show that $Z_1,\dots,Z_p$ satisfies the RCD condition.
Now we have the path $\pi$ that breaks $L_i$.
So we can only include in $Z_1,\dots,Z_p$ the part $L_i \setminus \pi$.
In this case, in order to make the proof work, we need the modified version of condition~\ref{item:layering-2} above: for every $L' \subseteq L_i \setminus \pi$, each connected component of $G[L_{>i}]$ is adjacent to $O(|L'|)$ vertices in the graph $G/((L_i \setminus \pi) \setminus L')$.
However, it is easy to see that this is impossible.
Consider the simplest case where $L' = \emptyset$.
We want each connected component of $G[L_{>i}]$ to have $O(1)$ neighbors in $G/(L_i \setminus \pi)$.
But in the worst case, after $\pi$ reaches $L_i$, it needs to go inside $L_i$ for a long distance in order to arrive at an exit vertex to leave $L_i$ for $L_{i-1}$.
Therefore, it can happen that $\pi$ contains a large fraction of $L_i$ in which many vertices can be adjacent to the same connected component $C$ of $G[L_{>i}]$.
As these vertices are not contracted in $G/(L_i \setminus \pi)$, the number of neighbors of $C$ in $G/(L_i \setminus \pi)$ can be unbounded.

Going deeper into the proof of \cite{BandyapadhyayLLSJ22} shows that the reason for why the above two properties help is basically the following.
These two properties allow us to partition $G$ into two parts $G[L_{\leq i}]$ and $G[L_{> i}]$ such that the connection between these two parts becomes ``weak'' after contracting a large subset of $L_i$, namely, each connected component of $G[L_{> i}]$ is only adjacent to few vertices in $G[L_{\leq i}]$ which lie in one face of $G[L_{\leq i}]$ (before contraction).
Now because the part in $L_i \cap \pi$ cannot be contracted, we are no longer able to have a weak connection between $G[L_{\leq i}]$ and $G[L_{> i}]$.
To get rid of this issue, our key idea here is to partition $G$ into two parts in a different way.
Specifically, when determining the part of $\pi$ contained in $L_i$, we also compute another subset $L_i^+ \subseteq L_i$.
Then we partition $G$ into two parts $G[L_{\leq i} \setminus L_i^+]$ and $G[L_{> i} \cup L_i^+]$, that is, we move $L_i^+$ from the part $G[L_{\leq i}]$ to the part $G[L_{> i}]$.
The choice of $L_i^+$ will guarantee the aforementioned weak connection between the two new parts.
Formally, $L_i^+$ (and $\pi$) satisfies the following properties.
\begin{enumerate}[label = (\Roman*)]
 \item The neighbors of each connected component of $G[L_{> i} \cup L_i^+]$ lie in one face of $G[L_{\leq i}]$ (and therefore one face of $G[L_{\leq i} \setminus L_i^+]$).
 \item For every $L' \subseteq L_i \setminus \pi$, each connected component of $G[L_{> i} \cup L_i^+]$ is adjacent to $O(|L'|)$ vertices in the graph $G/((L_i \setminus \{\pi \cup L_i^+\}) \setminus L')$.
\end{enumerate}
Note that such a set $L_i^+$ does not always exist for an arbitrary path $\pi$.
Therefore, we have to construct $L_i \cap \pi$ and $L_i^+$ simultaneously, and both of them need to be constructed carefully.
This task is achieved by Lemma~\ref{lem-key}.
As this step is technical and requires insights for surface-embedded graphs, we are not going to discuss it in detail.
Essentially, once we are able to construct each part $L_i \cap \pi$ of $\pi$ and the corresponding set $L_i^+$ satisfying the above two conditions, we can combine them to obtain the desired path $\pi$ from $v$ to $L_1$ and show that $\tw(G/((Z_i \setminus \pi) \setminus Z')) = O(p+|Z'|)$ for all $Z' \subseteq Z_i \setminus \pi$.
The sets $L_i^+$ are only used in the analysis of the treewidth bounds, and the analysis is built on the argument in \cite{BandyapadhyayLLSJ22}.
The same idea generalizes to the case where we need to construct $c$ paths (with a bit more work).
With this in hand, we are finally able to prove Lemma~\ref{lem-simplified} for surface-embedded graphs.
Of course, for almost-embeddable graphs, there are more technical details to be dealt with, but the basic idea remains the same.

\paragraph{Minimal embeddings of almost-embeddable graphs.}
In the last part of this overview, we discuss an interesting intermediate result achieved in our proof.
As aforementioned, we can modify the embedding of a (connected) surface graph to a 2-cell embedding.
However, for an almost-embeddable graph, we cannot require its embeddable part is 2-cell embedded.
There are two reasons.
First, when we change the embedding of the embeddable part, the structure of the vortices might be lost.
Second, even if the graph itself is connected, the subgraph excluding the apices is not necessarily connected, and thus does not admit a 2-cell embedding.
In order to make our proof work, we define a variant of 2-cell embedding, which we call \emph{minimal embedding}.
Basically, an almost-embeddable graph $G$ whose embeddable part is embedded with a minimal embedding satisfies the following condition.
Let $f$ be a face of the embeddable part of $G$, and $\widetilde{\partial} f$ be the subgraph of $G$ consisting of the boundary of $f$ and all vortices contained in $f$.
Then different connected components of $\widetilde{\partial} f$ belong to different connected components of $G-A$ where $A$ is the set of apices of $G$.
If $G$ does not have vortices and apices (i.e., $G$ is a surface graph), then the above condition is equivalent to saying that the boundary of every face of $G$ is connected.
We show that given an almost-embeddable graph $G$, we can compute (in polynomial time) a new almost-embeddable structure for $G$ in which the embeddable part is embedded with a minimal embedding (which is proved in Lemma~\ref{lem-minimal}).
This result is important for our proof, and might be of independent interest.

\section{Preliminaries}
\label{sec-pre}

\paragraph{Basic Notations.}
Let $G$ be a graph.
We denote by $V(G)$ and $E(G)$ the vertex set and the edge set of $G$, respectively.
For $U \subseteq V(G)$, we denote by $G[U]$ the induced subgraph of $G$ on $U$ and denote by $G - U$ the induced subgraph of $G$ on $V(G) \setminus U$.
Also, we denote by $N_G(U)$ the set of vertices in $V(G) \setminus U$ that are adjacent to at least one vertex in $U$, and write $N_G[U] \coloneqq N_G(U) \cup U$.
If $U = \{u\}$, we also write $N_G(u)$ and $N_G[u]$ instead of $N_G(\{u\})$ and $N_G[\{u\}]$, respectively.

\paragraph{Tree Decomposition and Treewidth.}
A \emph{tree decomposition} of a graph $G$ is a pair $(T,\beta)$ where $T$ is a tree and $\beta\colon V(T) \to 2^{V(G)}$ maps each node $t \in V(T)$ to a set $\beta(t) \subseteq V(G)$, called the \emph{bag} of $t$, such that
\begin{enumerate}[label = (\roman*)]
 \item $\bigcup_{t \in V(T)} \beta(t) = V(G)$,
 \item for every edge $uv \in E(G)$, there exists $t \in V(T)$ with $u,v \in \beta(t)$, and
 \item for every $v \in V(G)$, the set $\{t \in V(T) \mid v \in \beta(t)\}$ forms a connected subset in $T$.
\end{enumerate}
The \emph{width} of $(T,\beta)$ is $\max_{t \in V(T)} |\beta(t)| - 1$.
The \emph{treewidth} of a graph $G$, denoted by $\tw(G)$, is the minimum width of a tree decomposition of $G$.
It is sometimes more convenient to consider \emph{rooted} trees.
Throughout this paper, we always view the underlying tree of a tree decomposition as a rooted tree.

Let $(T,\beta)$ be a tree decomposition of a graph $G$.
The \emph{adhesion} of a non-root node $t \in V(T)$, denoted by $\sigma(t)$, is defined as $\sigma(t) \coloneqq \beta(t) \cap \beta(t')$ where $t'$ is the parent of $t$.
For convenience, we also define the adhesion of the root of $T$ as the empty set.
For each node $t \in V(T)$, we define the \emph{$\gamma$-set} of $t$ as $\gamma(t) \coloneqq \bigcup_{s \in V(T_t)} \beta(s)$ where $T_t$ is the subtree of $T$ rooted at $t$.
The \emph{torso} of $t$, denoted by $\tor(t)$, is the graph obtained from $G[\beta(t)]$ by making $\sigma(s)$ a clique for all children $s$ of $t$, i.e., adding edges between any two vertices $u,v \in \beta(t)$ such that $u,v \in \sigma(s)$ for some child $s$ of $t$.
The following two facts are well-known.

\begin{fact}\label{fact-partition}
 Let $(T,\beta)$ be a tree decomposition of $G$.
 Then $\{\beta(t) \setminus \sigma(t) \mid t \in V(T)\}$ is a partition of $V(G)$.
\end{fact}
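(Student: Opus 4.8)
The plan is to verify directly the two defining properties of a partition: that the sets $\beta(t)\setminus\sigma(t)$ cover $V(G)$, and that they are pairwise disjoint. The workhorse will be a small structural observation about tree decompositions. Fix $v\in V(G)$ and let $T_v\coloneqq\{t\in V(T)\mid v\in\beta(t)\}$; by axioms (i) and (iii) of a tree decomposition, $T_v$ is a nonempty connected subtree of $T$. Since $T$ is rooted, I first want to show that $T_v$ has a \emph{unique} node $t_v$ of minimum depth, and that $t_v$ is an ancestor of every node of $T_v$. Uniqueness: if $T_v$ had two distinct minimum-depth nodes, the unique path between them in $T$ would pass through their lowest common ancestor, which has strictly smaller depth and lies in $T_v$ by connectivity — a contradiction. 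Ancestry: for any $s\in T_v$, the path from $s$ to $t_v$ lies in $T_v$ and runs monotonically upward to the lowest common ancestor of $s$ and $t_v$, which must then equal $t_v$. A corollary I will record for later use: if $s\in T_v$ and $s\neq t_v$, then the parent $s'$ of $s$ also lies in $T_v$, since $s'$ is the first node on the (upward) path from $s$ to its proper ancestor $t_v$.

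For coverage, take an arbitrary $v\in V(G)$ and show $v\in\beta(t_v)\setminus\sigma(t_v)$. We have $v\in\beta(t_v)$ by definition of $T_v$. If $t_v$ is the root of $T$, then $\sigma(t_v)=\emptyset$ by convention and we are done. Otherwise, let $t_v'$ be the parent of $t_v$; because $t_v$ has minimum depth in $T_v$, we have $t_v'\notin T_v$, i.e.\ $v\notin\beta(t_v')$, and hence $v\notin\beta(t_v)\cap\beta(t_v')=\sigma(t_v)$. Either way, $v$ lies in the block indexed by $t_v$. For disjointness, suppose toward a contradiction that $v\in(\beta(s)\setminus\sigma(s))\cap(\beta(t)\setminus\sigma(t))$ for two distinct nodes $s,t$. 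Then $s,t\in T_v$, and since $t_v$ is the \emph{unique} minimum-depth node of $T_v$, at least one of $s,t$ — say $s$ — differs from $t_v$. By the corollary, the parent $s'$ of $s$ lies in $T_v$, so $v\in\beta(s)\cap\beta(s')=\sigma(s)$, contradicting $v\notin\sigma(s)$. Combining coverage and disjointness yields that $\{\beta(t)\setminus\sigma(t)\mid t\in V(T)\}$ is a partition of $V(G)$ (allowing empty blocks, which is harmless).

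The only mildly delicate point — the ``main obstacle,'' although it is quite minor — is the structural observation that the minimum-depth node $t_v$ of the connected subtree $T_v$ is genuinely an ancestor of all of $T_v$; once that is in hand, coverage and disjointness are routine bookkeeping. I will also make sure to treat the degenerate cases (the root block, where $\sigma=\emptyset$, and the case where $T$ is a single node) explicitly, though they pose no real difficulty.
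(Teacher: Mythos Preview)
Your proof is correct and takes essentially the same approach as the paper: both identify, for each vertex $v$, the unique highest node $t_v$ of the subtree $T_v=\{t\mid v\in\beta(t)\}$ and argue that $v\in\beta(t)\setminus\sigma(t)$ if and only if $t=t_v$, since for any other $t\in T_v$ the parent of $t$ also lies in $T_v$ and hence $v\in\sigma(t)$. Your version is somewhat more explicit about why $t_v$ is unique and an ancestor of all of $T_v$, but the argument is the same.
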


\begin{proof}
 Let $x$ be a vertex of $G$ and define $T'$ to be the subtree induced by $\beta^{-1}(x)$.
 Let $r$ denote the highest element of $T'$.
 We claim that for every $t \neq r$ such that $t \in V(T')$, we have that $x \in \sigma(t)$. This would imply the desired result because then $x$ belongs to $\beta(t) \backslash \sigma(t)$ if and only if $t = r$.

 So let $t$ be a node of $T'$ different from $r$.
 In particular, it means that the parent $t'$ of $t$ also belongs to $T'$ and thus $x \in \beta(t) \cap \beta(t') = \sigma(t)$.
\end{proof}

\begin{fact}\label{fact-connintorso}
 Let $(T,\beta)$ be a tree decomposition of $G$.
 If $U \subseteq V(G)$ is a subset of vertices such that $G[U]$ is connected, then for all $t \in V(T)$, either $\tor(t)[U \cap \beta(t)]$ is connected or every connected component of $\tor(t)[U \cap \beta(t)]$ intersects $\sigma(t)$.
\end{fact}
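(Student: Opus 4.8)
The plan is to argue by contradiction, exploiting the fact that the torso $\tor(t)$ contains all edges of $G[\beta(t)]$ plus clique edges on each child adhesion. Suppose $C_1$ and $C_2$ are two distinct connected components of $\tor(t)[U \cap \beta(t)]$ with $C_1 \cap \sigma(t) = \emptyset$; I will derive a contradiction with the connectedness of $G[U]$. Since $G[U]$ is connected and $U \cap \beta(t) \neq \emptyset$ (it contains $C_1$), there is a path $P$ in $G[U]$ joining a vertex of $C_1$ to a vertex of $C_2$ (or, if one prefers, to any vertex of $U \cap \beta(t)$ not in $C_1$). Walk along $P$ from the $C_1$-end and let $u$ be the last vertex on $P$ that lies in $C_1$ and $w$ the vertex immediately after $u$ on $P$; then $w \in U$ but $w \notin C_1$.

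The key step is to show that $w$ must actually leave the bag $\beta(t)$ — i.e. $w \notin \beta(t)$. Indeed, if $w \in \beta(t)$, then since $uw \in E(G)$ and $u,w \in U \cap \beta(t)$, the edge $uw$ is present in $G[\beta(t)] \subseteq \tor(t)$, so $w$ would lie in the same connected component of $\tor(t)[U \cap \beta(t)]$ as $u$, namely $C_1$ — contradicting the choice of $w$. Hence $w \notin \beta(t)$. Now walk further along $P$ from $w$ until we re-enter $\beta(t)$; such a re-entry happens because the far end of $P$ lies in $\beta(t)$. Let $w' \in \beta(t) \cap U$ be the first vertex of $P$ after $u$ that lies back in $\beta(t)$, and let $Q$ be the subpath of $P$ strictly between $u$ and $w'$; all internal vertices of $Q$ lie outside $\beta(t)$. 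By the standard property of tree decompositions, a connected subgraph of $G$ avoiding $\beta(t)$ lives entirely inside $\gamma(s) \setminus \sigma(s)$ for a single child $s$ of $t$ (or inside $\gamma(t') \setminus \sigma(t)$ on the parent side, with $t'$ the parent of $t$; in the rooted setting this is the analogous "upward" component). In either case both $u$ and $w'$ are neighbours, in $G$, of the same component-of-$G - \beta(t)$, and the separator between that component and the rest is exactly the relevant adhesion. Concretely: if the internal vertices of $Q$ lie in $\gamma(s) \setminus \sigma(s)$ for a child $s$, then $u, w' \in \sigma(s)$, so $uw' \in E(\tor(t))$ by definition of the torso; hence $u$ and $w'$ lie in the same connected component of $\tor(t)[U \cap \beta(t)]$, i.e. $w' \in C_1$. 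If instead the internal vertices of $Q$ lie on the parent side, then $u, w' \in \sigma(t)$, contradicting $C_1 \cap \sigma(t) = \emptyset$ since $u \in C_1$.

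So in all cases we find a vertex $w' \in U \cap \beta(t)$ strictly later on $P$ than $u$ but still in $C_1$, contradicting the maximality of $u$ (the last vertex of $P$ in $C_1$). This contradiction shows that every component of $\tor(t)[U \cap \beta(t)]$ that fails to be all of $\tor(t)[U \cap \beta(t)]$ must intersect $\sigma(t)$, which is exactly the claim. The main obstacle is the bookkeeping in the third step — correctly handling both the child-side and parent-side "detours" of the path $P$ and invoking the right separation property of tree decompositions so that the detour endpoints land in a common adhesion; once that is set up, the torso's clique edges immediately close the argument. (One could streamline by first passing to an unrooted view, where parent and child adhesions are treated uniformly as adhesions of the edge of $T$ incident to $t$.)
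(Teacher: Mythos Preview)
Your proof is correct and follows essentially the same approach as the paper's: both take a path in $G[U]$ between two components of $\tor(t)[U\cap\beta(t)]$, observe that the internal vertices of the relevant segment lie outside $\beta(t)$ and hence in a single component of $T-t$, and then split into the child-side case (where the adhesion clique in $\tor(t)$ forces the endpoints into the same component, a contradiction) and the parent-side case (where the endpoint lands in $\sigma(t)$). The only cosmetic difference is that the paper uses a shortest path between components to get all inner vertices outside $\beta(t)$ in one stroke, whereas you track the last-$C_1$ vertex and the first re-entry explicitly.
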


\begin{proof}
 Let $U_1, \dots, U_r$ denote the connected components of $\tor(t)[U \cap \beta(t)]$ and let us assume $r \geq 2$ or nothing needs to be proved.
 Let $t'$ denote the parent of $t$ in $T$ and $s_1, \dots, s_\ell$ the children of $t$ in $T$.
 Because $G[U]$ is connected, it means that for every $i \in [r]$, there exists a path $P$ from $U_i$ to some other component $U_j$ in $G[U]$.
 Without loss of generality, we can assume that $P$ is a shortest such path.
 This means that, among the vertices of $P$ only the first vertex $v_i$ belongs to $U_i$, only the last vertex $v_j$ belongs to $U_j$ and the inner vertices of $P$ belong to $U \setminus \beta(t)$.
 In particular it means that all the inner vertices of $P$ belong to bags of nodes in a subtree obtained from $T$ by removing $t$.
 Because $v_i$ and $v_j$ are adjacent to some vertices on that path, they also belong to bags of nodes of that same subtree.
 If this subtree is the one containing $t'$, then $v_i \in \sigma(t)$.
 Otherwise, the subtree containing the inner vertices is the one attached to $t$ with $s_a$, for some $a \in [\ell]$.
 But then $v_i,v_j \in \sigma(s_a)$.
 However, $\sigma(s_a)$ is a clique in $\tor(t)$ which is a contradiction since $v_i$ and $v_j$ belong to different components of $\tor(t)[U \cap \beta(t)]$.
 This shows that, for every $i \in [r]$, $U_i$ intersects $\sigma(t)$.
\end{proof}

We shall also need the following useful lemma that relates the treewidth of a graph to the treewidth of the torsos in an arbitrary tree decomposition of the graph.

\begin{lemma} \label{lem-torsotw}
 If a graph $G$ admits a tree decomposition in which the treewidth of every torso is at most $k$, then $\tw(G) \leq 2k+1$.
\end{lemma}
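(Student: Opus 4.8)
The plan is to take the given tree decomposition $(T,\beta)$ of $G$ in which every torso has treewidth at most $k$, and to ``glue'' a width-$k$ tree decomposition of each torso $\tor(t)$ into the skeleton provided by $T$, producing a single tree decomposition of $G$ of width $O(k)$. Concretely, for each $t\in V(T)$ fix a tree decomposition $(T_t,\beta_t)$ of $\tor(t)$ of width at most $k$; so every bag $\beta_t(x)$ for $x\in V(T_t)$ has size at most $k+1$. The key structural fact we will use is that $\tor(t)$ contains $G[\beta(t)]$ as a subgraph together with, for each child $s$ of $t$, a clique on $\sigma(s)$, and likewise the adhesion $\sigma(t)$ (a clique in the torso of the parent of $t$) sits inside $\tor(t)$ as well. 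Since a clique must be contained in a single bag of any tree decomposition, for each child $s$ of $t$ there is a node $x(s)\in V(T_t)$ with $\sigma(s)\subseteq\beta_t(x(s))$, and there is a node $y(t)\in V(T_t)$ with $\sigma(t)\subseteq\beta_t(y(t))$ (for the root we may take $y(t)$ arbitrary since $\sigma(\text{root})=\emptyset$).

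Next I would assemble the global tree $\widehat T$ by taking the disjoint union of all the $T_t$ and, for every non-root node $s$ of $T$ with parent $t$, adding an edge between $y(s)\in V(T_s)$ and $x(s)\in V(T_t)$; the bag function $\widehat\beta$ is just $\beta_t$ on the copy of $T_t$. Then $\widehat T$ is a tree (we are attaching each subtree-block to its parent-block along one edge, mirroring the tree structure of $T$). I claim $(\widehat T,\widehat\beta)$ is a tree decomposition of $G$ of width at most $k$. Vertex coverage and edge coverage are immediate: every $v\in V(G)$ lies in some $\beta(t)$, hence in some $\beta_t(x)$; every edge of $G$ with both endpoints in $\beta(t)$ is an edge of $\tor(t)$, hence covered inside $(T_t,\beta_t)$; and every edge of $G$ is contained in some $\beta(t)$ by the tree decomposition axioms for $(T,\beta)$. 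The width bound is clear since all bags $\widehat\beta(x)=\beta_t(x)$ have size $\le k+1$.

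The one nontrivial point — and the step I expect to be the main obstacle — is verifying the connectivity axiom for $(\widehat T,\widehat\beta)$: for each $v\in V(G)$, the set of nodes of $\widehat T$ whose bag contains $v$ must induce a connected subtree. Here I would argue as follows. By the connectivity axiom for $(T,\beta)$, the set $T_v:=\{t\in V(T):v\in\beta(t)\}$ is a connected subtree of $T$. Within each $t\in T_v$, the nodes of $T_t$ containing $v$ form a connected subtree of $T_t$ by the connectivity axiom for $(T_t,\beta_t)$. It remains to check that these blocks link up correctly across the gluing edges: if $t\in T_v$ has parent $t'$ and $t'\in T_v$, then $v\in\beta(t)\cap\beta(t')=\sigma(t)$, so $v\in\sigma(t)\subseteq\beta_{t'}(x(t))$ and also $v\in\sigma(t)\subseteq\beta_t(y(t))$ (since $\sigma(t)$ is a clique of $\tor(t)$ too, hence contained in some bag of $(T_t,\beta_t)$ — wait, $y(t)$ was defined to contain $\sigma(t)$, so this holds). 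Thus the two endpoints $x(t)$ and $y(t)$ of the gluing edge both have $v$ in their bag, so the blocks inside $T_{t'}$ and $T_t$ are connected through that edge; and if $t'\notin T_v$ (so $t$ is the ``top'' of $T_v$), no gluing edge above $t$ is relevant. Chasing this through the whole subtree $T_v$ shows $\{x\in V(\widehat T):v\in\widehat\beta(x)\}$ is connected. This gives $\tw(G)\le k$, which is even stronger than the claimed $2k+1$; I would state the bound as in the lemma (the looser $2k+1$ leaves room in case one wants a cleaner gluing that does not insist on the clique-containing nodes, e.g.\ inserting intermediate bags of the form $\beta_t(x(s))\cup\sigma(s)$, but the argument above already achieves $k$).
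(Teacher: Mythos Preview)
There is a genuine gap, and it is precisely at the point where you hesitated. With the paper's definition, $\tor(t)$ is $G[\beta(t)]$ with a clique added on $\sigma(s)$ for every \emph{child} $s$ of $t$; nothing forces $\sigma(t)$ itself to be a clique in $\tor(t)$. Hence your node $y(t)\in V(T_t)$ with $\sigma(t)\subseteq\beta_t(y(t))$ need not exist, and without it the connectivity axiom for $(\widehat T,\widehat\beta)$ fails across the gluing edge. Concretely, take $G=K_{2,m}$ with parts $\{1,2\}$ and $\{3,\dots,m+2\}$, root bag $\{1,2\}$, and one child bag $\{1,2,i\}$ for each $i\ge 3$. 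Then every $\tor(t_i)=G[\{1,2,i\}]$ is a path of treewidth $1$ and $\tor(\text{root})$ is the edge $12$, so $k=1$; but $\tw(K_{2,m})=2$ for $m\ge 2$. So not only does your gluing break, the bound $\tw(G)\le k$ you claim at the end is actually false.

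The paper's proof uses the same gluing skeleton but fixes exactly this issue: it sets $\beta^*(u)\coloneqq\beta_s(u)\cup\sigma(s)$ for every $u\in V(T_s)$, i.e.\ it pads every bag of $T_s$ by the adhesion to the parent. Since $\sigma(s)$ is a clique in the \emph{parent} torso $\tor(t)$, one has $|\sigma(s)|\le k+1$, so each padded bag has size at most $2k+2$, giving width $2k+1$. This padding is what restores connectivity across the gluing edge (now $\sigma(s)$ appears in every bag of $T_s$, in particular in the bag adjacent to $x(s)\in V(T_t)$), and it is the reason the lemma states $2k+1$ rather than $k$. Your parenthetical at the very end (``inserting intermediate bags of the form $\beta_t(x(s))\cup\sigma(s)$'') is heading in the right direction, but it is not optional --- some such padding is necessary, and it costs you the extra additive $k+1$.
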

\begin{proof}
 Let $(T, \beta)$  be a tree decomposition of $G$. For every $t \in V(T)$, let $(T_t, \beta_t)$ denote a tree decomposition of $\tor(t)$ of width $k$.
 Recall that for every child $s$ of $t$ in $T$, $\sigma(s) = \beta(t) \cap \beta(s)$ is a clique in $\tor(t)$.
 In particular, it means that there exists a node $t_s \in V(T_t)$ such that $\sigma(s) \subseteq \beta_t(t_s)$.
 The intuition here is that we can replace each bag $\beta(t)$ of $T$ with the tree decomposition $T_t$ by using the fact that $\sigma(s)$ is a clique in $\tor(t)$ for every child $s$ of $t$ to connect $T_t$ in a tree-like fashion.

 More formally, using the $(T_t, \beta_t)$, we can construct a tree decomposition $(T^*, \beta^*)$ of $G$ as follows: $T^*$ is the tree with vertex set the disjoint union of $V(T_t)$ for $t \in V(T)$.
 If $u$ and $v$ are two nodes of $V(T^*)$ such that $u \in V(T_t)$ and $v \in V(T_s)$, then we put an edge between $u$ and $v$ in $T^*$ if either (i) $s = t$ and $u$ is adjacent to
 $v$ in $T_t$ or (ii) $s$ is a child of $t$ in $T$, $u = t_s$ and $v$ is the root of $T_s$.
 In other words, $T^*$ is the tree obtained from the disjoint union of the $T_t$ where we add, for every node $t$ and child $s$ of $t$, the edge between the node $t_s \in V(T_t)$ and the root of $T_s$.

 For every $u \in V(T^*)$, such that $u \in (T_s)$ for some $s \in V(T)$, we define $\beta^*(u) \coloneqq \beta_s(u) \cup \sigma(s)$.
 Note that it immediately means that $|\beta^*(u)| \leq 2k+2$, since $(T_t, \beta_t)$ has width $k$ for all $t \in V(T)$.
 We show that $(T^*, \beta^*)$ is a tree decomposition of $G$ of width $2k+1$.

 First let us show that $T^*$ is a tree.
 Let $T'$ denote a subtree of $T$.
 We show by induction on the depth of $T'$ that the union of the $T_t$ for $t \in V(T')$ forms a subtree in $T^*$.
 If $T'$ has depth $1$, then $T'$ consists of a single vertex $t'$, and it follows from the fact that $T_{t'}$ is a tree and we kept all the edges in $T^*$.
 Let $t'$ denote the highest node of $T'$ and let $T_1,\dots, T_r$ denote the subtrees of $T'$ attached to $t'$.
 By the induction hypothesis, for every $i \in [r]$, the union $Q_i$ of the $T_t$ for $t \in V(T_i)$ is a tree.
 Moreover, by construction, we only add edges between nodes of $T_z$ and $T_y$ if $z$ is an ancestor of $y$ in $T$, or the opposite.
 In particular, it means that the $Q_i$ for $i \in [r]$ are disjoint subtrees.
 Moreover, $T_{t'}$ is a tree and if $s_i$ denotes the root of $T_i$ for some $i \in [r]$, then the edge between $t_{s_i}$ and the root of $T_{s_i}$ is the only edge between nodes of $T_{t'}$ and $Q_i$, which ends the induction.

 Moreover, it is easy to see that two adjacent vertices $x$ and $y$ of $G$ must share at least one bag of the tree decomposition.
 Indeed, since they share at least one bag $t \in V(T)$ and are adjacent in $\tor(t)$, they share one bag in $T_t$ and thus in $T^*$.

 Finally, let $x$ be a vertex of $G$ and $T'$ denote the subtree of $T$ such that $\beta^{-1}(x) = V(T')$.
 First note that $(\beta^*)^{-1}(x)$ is contained in the union of the sets $V(T_s)$ for $s \in V(T')$ by construction, since a vertex only gets added to some bag of $T_s$ if it belongs to $\sigma(s)$.
 Let $t$ denote the highest node of $T'$. We first remark that for every $s' \in V(T')$ different from $t$ and $u \in V(T_{s'})$, it holds that $x \in \beta^*(u)$. Indeed, since $s'$ is not the highest node of $T'$, it means that $x$ also belongs to the parent $t'$ of $s'$, and thus is in $\sigma(s')$. This means that $x$ gets added to all the bags of $T_{s'}$ by construction of $\beta^*$. Moreover, the set of bags of $T_t$ which contain $x$ is also connected and we know that if $x$ belongs to $\beta(t)$ and $\beta(s)$, then it belongs to $\tor(t)$ and there is an edge between a bag containing $x$ and the root of every $T_s$ for a child $s$ of $t$ in $T'$. This means that $(\beta^*)^{-1}(x)$ is a connected subtree of $T^*$.
\end{proof}

\paragraph{Edge Contractions.}
Edge contraction is a basic operation on graphs.
When an edge $uv$ of a graph $G$ is contracted, the two endpoints $u$ and $v$ are merged into one vertex whose neighbors are those in $N_G(\{u,v\})$.
For $U \subseteq V(G)$ we write $G/U$ to denote the graph obtained from $G$ by contracting all edges in $G[U]$, or equivalently, contracting every connected component of $G[U]$ to a single vertex.
If a graph $G'$ is obtained from $G$ by contracting edges, then each vertex of $G'$ corresponds to a subset of vertices of $G$ and these subsets form a partition of $V(G)$.
In this case, there is a naturally defined map $\pi\colon V(G) \rightarrow V(G')$ which maps each vertex $v \in V(G)$ to the vertex of $G'$ whose corresponding subset of $V(G)$ contains $v$.
We call $\pi$ the \emph{quotient map} of the contraction of $G$ to $G'$.
We need the following simple facts.

\begin{fact}[\cite{BandyapadhyayLLSX24}]\label{fact-inducedtd}
 Let $(T,\beta)$ be a tree decomposition of $G$, and $G'$ be a graph obtained from $G$ by edge contraction with the quotient map $\pi\colon V(G) \rightarrow V(G')$.
 Then $(T,\beta^*)$ is a tree decomposition of $G'$, where $\beta^*(t) = \pi(\beta(t))$ for all $t \in V(T)$.
\end{fact}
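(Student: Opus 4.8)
The plan is to verify directly the three axioms of a tree decomposition for the pair $(T,\beta^*)$, viewing it as a candidate tree decomposition of $G'$. Write $\pi\colon V(G) \to V(G')$ for the quotient map, and for each $v' \in V(G')$ let $S_{v'} \coloneqq \pi^{-1}(v') \subseteq V(G)$ be the set of vertices of $G$ merged into $v'$; recall from the definition of edge contraction that $G[S_{v'}]$ is connected (each fiber is a connected subgraph that gets contracted to a point). The first two axioms are immediate. Since $\pi$ is surjective and $\bigcup_{t} \beta(t) = V(G)$, we get $\bigcup_t \beta^*(t) = \pi\bigl(\bigcup_t \beta(t)\bigr) = \pi(V(G)) = V(G')$. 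For the edge axiom, any edge $u'v' \in E(G')$ comes from an edge $uv \in E(G)$ with $\pi(u)=u'$ and $\pi(v)=v'$; picking a node $t$ with $u,v \in \beta(t)$ (which exists because $(T,\beta)$ is a tree decomposition of $G$) gives $u',v' \in \pi(\beta(t)) = \beta^*(t)$.

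The substantive step --- and the only place needing a real argument --- is the connectivity axiom, i.e.\ that $\{t \in V(T) : v' \in \beta^*(t)\}$ induces a connected subtree of $T$ for every $v' \in V(G')$. First I would note that $v' \in \beta^*(t)$ holds if and only if $\beta(t) \cap S_{v'} \neq \emptyset$, so this set equals $\bigcup_{v \in S_{v'}} T_v$, where $T_v \coloneqq \{t \in V(T) : v \in \beta(t)\}$. Each $T_v$ is connected by axiom~(iii) for $(T,\beta)$, so it suffices to prove that $\bigcup_{v \in S} T_v$ is connected whenever $G[S]$ is connected. For this I would take two vertices $v,w \in S$ and a path $v = v_0, v_1, \dots, v_k = w$ inside $G[S]$: for each $i$ the edge $v_i v_{i+1}$ forces some bag to contain both its endpoints, so $T_{v_i} \cap T_{v_{i+1}} \neq \emptyset$; since the union of two connected subtrees of a tree that share a node is again connected, induction on $i$ gives that $\bigcup_{i=0}^{k} T_{v_i}$ is connected, and it contains both $T_v$ and $T_w$. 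Hence all $T_v$ with $v \in S$ lie in one connected component of their union, so the union is connected.

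Applying the last observation with $S = S_{v'}$ finishes axiom~(iii), and therefore $(T,\beta^*)$ is a tree decomposition of $G'$. The whole argument is a routine unwinding of definitions; the one point to be slightly careful about is the connectivity of each fiber $\pi^{-1}(v')$, which is exactly what was built into the definition of $G/U$ (and of contracting an arbitrary set of edges) in the preliminaries, so I do not anticipate any genuine difficulty.
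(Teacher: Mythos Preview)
Your proof is correct and is the standard argument for this well-known fact. Note that the paper does not actually prove this statement at all: it is stated as a fact with a citation to \cite{BandyapadhyayLLSX24} and no proof is given in the present paper. So there is no ``paper's own proof'' to compare against; your argument would serve perfectly well as a self-contained justification.
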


\begin{fact}\label{fact-quotient}
 Let $G$ be a graph and $G'$ be a graph obtained from $G$ by edge contraction with the quotient map $\pi\colon V(G) \rightarrow V(G')$.
 Also let $U' \subseteq V(G')$.
 Then $G'[U']$ is connected if and only if $G[\pi^{-1}(U')]$ is connected.
\end{fact}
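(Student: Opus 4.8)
The plan is to unwrap both implications from the single structural fact that, since $G'$ is obtained from $G$ by contracting edges, the fiber $\pi^{-1}(v')$ of every vertex $v' \in V(G')$ induces a connected subgraph of $G$ (each fiber is a connected component of the induced subgraph on the contracted vertex set, as recorded in the discussion of edge contractions above). Write $U \coloneqq \pi^{-1}(U')$, so that $U$ is a union of fibers, one for each vertex of $U'$. Two elementary observations then carry everything: (a) if $uv \in E(G)$ then either $\pi(u) = \pi(v)$ or $\pi(u)\pi(v) \in E(G')$, so $\pi$ sends a walk of $G$ to a walk of $G'$ (possibly with stationary steps), and it never creates a new edge; and (b) if $v'w' \in E(G')$ then there is an edge of $G$ with one endpoint in $\pi^{-1}(v')$ and the other in $\pi^{-1}(w')$. (If $|U'| \le 1$ both sides of the equivalence hold trivially, so we may assume $|U'| \ge 2$.)

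For the direction ``$G'[U']$ connected $\Rightarrow$ $G[U]$ connected'', I would fix $x,y \in U$ and take a path $v'_0 = \pi(x), v'_1, \dots, v'_m = \pi(y)$ in $G'[U']$. Using (b), for each $i$ choose an edge of $G$ joining some $b_i \in \pi^{-1}(v'_i)$ to some $a_{i+1} \in \pi^{-1}(v'_{i+1})$, and set $a_0 \coloneqq x$, $b_m \coloneqq y$. Since each fiber $\pi^{-1}(v'_i)$ induces a connected subgraph of $G$ contained in $U$, there is a walk inside $\pi^{-1}(v'_i)$ from $a_i$ to $b_i$. Concatenating these walks with the crossing edges $b_i a_{i+1}$ produces a walk from $x$ to $y$ inside $G[U]$, so $G[U]$ is connected.

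For the converse ``$G[U]$ connected $\Rightarrow$ $G'[U']$ connected'', I would fix $v',w' \in U'$, pick $x \in \pi^{-1}(v')$ and $y \in \pi^{-1}(w')$ (both in $U$), take a path from $x$ to $y$ in $G[U]$, and apply $\pi$. By (a) the image is a walk from $v'$ to $w'$ whose vertices all lie in $\pi(U) \subseteq U'$, which witnesses that $v'$ and $w'$ are connected in $G'[U']$.

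Main obstacle: there is essentially none — the statement is a routine transfer of connectivity along the quotient map. The only points requiring a little care are to invoke that the fibers $\pi^{-1}(v')$ are connected in $G$ (needed to lift a path of $G'[U']$), and to note that $\pi$ only preserves or identifies endpoints of existing edges and introduces no new ones (needed so that pushing a path of $G[U]$ forward yields a walk that stays within $U'$).
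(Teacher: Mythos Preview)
Your proposal is correct and follows essentially the same approach as the paper: the paper also uses that fibers $\pi^{-1}(v')$ are connected to lift a path in $G'[U']$ to a walk in $G[\pi^{-1}(U')]$, and pushes a path in $G[\pi^{-1}(U')]$ forward under $\pi$ for the other direction. Your write-up is simply a more explicit version of the same argument.
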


\begin{proof}
 If $P$ is a path of $G$, then contracting an edge on this path still gives a path.
 So if $G[\pi^{-1}(U')]$ is connected, then $G'[U']$ is also connected.

 So Suppose that $P' = (x_1,\dots,x_p)$ is a path in $G'$.
 For every vertex $x_i$ on $P'$, the set $C_i \coloneqq \pi^{-1}(x_i)$ is connected in $G$.
 So $G[\pi^{-1}(P')]$ can be seen as a sequence of connected sets $C_1,\dots, C_p$, where each edge $x_ix_{i+1}$ of $P'$ corresponds to an edge between $C_i$ and $C_{i+1}$ in $G$.
 Hence, $G[\pi^{-1}(P')]$ is connected.
 This¸ implies that if $G'[U']$ is connected, then $G[\pi^{-1}(U')]$ is also connected.
\end{proof}

\paragraph{Graph Minors.}
A graph $H$ is a \emph{minor} of a graph $G$ (or $G$ contains $H$ as a minor) if $H$ can be obtained from $G$ by deleting vertices, deleting edges, and contracting edges.
A graph $G$ is \emph{$H$-minor-free} if $H$ is not a minor of $G$.
The following fact gives us an alternative criterion for determining whether a graph is a minor of another graph.

\begin{fact}[\cite{CyganFKLMPPS15}]\label{fact-minormap}
 A graph $G$ contains another graph $H$ as a minor if and only if there exists $U \subseteq V(G)$ and a surjective map $\rho\colon U \to V(G')$ satisfying the following condition: for all $U' \subseteq V(G')$ such that $G'[U']$ is connected, the graph $G[\rho^{-1}(U')]$ is connected.
\end{fact}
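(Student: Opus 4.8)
The plan is to prove the equivalence by passing through the standard notion of a \emph{minor model} of $H$ in $G$: a family $\{B_v\}_{v\in V(H)}$ of pairwise disjoint nonempty vertex sets (the \emph{branch sets}) such that each $G[B_v]$ is connected and, for every edge $uv\in E(H)$, there is an edge of $G$ with one endpoint in $B_v$ and the other in $B_u$. (The statement writes $G'$ for the pattern graph; I write $H$.) Everything reduces to two observations: that the map $\rho$ and a model are interchangeable via $B_v = \rho^{-1}(v)$, and that possessing a model is the same as being a minor.

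For the ``if'' direction, I would assume $U$ and $\rho$ are as in the statement and put $B_v \coloneqq \rho^{-1}(v)$. The $B_v$ are pairwise disjoint because $\rho$ is a function and nonempty because $\rho$ is surjective; each $G[B_v]$ is connected by applying the hypothesis to $U' = \{v\}$ (a single vertex induces a connected subgraph of $H$). For an edge $uv\in E(H)$, the graph $H[\{u,v\}]$ is connected, so $G[B_u\cup B_v]$ is connected by hypothesis, and since $B_u,B_v$ are disjoint and individually connected this forces a $G$-edge between them. Thus $\{B_v\}$ is a minor model, and then $H$ is obtained from $G$ by deleting the vertices outside $U$, contracting each (connected) $B_v$ to a point, and deleting any leftover edges — so $H$ is a minor of $G$.

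For the ``only if'' direction, I would fix a sequence $G = G_0, G_1,\dots,G_m = H$ in which each $G_{i+1}$ arises from $G_i$ by a single vertex deletion, edge deletion, or edge contraction, and carry along maps $\phi_i$ sending a vertex $w$ of $G_i$ to the set $\phi_i(w)\subseteq V(G)$ of vertices of $G$ merged into $w$: start with $\phi_0(w) = \{w\}$, leave $\phi$ unchanged under deletions, and set $\phi_{i+1}(z) = \phi_i(x)\cup\phi_i(y)$ when $xy$ is contracted to $z$. I would maintain the invariants that the sets $\{\phi_i(w)\}_w$ are pairwise disjoint, that each $G[\phi_i(w)]$ is connected, and that $xy\in E(G_i)$ implies the existence of a $G$-edge between $\phi_i(x)$ and $\phi_i(y)$; each is immediate for the three elementary operations (for a contraction, connectivity of $G[\phi_{i+1}(z)]$ follows from connectivity of the two pieces together with the witnessing edge, and a $G_{i+1}$-neighbour of $z$ was a $G_i$-neighbour of $x$ or of $y$). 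At the end $\{\phi_m(v)\}_{v\in V(H)}$ is a minor model. Setting $U = \bigcup_{v} \phi_m(v)$ and $\rho(x) = v \iff x\in\phi_m(v)$ (well-defined by disjointness, surjective by nonemptiness), I would verify the required condition: if $H[U']$ is connected then any two of its vertices are linked by a path in $H[U']$, and replacing each edge of this path by an edge between the corresponding consecutive branch sets — each connected in $G$ — shows that $G[\rho^{-1}(U')] = G\big[\bigcup_{v\in U'}\phi_m(v)\big]$ is connected.

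The only part requiring any care is the bookkeeping in the ``only if'' direction, namely checking that the three invariants on $\phi_i$ survive each of the three elementary minor operations; this is routine. The rest is a direct unpacking of definitions, and the ``path-lifting'' step — translating a path in $H$ into a connected subgraph of $G$ built from consecutive branch sets — reappears identically in both directions.
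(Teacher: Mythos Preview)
Your proof is correct and is the standard argument via branch sets (minor models). Note that the paper does not actually prove this statement: it is stated as a fact with a citation to \cite{CyganFKLMPPS15} and no proof is given, so there is nothing to compare against beyond observing that your argument is exactly the textbook one.
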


\paragraph{Almost-Embeddable Graphs.}
The class of almost-embeddable graphs is a generalization of the class of bounded-genus graphs, and is related to $H$-minor-free graphs due to the profound work of Robertson and Seymour \cite{RobertsonS03a}.
A graph $G$ is $h$-\emph{almost-embeddable} if it admits an $h$-almost-embeddable structure described below.

\begin{definition}\label{def-almost}
 An $h$-\emph{almost-embeddable structure} of a graph $G$ consists of a set $A \subseteq V(G)$ with $|A| \leq h$, a decomposition $G-A = G_0 \cup G_1 \cup \cdots \cup G_r$ for $r \leq h$, an embedding $\eta$ of $G_0$ to a surface $\varSigma$ of (Euler) genus $g$ with $g \leq h$, $r$ disjoint disks $D_1,\dots,D_r$ each of which is contained in a face (can possibly intersect the boundary of the face) of the $\varSigma$-embedded graph $(G_0,\eta)$, and $r$ pairs $(\tau_1,\mathcal{P}_1),\dots,(\tau_r,\mathcal{P}_r)$ such that the following conditions hold for all $i \in [r]$:
 \begin{itemize}
  \item $G_1,\dots,G_r$ are mutually disjoint.
  \item $V(G_0) \cap_\eta D_i = V(G_0) \cap V(G_i)$, where $V(G_0) \cap_\eta D_i$ consists of the vertices in $V(G_0)$ whose image under $\eta$ is contained in the disk $D_i$.
   Set $q_i = |V(G_0) \cap_\eta D_i| = |V(G_0) \cap V(G_i)|$.
  \item $\tau_i = (v_{i,1},\dots,v_{i,q_i})$ is a permutation of the vertices in $V(G_0) \cap_\eta D_i$ that is compatible with the clockwise or counterclockwise order along the boundary of $D_i$.
  \item $\mathcal{P}_i = (\pi_i,\beta)$ is a path decomposition\footnote{Path decomposition is a special case of tree decomposition in which the tree is a path.} of $G_i$ of width at most $h$ where the path $\pi_i = (u_{i,1},\dots,u_{i,q_i})$ of the decomposition is of length $q_i - 1$ and satisfies $v_{i,j} \in \beta(u_{i,j})$ for all $j \in [q_i]$.
 \end{itemize}
 Conventionally, we call $A$ the \emph{apex set}, $G_0$ the \emph{embeddable part}, $\eta:G_0 \rightarrow \varSigma$ the \emph{partial embedding}, $G_1,\dots,G_r$ the \emph{vortices} attached to the disks $D_1,\dots,D_r$.
 Also, we call each pair $(\tau_i,\mathcal{P}_i)$ the \emph{witness pair} of the vortex $G_i$, for $i \in [r]$.
 The \emph{vortex vertices} of $G$ refers to the vertices in $\bigcup_{i=1}^r V(G_i)$.
\end{definition}

\section{Proof of Theorem~\ref{thm-contraction}}
\label{sec-proof}

In this section, we prove our main theorem, which is restated below.

\main*

\subsection{Useful results for surface-embedded graphs}
We begin with introducing some basic notions and results about surface-embedded graphs.
For the purpose of our proof, sometimes it is more convenient to consider graphs embedded in a surface with a reference point.
Let $(\varSigma,x_0)$ be a pair where $\varSigma$ is a connected closed surface, and $x_0 \in \varSigma$ is a reference point on $\varSigma$.
A $(\varSigma,x_0)$-embedded graph refers to a $\varSigma$-embedded graph whose image on $\varSigma$ is disjoint from $x_0$, i.e., a graph embedded on $\varSigma \backslash \{x_0\}$.
For example, plane graphs are exactly $(\varSigma,x_0)$-embedded graphs for $\varSigma = \mathbb{S}^2$, where the reference point $x_0 \in \mathbb{S}^2$ is the ``point at infinity'' of the plane.
We typically represent a $(\varSigma,x_0)$-embedded graph by a pair $(G,\eta)$, where $G$ is the graph itself and $\eta\colon G \rightarrow \varSigma$ is an embedding of $G$ to $\varSigma$ whose image avoids $x_0$, which we call a $(\varSigma,x_0)$-embedding.
Let $(G,\eta)$ be a $(\varSigma,x_0)$-embedded graph.
For any subgraph $G'$ of $G$, $\eta$ induces an $(\varSigma,x_0)$-embedding of $G'$; for convenience, we usually use the same notation ``$\eta$'' to denote this subgraph embedding.
A \emph{face} of $(G,\eta)$ refers to (the closure of) a connected component of $\varSigma \backslash \eta(G)$, where $\eta(G)$ is the image of $G$ on $\varSigma$ under the embedding $\eta$.
We denote by $F_\eta(G)$ the set of faces of $(G,\eta)$.
With the reference point $x_0$, we can define the \emph{outer face} of $(G,\eta)$, which is the (unique) face of $(G,\eta)$ that contains $x_0$.
The other faces of $(G,\eta)$ are called \emph{inner faces}.
The \emph{boundary} of a face $f \in F_\eta(G)$, denoted by $\partial f$, is the subgraph of $G$ consisting of all vertices and edges that are incident to $f$ (under the embedding $\eta$).
Note that $f$ itself is a face in its boundary subgraph $(\partial f, \eta)$, i.e., $f \in F_\eta(\partial f)$.
We say a face $f \in F_\eta(G)$ is \emph{singular} if its boundary $\partial f$ is not connected.
The following lemma gives useful properties of the boundary subgraph of a face.

\begin{lemma}\label{lem-degree}
 Let $o \in F_\eta(G)$ be a face of a $(\varSigma,x_0)$-embedded graph $(G,\eta)$, where $g$ is the genus of $\varSigma$.
 Then $(\partial o,\eta)$ has $O(g)$ singular faces.
 Furthermore, every face $f \in F_\eta(\partial o) \backslash \{o\}$ satisfies the following properties.
 \begin{enumerate}[label = (\arabic*)]
  \item\label{item:degree-1} $\partial f$ has $O(g)$ connected components.
  \item\label{item:degree-2} The degree of every vertex of $\partial f$ is $O(g)$.
  \item\label{item:degree-3} There are $O(g)$ vertices of $\partial f$ whose degrees are at least 3.
 \end{enumerate}
\end{lemma}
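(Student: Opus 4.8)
The plan is to derive all four statements from Euler's formula, exploiting two structural facts about $H:=\partial o$ that I would establish first. (i) In $(H,\eta)$ every edge is incident to the face $o$; consequently at every vertex no two cyclically consecutive corners can both differ from $o$, and every face $\psi\ne o$ of $(H,\eta)$ shares an edge with $o$ (on such an edge, $o$ is the face opposite $\psi$). (ii) For a face $f\ne o$, every edge of $\partial f$ is incident to $f$ on one side and to $o$ on the other (it lies in $\partial o$, hence is incident to $o$, and $f\ne o$); therefore, deleting the remaining edges of $\partial o$ merges $o$ with every face other than $f$, so $(\partial f,\eta)$ has exactly two faces, and hence $\partial f$ is bridgeless with minimum degree $\ge 2$ (isolated vertices, if any, can be assumed away). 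Throughout I use the standard inequality $V-E+F\ge 2-g$ for any embedding in a surface of Euler genus $g$, and $\chi(\varSigma)=2-g$.

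For the degree bounds (items 2 and 3): by (i) and (ii), at a vertex $v\in\partial f$ the corners equal to $f$ and those equal to $o$ interleave — between two consecutive $f$-corners there is an $o$-corner. Since one always has $\deg_{\partial f}(v)\le 2p_f$, where $p_f$ is the number of $f$-corners at $v$, it suffices to bound $p_f$; assume $p_f\ge 2$ (else $\deg_{\partial f}(v)\le 2$). Place a new vertex $x_f$ in $f$ joined to $v$ by $p_f$ pairwise disjoint arcs through the $f$-corners, and $x_o$ in $o$ joined to $v$ by arcs through one $o$-corner chosen between each consecutive pair of $f$-corners. The resulting three-vertex multigraph $K$ is embedded in $\varSigma$, has $2p_f$ edges, and its edges alternate between the two parallel classes around $v$; hence every facial walk of $K$ visits $x_f$ and $x_o$ alternately (no immediate backtracking, as $\deg_K(x_f),\deg_K(x_o)\ge 2$), so has length divisible by $4$, giving $F(K)\le p_f$. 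Euler's inequality then forces $2p_f-1-g\le F(K)\le p_f$, i.e. $p_f\le g+1$, so $\deg_{\partial f}(v)\le 2(g+1)$ — item 2. For item 3: since $(\partial f,\eta)$ has exactly two faces, $E(\partial f)-V(\partial f)\le g$, so $\sum_v(\deg_{\partial f}(v)-2)=2(E(\partial f)-V(\partial f))\le 2g$; as $\partial f$ has minimum degree $\ge 2$, every vertex of degree $\ge 3$ contributes at least $1$, so there are at most $2g$ of them.

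For the component count (item 1) and the singular-face count I would run Euler-characteristic surgery arguments. For item 1, cut $\varSigma$ along $\eta(\partial f)$ into its two pieces (the closures of the two faces); additivity, $\sum\chi(\text{pieces})\ge \chi(\varSigma)-(V(\partial f)-E(\partial f))$, combined with $\chi(\text{piece})\le 2-(\#\text{its boundary circles})\le 2-c$, where $c$ is the number of components of $\partial f$ (each component meets both $f$ and $o$, hence yields a boundary circle of each piece), gives $2c\le 2+g-(E(\partial f)-V(\partial f))\le 2+g$. For the singular faces, let $\phi_1,\dots,\phi_s$ be the singular faces other than $o$. Their boundaries are pairwise edge-disjoint subgraphs of $\partial o$ (an edge incident to two faces both $\ne o$ cannot also be incident to $o$, contradicting (i)), each has at least two boundary walks of $\phi_i$ (so $b_{\phi_i}\ge 2$) and satisfies $E_i-V_i\ge 0$. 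Cut $\varSigma$ along $\bigcup_i\eta(\partial\phi_i)$; the residual piece $R$ is connected, since it contains $o$ and every non-$\phi_i$ face, all adjacent to $o$ by (i). Additivity together with $\chi(\text{piece})\le 2-(\#\text{boundary circles})$ then yields $\sum_i\bigl(2b_{\phi_i}+(E_i-V_i)-2\bigr)\le g$; each summand is at least $2$, so $s\le g/2$, and including $o$ itself gives $O(g)$ singular faces.

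The main obstacle is the singular-face count. Unlike the other three items it is not visible from Euler's formula applied to a single embedded graph — a singular $\partial\phi$ can carry arbitrarily many low-degree vertices — and the reason the \emph{number} of singular faces is bounded is genuinely topological: a genus-$g$ surface has only an $O(g)$ Euler-characteristic budget to spend on faces with disconnected boundary. So the proof must be organised as surgery on $\varSigma$, and the delicate point there is the connectedness of the residual piece $R$, which is exactly where fact (i) (every face other than $o$ is adjacent to $o$) is used. A handful of degenerate situations — bridges, vertices of degree $\le 1$, and isolated vertices — must also be dispatched separately, but each is immediate.
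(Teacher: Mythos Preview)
Your arguments for items (1)--(3) are correct, though considerably more elaborate than needed. The paper handles all three by a single application of Euler's formula to $(\partial f,\eta)$: having established that this embedding has exactly two faces and minimum degree at least $2$, the relation $|V-E+F-C|=O(g)$ with $F=2$ and $E\ge V$ immediately gives $C=O(g)$ (item 1) and $E-V=O(g)$, whence $\sum_v(\deg(v)-2)=2(E-V)=O(g)$ yields items 2 and 3 simultaneously. Your separate auxiliary three-vertex graph for item 2 and the cut-and-paste argument for item 1 both work, but they are detours around this direct route.

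The singular-face count is where your proposal breaks. From additivity $\sum_{\text{pieces}}\chi=\chi(\varSigma)+E(H)-V(H)$ together with $\chi(\text{piece})\le 2-b_{\text{piece}}$ you obtain only
\[
\sum_i(b_{\phi_i}-2)+b_R+(E(H)-V(H))\le g,
\]
not the inequality you state with coefficient $2$ on $b_{\phi_i}$. To get that coefficient you would need $b_R\ge\sum_i b_{\phi_i}$, but this fails in general: when $\partial\phi_i$ and $\partial\phi_j$ share a vertex, their outer boundary walks can merge into a single boundary circle of $R$. Since each summand $b_{\phi_i}-2$ can vanish (take $\partial\phi_i$ to be two disjoint circles) and $b_R$ is only bounded below by $1$, the inequality you can actually derive places no constraint on $s$. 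The paper's argument is quite different and sidesteps surgery: for each singular $\phi_i$ it selects two edge-disjoint cycles in $\partial\phi_i$ (one in each of two components, possible since every component has minimum degree $\ge 2$), deletes one edge from each, and compares Euler data for $\partial o$ and the resulting spanning subgraph $(\partial o)^-$. The $2\gamma$ deletions merge $o$ with all $\gamma$ singular faces while leaving the vertex set and the number of components unchanged, so $|\Delta_E+\Delta_F|=|2\gamma-\gamma|=O(g)$ forces $\gamma=O(g)$.
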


\begin{proof}
Consider a face $f \in F_\eta(\partial o) \backslash \{o\}$.
We first figure out what the faces of $(\partial f,\eta)$ are.
Clearly, $f$ itself is a face of $(\partial f,\eta)$.
In addition, since $\partial f$ is a subgraph of $\partial o$, there must be another face $f_0 \in F_\eta(\partial o)$ such that $o \subseteq f_0$.
By definition, each edge of $\partial f$ is incident to $f$.
Also, each edge of $\partial f$ is incident to $f_0$, because it is incident to $o$ in $(\partial o,\eta)$.
Note that one edge can be incident to at most two faces in a surface-embedded graph, and thus the edges of $\partial f$ are only incident to $f$ and $f_0$.
It follows that $f$ and $f_0$ are the only two faces of $(\partial f,\eta)$, i.e., $F_\eta(\partial f) = \{f,f_0\}$, because any face of $(\partial f,\eta)$ must be incident to some edge of $\partial f$.
We further argue that $\partial f$ has no vertex of degree 0 or 1.
Similarly to the edges, each vertex of $\partial f$ is incident to $f$ (by definition) and also incident to $f_0$ since it is incident to $o$ in $(\partial o,\eta)$.
But vertices of degree 0 or 1 can only be incident to one face in a surface-embedded graph.
Thus, every vertex of $\partial f$ has degree at least 2.

Now we are ready to prove the lemma.
We first show $(\partial o,\eta)$ only has $O(g)$ singular faces.
It suffices to bound the number of singular faces other than $o$.
Let $f \in F_\eta(\partial o) \backslash \{o\}$ be a singular face.
As argued above, every vertex of $\partial f$ has degree at least 2.
Thus, every connected component of $\partial f$ contains a cycle.
Since $f$ is singular, $\partial f$ has at least two connected components, which implies that $\partial f$ contains two disjoint cycles.
For each singular face in $F_\eta(\partial o) \backslash \{o\}$, we pick such two disjoint cycles on its boundary.
Let $\gamma$ denote the number of singular faces in $F_\eta(\partial o) \backslash \{o\}$.
Then there are in total $2 \gamma$ cycles picked.
Note that these cycles are \emph{edge-disjoint}.
Indeed, for any face $f \in F_\eta(\partial o) \backslash \{o\}$, the edges in $E(\partial f)$ must have one side incident to $o$ and the other side incident to $f$, and thus are not incident to any other face in $F_\eta(\partial o)$.
So the boundaries of faces in $F_\eta(\partial o) \backslash \{o\}$ are edge-disjoint and in particular the cycles picked are edge-disjoint.
Now we delete an arbitrary edge on each cycle we pick.
Let $(\partial o)^-$ be the resulting graph.
Note that $\partial o$ and $(\partial o)^-$ share the same vertex set, i.e., $(\partial o)^-$ is a spanning subgraph of $\partial o$.
Furthermore, if two vertices are in the same connected component of $\partial o$, they are also in the same connected component of $(\partial o)^-$, because the $2 \gamma$ cycles picked are edge-disjoint and we only delete one edge from each cycle (so the cycle is still connected by the remaining edges).
This implies that $\partial o$ and $(\partial o)^-$ have the same number of connected components.
Define $\Delta_E \coloneqq |E(\partial o)| - |E((\partial o)^-)|$ and $\Delta_F \coloneqq |F_\eta(\partial o)| - |F_\eta((\partial o)^-)|$.
Clearly, $\Delta_E = 2\gamma$.
Also, we have $\Delta_F \leq -\gamma$, because all non-singular faces in $F_\eta(\partial o) \backslash \{o\}$ are preserved in $((\partial o)^-,\eta)$ (as we do not delete any edge on their boundaries), while $o$ and all singular faces in $F_\eta(\partial o) \backslash \{o\}$ are merged into one face in $((\partial o)^-,\eta)$ (due to the edges we delete).
Since $\partial o$ and $(\partial o)^-$ have the same numbers of vertices and connected components, Euler's formula implies $|\Delta_E + \Delta_F| = O(g)$.
Therefore, $\gamma = O(g)$.

Next, we show that every face $f \in F_\eta(\partial o) \backslash \{o\}$ satisfies the properties \ref{item:degree-1}-\ref{item:degree-3} in the lemma.
Let $\#_V$, $\#_E$, $\#_F$, $\#_C$ be the numbers of vertices, edges, faces, connected components of $(\partial f,\eta)$.
By Euler's formula, $|\#_V-\#_E+\#_F-\#_C| = O(g)$, where $g$ is the genus of $\varSigma$.
We have shown that $\#_F = |F_\eta(\partial f)| = 2$.
Therefore, $|\#_V - (\#_E+\#_C)| = O(g)$.
Since every vertex of $\partial f$ has degree at least 2, we have $\#_E \geq \#_V$, which implies $\#_C = O(g)$ and $\#_E - \#_V = O(g)$.
The former is property \ref{item:degree-1} of the lemma.
The latter further implies $\sum_{v \in V(\partial f)} (\mathsf{deg}(v)-2) = 2 \#_E - 2 \#_V = O(g)$, where $\mathsf{deg}(v)$ denotes the degree of $v$ in $\partial f$.
As each vertex of $\partial f$ has degree at least 2, $\mathsf{deg}(v)-2 \geq 0$ for all $v \in V(\partial f)$.
Therefore, $\mathsf{deg}(v)-2 = O(g)$ for all $v \in V(\partial f)$ and there are $O(g)$ vertices $v \in V(\partial f)$ such that $\mathsf{deg}(v)-2 > 0$.
This proves \ref{item:degree-2} and \ref{item:degree-3} of the lemma.
\end{proof}

The \emph{vertex-face incidence} (VFI) graph of $(G,\eta)$ is a bipartite graph with vertex set $V(G) \cup F_\eta(G)$ and edges connecting every pair $(v,f) \in V(G) \times F_\eta(G)$ such that $v$ is incident to $f$ (or equivalently, $v$ is a vertex in $\partial f$).
It is known that the VFI graph of $(G,\eta)$ is always connected \cite{BandyapadhyayLLSJ22}.
Let $G^*$ be the VFI graph of $(G,\eta)$.
A \emph{vertex-face alternating} (VFA) path in $(G,\eta)$ refers to a path in $G^*$; it is called ``vertex-face alternating'' because vertices and faces of $(G,\eta)$ appear alternately on the path.
For two vertices $v,v' \in V(G)$, the \emph{vertex-face distance} between $v$ and $v'$ in $(G,\eta)$ is defined as the shortest-path distance between $v$ and $v'$ in $G^*$.
In the same way, we can define the \emph{vertex-face distance} between a vertex and a face of $(G,\eta)$, or between two faces of $(G,\eta)$.
The \emph{vertex-face diameter} of $(G,\eta)$, denoted by $\mathsf{diam}^*(G,\eta)$, is the maximum vertex-face distance between vertices/faces of $(G,\eta)$, or equivalently, the graph diameter of the VFI graph of $(G,\eta)$.
Following from the classical work \cite{Eppstein00} of Eppstein, it is known that $\tw(G) = O_g(\mathsf{diam}^*(G,\eta))$.
Recently, Bandyapadhyay et al.\ \cite{BandyapadhyayLLSJ22} generalized this result in the following way.
Let $w\colon F_\eta(G) \rightarrow \mathbb{N}$ be a weight function on the faces of $(G,\eta)$.
Consider a simple path $\pi = (a_0,a_1,\dots,a_m)$ in $G^*$, and write $F_\pi = F_\eta(G) \cap \{a_0,a_1,\dots,a_m\}$.
The \emph{cost} of $\pi$ under the weight function $w$ is defined as $m+\sum_{f \in F_\pi} w(f)$, i.e., the length of $\pi$ plus the total weights of the faces that $\pi$ goes through.
We define the \emph{$w$-weighted vertex-face distance} between vertices/faces $a,a' \in V(G) \cup F_\eta(G)$ in $(G,\eta)$ as the minimum cost of a path connecting $a$ and $a'$ in $G^*$ under the weight function $w$.
Note that $w$-weighted vertex-face distances satisfy the triangle inequality, although they do not necessarily form a metric because the $w$-weighted vertex-face distance from a face $f \in F_\eta(G)$ to itself is $w(f)$ rather than 0.
The \emph{$w$-weighted vertex-face diameter} of $(G,\eta)$, denoted by $\mathsf{diam}_w^*(G,\eta)$, is the maximum $w$-weighted vertex-face distance between vertices/faces of $(G,\eta)$.
Clearly, when $w$ is the zero function, the $w$-weighted vertex-face distance/diameter coincides with the ``unweighted'' vertex-face distance/diameter defined before.
Bandyapadhyay et al.\ \cite{BandyapadhyayLLSJ22} generalizes the bound $\tw(G) = O_g(\mathsf{diam}^*(G,\eta))$ by proving the following lemma.

\begin{lemma}[\cite{BandyapadhyayLLSJ22}]\label{lem-twdiam2}
Let $(G,\eta)$ be a $(\varSigma,x_0)$-embedded graph where the genus of the surface $\varSigma$ is $g$ and $\kappa: F_\eta(G) \rightarrow 2^{V(G)}$ be a map satisfying $\kappa(f) \subseteq V(\partial f)$.
Then we have $\tw(G^\kappa) = O(\mathsf{diam}_{w_\kappa}^*(G,\eta))$, where $G^\kappa$ is the graph obtained from $G$ by making $\kappa(f)$ a clique for all $f \in F_\eta(G)$ and $w_\kappa: F_\eta(G) \rightarrow \mathbb{N}$ is a weighted function on the faces of $(G,\eta)$ defined as $w_\kappa(f) = |\kappa(f)|$.
\end{lemma}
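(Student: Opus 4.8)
The plan is to deduce Lemma~\ref{lem-twdiam2} from the unweighted bound $\tw(G)=O_g(\mathsf{diam}^*(G,\eta))$ recalled just above, by ``spending'' the face weights $w_\kappa$ on an auxiliary $(\varSigma,x_0)$-embedded graph. First I would build $G'$ from $G$ by inserting, inside every face $f\in F_\eta(G)$ with $\kappa(f)\neq\emptyset$, a single new vertex $z_f$ joined to all of $\kappa(f)$; since $\kappa(f)\subseteq V(\partial f)$ these edges can be drawn inside $f$ (avoiding $x_0$ when $f$ is the outer face), so $G'$ is $(\varSigma,x_0)$-embedded with the same genus $g$. Faces that are not disks, in particular singular faces, need the star at $z_f$ routed through $f$ component by component; by Lemma~\ref{lem-degree} only $O(g)$ faces are affected, so this costs at most an additional $O(g)$ in the genus, which is harmless. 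The role of $z_f$ is purely combinatorial: for any tree decomposition $(T',\beta')$ of $G'$ of width $w$, adding to each bag $\beta'(t)$ the set $\bigcup\{\kappa(f):z_f\in\beta'(t)\}$ gives a valid tree decomposition of $G^\kappa$ --- the occurrences of each vertex stay connected because $z_f$ is adjacent to every vertex of $\kappa(f)$, and each clique on $\kappa(f)$ now lies inside the (nonempty set of) bags containing $z_f$. Finally, note $D:=\mathsf{diam}^*_{w_\kappa}(G,\eta)\geq w_\kappa(f)=|\kappa(f)|$ for every $f$, so $|\kappa(f)|\leq D$.

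Next I would bound $\mathsf{diam}^*(G',\eta')$. Inserting $z_f$ replaces the face $f$ by $|\kappa(f)|$ ``pie slices'' around $z_f$, and any vertex of $\partial f$ reaches $z_f$ in two vertex-face steps (vertex, pie slice, $z_f$); hence every passage through a face node in a vertex-face shortest path of $G$ can be rerouted in $G'$ at $O(1)$ extra cost, and the new vertices and faces lie $O(1)$ vertex-face steps from $\kappa(f)$. Therefore $\mathsf{diam}^*(G',\eta')=O(\mathsf{diam}^*(G,\eta))=O(D)$, and the unweighted bound yields $\tw(G')=O_g(\mathsf{diam}^*(G',\eta'))=O_g(D)$. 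Combined with the surgery above this already gives $\tw(G^\kappa)=O_g(D^2)$, since a bag of $(T',\beta')$ contains at most $O_g(D)$ of the vertices $z_f$, each contributing at most $|\kappa(f)|\leq D$ new vertices.

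To replace $O_g(D^2)$ by the claimed $O(D)$ I would strengthen the gadget so that only ``weight $O(D)$ worth'' of heavy faces can share a single bag. The natural attempt is to replace each $z_f$ by a planar gadget $\Gamma_f$ of treewidth $\Theta(|\kappa(f)|)$ --- for instance a grid of side $\Theta(|\kappa(f)|)$ --- drawn inside $f$ with its outer cycle attached to $\kappa(f)$ in cyclic order; this keeps the genus $g$, and since a $w_\kappa$-shortest path pays $1+|\kappa(f)|$ to cross $f$ its detour around $\Gamma_f$ costs only $O(|\kappa(f)|)$, so $\mathsf{diam}^*$ is still $O(D)$. Now any tree decomposition of the augmented graph must ``concentrate'' each $\Gamma_f$ in some bag $B_f$ holding $\Omega(|\kappa(f)|)$ vertices of $\Gamma_f$; one then pulls every vertex of $\kappa(f)$ into $B_f$ along the (connected) subtree spanned by the bags meeting $V(\Gamma_f)$, and tries to charge the newly added vertices against grid vertices already present, using that the $\Gamma_f$ are pairwise vertex-disjoint and $|\kappa(f)|\leq D$. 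An alternative route is to run the layered (Eppstein-style) tree decomposition of $G'$ directly and argue that its ``vertical slice'' bags each meet only $O(1)$ heavy faces.

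The step I expect to be the main obstacle is exactly this last refinement, i.e.\ the width accounting. Vertex-disjointness of the gadgets bounds only the \emph{number} of gadgets meeting a bag, not the sum of the associated $|\kappa(f)|$, so ruling out a $D$-fold blow-up requires a genuinely careful argument: a charging scheme that amortises each absorbed clique against structure that is ``local'' to its face, or a tailored choice of which bag receives which $\kappa(f)$ together with exploiting that face boundaries at a given vertex-face depth interact only weakly. Everything else in the proof is either elementary surface topology (building $G'$, handling non-disk and singular faces via Lemma~\ref{lem-degree}) or a direct appeal to the treewidth-versus-vertex-face-diameter bound already available.
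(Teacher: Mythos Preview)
The paper does not prove Lemma~\ref{lem-twdiam2}; it is imported from~\cite{BandyapadhyayLLSJ22} as a black box, so there is no ``paper's own proof'' to compare against. What you have written is therefore an independent attempt at a result the paper only cites.

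Your $O_g(D^2)$ argument is essentially sound (modulo one confusion: Lemma~\ref{lem-degree} bounds the number of singular faces of $(\partial o,\eta)$ for a \emph{fixed} face $o$, not the number of singular faces of $(G,\eta)$; but this is irrelevant, since drawing the star at $z_f$ inside $f$ never changes the genus regardless of whether $f$ is a disk). The real issue is exactly where you flag it: the step from $O_g(D^2)$ to $O_g(D)$. Your grid gadget does make the unweighted VFI diameter of $G'$ match $D$, but the subsequent ``pull $\kappa(f)$ into a bag concentrating $\Gamma_f$'' step does not come with any bound on how many gadgets concentrate in the \emph{same} bag, and vertex-disjointness of the $\Gamma_f$ only caps the \emph{count} of gadgets meeting a bag, not $\sum|\kappa(f)|$. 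Worse, a planar gadget cannot contain $K_{|\kappa(f)|}$ as a minor, so there is no hope of recovering $G^\kappa$ as a minor of $G'$ and appealing to minor-monotonicity of treewidth; some genuinely new accounting is required, and your sketch does not supply it.

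The proof in~\cite{BandyapadhyayLLSJ22} does not go through an auxiliary embedded graph plus after-the-fact bag surgery. It builds the tree decomposition of $G^\kappa$ directly, in the Eppstein style: one takes (weighted) BFS layers in the VFI graph, cuts the surface along $O(g)$ short fundamental cycles of a spanning tree to reduce to the planar case, and then each bag is a union of $O(g)$ ``vertical slices'' whose size is controlled by the weighted radius. The point is that the weight $|\kappa(f)|$ is paid \emph{once} along each BFS path, so it contributes linearly to bag size rather than multiplicatively. If you want to complete your own proof, the cleanest fix is to abandon the surgery and instead redo the Eppstein construction with weighted VFI distances from the start.
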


The \emph{radial layering} of a $(\varSigma,x_0)$-embedded graph $(G,\eta)$ is a partition $L_1,\dots,L_m$ of $V(G)$ where $L_i$ consists of the vertices which have vertex-face distance $2i-1$ to the outer face of $(G,\eta)$.
Alternatively, $L_i$ can be defined as the vertices incident to the outer face of $(G - \bigcup_{j=1}^{i-1} L_j, \eta)$.
The notion of radial layering generalizes the well-known outerplanar layering of plane graphs.
We have the following simple observation for the radial layering.

\begin{fact}\label{fact-diff1}
Let $L_1,\dots,L_m$ be the radial layering of a $(\varSigma,x_0)$-embedded graph $(G,\eta)$.
Then, for any $f \in F_\eta(G)$, $V(\partial f) \subseteq L_{i-1} \cup L_i$ for some $i \in [m]$.
Also, $N_G[L_i] \subseteq L_{i-1} \cup L_i \cup L_{i+1}$ for all $i \in [m]$.
\end{fact}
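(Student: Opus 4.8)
The plan is to argue entirely in terms of the unweighted vertex-face distance. I will write $d^*(a,b)$ for the shortest-path distance in the vertex-face incidence graph $G^*$ of $(G,\eta)$ between $a,b\in V(G)\cup F_\eta(G)$, let $o\in F_\eta(G)$ be the outer face, and use that by definition $L_j=\{v\in V(G):d^*(o,v)=2j-1\}$, with the conventions $L_0=L_{m+1}=\emptyset$. The single observation driving both parts is that $G^*$ is bipartite with sides $V(G)$ and $F_\eta(G)$, so $d^*(o,\cdot)$ has a fixed parity on each side (odd on $V(G)$, even on $F_\eta(G)$), and whenever $x$ and $y$ are adjacent in $G^*$ their distances from $o$ differ by exactly $1$.

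For the first assertion I would fix a face $f$ and set $i\coloneqq 1+\tfrac12 d^*(o,f)$, a positive integer since $d^*(o,f)$ is even. Each $v\in V(\partial f)$ is incident to $f$, hence adjacent to $f$ in $G^*$, so $d^*(o,v)\in\{d^*(o,f)-1,\,d^*(o,f)+1\}=\{2(i-1)-1,\,2i-1\}$; being a nonnegative odd integer, $d^*(o,v)$ equals $2(i-1)-1$ or $2i-1$, i.e.\ $v\in L_{i-1}\cup L_i$. (If $i=1$ the value $-1$ is impossible, and if $i=m+1$ the value $2m+1$ exceeds the largest vertex-face distance, so in either extreme case $\partial f$ sits in a single layer and $i$ can be taken to be $1$ or $m$.) This gives $V(\partial f)\subseteq L_{i-1}\cup L_i$ for some $i\in[m]$.

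For the second assertion I would fix $i$ and $v\in L_i$, so $d^*(o,v)=2i-1$, and take $u\in N_G[v]$. If $u=v$ there is nothing to prove, so suppose $uv\in E(G)$. Every edge of an embedded graph is incident to at least one face, so there is $f\in F_\eta(G)$ with $uv\in E(\partial f)$, hence $u,v\in V(\partial f)$ and both are adjacent to $f$ in $G^*$. Then $d^*(o,u)\le d^*(o,f)+1\le d^*(o,v)+2$ and, symmetrically, $d^*(o,v)\le d^*(o,u)+2$, so $|d^*(o,u)-d^*(o,v)|\le 2$. Since $d^*(o,u)$ is odd, $d^*(o,u)\in\{2i-3,\,2i-1,\,2i+1\}=\{2(i-1)-1,\,2i-1,\,2(i+1)-1\}$, so $u\in L_{i-1}\cup L_i\cup L_{i+1}$. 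Taking the union over all such $u$ gives $N_G[L_i]\subseteq L_{i-1}\cup L_i\cup L_{i+1}$.

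I do not expect a real obstacle: the statement falls out of the fact that the radial layering is defined by distances in the bipartite graph $G^*$. The only points needing care are the parity bookkeeping (upgrading ``distances of $G^*$-adjacent nodes differ by at most one'' to an exact equality via bipartiteness), the boundary conventions $L_0=L_{m+1}=\emptyset$ for the extreme layers, and the elementary topological fact that each edge of $(G,\eta)$ bounds at least one face, which is precisely what forces two adjacent vertices to lie at $G^*$-distance two.
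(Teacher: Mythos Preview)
Your proof is correct and follows essentially the same approach as the paper: both argue via distances in the vertex-face incidence graph and the triangle inequality. The only cosmetic difference is that the paper bounds the pairwise vertex-face distance between any two boundary vertices of $f$ by $2$, whereas you anchor everything to $d^*(o,f)$ and use bipartiteness/parity to pin the distances exactly; for the second part both proofs pass through a common incident face of the edge $uv$ in the same way.
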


\begin{proof}
Let $v,v' \in V(\partial f)$ for some $f \in F_\eta(G)$.
Denote by $d_v$ (resp., $d_{v'}$) the vertex-face distance between the outer face of $(G,\eta)$ and $v$ (resp., $v'$).
As both $v$ and $v'$ are incident to $f$, the vertex-face distance between $v$ and $v'$ in $(G,\eta)$ is $2$, which implies $|d_v - d_{v'}| \leq 2$.
Thus, the indices of the layers containing $v$ and $v'$ differ by at most 1.
It follows that $V(\partial f) \subseteq L_{i-1} \cup L_i$ for some $i \in [m]$.

To see $N_G[L_i] \subseteq L_{i-1} \cup L_i \cup L_{i+1}$, let $v \in L_i$ and $v' \in N_G(v)$.
Since $vv' \in E(G)$ is an edge, it must be incident to some face $f \in F_\eta(G)$, which implies that $v$ and $v'$ are incident to $f \in F_\eta(G)$.
By the observation above, we have either $V(\partial f) \subseteq L_{i-1} \cup L_i$ or $V(\partial f) \subseteq L_i \cup L_{i+1}$.
Thus, $v' \in L_{i-1} \cup L_i \cup L_{i+1}$.
\end{proof}

For notational convenience, if $L_1,\dots,L_m$ are the radial layers, then we write $L_{\geq a} = \bigcup_{i=a}^m L_i$, $L_{> a} = \bigcup_{i=a+1}^m L_i$, $L_{\leq a} = \bigcup_{i=1}^a L_i$, $L_{< a} = \bigcup_{i=1}^{a-1} L_i$, and $L_a^b = \bigcup_{i=a}^b L_i$.
We shall use these notations throughout this section.
Also, we shall assume $L_i = \emptyset$ for all indices $i \leq 0$ and $i >m$.

An \emph{extension} of a $(\varSigma,x_0)$-embedded graph $(G,\eta)$ is another $(\varSigma,x_0)$-embedded graph $(G',\eta')$ where $G'$ is a supergraph of $G$ and $(G,\eta') = (G,\eta)$, i.e., the restriction of $\eta'$ to $G$ is the same as $\eta$.
We say the extension is \emph{outer-preserving} if the images of all vertices in $V(G') \backslash V(G)$ and all edges in $E(G') \backslash E(G)$ under $\eta'$ are in the inner faces of $(G,\eta)$.
Thus, an outer-preserving extension has the same outer face and the same outer-face boundary as the original graph.

An embedding $\eta\colon G \rightarrow \varSigma$ is \emph{minimal} if for every face $o \in F_\eta(G)$, the boundary subgraph $(\partial o,\eta)$ satisfies the following condition: $V(\partial f)$ lies in one connected component of $\partial o$ for every $f \in F_\eta(\partial o) \backslash \{o\}$.
We have the following simple observation.

\begin{fact}\label{fact-minimal}
Let $\eta\colon G \rightarrow \varSigma$ be a minimal embedding.
If $G'$ is an induced subgraph of $G$ that is the disjoint union of some connected components of $G$, then the induced embedding $\eta\colon G' \rightarrow \varSigma$ is also a minimal embedding.
Also, if $G'$ is a disjoint union of $G$ and some isolated vertices and $\eta'\colon G' \rightarrow \varSigma$ is an extension of $\eta$, then $\eta'$ is a minimal embedding.
\end{fact}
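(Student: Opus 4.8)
The plan is to verify the defining condition of a minimal embedding face by face, tracking how the set of faces, the boundary subgraph $\partial o$ of each face $o$, and the faces of each embedded subgraph $(\partial o,\eta)$ transform under the two operations. Each operation reduces, after iterating, to an elementary step: for the first claim, the deletion of a single connected component $C$ of $G$; for the second, the addition of a single isolated vertex $v$, whose image $\eta'(v)$ must lie in $\varSigma\setminus\eta(G)$, hence in the interior of a unique face $\hat o$ of $(G,\eta)$.

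The second claim is essentially immediate. Deleting one interior point does not disconnect a face, so $F_{\eta'}(G')$ is obtained from $F_\eta(G)$ by merely recording $v$ on the boundary of $\hat o$: one has $\partial_{G'}\hat o=\partial_G\hat o\cup\{v\}$ with $v$ isolated, and $\partial_{G'}o=\partial_G o$ for every other face; moreover, as $\eta'$ extends $\eta$ and $v$ lies in none of these boundary subgraphs, each embedded subgraph $(\partial_{G'}o,\eta')$ agrees with $(\partial_G o,\eta)$, while $(\partial_{G'}\hat o,\eta')$ is $(\partial_G\hat o,\eta)$ with $v$ added in the interior of the face $\hat o$, so $v$ is incident to no other face of that subgraph. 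Hence the condition at every face $o\neq\hat o$ is inherited verbatim, and at $\hat o$ the faces $f\neq\hat o$ of $(\partial_{G'}\hat o,\eta')$ and their boundary subgraphs coincide with those of $(\partial_G\hat o,\eta)$; since adding an isolated vertex only creates a new singleton component, ``$V(\partial f)$ lies in one connected component of $\partial_G\hat o$'' still holds after replacing $\partial_G\hat o$ by $\partial_{G'}\hat o$, so $\eta'$ is minimal.

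For the first claim, fix a connected component $C$ of $G$. Since $\eta(C)$ is connected and disjoint from $\eta(G-C)$, it lies inside a single face $o'$ of $(G-C,\eta)$, and away from $\eta(C)$ the local rotation at every vertex of $G-C$ is the same in $(G,\eta)$ and $(G-C,\eta)$. Two things follow. First, the faces of $(G-C,\eta)$ other than $o'$ are exactly the faces of $(G,\eta)$ not contained in $o'$, and each such face $o$ satisfies $\partial_{G-C}o=\partial_G o$ and $(\partial_{G-C}o,\eta)=(\partial_G o,\eta)$, so the condition at all of them is inherited from minimality of $\eta$; only the face $o'$ requires work. Second, let $o_1,\dots,o_k$ be the faces of $(G,\eta)$ contained in $o'$ (equivalently, the connected components of the region $o'\setminus\eta(C)$); a vertex or edge of $G-C$ is incident to $o'$ iff it is incident to some $o_i$, so $\partial_{G-C}o'=\big(\bigcup_{i=1}^k\partial_G o_i\big)-V(C)$, and $C$, being a component of $G$, is a whole connected component of the graph $H^+:=\bigcup_{i=1}^k\partial_G o_i$ (lying inside the face $o'$ of $H^+-V(C)$). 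Consequently the faces of $(\partial_{G-C}o',\eta)$ distinct from $o'$ are exactly the faces of $(H^+,\eta)$ other than $o_1,\dots,o_k$, each such face $f'$ being disjoint from $C$ and having $\partial_{\partial_{G-C}o'}f'=\partial_{H^+}f'$; thus the condition at $o'$ becomes the statement: \emph{for every face $f'$ of $(H^+,\eta)$ other than $o_1,\dots,o_k$, the set $V(\partial_{H^+}f')$ lies in a single connected component of $H^+$.}

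This displayed statement, to be deduced from the hypothesis that $\eta$ is minimal at each of the faces $o_1,\dots,o_k$ of $G$, is the crux and the step I expect to be hardest. The guiding intuition is that a bad face $f'$ would, upon restricting $H^+$ to a well-chosen $\partial_G o_i$ (an operation that only merges faces and components), collapse to a face of $(\partial_G o_i,\eta)$ whose boundary meets two components of $\partial_G o_i$ — contradicting minimality of $\eta$ at $o_i$. The difficulty is that $f'$ need not itself be a face of any single $\partial_G o_i$: distinct $\partial_G o_i$ may share vertices of $G-C$, so $f'$ can straddle several of them, and the deletion of $C$ can fuse components. Making this rigorous therefore requires a careful analysis of the rotation systems at the vertices of $C$ and at the vertices common to several $\partial_G o_i$; I would most likely proceed by induction on $k$, peeling off one $\partial_G o_i$ at a time and using Lemma~\ref{lem-degree} (which controls the component and face structure of the boundary subgraph of a face) to keep the bookkeeping finite. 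Granting the displayed statement, the condition at $o'$ holds; together with the first observation above this shows $\eta|_{G-C}$ is minimal, and iterating over the components of $G$ not in $G'$ yields the first claim.
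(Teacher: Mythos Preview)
Your treatment of the second claim (adding isolated vertices) is correct and matches the paper's argument.

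For the first claim, your reduction is sound but you leave the crux statement unproved, and your proposed attack (induction on $k$, invoking Lemma~\ref{lem-degree} for bookkeeping) is both unfinished and more complicated than necessary. The paper bypasses your intermediate graph $H^+$ entirely and argues directly by contradiction, as follows. Suppose the restricted embedding $\eta\colon G'\to\varSigma$ is not minimal; then there exist $o'\in F_\eta(G')$ and $f'\in F_\eta(\partial o')\setminus\{o'\}$ together with two vertices $v,v'\in V(\partial f')$ lying in distinct components $C_1,C_2$ of $\partial o'$. Pick a face $o\in F_\eta(G)$ with $o\subseteq o'$ incident to $v$, and let $f\in F_\eta(\partial o)$ be the face containing $f'$ (so $f\neq o$ since $v\in V(\partial f)$). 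One then shows $V(\partial f)$ meets two components of $\partial o$, contradicting minimality of $\eta$ on $G$. The argument splits into two cases. If $\partial f$ contains some vertex of $G'':=G-G'$, that vertex and $v$ lie in different components of $\partial o$ because $G$ is the disjoint union of $G'$ and $G''$. Otherwise $\partial f\subseteq G'$; a short curve argument (connect a point in the interior of $o$ to $\eta(v')$ through the interior of $o'$ and observe the curve must hit $\partial f$) shows $v'\in V(\partial f)$, and since $\partial o\cap G'$ is a subgraph of $\partial o'$, the vertices $v,v'$ lie in different components of $\partial o$.

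The key idea you are missing is that you do not need to understand how \emph{all} the faces $o_1,\dots,o_k$ fit together inside $o'$: it suffices to pick a \emph{single} face $o$ of $G$ adjacent to one of the bad vertices and lift the bad face $f'$ to a face $f$ of $(\partial o,\eta)$. The case analysis on whether $\partial f$ meets $G''$ then cleanly handles the possibility that the witnessing second vertex of $\partial f$ is either a deleted vertex or the original $v'$. This avoids any induction on $k$ and does not use Lemma~\ref{lem-degree} at all.
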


\begin{proof}
To see the first statement, let $G'$ be the disjoint union of some connected components of $G$.
Then $G$ is the disjoint union of $G'$ and another graph $G''$.
Suppose $\eta\colon G' \rightarrow \varSigma$ is not minimal.
So there exist $o' \in F_\eta(G')$ and $f' \in F_\eta(\partial o') \backslash \{o'\}$ such that $V(\partial f')$ is not contained in one connected component of $\partial o'$.
Let $C_1$ and $C_2$ be two connected components of $\partial o'$ that contain at least one vertex in $V(\partial f')$.
Suppose $v \in V(\partial f')$ is a vertex contained in $C_1$.
As $G'$ is a subgraph of $G$, each face of $(G,\eta)$ is contained in one face of $(G',\eta)$.
There exists a face $o \in F_\eta(G)$ such that $o \subseteq o'$ and $o$ is incident to $v$, i.e., $v \in F_\eta(\partial o)$.
As $o \subseteq o'$, $f'$ is contained in some face $f \in F_\eta(\partial o)$ of $(\partial o, \eta)$.
We claim that $V(\partial f)$ is not contained in one connected component of $\partial o$, which contradicts the fact that $\eta\colon G \rightarrow \varSigma$ is a minimal embedding.
First notice that $v \in V(\partial f)$.
Indeed, since (the image of) $v$ is contained in $f'$, it is also contained in $f$ (but it is not in the interior of $f$ as $v \in V(\partial o)$).
So $v$ is incident to $f$ and $v \in V(\partial f)$.
Now it suffices to find another vertex $v' \in V(\partial f)$ that is not contained in the same connected component of $\partial o$ as $v$.
If $\partial f$ contains some vertex in $G''$, we are done, because any vertex in $G''$ cannot be in the same connected component of $\partial o$ as $v$ (as $G$ is the disjoint union of $G'$ and $G''$, and $v \in V(G')$).
Otherwise, $\partial f$ is a subgraph of $G'$.
In this case, we pick an arbitrary vertex $v' \in V(\partial f')$ that is contained in $C_2$.
We observe that $v' \in V(\partial f)$.
To see this, let $x \in \varSigma$ be a point in the interior of $o$ (and thus in the interior $o'$).
Since $v' \in V(\partial o')$, there exists a curve $\gamma$ on $\varSigma$ connecting $x$ and (the image of) $v'$ that lies in the interior of $o'$ (except the endpoints).
Note that $\gamma$ must intersect the boundary of $f$, because $x \in o$ and $\eta(v') \in f' \subseteq f$.
So either $v' \in V(\partial f)$ or the interior of $\sigma$ intersects (the image of) $\partial f$.
The latter is impossible because the interior of $\sigma$ does not intersect (the images of) any vertex/edges of $G'$ (as $\gamma$ lies in the interior of $o$) but $\partial f$ is a subgraph of $G'$ by assumption.
Thus, $v' \in V(\partial f)$.
To see that $v$ and $v'$ lie in different connected components of $\partial o$, simply notice that $\partial o \cap G'$ is a subgraph of $\partial o' \cap G'$.
Indeed, as $o \subseteq o'$ and $o'$ is a face of $(G',\eta)$, any vertex/edge of $\partial o \cap G'$ should be embedded in $o'$ and thus incident to $o'$.
Since $v$ and $v'$ lie in different connected components of $\partial o'$ (namely, $C_1$ and $C_2$), they also lie in different connected components of $\partial o$.

The second statement is somewhat trivial.
If $G'$ is a disjoint union of $G$ and some isolated vertices and $\eta'\colon G' \rightarrow \varSigma$ is an extension of $\eta$, then $(G,\eta)$ and $(G',\eta')$ have the same faces (with possibly different boundaries because of the isolated vertices).
Consider a face $o' \in F_{\eta'}(G')$.
There exists a face $o \in F_\eta(G)$ such that $\partial o'$ is the disjoint union of $\partial o$ and some isolated vertices whose images are in the interior of $o$.
Then $F_{\eta'}(\partial o')$ and $F_\eta(\partial o)$ are totally the same.
Any face $f' \in F_{\eta'}(\partial o') \backslash \{o'\}$ can also be viewed as a face in $F_\eta(\partial o) \backslash \{o\}$.
Note that $V(\partial f')$ does not contain the isolated vertices in $V(\partial o') \backslash V(\partial o)$ and thus is contained in $V(\partial o)$.
Furthermore, since $\eta$ is minimal, all vertices in $V(\partial f')$ are contained in the same connected component of $\partial o$, and thus in the same connected component of $\partial o'$.
\end{proof}

Finally, we need to establish an important lemma which allows us the transform a given almost-embeddable structure of a graph to another almost-embeddable structure in which the partial embedding is minimal.

\begin{lemma}\label{lem-minimal}
Given a graph $G$ with an $h$-almost-embeddable structure, one can compute in polynomial time a new $O_h(1)$-almost-embeddable structure of $G$ in which the partial embedding is minimal.
Furthermore, every apex in the given structure is still an apex in the new structure, while every vortex vertex in the given structure is either a vortex vertex or an apex in the new structure.
\end{lemma}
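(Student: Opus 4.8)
The plan is to re-embed the embeddable part so that its restriction to each connected component becomes cellular (i.e.\ $2$-cell): a cellular embedding of a connected graph has every facial walk connected, so every face boundary is a connected subgraph, and the condition defining a minimal embedding then holds vacuously at each such face. The two difficulties to overcome are keeping the total genus bounded and not destroying the disks needed to re-attach the vortices.

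First I would dispose of the ``thin'' vortices: if a vortex $G_i$ of the given structure has $q_i\le 2$ then $|V(G_i)|\le 2(h+1)$, so I move all of $V(G_i)$ into the apex set (deleting those vertices from $G_0$ if they lie there); this adds $O(h^2)$ apices in total and every vertex moved was a vortex vertex, so the ``furthermore'' requirement is respected and every surviving vortex now has $q_i\ge 3$. For each surviving vortex $G_i$ I insert, inside its disk $D_i$, a fresh ``hub'' vertex $x_i$ joined by $q_i$ pairwise non-crossing ``spokes'' to $v_{i,1},\dots,v_{i,q_i}$ in the cyclic order prescribed by $\tau_i$; writing $G_0^+$ for the resulting graph, the given embedding of $G_0$ extends to an embedding of $G_0^+$ in $\varSigma$ of the same genus.

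Next I would cellularize the embedding of $G_0^+$, doing the surgery globally and allowing the surface to fall apart: while some face is not an open disk, cut $\varSigma$ along a simple closed curve drawn in the interior of that face (a non-separating curve if the face carries genus, otherwise one separating two of its boundary components) and cap the resulting boundary circle(s) with disks. Each such step strictly decreases a standard complexity measure and never increases the total Euler genus, and because all the curves lie in the interiors of faces the procedure leaves $G_0^+$ untouched --- in particular it preserves the rotation at every vertex, so the rotation at each $x_i$ is still the $\tau_i$-order. It terminates in polynomial time with each connected component $K_j$ of $G_0^+$ cellularly embedded in a surface $\varSigma_j$ whose Euler genera sum to at most the Euler genus of $\varSigma$, and hence to at most $h$. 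I then reassemble the $\varSigma_j$ into a single connected surface $\varSigma'$ by an iterated connected sum, each connection performed inside a face of the relevant $K_j$; since Euler genus is additive under connected sum, $\varSigma'$ still has Euler genus at most $h$. The resulting embedding of $G_0^+$ in $\varSigma'$ is minimal: every face other than the single ``big'' face $O^*$ created by the connected sums is a disk face of some $K_j$, so has connected boundary and trivially satisfies the condition, while $\partial O^*$ is a disjoint union of the connected subgraphs $\partial\phi_j$ (the boundaries of the punctured faces) and every face of $(\partial O^*,\cdot)$ distinct from $O^*$ is confined to the part of $\varSigma'$ coming from one $\varSigma_j$, so its boundary lies inside a single component $\partial\phi_j$ of $\partial O^*$.

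Finally I would delete all hubs $x_i$ together with their spokes, returning to the (possibly thinned) embeddable part $G_0$. Deleting $x_i$ and its spokes merges the disk faces incident to $x_i$ into one face $o_i^*$; since those faces together with $x_i$ and its spokes form the closed star of $x_i$ --- a disk whose boundary walk runs through $v_{i,1},\dots,v_{i,q_i}$ in the $\tau_i$-order --- the boundary $\partial o_i^*$ is a cyclic concatenation of the $G_0$-segments of those facial walks, and one checks that it consists of a connected ``core'' together with possibly some isolated vertices, which do not bound any face besides $o_i^*$ and hence (in the spirit of Fact~\ref{fact-minimal}) do not affect minimality at $o_i^*$. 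Faces incident to no hub are unchanged, so the restricted embedding $\eta''$ of $G_0$ in $\varSigma'$ is minimal and of Euler genus at most $h$. I re-attach each surviving vortex $G_i$ to the disk $D_i'$ obtained by slightly fattening the (now deleted) hub-and-spokes of $x_i$: $D_i'$ lies in $\overline{o_i^*}$, meets $G_0$ exactly in $\{v_{i,1},\dots,v_{i,q_i}\}$ in the cyclic order $\tau_i$, and the various $D_i'$ are pairwise disjoint because the hubs-and-spokes were; the path decompositions of the vortices are untouched. Thus the new structure is $O_h(1)$-almost-embeddable with a minimal partial embedding, apices remain apices, and every vortex vertex is either still a vortex vertex or (for the thin vortices) now an apex. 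I expect the real obstacle to be precisely the two points the construction is designed around: forcing the cellularizing surgery to be global so that the total genus stays $\le h$ when $G_0$ is disconnected, and the hub-gadget argument --- in particular verifying, in all degenerate situations (low-degree or pendant boundary vertices, or wedge faces shared between two hubs), that the boundary of each merged face $o_i^*$ really does split into a connected core plus isolated vertices, which is what makes deletion of the hubs preserve minimality.
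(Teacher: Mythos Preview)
Your approach differs substantially from the paper's, and it contains a genuine gap: deleting the hubs can undo precisely the cellularisation they provided, leaving $G_0$ non-minimally embedded. Here is a concrete failure. Let $G_0$ be two disjoint triangles $T_1=(a,b,c)$ and $T_2=(d,e,f)$ embedded on the torus as parallel essential cycles, so that $(G_0,\eta)$ has two annular faces $F_A$ and $F_B$. Take two vortices, each with $q_i=3$: the first with $(v_{1,1},v_{1,2},v_{1,3})=(a,b,d)$ and disk $D_1\subset F_A$, the second with $(v_{2,1},v_{2,2},v_{2,3})=(c,e,f)$ and $D_2\subset F_B$. After inserting the two hubs with their spokes, $G_0^+$ is connected with $8$ vertices and $12$ edges, and each hub-and-three-spokes cuts its annulus into two disk faces; hence $(G_0^+,\eta)$ is already cellular on the torus (four faces), your surgery loop does nothing, there is a single component so no connected sum occurs, and deleting both hubs returns you to $G_0=T_1\cup T_2$ on the torus with its two annular faces. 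This embedding is \emph{not} minimal: for $o=F_A$ one has $\partial o=T_1\cup T_2$ (two components), and the other face $f=F_B$ of $(\partial o,\eta)$ has $V(\partial f)=V(T_1)\cup V(T_2)$, meeting both components. Your description of $\partial o_i^*$ as ``a connected core plus isolated vertices'' is also wrong here --- it is two disjoint triangles --- and this is not a degenerate corner case but the generic situation whenever a hub joins two nontrivial components of $G_0$ across a handle: the hub is exactly what made $G_0^+$ cellular on that handle, and removing it keeps the genus while discarding the connectivity.

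The paper avoids this trap by never cellularising. It shows directly (via a duality and minimum-cut argument on an auxiliary cellular extension $\varGamma$) that whenever the partial embedding fails to be minimal there is a non-separating simple closed curve $\gamma$ lying in the interior of some face of $(G_0,\eta)$ and meeting the boundary of each vortex disk at most twice; each crossed vortex is split into two by pushing $O(h)$ path-decomposition vertices into the apex set, after which cut-and-cap surgery along $\gamma$ strictly lowers the genus. Since the genus starts at most $h$, after at most $h$ rounds the embedding is minimal. In the torus example above $\gamma$ is the core circle of $F_A$: the surgery drops to the sphere (possibly splitting vortex~$1$), and on the sphere the resulting embedding is minimal.
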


\begin{proof}
Let $G_0$ be the embeddable part of $G$ and $\eta\colon G_0 \rightarrow \varSigma$ be the partial embedding.
If $\eta$ is minimal, we are done.
If $\eta$ is not minimal, we show that one can compute a new $O_h(1)$-almost-embeddable structure of $G$ in which the embeddable part is embedded in a surface of genus strictly smaller than $\varSigma$.
In addition, every apex in the old structure is still an apex in the new structure, while every vortex vertex in the old structure is either a vortex vertex or an apex in the new structure.
Note that this implies the lemma.
Indeed, we can keep doing this until the partial embedding in the almost-embeddable structure is minimal.
As the genus of the surface is always decreasing and the original surface $\varSigma$ is of genus at most $h$, this procedure will terminate in $h$ steps.

Suppose $\eta$ is not minimal.
Let $G_1,\dots,G_r$ be the vortices of $G$ attached to disjoint facial disks $D_1,\dots,D_r$ in $(G_0,\eta)$ with witness pairs $(\tau_1,\mathcal{P}_1),\dots,(\tau_r,\mathcal{P}_r)$, where $r \leq h$.
To get some intuition, we first consider an ideal case, in which we have a \emph{non-separating} simple closed curve $\gamma$ on $\varSigma$ that is contained in the interior of some face of $(G_0,\eta)$ and is disjoint from all the disks $D_1,\dots,D_r$.
Here ``non-separating'' means that removing $\gamma$ from $\varSigma$ does not split $\varSigma$ into multiple connected components, i.e., $\varSigma \backslash \gamma$ is connected.
Then we can apply the well-known cut-and-paste operation (see \cite{Diestel05} for example) to reduce the genus of $\varSigma$ while keeping the partial embedding $\eta$ and the vortex disks $D_1,\dots,D_r$.
Specifically, we cut $\varSigma$ along $\gamma$, resulting in a surface with one or two boundary circles (depending on whether $\gamma$ has one side or two sides in $\varSigma$), and attach disks to the boundary circles to make it a surface (without boundary), which we denote by $\varSigma'$.
It is known that the genus of the new surface $\varSigma'$ is strictly smaller than the genus of $\varSigma$.
Furthermore, as $\gamma$ is inside a face of $(G_0,\eta)$ and is disjoint from the disks $D_1,\dots,D_r$, the images of the vertices/edges of $G_0$ and the disks $D_1,\dots,D_r$ preserve in $\varSigma'$.
So this gives us a new almost-embeddable structure of $G$ whose underlying surface is of smaller genus than $\varSigma$.
It is not difficult to see that when $\eta$ is not minimal, one can always find such a non-separating simple closed curve $\gamma$ in the interior of a face of $(G_0,\eta)$.
However, it is not always possible to require $\gamma$ being disjoint from the disks $D_1,\dots,D_r$.
Therefore, our main idea below is to find a non-separating simple closed curve $\gamma$ in the interior of a face of $(G_0,\eta)$ that intersects the boundary of each disk $D_i$ at most \emph{twice}.
Then we can split each $D_i$ intersected by $\gamma$ and the corresponding vortex $G_i$ into two by moving $O_h(1)$ vertices to the apex set of $G$.
After this, $\gamma$ no longer intersect any vortex disk and thus we are able to apply the above cut-and-paste argument to reduce the genus of $\varSigma$.

We begin with defining an extension of $(G_0,\eta)$ as follows.
Consider an index $i \in [r]$.
Suppose $\tau_i = (v_{i,1},\dots,v_{i,\ell_i})$.
By definition, $v_{i,1},\dots,v_{i,\ell_i}$ are the vertices of $(G_0,\eta)$ that lie on the boundary of $D_i$, sorted in clockwise or counterclockwise order.
For convenience, we write $v_{i,0} = v_{i,\ell_i}$.
We then add the edges $(v_{i,j-1},v_{i,j})$ for all $j \in [\ell_i]$ to $G_0$, and call them \emph{virtual} edges.
Furthermore, we draw these virtual edges along the boundary of the disk $D_i$ (this is possible because $v_1,\dots,v_{\ell_i}$ are sorted along the boundary of $D_i$).
The images of these virtual edges then enclose the disk $D_i$.
We do this for all indices $i \in [r]$.
Let $G_0'$ denote the resulting graph after adding the virtual edges.
Since $D_1,\dots,D_r$ are disjoint facial disks in $(G_0,\eta)$, the images of the virtual edges do not cross each other or cross the original edges in $(G_0,\eta)$.
Therefore, the drawing of the virtual edges extends $\eta$ to an embedding of $G_0'$ to $\varSigma$; for simplicity, we still use the notation $\eta$ to denote this embedding.
By construction, it is clear that $D_1,\dots,D_r$ are faces of $(G_0',\eta)$ and these faces are (homeomorphic to) disks.
The other faces of $(G_0',\eta)$ are not necessarily disks.
For convenience of illustration, we add some ``dummy'' vertices/edges in each non-disk face of $(G_0',\eta)$ to subdivide the face into disks.
Let $\varGamma$ denote the resulting graph after adding the dummy vertices/edges, and we still use $\eta$ to denote the embedding of $\varGamma$ in $\varSigma$.
Now $(\varGamma,\eta)$ is an extension of $(G_0,\eta)$.
Furthermore, every face of $(\varGamma,\eta)$ is a disk (surface embeddings with this property are known as \emph{cellular} embeddings).
Note that $D_1,\dots,D_r$ are faces of $(\varGamma,\eta)$.

Next, we shall use the well-known \emph{duality} theory for surface-embedded graphs \cite{MoharT01}.
The graph $(\varGamma,\eta)$ has a \emph{dual} graph $(\varGamma^*,\eta^*)$, which is also a $\varSigma$-embedded graph.
The vertices of $\varGamma^*$ correspond to the faces of $(\varGamma,\eta)$.
For each face $f \in F_\eta(\varGamma)$, denote by $f^* \in V(\varGamma^*)$ the vertex of $\varGamma^*$ corresponding to $f$ and call it the \emph{dual vertex} of $f$.
The edges of $\varGamma^*$ correspond to the edges of $(\varGamma,\eta)$ in the following way: every edge $e$ of $(\varGamma,\eta)$ has a dual edge $e^*$ in $\varGamma^*$ which connects the dual vertices of the two faces of $(\varGamma,\eta)$ incident to $e$.
There is a canonical embedding $\eta^*\colon\varGamma^* \rightarrow \varSigma$ in which every vertex $f^* \in V(\varGamma^*)$ is embedded in the face $f \in F_\eta(\varGamma)$ dual to $f^*$ and every edge $e^* = (f_1^*,f_2^*) \in E(\varGamma^*)$ is embedded in $f_1 \cup f_2$ such that its image only crosses (the image of) the edge $e \in E(\varGamma)$ dual to $e^*$.
It is known that the faces of $(\varGamma^*,\eta^*)$ exactly correspond to the vertices of $(\varGamma,\eta)$.
So for each vertex $v \in V(\varGamma)$, we denote by $v^* \in F_{\eta^*}(\varGamma^*)$ the face of $(\varGamma^*,\eta^*)$ corresponding to $v$ and call it the \emph{dual face} of $v$.

Since $\eta$ is not minimal, there exist $o \in F_\eta(G_0)$ and $f \in F_\eta(\partial o)$ such that $V(\partial f)$ is not contained in one connected component of $\partial o$.
We pick two vertices $s,t \in V(\partial f)$ that are contained in different connected components of $\partial o$.
Every edge in $E(\varGamma) \backslash E(\partial o)$ is embedded by $\eta$ either inside $o$ or outside $o$; for convenience, we call the ones inside $o$ \emph{in-edges} and the ones outside $o$ \emph{out-edges}.
As $s$ and $t$ belong to different connected components of $\partial o$, the edges in $E(\varGamma) \backslash E(\partial o)$ form an $(s,t)$-\emph{cut} in $\varGamma$.
Thus, there exists a minimal $(s,t)$-cut $E \subseteq E(\varGamma) \backslash E(\partial o)$, which can clearly be computed in polynomial time.
We can partition $E$ into two subsets $E_\mathsf{in}$ and $E_\mathsf{out}$, which consist of the in-edges and the out-edges, respectively.

\begin{claim}
 \label{claim:in-edge-non-empty}
 $E$ contains at least one in-edge, i.e., $E_\mathsf{in} \neq \emptyset$.
\end{claim}
\begin{claimproof}
 Let $F \subseteq F_\eta(\varGamma)$ consist of the faces of $(\varGamma,\eta)$ contained in $o$.
 The union of the boundaries of the faces in $F$ is a subgraph $B$ of $\varGamma$ containing $s$ and $t$ whose edges are either in-edges or those in $E(\partial o)$.
 To see $E_\mathsf{in} \neq \emptyset$, it suffices to show that $B$ is connected; indeed, if $B$ is connected, then $E$ has to contain at least one edge of $B$ in order to be an $(s,t)$-cut, which must be an in-edge because $E$ does not contain any edge in $E(\partial o)$.
 As all faces of $(\varGamma,\eta)$ are disks, their boundaries are connected.
 Therefore, the boundary of every face in $F$ is contained in one connected component of $B$.
 Furthermore, if two faces in $F$ are incident to a common vertex in $(\varGamma,\eta)$, their boundaries must be contained in the same connected component of $B$.
 It follows that, if $B$ is not connected, then we can partition $F$ into two subsets $F_1$ and $F_2$ such that for any $f_1 \in F_1$ and $f_2 \in F_2$, $f_1$ and $f_2$ are not incident to a common vertex in $(\varGamma,\eta)$.
 However, this is impossible because the union of the faces in $F$ is $o$, which is a connected region in $\varSigma$.
 Thus, $B$ is connected and $E_\mathsf{in} \neq \emptyset$.
\end{claimproof}

Let $E^* \subseteq E(\varGamma^*)$ be the set of edges of $\varGamma^*$ dual to the edges in $E$, and $(\varGamma_E^*,\eta^*)$ be the subgraph of $(\varGamma^*,\eta^*)$ consisting of all edges in $E^*$ and their endpoints.
Similarly, we can partition $E^*$ into two subsets $E_\mathsf{in}^*$ and $E_\mathsf{out}^*$, which are the dual of $E_\mathsf{in}$ and $E_\mathsf{out}$, respectively.
Note that (the images of) all edges in $E_\mathsf{in}^*$ are in the interior of $o$, while all edges in $E_\mathsf{out}^*$ are outside $o$.
Therefore, each connected component of $\varGamma_E^*$ cannot contain edges in both $E_\mathsf{in}^*$ and $E_\mathsf{out}^*$.
We call a connected component of $\varGamma_E^*$ an \emph{in-component} (resp., \emph{out-component}) if its edges are in $E_\mathsf{in}^*$ (resp., $E_\mathsf{out}^*$).
Since $E_\mathsf{in} \neq \emptyset$, we have $E_\mathsf{in}^* \neq \emptyset$ and thus $\varGamma_E^*$ has at least one in-component.
Furthermore, it is known that a minimal cut in a surface-embedded graph is always dual to an \emph{even subgraph} of the dual graph \cite{EricksonFN12}, that is, a subgraph in which the degree of every vertex is even.
Therefore, $\varGamma_E^*$ is an even subgraph of $\varGamma^*$, and hence every connected component of $\varGamma_E^*$ contains a simple cycle.
In particular, we can find a simple cycle $\gamma$ in an in-component of $\varGamma^*$.
The image of $\gamma$ under $\eta^*$ is a simple closed curve on $\varSigma$ contained in the interior of $o$.
With a bit abuse of notation, we also use $\gamma$ to denote this curve.

\begin{claim}
 \label{claim:gamma-non-separating}
 $\gamma$ is non-separating.
\end{claim}
\begin{claimproof}
 Assume $\gamma$ separates $\varSigma$.
 The two sides of $\gamma$ correspond to the two connected components of $\varSigma \backslash \gamma$.
 Observe that $s^*$ and $t^*$ are in the same connected component of $\varSigma \backslash \gamma$.
 Indeed, $s$ and $t$ are both on the boundary of the face $f$, and $\gamma$ is disjoint from $f$ (as it is contained in the interior of $o$).
 Thus, the images of $s$ and $t$ under $\eta$ are both contained in the connected component of $\varSigma \backslash \gamma$ containing $f$.
 Since $s$ is embedded inside $s^*$, $s^*$ must be contained in the same connected component $\varSigma \backslash \gamma$ as $s$.
 For the same reason, $t^*$ is contained in the same connected component $\varSigma \backslash \gamma$ as $t$, and thus $s^*$ and $t^*$ are in the same connected component of $\varSigma \backslash \gamma$.
 Let $\varSigma_1$ be the connected component of $\varSigma \backslash \gamma$ that contains $s^*$ and $t^*$, and $\varSigma_2$ be the other connected component of $\varSigma \backslash \gamma$.
 Every face of $(\varGamma^*,\eta^*)$ is either contained in $\varSigma_1$ or contained in $\varSigma_2$.
 Denote by $V_1 \subseteq V(\varGamma)$ and $V_2 \subseteq V(\varGamma)$ the subsets of vertices whose dual faces in $F_{\eta^*}(\varGamma^*)$ are contained $\varSigma_1$ and $\varSigma_2$, respectively.
 We have $s,t \in V_1$.
 Note that the edges of $\varGamma$ between $V_1$ and $V_2$ are exactly those dual to the edges of $\gamma$.
 But all edges of $\gamma$ are in $E_\mathsf{in}^*$ and thus are dual to the edges in $E_\mathsf{in}$.
 Therefore, $E_\mathsf{in}$ contains all edges of $\varGamma$ between $V_1$ and $V_2$.
 However, this implies $E$ is not a minimal $(s,t)$-cut in $\varGamma$.
 To see this, pick an arbitrary edge $e \in E(\varGamma)$ between $V_1$ and $V_2$.
 As argued before, $e \in E_\mathsf{in} \subseteq E$.
 We claim that $E \backslash \{e\}$ is also an $(s,t)$-cut in $\varGamma$.
 Since $E$ is an $(s,t)$-cut, if $E \backslash \{e\}$ is not an $(s,t)$-cut, then the two endpoints of $e$ must lie in the connected components of $\varGamma - E$ containing $s$ and $t$, respectively.
 But one endpoint of $e$ is in $V_2$, and the connected component $C$ of $\varGamma - E$ this endpoint lies in must be entirely contained in $V_2$, simply because $E$ contains all edges of $\varGamma$ between $V_1$ and $V_2$.
 Since $s,t \in V_1$, we know that $C$ contains neither $s$ nor $t$.
 Therefore, $E \backslash \{e\}$ is also an $(s,t)$-cut in $\varGamma$, contradicting with the minimality of $E$.
 It follows that $\gamma$ is non-separating.
\end{claimproof}

\begin{claim}
 For every $i \in [r]$, either $\gamma$ is disjoint from $D_i$ or $\gamma$ intersects (the images of) two virtual edges on the boundary of $D_i$.
\end{claim}
\begin{claimproof}
 Recall that $D_i$ is a face of $\varGamma$.
 Thus, the only edges of $(\varGamma^*,\eta^*)$ that intersect the boundary of $D_i$ are those dual to the edges in $E(\partial D_i)$, which are exactly the edges of $\varGamma^*$ incident to $D_i^*$, the dual vertex of $D_i$ in $\varGamma^*$.
 As $\gamma$ is a cycle in $\varGamma^*$, it either contains no edges incident to $D_i^*$ (when $\gamma$ does not contain $D_i^*$) or contains two edges incident to $D_i^*$ (if $\gamma$ contains $D_i^*$).
 Therefore, either $\gamma$ is disjoint from $D_i$ or $\gamma$ intersects at most two virtual edges on the boundary of $D_i$.
\end{claimproof}

In what follows, we modify the vortices of $G$ (and the corresponding disks for attaching them) to make the curve $\gamma$ disjoint from all vortex disks.
Without loss of generality, we consider the vortex $G_1$ attached in $D_1$.
If $\gamma$ is disjoint from $D_1$, we do not change $G_1$ and $D_1$.
Otherwise, $\gamma$ intersects two edges on the boundary of $D_1$.
Recall that $(\tau_1,\mathcal{P}_1)$ is the witness pair of $G_1$, where $\tau_1 = (v_{1,1},\dots,v_{1,\ell_1})$.
Let $(u_1,\dots,u_{\ell_1})$ be the underlying path of the path decomposition $\mathcal{P}_1$ of $G_1$.
We use $\beta(u_i)$ to denote the bag of $u_i$ in $\mathcal{P}_1$.
By definition, we have $v_{1,i} \in \beta(u_i)$ for all $i \in [\ell_1]$.
Suppose $\gamma$ intersects the edges $(v_{1,i-1},v_{1,i})$ and $(v_{1,j-1},v_{1,j})$, where $i<j$.
Then $\gamma$ splits $D_1$ into two parts, one contains $v_{1,i},\dots,v_{1,j-1}$ and the other contains $v_{1,j},\dots,v_{1,\ell_1},v_{1,1},\dots,v_{1,i-1}$.
Inside $D_1$, we draw a curve $\gamma'$ connecting $v_{1,i}$ and $v_{1,j-1}$ that is disjoint from $\gamma$; this is possible because $v_{1,i}$ and $v_{1,j-1}$ are on the same side of $\gamma$.
Similarly, we draw a curve $\gamma''$ inside $D_1$ connecting $v_{1,j}$ and $v_{1,i-1}$ that is disjoint from $\gamma$.
The portion of the boundary of $D_1$ from $v_{1,i}$ to $v_{1,j-1}$ together with $\gamma'$ encloses a disk $D_1' \subseteq D_1$, and the vertices $v_{1,i},\dots,v_{1,j-1}$ lie on the boundary of $D_1'$ in clockwise (or counterclockwise) order.
Similarly, the portion of the boundary of $D_1$ from $v_{1,j}$ to $v_{1,i-1}$ together with $\gamma''$ encloses a disk $D_1'' \subseteq D_1$, and the vertices $v_{1,j},\dots,v_{1,\ell_1},v_{1,1},\dots,v_{1,i-1}$ lie on the boundary of $D_1''$ in clockwise (or counterclockwise) order.
Next, we split the vortex $G_1$ into two vortices $G_1'$ and $G_1''$, attached in $D_1'$ and $D_1''$, respectively.
We first construct $G_1'$, and $G_1''$ can be constructed in the same way.
Set $X = \beta(u_i) \cup \beta(u_{j-1})$.
Note that $|X| \leq 2h+2$.
We move the vertices in $X$ to the apex set of $G$, and define $G_1'$ as the subgraph of $G_1$ induced by the vertices in $(\bigcup_{k=i}^{j-1} \beta(u_k)) \backslash X$.
In order to make $G_1'$ a vortex attached to $D_1'$, we need a witness pair $(\sigma_1',\mathcal{P}_1')$ for $G_1'$.
As $\mathcal{P}_1$ is a path decomposition, if a vertex appears in $\bigcup_{k=i}^{j-1} \beta(u_k)$ and also appears in one of the bags $\beta(u_j),\dots,\beta(u_{\ell_1}),\beta(u_1),\dots,\beta(u_{i-1})$, then it must be contained in $X$.
Thus, no vertex of $G_1'$ can be contained in the bags $\beta(u_j),\dots,\beta(u_{\ell_1}),\beta(u_1),\dots,\beta(u_{i-1})$.
So the path $(u_i,\dots,u_{j-1})$ with the bags $\beta(u_i) \backslash X,\dots,\beta(u_{j-1}) \backslash X$ gives a path decomposition of $G_1'$.
However, we cannot directly use this as the path decomposition $\mathcal{P}_1'$ in the witness pair, because some vertices in $\{v_{1,i},\dots,v_{1,j-1}\}$ might be ``missing'', i.e., they might be contained in $X$ and are hence moved to the apex set.
This issue can be handled by merging consecutive bags in the path decomposition as follows.
Let $K = \{k \mid i \leq k \leq j-1 \text{ and } v_{1,k} \notin X\}$ and suppose $K = \{k_1,\dots,k_t\}$ where $k_1 < \cdots < k_t$.
For convenience, set $k_0 = i-1$ and $k_{t+1} = j$.
Since $|X| \leq 2h$, all but at most $2h$ indices in $\{i,\dots,j-1\}$ are contained in $K$, which implies $|k_\alpha - k_{\alpha-1}| \leq 2h+1$ for all $\alpha \in [t+1]$.
We use $(u_{k_1},\dots,u_{k_t})$ as the path of $\mathcal{P}_1'$, and define the bag $\beta'(u_{k_\alpha})$ of $u_{k_\alpha}$ as the union of $\beta(u_{k_{\alpha-1}+1}),\dots,\beta(u_{k_{\alpha+1}-1})$ excluding the vertices in $X$.
One can easily verify that $\mathcal{P}_1'$ is a path decomposition of $G_1'$.
Furthermore, the size of each bag of $\mathcal{P}_1'$ is $O(h^2)$.
Now set $\sigma_1' = (v_{1,k_1},\dots,v_{1,k_t})$.
Then $(\sigma_1',\mathcal{P}_1')$ is a valid witness pair for $G_1'$, as $v_{1,k_\alpha} \in \beta'(u_{k_\alpha})$ for all $\alpha \in [t]$ and $v_{1,k_1},\dots,v_{1,k_t}$ are sorted in clockwise or counterclockwise order around the boundary of $D_1'$.
Similarly, we can construct the vortex $G_1''$, and attach it in the disk $D_1''$ with a witness pair $(\sigma_1'',\mathcal{P}_1'')$.
The two new disks $D_1'$ and $D_1''$ are disjoint from $\gamma$.
We do this for every vortex disk $D_i$.
After this, $\gamma$ is disjoint from all vortex disks.
Note that the total number of the new vortices is at most $2h$, and the width of the path decomposition of the new vortices is $O(h^2)$.
Also, the total number of vertices moved to the apex set is $O(h^2)$.
Thus, the modified almost-embeddable structure of $G$ is an $O_h(1)$-almost-embeddable structure.
As $\gamma$ is in the interior of $o$ and is now disjoint from all vortex disks, we can apply the above cut-and-paste operation along $\gamma$ to obtain an $O_h(1)$-almost-embeddable structure of $G$ whose underlying surface is of genus strictly smaller than $\varSigma$, which completes the proof.
\end{proof}

\subsection{A technical lemma}
In this section, we prove the following key lemma, which serves as a technical core in our proof.

\begin{lemma} \label{lem-key}
 Let $(G,\eta)$ be a $(\varSigma,x_0)$-embedded graph, and $L_1,\dots,L_m$ be the radial layering of $(G,\eta)$.
 Given any $t \in [m]$ with $t \geq 2$ and a set $\varPhi \subseteq L_t$ of size $c$, if all faces of $(G,\eta)$ incident to $L_t$ are non-singular, then one can compute in polynomial time a subset $X \subseteq L_t$ containing $\varPhi$ and another subset $L_t^+ \subseteq L_t \backslash N_G(L_{t-1})$ satisfying the following four conditions (where $o_t$ is the outer face of $(G[L_{\geq t}],\eta)$ and $g$ is the genus of $\varSigma$).
 \begin{enumerate}[label = (\arabic*)]
  \item\label{item:rcd-key-1} Every connected component of $G[X]$ intersects $\varPhi$ and is neighboring to $L_{t-1}$.
  \item\label{item:rcd-key-2} For any outer-preserving extension $(G',\eta')$ of $(\partial o_t,\eta)$, every connected component $C$ of $G'-(L_t \backslash L_t^+)$ satisfies $N_{G'}(C) \subseteq V(\partial f)$ for some $f \in F_\eta(\partial o_t) \backslash \{o_t\}$.
  \item\label{item:rcd-key-3} $V(\partial f) \cap (X \backslash L_t^+) = O_{g,c}(1)$ for all $f \in F_\eta(\partial o_t) \backslash \{o_t\}$.
  \item\label{item:rcd-key-4} $\partial f - L_t^+$ only has $O_{g,c}(1)$ connected components for all $f \in F_\eta(\partial o_t) \backslash \{o_t\}$.
 \end{enumerate}
\end{lemma}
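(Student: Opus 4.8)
I would work throughout inside $G_t := G[L_{\ge t}]$, whose outer face is $o_t$ and whose boundary satisfies $V(\partial o_t) = L_t$, since by definition of the radial layering every vertex of $L_t$ is incident to $o_t$. Applying Lemma~\ref{lem-degree} to $(G_t,\eta)$ with $o = o_t$ gives: $(\partial o_t,\eta)$ has $O(g)$ singular faces, and every inner face $f \in F_\eta(\partial o_t)\setminus\{o_t\}$ has $\partial f \subseteq L_t$ of maximum degree $O(g)$ with only $O(g)$ vertices of degree $\ge 3$, and moreover $F_\eta(\partial f)=\{f,f_0\}$ with $o_t\subseteq f_0$ and every edge of $\partial f$ incident both to $f$ and to $o_t$. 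The hypothesis that all faces of $(G,\eta)$ incident to $L_t$ are non-singular then forces, by a short argument, every such $\partial f$ to be connected: a disconnected $\partial f$ would exhibit a face of $G$ incident to $L_t$ with disconnected boundary. Finally, because every edge of $\partial f$ faces $o_t$ and $L_{t-1}$ is exactly the layer lining $o_t$ once layer $t-1$ is restored, a short argument via Fact~\ref{fact-diff1} inside $G[L_{\ge t-1}]$ (again using non-singularity) shows each $\partial f$, and each connected component of $\partial o_t$, meets $W := L_t\cap N_G(L_{t-1})$. So each inner face boundary is a connected, ``$O(g)$-thin'' graph all of whose edges face $o_t$ and which contains an escape vertex of $W$.

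\textbf{Construction.} Since $c=|\Phi|$ is a constant I would handle the vertices of $\Phi$ one at a time and take unions at the end, so that accumulated $O(g)$-type bounds remain $O_{g,c}(1)$. Fix $v\in\Phi$: if $v\in W$, put $\pi_v:=\{v\}$; otherwise $v$ lies on the boundary $\partial f_v$ of some inner face, and I route a path $\pi_v$ inside $\partial o_t$ — kept within a single face boundary as much as possible — from $v$ to a suitably chosen vertex $w_v\in W$. I then push down the bulk of $\pi_v$: let $L^+_{t,v}$ consist of the degree-two vertices of $\pi_v$ that avoid $W$ and avoid every vertex of $\partial o_t$ of degree $\ge 3$ (equivalently, avoid all cut vertices of $\partial o_t$). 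Set $X := \Phi\cup\bigcup_{v\in\Phi}\pi_v$ and $L_t^+:=\bigcup_{v\in\Phi}L^+_{t,v}$; then $L_t^+\cap W=\emptyset$, so $L_t^+\subseteq L_t\setminus N_G(L_{t-1})$, and every vertex of $L_t^+$ lies on a single inner-face boundary $\partial f$ with both of its $\partial o_t$-neighbours on that same $\partial f$. Pushing down the bulk (rather than keeping $\pi_v$ whole) is what makes condition~\ref{item:rcd-key-3} attainable: $\pi_v$ may run along a long arc of a face boundary, but only $O_{g,c}(1)$ of its vertices — the endpoint $w_v$ plus the cut/branch vertices of $\partial o_t$ it traverses — should survive in $X\setminus L_t^+$, and keeping that survivor count small is itself one of the constraints the routing must meet.

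\textbf{Conditions \ref{item:rcd-key-1}, \ref{item:rcd-key-3}, \ref{item:rcd-key-4}.} Condition~\ref{item:rcd-key-1} holds because each connected component of $G[X]$ is a union of some of the $\pi_v$, hence meets $\Phi$ and meets $W$, so it has a neighbour in $L_{t-1}$. For~\ref{item:rcd-key-3}, at most $O(c)$ of the paths cross a given inner face $f$, and each contributes to $V(\partial f)\cap(X\setminus L_t^+)$ only its $\Phi/W$ endpoints together with the high-degree vertices of $\partial o_t$ on it, giving $O_{g,c}(1)$ in total. For~\ref{item:rcd-key-4}, $L_t^+\cap V(\partial f)$ is a union of $O(c)$ degree-two ``arc-interior'' pieces of the connected $O(g)$-thin graph $\partial f$, and deleting such material leaves $O_{g,c}(1)$ components.

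\textbf{Condition \ref{item:rcd-key-2} — the main obstacle.} Observe first that with $L_t^+=\emptyset$ this condition would be essentially free: in any outer-preserving extension $(G',\eta')$ of $(\partial o_t,\eta)$ all new vertices and edges lie inside the inner faces of $\partial o_t$, so each component of $G'-L_t$ sits in a single inner face $f$ and has neighbourhood contained in $V(\partial f)$ (this is ``property (I)'' from the overview). The difficulty is that~\ref{item:rcd-key-3} forces $L_t^+\neq\emptyset$, and a surviving $L_t^+$-vertex that were a cut vertex of $\partial o_t$ could, across two adjacent inner faces, glue together pieces of an adversarially chosen extension, so that some component's neighbourhood would span two distinct $V(\partial f)$'s. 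The construction forbids precisely this: every vertex of $L_t^+$ lies on one inner-face boundary $\partial f$, is not a cut vertex of $\partial o_t$, and hence within $G'$ is adjacent only to extension material inside that single face $f$ and to $\partial o_t$-neighbours lying on the same $\partial f$; therefore no component of $G'-(L_t\setminus L_t^+)$ straddles two inner faces, and condition~\ref{item:rcd-key-2} follows. I expect the genuinely hard part to be designing $\pi_v$ and $L^+_{t,v}$ so that conditions \ref{item:rcd-key-2}, \ref{item:rcd-key-3}, \ref{item:rcd-key-4} hold simultaneously — $L_t^+$ must be large enough to absorb the bulk of each escape path, yet free of cut vertices and degree-$\ge 3$ vertices of $\partial o_t$, yet thin enough inside each face boundary — which is exactly why the escape paths must be routed carefully along single face boundaries rather than arbitrarily, exploiting the genus-$g$ facial structure of $\partial o_t$ and the non-singularity hypothesis. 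Every step is polynomial time.
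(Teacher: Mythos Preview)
Your proposal contains a genuine error that undermines the routing scheme. You claim that the non-singularity hypothesis (on faces of $(G,\eta)$ incident to $L_t$) forces every inner face $f \in F_\eta(\partial o_t)\setminus\{o_t\}$ to have connected boundary. This is false: the hypothesis concerns faces of $G$, whereas $f$ is a face of the \emph{boundary subgraph} $(\partial o_t,\eta)$, which is typically a much coarser object. Lemma~\ref{lem-degree} explicitly allows $\partial f$ to have up to $O(g)$ components and shows that $(\partial o_t,\eta)$ can have $O(g)$ singular inner faces. A concrete picture on the torus: take $\partial o_t$ to be two disjoint homotopic non-contractible cycles $C_1,C_2$; then $F_\eta(\partial o_t)=\{o_t,f\}$ with $\partial f = C_1\sqcup C_2$ disconnected. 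The companion claim that each $\partial f$ meets the exit set $W$ is likewise unjustified --- in the same picture one component of $\partial f$ may carry all the exits. Once $\partial f$ need not be connected nor contain an exit, ``route $\pi_v$ within a single face boundary as much as possible'' is no longer an algorithm, and your bounds for conditions~\ref{item:rcd-key-3}--\ref{item:rcd-key-4} (which rely on each path intersecting each $\partial f$ in $O(1)$ arcs) evaporate: a path that must leave $\partial f_v$ to find an exit can re-enter face boundaries many times.

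The paper's construction is structurally different from yours in two ways. First, instead of routing within a single boundary, it classifies pairs $(f,v)$ with $v\in V(\partial f)$ as \emph{singular}, \emph{critical}, or \emph{normal} and builds \emph{legal} paths: at a critical pair $(f(e_i),v_i)$ the next edge must leave $\partial f(e_i)$. This rule lets the path cross face boundaries yet guarantees (Observation~\ref{obs-legal}) that for each inner face $f$ the path meets $V(\partial f)$ in $O_g(1)$ arcs and at most two critical vertices --- exactly the control you were trying to get from single-boundary routing. Second, the paper's $L_t^+$ is \emph{not} a subset of $X$: it is the union of the sets $Y_{f,v}$ (everything reachable from $v$ in $\partial o_t$ without using $E(\partial f)$) over all \emph{normal} pairs $(f,v)$ with $v\in X$. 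These sets absorb whole ``dead-end'' subgraphs of $\partial o_t$ that carry no exits, and Observation~\ref{obs-normalprop} is what makes condition~\ref{item:rcd-key-2} go through: $N_{G'}(Y_{f,v})\cap L_t \subseteq V(\partial f)$ for any outer-preserving extension. Your $L_t^+\subseteq X$ cannot play this role once paths are forced to traverse several boundaries. A minor additional point: ``degree $\ge 3$ in $\partial o_t$'' and ``cut vertex of $\partial o_t$'' are not equivalent, so that parenthetical should be dropped.
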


Let $(G,\eta)$ be a connected $(\varSigma,x_0)$-embedded graph satisfying the properties of the lemma, where $\varSigma$ is a surface of genus $g$.
Let $L_1,\dots,L_m$ be the radial layering of $(G,\eta)$.
Fix an index $t \in [m]$ with $t \geq 2$, and assume all faces of $(G,\eta)$ incident to $L_t$ are non-singular.
Denote by $o_t$ the outer face of $(G[L_{\geq t}],\eta)$.

We say a vertex in $L_t = V(\partial o_t)$ is an \textit{exit vertex} if it is adjacent to a vertex in $L_{t-1}$.
In the graph $(\partial o_t,\eta)$, each edge $e$ is incident to two (not necessarily different) faces in $F_\eta(\partial o_t)$, which correspond to the two ``sides'' of (the embedding of) $e$.
Note that one of these two faces must be $o_t$ because all edges in $(\partial o_t,\eta)$ are incident to $o_t$, while the other one can be either $o_t$ or an inner face of $(\partial o_t,\eta)$, which we denote by $f(e)$.

Let $\mathcal{P}$ be the set of all pairs $(f,v)$ where $f \in F_\eta(\partial o_t) \backslash \{o_t\}$ is an inner face of $(\partial o_t, \eta)$ and $v \in V(\partial f)$.
In what follows, we classify the pairs in $\mathcal{P}$ into three classes: \textit{singular pairs}, \textit{critical pairs}, and \textit{normal pairs}.
A pair $(f,v) \in \mathcal{P}$ is \textit{singular} if (at least) one of the following conditions holds.
\begin{enumerate}[label = (\roman*)]
 \item There is a path in $\partial o_t$ from $v$ to another vertex of $\partial f$ that does not use any edge of $\partial f$, i.e., the path only uses the edges in $E(\partial o_t) \backslash E(\partial f)$.
 \item There is a path in $\partial o_t$ from $v$ to a vertex of $\partial f'$ for some singular face $f' \in F_\eta(\partial o_t) \backslash \{o_t,f\}$ that only uses the edges in $E(\partial o_t) \backslash E(\partial f)$.
\end{enumerate}
A pair $(f,v) \in \mathcal{P}$ is \textit{critical} if it is not singular and there exists a path in $\partial o_t$ from $v$ to an exit vertex that only uses the edges in $E(\partial o_t) \backslash E(\partial f)$.
In particular, if $(f,v)$ is not singular and $v$ is an exit, then $(f,v)$ is critical.
Finally, a pair in $\mathcal{P}$ is \textit{normal} if it is neither singular nor critical.
We first observe that there are only a constant number of singular pairs in $\mathcal{P}$.

\begin{observation} \label{obs-O1singular}
For each inner face $f \in F_\eta(\partial o_t)$, there are only $O_g(1)$ vertices $v \in V(\partial f)$ such that the pair $(f,v) \in \mathcal{P}$ is singular.
\end{observation}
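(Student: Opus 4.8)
The plan is to fix an inner face $f \in F_\eta(\partial o_t) \setminus \{o_t\}$ and bound separately the number of $v \in V(\partial f)$ for which $(f,v)$ satisfies condition~(i) in the definition of singular, and the number for which it satisfies condition~(ii). Throughout, write $H_f := \partial o_t - E(\partial f)$ (on vertex set $L_t$), and use two facts repeatedly: (a) every edge of $\partial o_t$ is incident to $o_t$, hence to at most one inner face, so the edge sets $E(\partial f')$ over the inner faces $f'$ are pairwise disjoint and in particular $E(\partial f') \subseteq E(H_f)$ --- so the whole subgraph $\partial f'$ sits in $H_f$ --- whenever $f' \neq f$; (b) Lemma~\ref{lem-degree}, giving that $(\partial o_t,\eta)$ has only $O(g)$ singular faces and that for every inner face $f'$ the boundary $\partial f'$ has $O(g)$ connected components, maximum degree $O(g)$, and $O(g)$ vertices of degree at least $3$.

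For condition~(ii) this is easy. If $v \in V(\partial f)$ satisfies~(ii) but not~(i), then the component $K$ of $H_f$ containing $v$ meets $V(\partial f')$ for some singular inner face $f' \neq f$, while $K$ contains no vertex of $\partial f$ other than $v$. By~(a), $\partial f' \subseteq H_f$, so each of the $O(g)$ components of $\partial f'$ lies in a single component of $H_f$; hence at most $O(g)\cdot O(g)$ components of $H_f$ meet some singular $\partial f'$. Two distinct such vertices $v$ must lie in distinct components of $H_f$ (otherwise both would satisfy~(i)), so there are $O_g(1)$ of them.

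For condition~(i) --- the crux --- let $S \subseteq V(\partial f)$ be the set of vertices satisfying~(i); then $S = \bigcup_j (V(K_j) \cap V(\partial f))$, where $K_1,\dots,K_s$ are the components of $H_f$ meeting $V(\partial f)$ in at least two vertices. In each $K_j$ fix an inclusion-minimal subtree $Y_j$ with leaf set $V(K_j) \cap V(\partial f)$, and put $\Theta := \partial f \cup \bigcup_{j=1}^s Y_j$. Then $\Theta \subseteq \partial o_t$ has $O(g)$ connected components (it is $\partial f$ with edges added), and, since every edge of $\partial o_t$ is incident to $o_t$, the face of $(\Theta,\eta)$ containing $o_t$ is incident to every edge of $\Theta$; thus $\Theta$ is itself the boundary subgraph of a single face, so Lemma~\ref{lem-degree} applies to it. A routine Euler-characteristic computation then gives $|S| \le 2\,|F_\eta(\Theta)| + O_g(1)$ (using $s \le \tfrac12|S|$, since each $K_j$ meets $V(\partial f)$ in at least two vertices, and $|E(\partial f)| - |V(\partial f)| = O_g(1)$ from Lemma~\ref{lem-degree}). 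It remains to show $|F_\eta(\Theta)| = O_g(1)$. For this one works inside the region $\overline{f_0} := \varSigma \setminus f$, where the forest $\bigcup_j Y_j$ lives: cutting $\varSigma$ along $\partial f$ and comparing Euler characteristics shows $\overline{f_0}$ has genus $\le g$ and at most $O(g)$ boundary circles (each a facial walk of $\partial f$); and since each edge of each $Y_j$ has the face $o_t$ on one side, every new face that $\bigcup_j Y_j$ creates in $\overline{f_0}$ can be charged to a boundary circle or a genus handle of $\overline{f_0}$. This gives $|F_\eta(\Theta)| = O_g(1)$, and hence $|S| = O_g(1)$, which together with the~(ii) bound proves the observation.

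The main obstacle is this last bound on $|F_\eta(\Theta)|$: a general graph that is the boundary of a single face can still have arbitrarily many faces (e.g.\ a long chain of bigons), so the bound really needs the geometry. The point is that the connecting forest cannot run ``parallel'' to an arc of $\partial f$ with another inner face sandwiched in between --- the edges of that arc would then fail to be incident to $o_t$ --- and more generally the $o_t$-incidence of all edges of $\partial o_t$ stops $\bigcup_j Y_j$ from slicing $\overline{f_0}$ into many pieces away from its $O(g)$ boundary circles and handles. Turning this into a clean count --- identifying which sub-arcs of the $Y_j$ actually create a new face and matching each to a topological feature of $\overline{f_0}$ --- is where the real work lies, and is best done by cutting $\overline{f_0}$ along the arcs of the $Y_j$ one at a time while tracking the number of boundary circles and the Euler characteristic.
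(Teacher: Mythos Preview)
Your construction for condition~(i) is exactly the paper's --- attach to $\partial f$ a Steiner forest $T$ inside $H_f$ whose leaves are the type-(i) singular vertices, then count via Euler's formula --- and your treatment of condition~(ii) is also the same. The gap is in what you flag as ``the main obstacle'': you only sketch a topological cut-and-count argument for $|F_\eta(\Theta)| = O_g(1)$, and that sketch is where all the uncertainty sits.

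In fact $|F_\eta(\Theta)| = 2$ exactly, by a one-line observation you are missing. Clearly $f$ itself is a face of $(\Theta,\eta)$; let $f^*$ be the face containing $o_t$. For every other inner face $f' \in F_\eta(\partial o_t) \setminus \{o_t,f\}$, one has $E(\partial f') \cap E(\partial f) = \emptyset$ (each edge of $\partial f'$ has $f'$ on one side and $o_t$ on the other), and $E(\partial f') \not\subseteq E(T)$ because $\partial f'$ contains a cycle (every vertex has degree $\ge 2$, as in the proof of Lemma~\ref{lem-degree}) while $T$ is a forest. Hence some edge of $\partial f'$ is absent from $\Theta$, so $f'$ and $o_t$ lie in the same face $f^*$ of $(\Theta,\eta)$. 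Thus $F_\eta(\Theta) = \{f,f^*\}$. With $\#_F = 2$ and the degree observations (every vertex of $\Theta$ has degree $\ge 2$, and every $v \in S$ has degree $\ge 3$), Euler's formula gives $|S| = O(g)$ directly --- no surgery on $\overline{f_0}$, no charging to handles or boundary circles. Your intermediate inequality $|S| \le 2|F_\eta(\Theta)| + O_g(1)$ and the cut-and-track plan are unnecessary detours around this fact.
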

\begin{proof}
We say a singular pair $(f,v) \in \mathcal{P}$ is of \textit{type-(i)} if it satisfies condition (i) above, and of \textit{type-(ii)} if it does not satisfy condition (i) but satisfies condition (ii).
Fix an inner face $f \in F_\eta(\partial o_t)$.

\paragraph{Bounding type-(i) singular pairs.}
We first bound the number of vertices $v \in V(\partial f)$ such that $(f,v)$ is a type-(i) singular pair.
By definition, for each such vertex $v$, there exists a \textit{witness} path in $\partial o_t$ from $v$ to another vertex of $\partial f$ that only uses the edges in $E(\partial o_t) \backslash E(\partial f)$.
The union $U$ of all witness paths is a subgraph of $\partial o_t$.
Each connected component of $U$ contains at least two vertices in $V(\partial f)$, because it contains at least one witness path whose two endpoints are different vertices in $V(\partial f)$.
In each connected component $C$ of $U$, we take a Steiner tree whose terminals are the vertices in $V(\partial f)$ contained in $C$, that is, a tree in $C$ containing all terminals in which all leaves are terminals.
Let $T$ be the forest that consists of these Steiner trees.
Now consider the graph $(\partial f \cup T, \eta)$.
We prove the following three properties of this graph.
\begin{itemize}
 \item $(\partial f \cup T, \eta)$ has only two faces.
 \item All vertices in $\partial f \cup T$ are of degree at least $2$.
 \item For every type-(i) singular pair $(f,v)$, $v$ is of degree at least $3$ in $\partial f \cup T$.
\end{itemize}

To bound the number of faces of $(\partial f \cup T, \eta)$, we notice that $\partial f \cup T$ is a subgraph of $\partial o_t$, and thus every face of $(\partial o_t,\eta)$ is contained in exactly one face of $(\partial f \cup T,\eta)$.
Let $f^* \in F_\eta(\partial f \cup T)$ be the face that contains $o_t$.
Besides, $f$ itself is also a face of $(\partial f \cup T,\eta)$ as all edges of $\partial f$ are also edges of $\partial f \cup T$.
We show that $F_\eta(\partial f \cup T) = \{f^*,f\}$.
The graph $(T,\eta)$ only has one face which is the entire surface $\varSigma$, because a forest embedded in a surface does not cut the surface.
Therefore, for all $f' \in F_\eta(\partial o_t) \backslash \{o_t,f\}$, we have $E(\partial f') \nsubseteq E(T)$, for otherwise $f'$ is a face of $(T,\eta)$.
Also note that $E(\partial f') \cap E(\partial f) = \emptyset$ for all $f' \in F_\eta(\partial o_t) \backslash \{o_t,f\}$, because each edge in $E(\partial f')$ must have one side incident to $f'$ and the other side incident to $o_t$, and is thus not incident to $f$.
It follows that $E(\partial f') \nsubseteq E(\partial f \cup T)$ for all $f' \in F_\eta(\partial o_t) \backslash \{o_t,f\}$, and thus there exists an edge $e_{f'} \in E(\partial f')$ that is not in $\partial f \cup T$.
Because of the absence of $e_{f'}$ in $\partial f \cup T$, $f'$ and $o_t$ are contained in the same face in $(\partial f \cup T,\eta)$, which is $f^*$.
In other words, $f^*$ contains all faces in $F_\eta(\partial o_t) \backslash \{f\}$, which implies $f^* \cup f = \varSigma$ and hence $F_\eta(\partial f \cup T) = \{f^*,f\}$.

To see all vertices in $\partial f \cup T$ are of degree at least $2$, we first notice that all vertices in $\partial f$ are of degree at least $2$.
Indeed, if a vertex in $(\partial f,\eta)$ is of degree 0 or 1, then it can only be incident to one face, while all vertices in $(\partial f,\eta)$ are incident to both $f$ and $o_t$.
Then it suffices to consider the vertices in $T$ but not in $\partial f$.
Recall that the leaves of the trees in $T$ are all vertices in $\partial f$.
Thus, the vertices in $T$ but not in $\partial f$ are of degree at least $2$ in $T$ (and thus in $\partial f \cup T$).

Finally, let us consider the degree of a vertex $v$ in $\partial f \cup T$ where $(f,v)$ is a type-(i) singular pair.
Consider a type-(i) singular pair $(f,v)$.
As argued above, the degree of $v$ in $\partial f$ is at least $2$.
On the other hand, the degree of $v$ in $T$ is at least $1$, because $U$ contains the witness path for $v$ and thus $v$ is the terminal of one tree in $T$.
Note that $\partial f$ and $T$ do not have common edges.
Therefore, the degree of $v$ in $\partial f \cup T$ is at least $3$.

Using the three properties of $(\partial f \cup T, \eta)$, we are ready to bound the number of type-(i) singular pairs.
We denote by $\gamma$ the number of vertices $v \in V(\partial f)$ such that $(f,v)$ is a type-(i) singular pair.
Let $\#_V,\#_E,\#_F,\#_C$ be the numbers of vertices, edges, faces, and connected components of $(\partial f \cup T, \eta)$.
We have $\#_F = 2$ as $(\partial f \cup T, \eta)$ has two faces.
Also, the last two properties of $(\partial f \cup T, \eta)$ imply $2\#_E \geq 2\#_V + \gamma$, or equivalently, $\#_V - \#_E \leq -\gamma/2$.
Because $\#_C \geq 0$, we have $\#_V-\#_E+\#_F-\#_C \leq 2-\gamma/2$.
By Euler's formula, $|\#_V-\#_E+\#_F-\#_C| = O(g)$.
So we immediately have $\gamma = O(g)$.

\paragraph{Bounding type-(ii) singular pairs.}
Next, we bound the number of vertices $v \in V(\partial f)$ such that $(f,v)$ is a type-(ii) singular pair.
By definition, for each such vertex $v$, there exists a \textit{witness} path in $\partial o_t$ from $v$ to a vertex on the boundary of a singular face $f' \in F_\eta(\partial o_t) \backslash \{o_t,f\}$ that only uses the edges in $E(\partial o_t) \backslash E(\partial f)$; we call $f'$ the \textit{witness face} of $v$ for convenience.
By Lemma~\ref{lem-degree}, $(\partial o_t,\eta)$ has only $O(g)$ singular faces.
We claim that each singular face can only witness $O(g)$ vertices, which implies the number of type-(ii) singular pairs $(f,v)$ is $O(g^2)$.
Consider a singular face $f' \in F_\eta(\partial o_t) \backslash \{o_t,f\}$.
Let $(f,v)$ and $(f,v')$ be two type-(ii) singular pairs whose witness face is $f'$.
Denote by $\pi_v$ (resp., $\pi_{v'}$) the witness path of $v$ (resp., $v'$) and by $u$ (resp., $u'$) the endpoint of $\pi_v$ (resp., $\pi_{v'}$) other than $v$ (resp., $v'$).
Observe that $u$ and $u'$ must lie in different connected components of $\partial f'$.
Indeed, if $u$ and $u'$ lie in  the same connected components of $\partial f'$, then there exists a path $\pi$ from $u$ to $u'$ in $\partial f'$.
By concatenating $\pi_v,\pi,\pi_{v'}$, we obtain a path from $v$ to $v'$ in $\partial o_t$ that only uses the edges of $E(\partial o_t) \backslash E(\partial f)$.
This implies $(f,v)$ and $(f,v')$ are type-(i) singular pairs, which contradicts with our assumption.
Therefore, $u$ and $u'$ must lie in different connected components of $\partial f'$.
In other words, there cannot be two witness paths ending in the same connected component of $\partial f'$.
So the number of vertices with witness face $f'$ is bounded by the number of connected components of $\partial f'$.
By \ref{item:degree-1} of Lemma~\ref{lem-degree}, $\partial f'$ can only have $O(g)$ connected components.
Thus, $f'$ can witness at most $O(g)$ vertices and the number of type-(ii) singular pairs $(f,v)$ is $O(g^2)$.
\end{proof}

Next, we define a notion called \textit{legal paths}.
Consider a path $\pi = (v_0,v_1,\dots,v_r)$ in $\partial o_t$ and write $e_i = v_{i-1}v_i$ for $i \in [r]$.
We say $\pi$ is \textit{legal} if for all $i \in [r-1]$ such that $(f(e_i),v_i)$ is a critical pair in $\mathcal{P}$, we have $f(e_{i+1}) \neq f(e_i)$.
We have the following observation.

\begin{observation}\label{obs-legal}
Let $\pi$ be a legal path in $\partial o_t$ and denote by $V_\pi \subseteq V(\partial o_t)$ the vertices on $\pi$.
Then for each face $f \in F_\eta(\partial o_t) \backslash \{o_t\}$, the graph $\partial f [V_\pi \cap V(\partial f)]$ has $O_g(1)$ connected components and there are at most two vertices $v \in V_\pi \cap V(\partial f)$ such that $(f,v)$ is a critical pair.
\end{observation}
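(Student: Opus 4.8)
The plan is to analyse, for one fixed inner face $f\in F_\eta(\partial o_t)\setminus\{o_t\}$, exactly how a legal path $\pi=(v_0,\dots,v_r)$ can touch $\partial f$. Write $e_i=v_{i-1}v_i$, set $W=V_\pi\cap V(\partial f)$, and record the elementary fact that, for an edge $e_i$ of $\pi$ (which lies in $\partial o_t$), $e_i\in E(\partial f)$ if and only if $f(e_i)=f$. The engine of the whole argument is a pair of \emph{escape properties}: if $v_j\in V(\partial f)$, the pair $(f,v_j)$ is \emph{not} singular, and the $\pi$-edge leaving $v_j$ is not in $E(\partial f)$, then $\pi$ never returns to $V(\partial f)$ afterwards; symmetrically, if the $\pi$-edge entering $v_j$ is not in $E(\partial f)$, then $\pi$ never met $V(\partial f)$ before $v_j$. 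I would prove the first statement by taking the smallest $\ell>j$ with $v_\ell\in V(\partial f)$: all vertices strictly between $v_j$ and $v_\ell$ lie outside $V(\partial f)$, and together with the hypothesis on the edge leaving $v_j$ this forces every edge of the subpath of $\pi$ from $v_j$ to $v_\ell$ to avoid $E(\partial f)$; that subpath then witnesses condition~(i) in the definition of a singular pair for $(f,v_j)$, a contradiction.

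Three structural consequences follow, each immediate. (1) If a $\pi$-edge $e_i$ joins two vertices of $V(\partial f)$ but $e_i\notin E(\partial f)$ --- call such $e_i$ \emph{bad} --- then both endpoints $v_{i-1},v_i$ are singular for $f$ (apply the two escape properties at $v_{i-1}$ and at $v_i$); by Observation~\ref{obs-O1singular} there are only $O_g(1)$ vertices singular for $f$, and each lies on at most two $\pi$-edges, so there are $O_g(1)$ bad edges in total. (2) Partition $\{\,i:v_i\in V(\partial f)\,\}$ into maximal runs; the first vertex of any run other than the first run has its incoming $\pi$-edge outside $E(\partial f)$, so by the escape property it would have to be the very first vertex of $\pi$ in $V(\partial f)$ unless it is singular --- hence there are $O_g(1)$ runs. (3) A vertex $v_i\in W$ that is non-singular for $f$ and is neither the first nor the last vertex of $\pi$ in $V(\partial f)$ has \emph{both} incident $\pi$-edges in $E(\partial f)$, i.e.\ $f(e_i)=f(e_{i+1})=f$.

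Consequence~(3) gives the bound on critical vertices at once. Suppose $v_i\in W$ is critical for $f$ and is not the first or last vertex of $\pi$ in $V(\partial f)$; then $1\le i\le r-1$, and since $e_i\in E(\partial f)$ we have $f(e_i)=f$, so $(f(e_i),v_i)=(f,v_i)$ is a critical pair occurring at position $i$ of $\pi$, and legality forces $f(e_{i+1})\neq f(e_i)=f$ --- contradicting $f(e_{i+1})=f$. Hence every $v\in W$ with $(f,v)$ critical is the first or the last vertex of $\pi$ in $V(\partial f)$, so there are at most two of them. For the component count, I pass to the spanning subgraph $G_W$ of $\partial f[W]$ keeping only those $\partial f$-edges that actually lie on $\pi$. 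Then $G_W$ is exactly the subpath structure induced by $\pi$ on $W$ (whose number of components equals the number of runs, by~(2)) with the bad edges deleted (of which there are $O_g(1)$, by~(1)); deleting an edge raises the component count by at most one, so $G_W$ --- and therefore its supergraph $\partial f[W]$ --- has $O_g(1)$ connected components.

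The main obstacle is purely in getting the two escape properties stated and proved with the right quantifiers, and then being careful with the degenerate cases: the endpoints $v_0,v_r$ of $\pi$, an empty or singleton $W$, and the exact index range on which the legality condition is actually invoked. Beyond that the argument is bookkeeping built on Observation~\ref{obs-O1singular} together with the definitions of singular, critical, and legal; notably, no new facts about surfaces are needed here --- Lemma~\ref{lem-degree} enters only through Observation~\ref{obs-O1singular}.
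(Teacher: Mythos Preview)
Your proof is correct and follows essentially the same approach as the paper: both arguments hinge on the ``escape'' observation that leaving $V(\partial f)$ along a non-$\partial f$ edge and later returning forces the departure vertex to be singular, then invoke Observation~\ref{obs-O1singular}. The paper organizes the component bound slightly more directly---associating to each component its last-index vertex $v_{\alpha_i}$ and showing $(f,v_{\alpha_i})$ is singular for all but one $i$---whereas you route through runs and bad edges, but this is a minor bookkeeping difference rather than a different idea.
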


\begin{proof}
Let $\pi = (v_0,v_1,\dots,v_r)$ be a legal path in $\partial o_t$ and write $e_i = v_{i-1}v_i$ for $i \in [r]$.
Then $V_\pi = \{v_0,v_1,\dots,v_r\}$.
Suppose $\partial f [V_\pi \cap V(\partial f)]$ has $k$ connected components $C_1,\dots,C_k$.
For each $i \in [k]$, define $\alpha_i$ as the largest index $\alpha \in \{0\} \cup [r]$ such that $v_\alpha \in C_i$.
By definition, the vertices $v_{\alpha_1},\dots,v_{\alpha_k}$ are distinct and we can assume $\alpha_1 < \cdots < \alpha_k$ without loss of generality.
We observe that $e_{\alpha_i+1} \notin E(\partial f)$ for all $i \in [k]$ such that $\alpha_i < r$.
Indeed, if $e_{\alpha_i+1} \in E(\partial f)$, then $v_{\alpha_i+1}$ is in the same connected component as $v_{\alpha_i}$ in $\partial f [V_\pi \cap V(\partial f)]$, i.e., $v_{\alpha_i+1} \in C_i$, which contradicts the definition of $\alpha_i$.
Based on this observation, we further claim that $(f,v_{\alpha_i})$ is a singular pair for any $i<k$.
To see this, let $\beta > \alpha_i$ be the smallest index such that $v_\beta \in V(\partial f)$; note that such an index $\beta$ always exists because $\alpha_k > \alpha_i$ and $v_{\alpha_k} \in V(\partial f)$.
Now the subpath $(v_{\alpha_i},\dots,v_\beta)$ of $\pi$ is from $v_{\alpha_i}$ to another vertex in $V(\partial f)$, i.e., $v_\beta$, and only consists of the edges in $E(\partial o_t) \backslash E(\partial f)$ since $v_{\alpha_i+1},\dots,v_{\beta-1} \notin V(\partial f)$.
Therefore, $(f,v_{\alpha_i})$ is singular.
However, by Observation~\ref{obs-O1singular}, there are only $O_g(1)$ vertices $v \in V(\partial f)$ such that $(f,v)$ is singular.
As the vertices $v_{\alpha_1},\dots,v_{\alpha_k}$ are distinct, we immediately have $k = O_g(1)$.

Next, we bound the number of vertices $v \in V_\pi \cap V(\partial f)$ such that $(f,v)$ is a critical pair.
Assume there are three such vertices $v_{i^-},v_i,v_{i^+} \in V_\pi \cap V(\partial f)$ where $i^- < i < i^+$.
We observe that either $f(e_i) \neq f$ or $f(e_{i+1}) \neq f$.
Indeed, if $f(e_i) = f$, then $(f(e_i),v_i) = (f,v_i)$ is critical, which implies $f(e_{i+1}) \neq f(e_i) = f$ as $\pi$ is legal.
Without loss of generality, assume $f(e_{i+1}) \neq f$.
We use the same argument as above to show that $(f,v_i)$ is a singular pair.
Let $j > i$ be the smallest index such that $v_j \in V(\partial f)$; note that such an index $j$ always exists because $i^+ > i$ and $v_{i^+} \in V(\partial f)$.
The subpath $(v_i,\dots,v_j)$ of $\pi$ is from $v_i$ to another vertex in $V(\partial f)$, i.e., $v_j$, and only consists of the edges in $E(\partial o_t) \backslash E(\partial f)$ as $v_{i+1},\dots,v_{j-1} \notin V(\partial f)$.
Thus, $(f,v_i)$ is a singular pair, which contradicts the fact that $(f,v_i)$ is critical.
As a result, there are at most two vertices $v \in V_\pi \cap V(\partial f)$ such that $(f,v)$ is a critical pair.
\end{proof}

\begin{observation}\label{obs-wbpath}
For any vertex $v \in L_t$, there exists a legal path in $\partial o_t$ from $v$ to an exit vertex of $L_t$, which can be computed in polynomial time in $|L_t|$.
\end{observation}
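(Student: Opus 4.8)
The plan is to first establish that $v$ can reach an exit vertex inside $\partial o_t$ at all, then bootstrap this to a \emph{legal} path, and finally observe that the construction is polynomial. For reachability: since $v\in L_t=V(\partial o_t)$ and $t\ge 2$, the radial layering (cf.\ Fact~\ref{fact-diff1}) provides a face $g$ of $(G,\eta)$ incident both to $v$ and to some vertex $v'\in L_{t-1}$, and $g$ is a subregion of $o_t$. As $g$ is incident to $L_t$, it is non-singular by hypothesis, so $\partial g$ is connected, and $V(\partial g)\subseteq L_{t-1}\cup L_t$ by Fact~\ref{fact-diff1}. Taking a path in $\partial g$ from $v$ to $v'$ and truncating it at the first vertex of $L_{t-1}$ it meets, we obtain a path whose penultimate vertex is adjacent to $L_{t-1}$, hence an exit vertex, and all of whose edges, being incident to $g\subseteq o_t$, lie in $\partial o_t$; so $v$ reaches an exit vertex inside $\partial o_t$.

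To get a legal path, among all paths $\pi=(v_0=v,\dots,v_r)$ in $\partial o_t$ from $v$ to an exit vertex I would pick one for which the smallest index $i\in[r-1]$ witnessing a violation --- i.e.\ $(f(e_i),v_i)$ is critical and $f(e_{i+1})=f(e_i)$ --- is as large as possible (taken to be $+\infty$ when $\pi$ is legal); such $\pi$ exists by the previous paragraph. Suppose the chosen $\pi$ still has a violation, at index $i$, and set $f:=f(e_i)$, so $(f,v_i)$ is critical and hence not singular. By the definition of ``not singular'', the connected component $K$ of $v_i$ in $\partial o_t-E(\partial f)$ contains no vertex of $\partial f$ other than $v_i$ and no vertex of any singular face of $(\partial o_t,\eta)$; by criticality $K$ also contains an exit vertex. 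A path $Q$ from $v_i$ to an exit vertex inside $K$ has its first edge not incident to $f$ and meets $V(\partial f)$ only at $v_i$, so $\pi[v_0\dots v_i]\cdot Q$ is a walk overlapping itself only in vertices $v_a$ with $a\le i-2$; splicing $Q$ onto the earliest such $v_a$ and iterating (the splice index strictly decreases) yields a simple path $\pi'$ from $v$ to an exit. One then checks that $\pi'$ has no violation at any index $\le i$ --- using at the splice point that $(f,v_i)$ is critical (so the first edge of $Q$ leaves the face $f$) and that $Q$ stays inside $K$ --- contradicting the maximality in the choice of $\pi$; hence $\pi$ is legal.

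For the polynomial-time bound I would realise this selection as a breadth-first search in the auxiliary directed \emph{state graph} $H$ whose nodes are the pairs $(u,f)$ with $u\in L_t$ and $f$ either $\bot$ or an inner face of $(\partial o_t,\eta)$ incident to $u$ with $(f,u)$ critical, and with an arc $(u,f)\to(w,f')$ whenever $e=uw\in E(\partial o_t)$, either $f=\bot$ or $f(e)\ne f$, and $f':=f(e)$ if $(f(e),w)$ is a critical pair and $f':=\bot$ otherwise. A walk in $H$ from $(v,\bot)$ to a node whose first coordinate is an exit vertex projects to a legal walk from $v$ to an exit in $\partial o_t$ (and, via the splicing argument, to a legal path); $H$ has $O(|L_t|^2)$ nodes and arcs, and deciding whether a given pair is singular or critical needs only reachability queries in $\partial o_t$ and its edge-deleted subgraphs, so the whole procedure runs in time polynomial in $|L_t|$.

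I expect the crux to be the bootstrapping step. The subtle issue is that the witness path $Q$ may be forced to re-enter the prefix $v_0,\dots,v_{i-2}$, so that after splicing a vertex $v_{a^*}$ with $a^*<i$ becomes a new violation of $\pi'$ --- which would \emph{not} contradict the maximality of $\pi$. Ruling this out requires the full strength of the singular/critical/normal classification --- in particular that a critical pair is not singular, so that $K$ stays ``thin'' near $E(\partial f)$ and near singular faces, together with the $O_g(1)$ bounds on singular pairs from Observation~\ref{obs-O1singular} --- combined with a forbidden-face/potential bookkeeping argument in the same spirit as the proof of Observation~\ref{obs-legal}.
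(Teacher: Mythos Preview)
Your overall strategy---fix a violation at position $i$ by rerouting from $v_i$ through the component $K$ of $v_i$ in $\partial o_t - E(\partial f)$---is the same as the paper's. But you over-complicate the crux and leave it unfinished. You correctly note that $K \cap V(\partial f) = \{v_i\}$ (since $(f,v_i)$ is not singular), hence $Q$ avoids $v_{i-1}$. What you miss is that $K$ in fact avoids \emph{all} of $v_0,\dots,v_{i-1}$, so no splicing is ever needed. The argument is five lines: if some $v_a$ with $a \le i-1$ lies in $K$, take the largest such $a$; then $v_a \notin V(\partial f)$ (else a path in $K$ from $v_i$ to $v_a$ would witness that $(f,v_i)$ is singular), so in particular $a \le i-2$; by maximality of $a$, the edge $e_{a+1}=v_av_{a+1}$ must lie in $E(\partial f)$, forcing $v_a \in V(\partial f)$---a contradiction. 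With this, the concatenation $(v_0,\dots,v_i)\cdot Q$ is already a simple path whose first edge after $v_i$ has $f(\cdot)\ne f$, so there is no violation at any index $\le i$, immediately contradicting your maximality choice (or, in the paper's phrasing, producing a $(i{+}1)$-legal path from an $i$-legal one). Your proposed ``forbidden-face/potential bookkeeping'' and the appeal to the $O_g(1)$ bound of Observation~\ref{obs-O1singular} are unnecessary and point in the wrong direction.

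For the algorithm, the paper simply iterates this fix: start with any path to an exit (your first paragraph supplies one) and repeat the rerouting at most $|L_t|$ times, each time strictly increasing the smallest violation index. This is more direct than your state-graph BFS, which, as you note, only yields a legal \emph{walk}; extracting a legal \emph{path} from it would still require exactly the non-re-entry observation above.
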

\begin{proof}
For convenience, we say a path $\pi = (v_0,v_1,\dots,v_r)$ in $\partial o_t$ is $k$-legal if $f(e_{i+1}) \neq f(e_i)$ for all $i \in [\min{(k,r-1)}]$ such that $(f(e_i),v_i)$ is a critical pair in $\mathcal{P}$, where $e_i = v_{i-1},v_i$.
Let $v \in L_t$.
We show how to compute a simple $(k+1)$-legal path in $\partial o_t$ from $v$ to an exit vertex of $L_t$, given a simple $k$-legal path in $\partial o_t$ from $v$ to an exit vertex of $L_t$.
Let $\pi = (v_0,v_1,\dots,v_r)$ be the given $k$-legal path where $v_0 = v$ and $v_r$ is an exit vertex.
Write $e_i = v_{i-1}v_i$ for $i \in [r]$.
If $k \geq r-1$, we are done, because $\pi$ is already legal and thus $(k+1)$-legal.
Assume $k < r-1$.
If $(f(e_k),v_k)$ is not a critical pair, then $\pi$ is $(k+1)$-legal.
So we only need to consider the case where $(f(e_k),v_k)$ is a critical pair.
Let $f = f(e_k)$.
As $(f,v_k)$ is critical, there exists a path $\pi'$ from $v_k$ to an exit vertex that only uses the edges in $E(\partial o_t) \backslash E(\partial f)$.
Without loss of generality, we can assume $\pi'$ is simple, and such a path can be easily computed in polynomial time.
We claim that the concatenation $\pi^*$ of $(v_0,v_1,\dots,v_k)$ and $\pi'$ is a simple $(k+1)$-legal path.
Clearly, $\pi^*$ is $k$-legal as its first half is $(v_0,v_1,\dots,v_k)$.
Let $e$ be the first edge of $\pi'$.
We have $e \in E(\partial o_t) \backslash E(\partial f)$ and thus $f(e) \neq f = f(e_k)$.
Therefore, $\pi^*$ is $(k+1)$-legal.
To see $\pi^*$ is simple, we need to show that $\pi'$ does not visit $v_0,v_1,\dots,v_{k-1}$.
It suffices to show that $v_0,v_1,\dots,v_{k-1}$ are not reachable from $v_k$ using only the edges in $E(\partial o_t) \backslash E(\partial f)$.
Assume this is not the case.
Then there exists a largest index $i \in [k-1]$ such that $v_i$ is reachable from $v_k$ using only the edges in $E(\partial o_t) \backslash E(\partial f)$.
Note that $v_i \notin V(\partial f)$, for otherwise $(f,v_k)$ is a singular pair.
This implies $i < k-1$, because $v_{k-1} \in V(\partial f)$.
By the choice of $i$, we must have $e_{i+1} \in E(\partial f)$ and thus $v_i \in V(\partial f)$, contradicting with the fact $v_i \notin V(\partial f)$.
Thus, $\pi^*$ is a simple $(k+1)$-legal path in $\partial o_t$ from $v$ to an exit vertex of $L_t$.
Now we have seen given a simple $k$-legal path, how to compute a simple $(k+1)$-legal path with the same endpoints.
Iteratively doing this $|L_t|$ times gives us a simple legal path from $v$ to an exit vertex.
\end{proof}

Let $\varPhi \subseteq L_t$ be the given set of size $c$ as in Lemma~\ref{lem-key}.
Suppose $\varPhi = \{v_1,\dots,v_c\}$.
For each $i \in [c]$, we use Observation~\ref{obs-wbpath} to compute a legal path $\pi_i$ in $\partial o_t$ from $v_i$ to an exit vertex of $L_t$.
Then we define $X \subseteq L_t$ as the set of all vertices on the paths $\pi_1,\dots,\pi_c$.
Clearly, $v_1,\dots,v_c \in X$ and every connected component of $G[X]$ is neighboring to $L_{t-1}$ because each of the paths $\pi_1,\dots,\pi_c$ contains an exit vertex of $L_t$.
In particular, condition \ref{item:rcd-key-1} of Lemma~\ref{lem-key} holds.
We then define the set $L_t^+$ as follows.
For every pair $(f,v) \in \mathcal{P}$, we define $Y_{f,v} \subseteq L_t$ as the set of vertices that is reachable from $v$ using only the edges in $E(\partial o_t) \backslash E(\partial f)$.
Then $L_t^+$ is defined by the union of all $Y_{f,v}$ where $(f,v)$ is a normal pair and $v \in X$.
By definition, if $(f,v)$ is a normal pair, then none of the vertices in $Y_{f,v}$ is an exit vertex, and therefore $L_t^+ \subseteq L_t \backslash N_G(L_{t-1})$.
We need to verify that conditions \ref{item:rcd-key-2}-\ref{item:rcd-key-4} of Lemma~\ref{lem-key} are satisfied.

\paragraph{Verifying Condition \ref{item:rcd-key-2} of Lemma~\ref{lem-key}.}
To see condition \ref{item:rcd-key-2}, we first observe some basic properties of the sets $Y_{f,v}$ defined above for normal pairs $(f,v) \in \mathcal{P}$.

\begin{observation} \label{obs-normalprop}
Let $(f,v) \in \mathcal{P}$ be a normal pair.
Then the following properties hold.
\begin{enumerate}[label = (\alph*)]
 \item\label{item:normal-properties-1} $Y_{f,v}$ does not contain any exit vertex of $L_t$.
 \item\label{item:normal-properties-2} $Y_{f,v} \cap V(\partial f) = \{v\}$.
 \item\label{item:normal-properties-3} For all $f' \in F_\eta(o_t) \backslash \{o_t,f\}$, either $Y_{f,v} \cap V(\partial f') = \emptyset$ or $V(\partial f') \subseteq Y_{f,v}$.
 \item\label{item:normal-properties-4} For any outer-preserving extension $(G',\eta')$ of $(\partial o_t,\eta)$, we have $N_{G'}(Y_{f,v}) \cap L_t \subseteq V(\partial f)$ and $N_{G'}(Y_{f,v} \backslash \{v\}) \cap L_t = \{v\}$.
\end{enumerate}
\end{observation}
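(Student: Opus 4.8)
The plan is to obtain all four items by unwinding the definition of $Y_{f,v}$ --- the set of vertices reachable from $v$ in $(\partial o_t,\eta)$ after deleting the edges of $\partial f$ --- together with the definition of a \emph{normal} pair and two elementary facts about $(\partial o_t,\eta)$: a face is non-singular exactly when its boundary subgraph is connected, and any two distinct inner faces have edge-disjoint boundaries, since every edge of $\partial o_t$ has $o_t$ on one of its two sides and therefore lies on at most one inner face.

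Items \ref{item:normal-properties-1} and \ref{item:normal-properties-2} I would dispatch immediately. An exit vertex $u\in Y_{f,v}$ would give, via a $(v,u)$-path avoiding $E(\partial f)$, a path in $\partial o_t$ from $v$ to an exit using only edges of $E(\partial o_t)\setminus E(\partial f)$, so $(f,v)$ would be singular or critical, contradicting normality; this is \ref{item:normal-properties-1}. For \ref{item:normal-properties-2}: $v\in Y_{f,v}\cap V(\partial f)$, and any second vertex $u\ne v$ of this set would witness condition~(i) in the definition of a singular pair. For \ref{item:normal-properties-3}, fix $f'\in F_\eta(\partial o_t)\setminus\{o_t,f\}$ and suppose $u\in Y_{f,v}\cap V(\partial f')$; if $f'$ were singular, a $(v,u)$-path avoiding $E(\partial f)$ would witness condition~(ii) of the singular-pair definition, so $f'$ must be non-singular and hence $\partial f'$ is connected. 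Since $E(\partial f')\cap E(\partial f)=\emptyset$, every vertex of $\partial f'$ is reachable from $u$ inside $\partial f'$ without touching $E(\partial f)$, and concatenating with a $(v,u)$-path shows $V(\partial f')\subseteq Y_{f,v}$; if instead $Y_{f,v}$ misses $V(\partial f')$ altogether we are in the other case of the dichotomy.

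The substance is \ref{item:normal-properties-4}, and the first step is to record that every edge of an outer-preserving extension $(G',\eta')$ of $(\partial o_t,\eta)$ is either an edge of $\partial o_t$ or a ``new'' edge whose interior lies in a single inner face $f''\in F_\eta(\partial o_t)\setminus\{o_t\}$, in which case both of its endpoints that lie in $V(\partial o_t)$ must lie on $V(\partial f'')$. To show $N_{G'}(Y_{f,v})\cap L_t\subseteq V(\partial f)$, take $w\in L_t\setminus Y_{f,v}$ joined in $G'$ to some $y\in Y_{f,v}$ and split on the edge $wy$: if $wy\in E(\partial o_t)$ then $wy\in E(\partial f)$ (otherwise $w$ would be reachable from $v$ avoiding $E(\partial f)$, so $w\in Y_{f,v}$), hence $w\in V(\partial f)$; if $wy$ is new, lying in an inner face $f''$, then $w,y\in V(\partial f'')$, and either $f''=f$, whence $y=v$ by \ref{item:normal-properties-2} and $w\in V(\partial f)$, or $f''\ne f$, whence \ref{item:normal-properties-3} gives $V(\partial f'')\subseteq Y_{f,v}$, contradicting $w\notin Y_{f,v}$. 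For the second identity, the inclusion $N_{G'}(Y_{f,v}\setminus\{v\})\cap L_t\subseteq\{v\}$ follows from the same dichotomy applied to a vertex $w\ne v$: such a $w$ is not in $Y_{f,v}$, so $w\in V(\partial f)$ by the previous sentence, and the witnessing edge $wy$ with $y\in Y_{f,v}\setminus\{v\}$ then forces $y\in V(\partial f)\cap Y_{f,v}=\{v\}$ (the $E(\partial f)$ case) or $w\in Y_{f,v}$ (the new-edge cases) --- each impossible; conversely $v$ itself lies in this neighbourhood, since $v$ has a neighbour in $Y_{f,v}\setminus\{v\}$, namely the far endpoint of any $\partial o_t$-edge at $v$ outside $E(\partial f)$. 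I expect the main obstacle to be item~\ref{item:normal-properties-4}: one must control where the ``new'' edges of an \emph{arbitrary} outer-preserving extension can go, and it is precisely the ``all-or-nothing'' behaviour of $Y_{f,v}$ on face boundaries from \ref{item:normal-properties-3} that stops such an edge from creating a neighbour of $Y_{f,v}$ outside $V(\partial f)$; items \ref{item:normal-properties-1}--\ref{item:normal-properties-3} are routine consequences of the reachability definition and the classification of pairs.
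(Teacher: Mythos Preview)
Your proposal is correct and follows essentially the same route as the paper: items \ref{item:normal-properties-1}--\ref{item:normal-properties-3} are immediate from the definitions of $Y_{f,v}$ and of singular/critical pairs together with the edge-disjointness of inner-face boundaries, and item~\ref{item:normal-properties-4} is handled by the same case split on whether the witnessing edge lies in $E(\partial o_t)$ or is a new edge of the extension lying in some inner face $f''$, invoking \ref{item:normal-properties-2} and \ref{item:normal-properties-3}. One small imprecision: in the second identity of \ref{item:normal-properties-4}, your summary ``$w\in Y_{f,v}$ (the new-edge cases)'' only covers the subcase $f''\neq f$; when $f''=f$ you instead get $y\in V(\partial f)\cap Y_{f,v}=\{v\}$, which is the same contradiction as in the $E(\partial f)$ case --- but this does not affect correctness.
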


\begin{proof}
To see \ref{item:normal-properties-1}, assume $Y_{f,v}$ contains an exit vertex of $L_t$.
Then there exists a path from $v$ to an exit vertex that only uses the edges in $E(\partial o_t) \backslash E(\partial f)$.
In this case, if $(f,v)$ is not singular, then $(f,v)$ is critical by definition.
This contradicts with the fact that $(f,v)$ is normal.

To see \ref{item:normal-properties-2}, assume $Y_{f,v}$ contains a vertex in $V(\partial f)$ other than $v$.
Then there exists a path from $v$ to another vertex in $V(\partial f)$ that only uses the edges in $E(\partial o_t) \backslash E(\partial f)$, which implies $(f,v)$ is singular and contradicts with the fact that $(f,v)$ is normal.

To see \ref{item:normal-properties-3}, consider a face $f' \in F_\eta(o_t) \backslash \{o_t,f\}$ such that $Y_{f,v} \cap V(\partial f') \neq \emptyset$.
Note that $f'$ cannot be a singular face.
Indeed, there exists a path in from $v$ to a vertex in $V(\partial f')$ that only uses the edges in $E(\partial o_t) \backslash E(\partial f)$.
If $f'$ is a singular face, then $(f,v)$ is singular, contradicting with the fact that $(f,v)$ is normal.
Therefore, $f'$ is not singular, i.e., $\partial f'$ is connected.
It follows that every vertex in $V(\partial f')$ is reachable from $v$ using only the edges in $E(\partial o_t) \backslash E(\partial f)$ and hence $V(\partial f') \subseteq Y_{f,v}$.

Finally, we show \ref{item:normal-properties-4} holds.
Let $(G',\eta')$ be an outer-preserving extension of $(\partial o_t,\eta)$.
Consider a vertex $u \in N_{G'}(Y_{f,v}) \cap L_t$ and assume $u \notin V(\partial f)$.
Let $u' \in Y_{f,v}$ be a neighbor to $u$ in $G'$.
Since $u \notin V(\partial f)$, $u'u \notin E(\partial f)$.
If $u'u \in E(\partial o_t)$, then $u$ is reachable from $v$ using only the edges in $E(\partial o_t) \backslash E(\partial f)$ (as we can reach $u'$ from $v$ and go via the edge $(u',u)$ to reach $u$) and thus $u \in Y_{f,v}$, which contradicts with the fact $u \in N_G(Y_{f,v})$.
Thus, $u'u \notin E(\partial o_t)$.
By the definition of outer-preserving extension, the image of $u'u$ under $\eta'$ lies in an inner face $f' \in F_\eta(\partial o_t) \backslash \{o_t\}$, and thus the images of $u',u$ lie in $f'$.
It follows that $u',u \in V(\partial f')$ as $u',u \in L_t$.
Note that $f' \neq f$ since $u \notin V(\partial f)$ by our assumption.
Then by \ref{item:normal-properties-3} shown above, either $Y_{f,v} \cap V(\partial f') = \emptyset$ or $V(\partial f') \subseteq Y_{f,v}$.
However, we have $Y_{f,v} \cap V(\partial f') \neq \emptyset$ as $u' \in Y_{f,v} \cap V(\partial f')$, and have $V(\partial f') \nsubseteq Y_{f,v}$ as $u \notin Y_{f,v}$.
This contradiction proves $u \in V(\partial f)$.

To see $N_{G'}(Y_{f,v} \backslash \{v\}) \cap L_t = \{v\}$, consider a vertex $u \in N_{G'}(Y_{f,v} \backslash \{v\}) \cap L_t$ and assume $u \neq v$.
Since $N_{G'}(Y_{f,v}) \cap L_t \subseteq V(\partial f)$, we have $u \in V(\partial f)$.
Let $u' \in Y_{f,v} \backslash \{v\}$ be a neighbor of $u$ in $G'$.
By \ref{item:normal-properties-2} shown above, $Y_{f,v} \cap V(\partial f) = \{v\}$ and thus $u' \notin V(\partial f)$.
So $u'u \notin E(\partial f)$.
Then we can apply the same argument as above.
If $u'u \in E(\partial o_t)$, then $u$ is reachable from $v$ using only the edges in $E(\partial o_t) \backslash E(\partial f)$ (as we can reach $u'$ from $v$ and go via the edge $(u',u)$ to reach $u$) and thus $u \in Y_{f,v}$, which contradicts with the fact $Y_{f,v} \cap V(\partial f) = \{v\}$.
Thus, $u'u \notin E(\partial o_t)$, and the image of $u'u$ under $\eta$ lies in an inner face $f' \in F_\eta(\partial o_t) \backslash \{o_t\}$.
It follows that $u',u \in V(\partial f')$.
Note that $f' \neq f$ as $u' \notin V(\partial f)$.
Then by \ref{item:normal-properties-3} shown above, either $Y_{f,v} \cap V(\partial f') = \emptyset$ or $V(\partial f') \subseteq Y_{f,v}$.
However, we have $Y_{f,v} \cap V(\partial f') \neq \emptyset$ as $u' \in Y_{f,v} \cap V(\partial f')$, and have $V(\partial f') \nsubseteq Y_{f,v}$ as $u \notin Y_{f,v}$.
This contradiction proves $u = v$ and thus $N_{G'}(Y_{f,v} \backslash \{v\}) \cap L_t \subseteq \{v\}$.
To further see $N_{G'}(Y_{f,v} \backslash \{v\}) \cap L_t = \{v\}$, we simply observe that $\partial o_t[Y_{f,v}]$ is connected (by the construction of $Y_{f,v}$).
Therefore, in $\partial o_t$ (and thus in $G'$), $v$ is adjacent to some vertex in $Y_{f,v} \backslash \{v\}$.
\end{proof}

Let $(G',\eta')$ be an outer-preserving extension of $(\partial o_t,\eta)$.
For every connected component $C$ of $G' - (L_t \backslash L_t^+)$, we have $N_{G'}(C) \subseteq L_t \backslash L_t^+$.
We claim that $N_{G'}(C) \subseteq \bigcup_{f \in F_\eta(\partial o_t) \backslash \{o_t\}} V(\partial f)$.
As $N_{G'}(C) \subseteq L_t$ and $C \subseteq L_t^+ \cup (V(G') \backslash L_t)$, we have $N_{G'}(C) \subseteq (N_{G'}(L_t^+) \cap L_t) \cup (N_{G'}(V(G') \backslash L_t) \cap L_t)$.
Recall that $L_t^+$ is the union of all $Y_{f,v}$ where $(f,v)$ is a normal pair.
Thus, by Observation~\ref{obs-normalprop}\ref{item:normal-properties-4}, we have $N_{G'}(L_t^+) \cap L_t \subseteq \bigcup_{f \in F_\eta(\partial o_t) \backslash \{o_t\}} V(\partial f)$.
For any vertex $v \in V(G') \backslash L_t$, the image of $v$ under $\eta'$ lies in the interior of some face $f \in F_\eta(\partial o_t) \backslash \{o_t\}$, and thus $N_{G'}(\{v\}) \cap L_t \subseteq V(\partial f)$.
It then follows that $N_{G'}(C) \subseteq \bigcup_{f \in F_\eta(\partial o_t) \backslash \{o_t\}} V(\partial f)$.

Based on this, we further show that $N_{G'}(C) \subseteq V(\partial f)$ for some $f \in F_\eta(\partial o_t) \backslash \{o_t\}$.
If $N_G(C) = \emptyset$, we are done.
Also, if $C \cap L_t^+ = \emptyset$, then the images of all vertices in $C$ under $\eta'$ must lie in one face $f \in F_\eta(\partial o_t) \backslash \{o_t\}$ and thus $N_G(C) \subseteq V(\partial f)$.
So suppose $N_G(C) \neq \emptyset$ and $C \cap L_t^+ \neq \emptyset$.
We observe the following.

\begin{observation}
There exists a face $f^* \in F_\eta(\partial o_t) \backslash \{o_t\}$ such that $V(\partial f^*) \nsubseteq C$ and $V(\partial f^*) \cap C \neq \emptyset$.
\end{observation}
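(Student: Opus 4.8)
The plan is to prove the observation by contradiction. Assume that every inner face $f \in F_\eta(\partial o_t) \setminus \{o_t\}$ satisfies either $V(\partial f) \cap C = \emptyset$ or $V(\partial f) \subseteq C$, and derive a contradiction from the two standing hypotheses $N_{G'}(C) \neq \emptyset$ and $C \cap L_t^+ \neq \emptyset$. Before the case analysis I would record the following elementary facts. Since $C$ is a connected component of $G' - (L_t \setminus L_t^+)$, we have $C \cap (L_t \setminus L_t^+) = \emptyset$ and $N_{G'}(C) \subseteq L_t \setminus L_t^+$; so I can fix $u \in N_{G'}(C)$ and $u' \in C$ with $uu' \in E(G')$, where $u \in L_t \setminus L_t^+$ and $u \notin C$. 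From $C \cap L_t^+ \neq \emptyset$ I also fix some $w_0 \in C \cap L_t^+ \subseteq C \cap V(\partial o_t)$. Finally I record that, since $L_t^+$ is the union of the sets $Y_{f,v}$ over normal pairs $(f,v)$ with $v \in X$, and each such $Y_{f,v}$ induces a connected subgraph of $\partial o_t$ that is contained in $L_t^+$ (hence disjoint from $L_t \setminus L_t^+$), every $Y_{f,v}$ that meets $C$ is entirely contained in $C$; in that case also $v \in V(\partial f) \cap C$ because $(f,v) \in \mathcal{P}$.

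Next I would split on the nature of the edge $uu'$. Suppose first $u' \in L_t$. Then $u' \in L_t^+$ (as $u' \in C$), so $u' \in Y_{f_1,v_1}$ for some normal pair $(f_1,v_1)$ with $v_1 \in X$, and by the last fact $Y_{f_1,v_1} \subseteq C$ and $v_1 \in V(\partial f_1) \cap C$. If moreover $uu' \in E(\partial o_t)$, then $uu' \in E(\partial f_1)$: otherwise $uu' \in E(\partial o_t) \setminus E(\partial f_1)$ and, since $u' \in Y_{f_1,v_1}$, appending this edge would put $u$ into $Y_{f_1,v_1} \subseteq L_t^+$, contradicting $u \in L_t \setminus L_t^+$. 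Hence $u \in V(\partial f_1)$, so $V(\partial f_1)$ meets $C$ (in $v_1$) but is not contained in $C$ (it contains $u$), violating the dichotomy. If instead $uu' \notin E(\partial o_t)$, then because $(G',\eta')$ is an outer-preserving extension of $(\partial o_t,\eta)$ the edge $uu'$ is drawn in an inner face $f_2$, so both $u,u' \in L_t$ lie on $\partial f_2$; again $u' \in V(\partial f_2) \cap C$ and $u \in V(\partial f_2) \setminus C$ contradict the dichotomy.

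The remaining case $u' \notin L_t$ is the crux. Then $u'$ is drawn in the interior of some inner face $f_3$ (outer-preservation), and the edge $uu'$, not being an edge of $\partial o_t$, is drawn inside $\overline{f_3}$, so $u \in V(\partial f_3)$. Since $u \notin C$, we have $V(\partial f_3) \not\subseteq C$, hence by the dichotomy $V(\partial f_3) \cap C = \emptyset$. Now $C$ is disjoint from $V(\partial f_3)$, so $C$ is still connected in $G' - V(\partial f_3)$ and therefore lies in the single connected component of that graph containing $u'$. The key embedding input is that any $G'$-neighbour of a vertex drawn in the interior of $f_3$ is itself drawn in the interior of $f_3$ or lies on $\partial f_3$ (an edge incident to an interior vertex of $f_3$ cannot cross $\partial o_t$, so it stays in $\overline{f_3}$); consequently the component of $G' - V(\partial f_3)$ containing $u'$ consists only of vertices drawn in the interior of $f_3$. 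Hence every vertex of $C$, in particular $w_0$, is drawn in the interior of $f_3$ — contradicting $w_0 \in V(\partial o_t)$, which lies on the image of $\partial o_t$ and hence in no face interior. This contradiction proves the observation.

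The main obstacle is exactly this last case: everything else is bookkeeping with the definitions of $Y_{f,v}$, normal pairs, $L_t^+$, and the easy fact that new edges of an outer-preserving extension live in inner faces of $\partial o_t$. The delicate point to get right is that the contradiction hypothesis is precisely what forces $C$ to avoid all of $V(\partial f_3)$, which is in turn what "traps" $C$ inside the face $f_3$ and clashes with $C$ containing a vertex of $\partial o_t$.
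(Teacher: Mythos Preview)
Your proof is correct and follows essentially the same line as the paper's: pick $u \in N_{G'}(C)$ with neighbour $u' \in C$, and exhibit an inner face of $(\partial o_t,\eta)$ whose boundary contains $u$ and meets $C$. The only cosmetic differences are that the paper argues directly rather than by contradiction, cites Observation~\ref{obs-normalprop}\ref{item:normal-properties-4} to place $u$ on $V(\partial f^*)$ in the $u' \in L_t^+$ case (which you re-derive via your subcase split on $uu' \in E(\partial o_t)$), and in the $u' \notin L_t$ case takes a path in $C$ from $u'$ to a vertex of $C \cap L_t^+$ and notes it must cross $V(\partial f^*)$---the dual formulation of your trapping argument.
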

\begin{proof}
Let $u \in N_{G'}(C)$ be a vertex and $u' \in C$ be a neighbor of $u$ in $G'$.
If $u' \in L_t^+$, then $u' \in Y_{f^*,v}$ for some normal pair $(f^*,v)$.
We have $Y_{f^*,v} \subseteq C$, since $Y_{f^*,v} \subseteq L_t^+$ and $G'[Y_{f^*,v}]$ is connected.
Therefore, $u \in N_{G'}(Y_{f^*,v})$, which implies $u \in V(\partial f^*)$ by Observation~\ref{obs-normalprop}\ref{item:normal-properties-4} and hence $V(\partial f^*) \nsubseteq C$.
To see $V(\partial f^*) \cap C \neq \emptyset$, note that $u' = v$.
Indeed, if $u' \in Y_{f^*,v} \backslash \{v\}$, then Observation~\ref{obs-normalprop}\ref{item:normal-properties-4} implies $u = v$, contradicting with the fact $u \notin C$.
So $u' = v \in V(\partial f^*) \cap C$ and $V(\partial f^*) \cap C \neq \emptyset$.

If $u' \in V(G') \backslash L_t$, the image of $u'$ under $\eta'$ lies in the interior of a face in $F_\eta(\partial o_t) \backslash \{o_t\}$.
We then let $f^*$ be this face.
It follows that $u \in V(\partial f^*)$ as $u$ is neighboring to $u'$ and $u \in L_t$.
So we have $V(\partial f^*) \nsubseteq C$ because $u \notin C$.
To see $V(\partial f^*) \cap C \neq \emptyset$, we use the assumption $C \nsubseteq L_{>t}$, which implies $C \cap L_t^+ \neq \emptyset$.
Since $C$ is connected, there exists a path $\pi$ in $C$ from $u'$ to a vertex $w \in C \cap L_t^+$.
Note that $\pi$ intersects $V(\partial f^*)$, because $u'$ lies in the interior of $f^*$ while $w$ does not.
Thus, $V(\partial f^*) \cap C \neq \emptyset$.
\end{proof}

Let $f^* \in F_\eta(\partial o_t) \backslash \{o_t\}$ be the face in the above observation.
We show that $N_{G'}(C) \subseteq V(\partial f^*)$, which implies \ref{item:rcd-key-2} of Lemma~\ref{lem-key}.
Define $C^* = V(\partial f^*) \cap C$, which is nonempty by the construction of $f^*$.
Note that $C^* \subseteq L_t^+$, because $V(\partial f^*) \subseteq L_t$ and $C \subseteq V(G') \backslash (L_t \backslash L_t^+)$.
We claim that $(f^*,u)$ is normal for all $u \in C^*$.
Consider a vertex $u \in C^* \subseteq L_t^+$.
Then $u \in Y_{f,v}$ for some normal pair $(f,v)$.
If $f \neq f^*$, by Observation~\ref{obs-normalprop}\ref{item:normal-properties-3}, either $Y_{f,v} \cap V(\partial f^*) = \emptyset$ or $V(\partial f^*) \subseteq Y_{f,v}$.
The former is false because $u \in Y_{f,v} \cap V(\partial f^*)$.
The latter is false because $V(\partial f^*) \nsubseteq C$ (note that $Y_{f,v} \subseteq C$ as $u \in C$ and $G[Y_{f,v}]$ is connected).
Thus, $f = f^*$.
Then by Observation~\ref{obs-normalprop}\ref{item:normal-properties-2}, $Y_{f,v} \cap V(\partial f^*) = Y_{f,v} \cap V(\partial f) = \{v\}$, which implies $u = v$.
It follows that $(f^*,u) = (f,v)$ is a normal pair.
Now let $C'$ be the union of $\bigcup_{u \in C^*} Y_{f^*,u}$ and all connected components of $G' - L_t$ that are neighboring to $\bigcup_{u \in C^*} Y_{f^*,u}$ in $G'$.
Clearly, we have $C' \subseteq C$, because $\bigcup_{u \in C^*} Y_{f^*,u} \subseteq C$ and thus any connected component of $G' - L_t$ neighboring to $\bigcup_{u \in C^*} Y_{f^*,u}$ should be contained in $C$.
The following observation implies $N_{G'}(C) \subseteq V(\partial f^*)$.

\begin{observation}
$N_{G'}(C') \subseteq V(\partial f^*)$ and $C' = C$.
\end{observation}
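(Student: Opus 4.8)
The plan is to establish the two assertions in sequence: first $N_{G'}(C')\subseteq V(\partial f^*)$, and then $C'=C$, the latter following from the former together with the connectedness of $C$. Throughout I will use the three facts already in place: $C'\subseteq C$, the pair $(f^*,u)$ is normal for every $u\in C^*$, and $C^*=V(\partial f^*)\cap C$ is nonempty.

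To prove $N_{G'}(C')\subseteq V(\partial f^*)$, I would fix $w\in N_{G'}(C')$ together with a neighbour $w'\in C'$ of $w$ in $G'$ (so $w\notin C'$) and split according to which part of $C'$ contains $w'$. In the first case $w'\in Y_{f^*,u}$ for some $u\in C^*$; since $Y_{f^*,u}\subseteq C'$ we have $w\notin Y_{f^*,u}$, so if $w\in L_t$ then $w\in N_{G'}(Y_{f^*,u})\cap L_t$ and Observation~\ref{obs-normalprop}\ref{item:normal-properties-4}, applied to the normal pair $(f^*,u)$, gives $w\in V(\partial f^*)$; and if $w\notin L_t$ then the component of $G'-L_t$ through $w$ is adjacent to $Y_{f^*,u}$ via the edge $ww'$, hence lies in $C'$, contradicting $w\notin C'$. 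In the second case $w'$ lies in a component $D$ of $G'-L_t$ neighbouring $\bigcup_{u\in C^*}Y_{f^*,u}$; since $D\subseteq C'$ we get $w\notin D$, and as $w$ is adjacent to $w'\in D$ this forces $w\in L_t$. Because $(G',\eta')$ is an outer-preserving extension of $(\partial o_t,\eta)$, all of $D$ lies in the interior of a single inner face $f'$ of $(\partial o_t,\eta)$, and consequently $w\in V(\partial f')$. Choosing $u\in C^*$ with $D$ neighbouring $Y_{f^*,u}$, the witnessing edge has an endpoint of $Y_{f^*,u}$ on $\partial f'$, so $Y_{f^*,u}\cap V(\partial f')\neq\emptyset$; if $f'\neq f^*$, Observation~\ref{obs-normalprop}\ref{item:normal-properties-3} then yields $V(\partial f')\subseteq Y_{f^*,u}\subseteq C'$ and hence $w\in C'$, a contradiction, so $f'=f^*$ and $w\in V(\partial f^*)$. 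In every case $w\in V(\partial f^*)$, as desired.

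For $C'=C$, note first that $C'\neq\emptyset$, since $u\in Y_{f^*,u}\subseteq C'$ for any $u\in C^*$, and likewise $C^*\subseteq C'$. Suppose for contradiction that $C'\subsetneq C$. As $G'[C]$ is connected and $C'$, $C\setminus C'$ are both nonempty, there is an edge of $G'$ between them; let $w\in C\setminus C'$ be its endpoint in $C\setminus C'$. Then $w\in N_{G'}(C')$, so by the previous paragraph $w\in V(\partial f^*)$; combined with $w\in C$ this gives $w\in V(\partial f^*)\cap C=C^*\subseteq C'$, contradicting $w\in C\setminus C'$. Hence $C'=C$.

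The only place that needs care — and it is bookkeeping rather than a new idea — is the second case of the first assertion: one must invoke the outer-preserving property to localise the component $D$ of $G'-L_t$ inside a single inner face $f'$, and then correctly dispatch the alternative $f'\neq f^*$ by observing that Observation~\ref{obs-normalprop}\ref{item:normal-properties-3} would force $\partial f'$, and therefore $w$, into $Y_{f^*,u}\subseteq C'$, contradicting the choice of $w$. Everything else is a direct application of parts (c) and (d) of Observation~\ref{obs-normalprop} and the connectedness of $C$.
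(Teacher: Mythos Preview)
Your proof is correct and follows essentially the same approach as the paper's. The only cosmetic differences are that the paper first argues $v\in L_t$ uniformly before splitting into the two cases (whereas you handle this within each case), and the paper deduces $C'=C$ by directly showing $N_{G'}(C')\cap C=\emptyset$ via the identity $V(\partial f^*)\cap C=C^*=V(\partial f^*)\cap C'$, while you phrase the same step as a proof by contradiction; both amount to the same argument.
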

\begin{proof}
Let $v \in N_{G'}(C')$ and $v' \in C'$ be a neighbor of $v$ in $G'$.
Observe that $v \in L_t$.
Indeed, if $v \in V(G') \backslash L_t$, then we have $v' \in C'$ by the construction of $C'$ (as it is neighboring to $v$), contradicting with the fact $v \in N_{G'}(C')$.
As $v' \in C'$, there are two cases: $v' \in \bigcup_{u \in C^*} Y_{f^*,u}$ or $v'$ lies in a connected component of $G' - L_t$ that is neighboring to $\bigcup_{u \in C^*} Y_{f^*,u}$ in $G'$.
In the first case, $v \in N_{G'}(Y_{f^*,u})$ for some $u \in C^*$.
Because $v \in L_t$, we have $v \in V(\partial f^*)$ by Observation~\ref{obs-normalprop}\ref{item:normal-properties-4}.
In the second case, (the image of) the connected component of $G' - L_t$ containing $v$ is contained in a face $f \in F_\eta(\partial o_t) \backslash \{o_t\}$.
Since that connected component is neighboring to $\bigcup_{u \in C^*} Y_{f^*,u}$, there exists $u \in C^*$ such that $Y_{f^*,u} \cap V(\partial f) \neq \emptyset$.
By Observation~\ref{obs-normalprop}\ref{item:normal-properties-3}, this implies either $V(\partial f) \subseteq Y_{f^*,u}$ or $f = f^*$.
The former is not true because $v \in V(\partial f) \backslash C' \subseteq V(\partial f) \backslash Y_{f^*,u}$.
So $f = f^*$ and $v \in V(\partial f^*)$.

To further show $C' = C$, we notice that $V(\partial f^*) \cap C = C^* = V(\partial f^*) \cap C'$, and hence $N_{G'}(C') \cap (V(\partial f^*) \cap C) = \emptyset$.
This implies $N_{G'}(C') \cap C = \emptyset$, as $N_{G'}(C') \subseteq V(\partial f^*)$.
But $C' \subseteq C$ and $C$ is connected.
Therefore, $C' = C$.
\end{proof}

\paragraph{Verifying Condition \ref{item:rcd-key-3} of Lemma~\ref{lem-key}.}
Let $f \in F_\eta(\partial o_t) \backslash \{o_t\}$ be a face.
By the construction of $L_t^+$, we have $v \in L_t^+$ for all vertices $v \in \partial (f) \cap X$ such that $(f,v)$ is normal.
Therefore, for every $v \in V(\partial f) \cap (X \backslash L_t^+)$, either $(f,v)$ is singular or $(f,v)$ is critical.
By Observation~\ref{obs-O1singular}, there are only $O_g(1)$ vertices $v \in V(\partial f)$ such that $(f,v)$ is singular.
So it suffices to bound the number of vertices $v \in V(\partial f) \cap (X \backslash L_t^+)$ such that $(f,v)$ is critical.
Recall that $X$ consists of the vertices on the legal paths $\pi_1,\dots,\pi_c$.
By Observation~\ref{obs-legal}, each legal path contains at most two vertices $v \in V(\partial f)$ such that $(f,v)$ is critical.
So the number of vertices $v \in V(\partial f) \cap (X \backslash L_t^+)$ such that $(f,v)$ is critical is at most $2c$.
This further implies $V(\partial f) \cap (X \backslash L_t^+) = O_{g,c}(1)$.

\paragraph{Verifying Condition \ref{item:rcd-key-4} of Lemma~\ref{lem-key}.}
Let $f \in F_\eta(\partial o_t) \backslash \{o_t\}$ be a face.
We first claim that either $V(\partial f) \subseteq L_t^+$ or $V(\partial f) \cap L_t^+ \subseteq X$.
Indeed, if there exists a normal pair $(f',v')$ with $f' \neq f$ and $v' \in X$ such that $V(\partial f) \subseteq Y_{f',v'}$, then $V(\partial f) \subseteq L_t^+$.
Otherwise, by Observation~\ref{obs-normalprop}\ref{item:normal-properties-3}, $Y_{f',v'} \cap V(\partial f) = \emptyset$ for all normal pair $(f',v')$ with $f' \neq f$ and $v' \in X$.
This implies that any vertex $u \in V(\partial f) \cap L_t^+$ must belong to $Y_{f,v}$ for some normal pair $(f,v)$ with $v \in X$.
But by Observation~\ref{obs-normalprop}\ref{item:normal-properties-2}, we have $Y_{f,v} \cap V(\partial f) = \{v\}$ and thus $u = v \in X$.

If $V(\partial f) \subseteq L_t^+$, then $V(\partial f) \backslash L_t^+ = \emptyset$ and we are done.
If $V(\partial f) \cap L_t^+ \subseteq X$, then $V(\partial f) \backslash L_t^+ = (V(\partial f) \backslash X) \cup (V(\partial f) \cap (X \backslash L_t^+))$.
As shown above, $|V(\partial f) \cap (X \backslash L_t^+)| = O_{g,c}(1)$.
Thus, to show \ref{item:rcd-key-4} of Lemma~\ref{lem-key}, we only need to show $\partial f - X$ has $O_{g,c}(1)$ connected components.
Since $X$ consists of the paths $\pi_1,\dots,\pi_c$ and by Observation~\ref{obs-legal} the intersection of each $\pi_i$ and $V(\partial f)$ has $O_g(1)$ connected components (in $\partial f$), we know that $\partial f[V(\partial f) \cap X]$ has $O_{g,c}(1)$ connected components.
By \ref{item:degree-2} and \ref{item:degree-3} of Lemma~\ref{lem-degree}, the maximum degree of $\partial f$ is $O(g)$ and there are $O(g)$ vertices in $\partial f$ of degree at least 3.
The following fact then implies that $\partial f - (V(\partial f) \cap X)$ has $O_{g,c}(1)$ connected components (setting $G = \partial f$ and $V = V(\partial f) \cap X$).

\begin{fact}\label{fact-components}
Let $G$ be a graph of maximum degree $\alpha$ in which there are $\beta$ vertices of degree at least 3.
Then for every $V \subseteq V(G)$, the graph $G-V$ has at most $O_{\alpha,\beta,\gamma}(1)$ more connected components than $G$, where $\gamma$ is the number of connected components of $G[V]$.
In particular, deleting $\gamma$ vertices from $G$ can increase the number of connected components by at most $O_{\alpha,\beta,\gamma}(1)$.
\end{fact}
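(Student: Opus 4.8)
The plan is to peel off the $\gamma$ connected components of $G[V]$ one at a time and show that each single peel increases the number of connected components by only $O_{\alpha,\beta}(1)$; summing the $\gamma$ increases then gives the claimed bound $O_{\alpha,\beta,\gamma}(1)$. Concretely, I would write $G[V]=C_1\cup\cdots\cup C_\gamma$ where each $G[C_j]$ is connected, and set $G_0=G$ and $G_j=G_{j-1}-C_j$, so that $G_\gamma=G-V$. Since the $C_j$ are pairwise disjoint, $G_{j-1}[C_j]=G[C_j]$ is connected, and $G_{j-1}$ still has maximum degree at most $\alpha$ and at most $\beta$ vertices of degree at least $3$ (deleting vertices never raises a degree). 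Hence it suffices to prove the one-component statement: if $H$ has maximum degree $\alpha$ and at most $\beta$ vertices of degree at least $3$, and $C\subseteq V(H)$ with $H[C]$ connected, then $H-C$ has at most some constant $D=D(\alpha,\beta)$ more connected components than $H$. Applying this along the chain $G_0,\dots,G_\gamma$ and telescoping yields the first assertion with bound $\gamma D$. The last sentence of the fact is then immediate, since a vertex set of size $\gamma$ induces at most $\gamma$ connected components.

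For the one-component statement, let $e(C,\overline C)$ denote the number of edges of $H$ with exactly one endpoint in $C$. As $H[C]$ is connected, $C$ lies inside a single component $K$ of $H$; every component of $K-C$ sends at least one edge to $C$ (follow a path in $K$ from that component to $C$ and take its last edge before entering $C$), and these edges are pairwise distinct, so $K-C$ has at most $e(C,\overline C)$ components. Therefore $H-C$ has at most $(\text{number of components of }H)-1+e(C,\overline C)$ components, i.e.\ the increase is at most $e(C,\overline C)$, and it remains to show $e(C,\overline C)=O_{\alpha,\beta}(1)$. If $|C|=1$ then $e(C,\overline C)\le\alpha$. If $|C|\ge 2$, I would bound $e(C,\overline C)\le\sum_{v\in C}(\deg_H(v)-\deg_{H[C]}(v))$ and split the sum: over the at most $\beta$ vertices $v\in C$ of degree at least $3$ in $H$ it contributes at most $\alpha\beta$, while over the remaining vertices $v$ (which have $\deg_H(v)\le 2$ and $\deg_{H[C]}(v)\ge 1$ by connectivity) each term is at most $1$ and is nonzero only when $\deg_{H[C]}(v)=1$, so this part is at most the number of degree-$1$ vertices of $H[C]$. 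Since $H[C]$ is connected, $\sum_{v\in C}(\deg_{H[C]}(v)-2)=2|E(H[C])|-2|C|\ge -2$, so $H[C]$ has at most $2+\sum_{v:\deg_{H[C]}(v)\ge 3}(\deg_{H[C]}(v)-2)\le 2+\beta(\alpha-2)$ vertices of degree $1$. Altogether $e(C,\overline C)\le\alpha\beta+2+\beta(\alpha-2)=O_{\alpha,\beta}(1)$, so one may take $D=\max(\alpha,\ \alpha\beta+2+\beta(\alpha-2))$.

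The telescoping and the stability of the two degree conditions under vertex deletion are routine bookkeeping. The one step that needs an actual idea is the bound on $e(C,\overline C)$: the key observation is that a low-degree vertex of $C$ can have a neighbour outside $C$ only if it is a \emph{leaf} of the connected graph $H[C]$, and the Euler-type leaf count $2+\sum_{\deg\ge 3}(\deg-2)$ caps the number of such leaves in terms of $\alpha$ and $\beta$ alone. I expect this to be the main (indeed essentially the only) obstacle; everything else is straightforward counting.
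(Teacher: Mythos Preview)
Your proof is correct and follows the same outer strategy as the paper: peel off the connected components $C_1,\dots,C_\gamma$ of $G[V]$ one at a time, noting that the degree hypotheses are preserved under vertex deletion, and bound the component increase at each step by the size of the ``boundary'' of $C_j$.

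The difference lies in how the one-component bound is established. The paper bounds the number of vertices of a connected set $S$ that have a neighbour outside $S$ by induction on the number $k$ of high-degree vertices of $G$ lying in $S$: one removes a high-degree vertex $x\in S$, splitting $S$ into at most $\alpha$ connected pieces each with at most $k-1$ high-degree vertices, and obtains a bound of $(2\alpha)^{\beta+1}$, hence $|N(C_j)|\le (2\alpha)^{\beta+1}\alpha$ and an overall increase of $O(\gamma\cdot\alpha^{O(\beta)})$. Your route is more direct and quantitatively sharper: you bound the number of \emph{edges} $e(C,\overline C)$ by observing that any vertex $v\in C$ with $\deg_H(v)\le 2$ can contribute to this count only if it is a leaf of the connected graph $H[C]$, and then cap the number of leaves by $2+\sum_{\deg_{H[C]}\ge 3}(\deg_{H[C]}(v)-2)\le 2+\beta(\alpha-2)$ using the standard handshake/Euler identity. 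This avoids the induction entirely and yields a polynomial bound $e(C,\overline C)\le 2+\beta(2\alpha-2)$ rather than an exponential one. Both arguments are valid; yours buys a better constant and a cleaner proof, while the paper's inductive scheme is a touch more structural but pays for it in the exponent.
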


\begin{proof}
 First let us make the easy remark that in any graph $G$, removing a set $X \subseteq V(G)$ such that $G[X]$ is connected and $|N(X)| = a$ increases the number of connected components by at most $a$.
 Indeed, consider $C$ the connected component of $G$ containing $X$ and $C_1, \dots C_r$ the connected components of $C - X$.
 Since $C$ is connected, there exists a path between $C_i$ and $C_j$ for every $i \neq j$ in $C$ and this path has to use some vertex of $X$.
 In particular, it means that $X$ is adjacent to every of the $C_i$ and thus $|N(X)| \geq r$.
 Since all the other connected components of $G$ are untouched, this means that removing $X$ indeed increases the number of connected components by at most $a$.

 We now claim that, in a graph of maximum degree $\alpha$, if $S$ is a set of vertices such that $G[S]$ is connected and $S$ contains $k$ vertices of degree at least $3$ in $G$, then the number of vertices of $S$ with a neighbor outside of $S$ is bounded by $(2 \alpha)^{k+1}$.
    
 The proof is by induction on $k$. If $k = 0$, then $G[S]$ is a connected graph where every vertex has degree 1 or 2.
 This means that $G[S]$ is either a cycle or a path.
 In the first case all the vertices have degree 2 in $G$ and in $G[S]$, which means no vertices of $S$ has a neighbor outside of $S$.
 In the second case, only the extremities of the path can have a neigbhor outside.

 Suppose now that the result is true up to $k-1$ and let $x$ be a vertex of $S$ of degree at least 3 in $G$.
 Let $S_1, \dots, S_t$ be the connected components of $G[S]$ after removing $x$.
 Note that because $x$ has degree at most $\alpha$ and $G[S]$ is connected, it meanst that $x$ has a neighbor in each $S_i$ and thus $t \leq \alpha$.
 Moreover, the number of vertices of degree $3$ in each $S_i$ is at most $k-1$, which means by induction that the number of vertices with neighbors outside $S_i$ is at most $(2 \alpha)^{k} $.
 Therefore the number of vertices of $S$ with a neighbor outside of $S$ is bounded by $1 + \alpha \cdot (2 \alpha)^{k} \leq (2 \alpha)^{k+1} $, which ends the proof of our claim.

 Let $V$ be a set of vertices and $C_1,\dots,C_{\gamma}$ be the set of connected components of $G[V]$.
 Because there is at most $\beta$ vertices of degree at least 3 in $G$ and the maximum degree is $\alpha$, the previous claim implies that $N(C_i) \leq (2 \alpha)^{\beta+1} \cdot \alpha $ for every $i \in [\gamma]$.
 This means that removing the sets $C_i$ one after the other, we only increase the number of connected components by $O(\alpha^{O(\beta)})$ in each step.
 Since there are only $\gamma$ steps, we get that the $G-V$ has at most $O( \gamma \cdot \alpha^{O(\beta)})$ more connected components than $G$.
\end{proof}

\subsection{Decomposing almost-embeddable graphs}
In this section, based on Lemma~\ref{lem-key}, we prove a result on almost-embeddable graphs (Corollary~\ref{cor-decomp2} below), which roughly states that one can decompose a connected almost-embeddable graph into $p+1$ parts in which the first $p$ parts satisfy the (robust) ``bounded-treewidth contraction'' property and the last part connects a small set of given vertices.
We begin with apex-free almost-embeddable graphs, and establish the following decomposition lemma.

\begin{lemma} \label{lem-decomp1}
Given a connected graph $G$ with an apex-free $h$-almost-embeddable structure in which the partial embedding is minimal, a number $p$, and a set $\varPhi \subseteq V(G)$ of size $c$, one can compute in polynomial time $p$ disjoint sets $Z_1,\dots,Z_p \subseteq V(G) \backslash N_G[\mathsf{Vort} \cup \varPhi]$ satisfying the following conditions (where $\mathsf{Vort} \subseteq V(G)$ consists of the vortex vertices in $G$).
\begin{enumerate}[label = (\arabic*)]
    \item\label{item:decomp1-1} $\tw(G/(Z_i \backslash Z')) = O_{h,c}(p+|Z'|)$ for all $i \in [p]$ and $Z' \subseteq Z_i$.
    \item\label{item:decomp1-2} $\varPhi$ is contained in one connected component of $G - \bigcup_{i=1}^p Z_i$.
\end{enumerate}
\end{lemma}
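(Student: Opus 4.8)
The plan is to lift the surface‑graph decomposition of \cite{BandyapadhyayLLSJ22} to the apex‑free almost‑embeddable setting, while in addition keeping $\varPhi$ connected in the complement, by ``breaking'' the radial layers along carefully chosen monotone paths supplied by Lemma~\ref{lem-key}. \textbf{Setup.} Since $\eta\colon G_0\to\varSigma$ is minimal, $G_0$ is connected (two components would be incident to a common face of $(G_0,\eta)$, whose boundary would then be disconnected), and, $G_0$ being a surface graph, no face of $(G_0,\eta)$ is singular. Fix a reference point $x_0$ in the interior of an arbitrary face and let $L_1,\dots,L_m$ be the radial layering of the $(\varSigma,x_0)$‑embedded graph $(G_0,\eta)$; this is the only layering used, and every $Z_i$ we build lies in $V(G_0)$, so vortex‑interior vertices are automatically avoided. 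Call $\ell\in[m]$ \emph{bad} if $\ell=1$, or $L_\ell$ meets $N_G[\mathsf{Vort}]$, or $L_\ell$ meets $N_G[\varPhi']$, where $\varPhi'$ is obtained from $\varPhi$ by replacing each vortex‑interior vertex $v$ by an attachment vertex of $v$'s vortex in the same vortex‑component (such a vertex exists, else that component would be a component of the connected graph $G$). Since $r\le h$, $|\varPhi|=c$, and by Fact~\ref{fact-diff1} the attachment vertices of a vortex lie on $\partial D_i$ inside one face and hence in two consecutive layers, there are $O_{h,c}(1)$ bad indices, and every vortex's attachment vertices occupy bad layers.

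\textbf{The paths.} For each $v\in\varPhi'$ we build a ``monotone'' path $\pi_v\subseteq G_0$ from $v$ down to $L_1$: processing $t=m,m-1,\dots,2$, maintain a set $\varPhi^{(t)}\subseteq L_t$ of active endpoints (initially $\varPhi'\cap L_t$, augmented by the entry points produced at step $t+1$), apply Lemma~\ref{lem-key} to $(G_0,\eta)$, the layering, the index $t$, and $\varPhi^{(t)}$ --- its hypothesis holds because $\eta$ is minimal --- to obtain $X^{(t)}\subseteq L_t$ and $L_t^{+}\subseteq L_t\setminus N_{G_0}(L_{t-1})$, and for each component of $G_0[X^{(t)}]$ (each meeting $\varPhi^{(t)}$ and adjacent to $L_{t-1}$ by Lemma~\ref{lem-key}\ref{item:rcd-key-1}) fix an exit vertex and one of its neighbours in $L_{t-1}$; these neighbours are the new points of $\varPhi^{(t-1)}$, and a count of components keeps $|\varPhi^{(t)}|\le c$ throughout. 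Set $\Pi=\bigcup_t X^{(t)}$ and $\mathsf{P}^{+}=\bigcup_t L_t^{+}$; chaining the $X^{(t)}$ along the chosen exit/entry pairs gives, for each $v\in\varPhi'$, a connected subgraph of $\Pi$ containing $v$ and a vertex of $L_1$.

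\textbf{The sets and condition \ref{item:decomp1-2}.} Fix $\Delta=\Theta_{h,c}(p)$, at least $p$ and large enough that the $O_{h,c}(1)$ bad indices thin each residue class only mildly. For distinct residues $q_1,\dots,q_p$ modulo $\Delta$ put $\mathcal{Z}_i=\{\ell\in[m]:\ \ell\equiv q_i\pmod{\Delta},\ \ell\text{ not bad}\}$ and
\[ Z_i=\Bigl(\bigcup_{\ell\in\mathcal{Z}_i}L_\ell\Bigr)\setminus\bigl(\Pi\cup\mathsf{P}^{+}\cup N_G[\mathsf{Vort}\cup\varPhi]\bigr). \]
These are disjoint and contained in $V(G)\setminus N_G[\mathsf{Vort}\cup\varPhi]$. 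For \ref{item:decomp1-2}: $L_1$, $\Pi$, $\mathsf{P}^{+}$ and $N_G[\mathsf{Vort}\cup\varPhi]$ are all disjoint from every $Z_i$, so $G-\bigcup_i Z_i$ contains $G[L_1]$ (connected, since $\partial o$ is connected by minimality), all the $\pi_v$, all vortices, and $N_G[\varPhi]$; hence every $v\in\varPhi$ reaches $L_1$ --- directly if $v\in\varPhi'$, otherwise through its vortex to its $\varPhi'$‑representative and then along that path --- so $\varPhi$ lies in one component. (If $m$ is only $O_{h,c}(p)$ the construction collapses to a single ``window'' and the treewidth bound below is immediate, so assume $m$ is large.)

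\textbf{Condition \ref{item:decomp1-1}: the main obstacle.} Fix $i$, $Z'\subseteq Z_i$, and set $Z=Z_i\setminus Z'$. Since $Z\subseteq V(G_0)\setminus N_G[\mathsf{Vort}]$, contracting $Z$ leaves the vortices untouched and they remain attached to $G_0/Z$ at uncontracted (bad‑layer) vertices; as each vortex carries a width‑$h$ path decomposition anchored along its bounded, consecutive run of bad layers, it suffices --- by gluing in the style of Lemma~\ref{lem-torsotw} --- to bound $\tw(G_0/Z)$ and then reattach the vortices with bounded overhead. For $\tw(G_0/Z)$ I follow the window analysis of \cite{BandyapadhyayLLSJ22}: the contracted separator layers $L_\ell$, $\ell\in\mathcal{Z}_i$, cut $G_0/Z$ into consecutive windows $W$ of $O_{h,c}(p)$ layers (overlapping in one contracted layer), and one bounds the treewidth of each window's torso. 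This is where the sets $L_t^{+}$ pay off: reassigning $L_t^{+}$ to the deeper window makes every component of $G_0[L_{>t}\cup L_t^{+}]$ have all its neighbours on one inner face of $G_0[L_{\le t}]$ and, after contracting $Z\cap L_t$, adjacent to only $O(|Z'\cap L_t|)$ vertices --- which is exactly what conditions \ref{item:rcd-key-2}--\ref{item:rcd-key-4} of Lemma~\ref{lem-key} deliver, while those conditions also bound the uncontracted $\Pi$‑ and $\mathsf{P}^{+}$‑vertices on each face boundary by $O_{g,c}(1)$. One then checks that each window's torso is a surface graph of $w_\kappa$‑weighted vertex‑face diameter $O_{h,c}(p+|Z'\cap W|)$ with a few clique‑patches of bounded total weight, so Lemma~\ref{lem-twdiam2} gives it treewidth $O_{h,c}(p+|Z'\cap W|)$; assembling the path of windows and reattaching the vortices via Lemma~\ref{lem-torsotw} yields $\tw(G/Z)=O_{h,c}(p+|Z'|)$. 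The hard part is precisely this last analysis --- reconciling the fact that $\Pi$ cuts through the separator layers with the clean window/weak‑connection picture, which is what the $L_t^{+}$ machinery and Lemma~\ref{lem-key} were built for; the vortex bookkeeping and the $O_{h,c}(1)$ bad layers are comparatively routine.
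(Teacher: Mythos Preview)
Your overall strategy matches the paper's: radially layer $(G_0,\eta)$, carve monotone paths to $L_1$ via Lemma~\ref{lem-key} to keep $\varPhi$ connected in the complement, define each $Z_i$ as a residue-class union of layers minus the paths and the $L_t^+$ sets, and bound treewidth window-by-window via Lemma~\ref{lem-twdiam2}. However, your setup contains a concrete error. Minimality of $\eta$ does \emph{not} imply that every face of $(G_0,\eta)$ has connected boundary: a vortex face $f$ can have $\partial f$ disconnected while $\tilde{\partial} f$ (boundary together with the vortex in $f$) is connected --- this is exactly why the paper proves Observation~\ref{obs-vortexfaceconn} for $\tilde{\partial} f$ rather than $\partial f$. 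Hence your blanket application of Lemma~\ref{lem-key} to every $t\ge 2$ is unjustified; its non-singularity hypothesis can fail at layers incident to vortex faces. The paper avoids this by setting $X_t=L_t$ for bad layers and invoking Lemma~\ref{lem-key} only for good $L_t$, where non-singularity does follow from Observation~\ref{obs-vortexfaceconn} (good layers are incident to no vortex face). Your claim that $G_0$ is connected has the same flaw: two $G_0$-components can be linked through a vortex without violating minimality.

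For condition~\ref{item:decomp1-1}, your sketch omits a device the paper relies on. Before any window analysis, the paper adds \emph{virtual edges} along each $\partial D_i$ to form an auxiliary embedded graph $G_0'$ and reduces $\tw(G/(Z_i\setminus Z'))$ to $\tw(G_0'/(Z_i\setminus Z'))$ (Observation~\ref{obs-GtoG0'}). Without these edges the vortex reattachment ``in the style of Lemma~\ref{lem-torsotw}'' does not go through cleanly, because the path-decomposition anchors $v_{i,1},\dots,v_{i,\ell_i}$ must form a path in the contracted surface graph for the standard glue-on argument to work (and Observation~\ref{obs-invortex} is needed to keep the diameter bounded across vortex faces). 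After that reduction the paper builds an explicit tree decomposition of $G_0'/(Z_i\setminus Z')$, defines $\kappa$-sets on the inner faces of each $\partial o_{i_j}$ with representative vertices for contracted components, and shows each torso is a minor of a graph $\varGamma^\kappa$ of bounded $w_\kappa$-weighted vertex-face diameter (Observations~\ref{obs-kappa}--\ref{obs-kappaminor}). Your one-paragraph sketch has the right shape but does not yet supply this $\kappa$/representative machinery, which is where conditions~\ref{item:rcd-key-3} and~\ref{item:rcd-key-4} of Lemma~\ref{lem-key} are actually cashed in.
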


Before proving the above lemma, we first observe that it implies a decomposition for (general) almost-embeddable graphs satisfying similar conditions, which is the following corollary.

\begin{corollary} \label{cor-decomp2}
Given a connected graph $G$ with an $h$-almost-embeddable structure in which the partial embedding is minimal, a number $p$, and a set $\varPhi \subseteq V(G)$ of size $c$, one can compute in polynomial time $p$ disjoint sets $Z_1,\dots,Z_p \subseteq V(G) \backslash (N_G[\mathsf{Vort}] \cup A)$ satisfying the following conditions (where $\mathsf{Vort} \subseteq V(G)$ consists of the vortex vertices and $A \subseteq V(G)$ is the apex set of $G$).
\begin{enumerate}[label = (\arabic*)]
    \item\label{item:decomp2-1} $\tw(G/(Z_i \backslash Z')) = O_{h,c}(p+|Z'|)$ for all $i \in [p]$ and $Z' \subseteq Z_i$.
    \item\label{item:decomp2-2} $\varPhi$ is contained in one connected component of $G'$, where $G'$ denotes the graph obtained from $G - \bigcup_{i=1}^p Z_i$ by deleting all edges $(a,v)$ where $a \in A$ and $v \in (\bigcup_{i=1}^p N_G(Z_i)) \backslash A$.
    \item\label{item:decomp2-3} $Z_i \cap N_G[(\varPhi \backslash A)] = \emptyset$ for all $i \in [p]$.
\end{enumerate}
\end{corollary}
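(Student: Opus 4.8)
The plan is to reduce, one component of $G-A$ at a time, to the apex-free case established in Lemma~\ref{lem-decomp1}. Let $A$ be the apex set of the given structure and let $H_1,\dots,H_\ell$ be the connected components of $G-A$. Since $A$ is disjoint from the embeddable part $G_0$ and from every vortex, and since $H_j\cap G_0$ is a union of connected components of $G_0$, Fact~\ref{fact-minimal} shows that $\eta$ restricts to a minimal partial embedding of $H_j\cap G_0$; assigning each vortex to the (unique) component of $G-A$ containing it — which we may do after the harmless step of splitting a disconnected vortex into its connected pieces — then equips each $H_j$ with an apex-free $O_h(1)$-almost-embeddable structure whose partial embedding is minimal, and one checks $\mathsf{Vort}=\bigsqcup_j\mathsf{Vort}_j$ (so that the membership requirement $Z_i\subseteq V(G)\setminus(N_G[\mathsf{Vort}]\cup A)$ will follow). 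If $A=\emptyset$ the corollary is exactly Lemma~\ref{lem-decomp1}, so I would assume $A\neq\emptyset$; then, as $G$ is connected, every $H_j$ has a neighbour in $A$.

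Next I would fix a \emph{global skeleton}: take any connected subgraph of $G$ containing $\varPhi\cup A$, replace it by a spanning tree, and repeatedly delete leaves not in $\varPhi\cup A$, obtaining a tree $T\subseteq G$ all of whose leaves lie in $\varPhi\cup A$. Thus $T$ has $O_{h,c}(1)$ leaves and branch vertices and $|V(T)\cap A|\le h$, so a degree count on $T$ bounds by $O_{h,c}(1)$ the total number of \emph{ports}, i.e.\ vertices of $\bigcup_j V(H_j)$ that are incident in $T$ to a vertex of $A$. For each $j$ set $\varPhi_j:=(\varPhi\cap V(H_j))\cup\{\text{ports of }T\text{ in }H_j\}$, so $\sum_j|\varPhi_j|=O_{h,c}(1)$, each $|\varPhi_j|=O_{h,c}(1)$, and $\varPhi_j\neq\emptyset$ for only $O_{h,c}(1)$ values of $j$. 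Apply Lemma~\ref{lem-decomp1} to each $H_j$ with this $\varPhi_j$ and the given $p$, obtaining disjoint sets $Z_1^{(j)},\dots,Z_p^{(j)}\subseteq V(H_j)\setminus N_{H_j}[\mathsf{Vort}_j\cup\varPhi_j]$, and define $Z_i:=\bigcup_j Z_i^{(j)}$.

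I would then verify the three conclusions, which is mostly bookkeeping. Since $Z_i\cap A=\emptyset$, and for $\phi\in\varPhi\cap V(H_j)\subseteq\varPhi_j$ the set $Z_i^{(j)}$ avoids $N_{H_j}[\phi]$ while the other $Z_i^{(j')}$ avoid $V(H_j)$, we get $Z_i\cap N_G[\varPhi\setminus A]=\emptyset$, which is \ref{item:decomp2-3}. For \ref{item:decomp2-1}: $Z_i\setminus Z'\subseteq V(G-A)$ and $G[Z_i\setminus Z']$ is the disjoint union of the graphs $H_j[Z_i^{(j)}\setminus Z']$, so $(G/(Z_i\setminus Z'))-A=\bigsqcup_j H_j/(Z_i^{(j)}\setminus Z')$, of treewidth $\max_j O_{h,|\varPhi_j|}(p+|Z'|)=O_{h,c}(p+|Z'|)$ by Lemma~\ref{lem-decomp1}\ref{item:decomp1-1}; re-adding the at most $h$ apices costs at most $h$ in treewidth. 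For \ref{item:decomp2-2}, the key observation is that $G'$ deletes only edges from $A$ to $\bigl(\bigcup_i N_G(Z_i)\bigr)\setminus A$, while a port $u\in\varPhi_j$ satisfies $N_G(u)\cap Z_i=\emptyset$ for every $i$ (its $G$-neighbours lie in $A$ or in $N_{H_j}(u)$, and $Z_i$ meets neither); hence every edge of $T$ with both endpoints in $A$ and every edge of $T$ from a port to $A$ survives in $G'$. By Lemma~\ref{lem-decomp1}\ref{item:decomp1-2}, for each $j$ with $\varPhi_j\neq\emptyset$ all of $\varPhi_j$ — in particular all ports of $T$ in $H_j$ and all of $\varPhi\cap V(H_j)$ — lies in a single connected component $K_j$ of $H_j-\bigcup_i Z_i^{(j)}=G'[V(H_j)]$. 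Contracting each such $K_j$ in the subgraph of $G'$ spanned by the $K_j$, the vertices $A\cap V(T)$, and the surviving $A$-internal and $A$-to-port edges of $T$, one obtains a quotient of the connected graph gotten from $T$ by contracting each component of $T\cap H_j$ (each such component is nonempty to a port since $A\neq\emptyset$); hence that subgraph of $G'$ is connected, and since it contains $\varPhi$, condition \ref{item:decomp2-2} holds.

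The crux — and the step I expect to need the most care — is \ref{item:decomp2-2}: because neither $G-A$ nor $A$ need be connected, one cannot simply take the union of the component-wise outputs, and the global tree $T$ is precisely the device that keeps the interface between the $H_j$'s and $A$ under control. What makes it work is that $T$ has only $O_{h,c}(1)$ ports, so feeding them into the local applications of Lemma~\ref{lem-decomp1} is affordable, together with the fact that $G'$ is defined to delete exactly the $A$-incident edges that port-protection renders irrelevant. A smaller but genuine technical point is checking that each $H_j$ inherits an apex-free minimal almost-embeddable structure, which rests on Fact~\ref{fact-minimal} and on the localization of vortices to components of $G-A$.
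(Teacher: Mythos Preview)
Your reduction is correct and follows the same overall plan as the paper: split $G-A$ into its components, endow each with an apex-free almost-embeddable structure with minimal partial embedding, apply Lemma~\ref{lem-decomp1} componentwise, and take unions. The difference lies in how you choose the local sets $\varPhi_j$. The paper avoids the global skeleton tree entirely: for each component $C_j$ and each apex $a\in N_G(C_j)$ it simply picks one neighbour of $a$ in $C_j$ (a ``projection image'') and sets $\varPhi_j=(\varPhi\cap C_j)\cup\{\text{projection images}\}$, giving $|\varPhi_j|\le c+h$ directly. The connectivity check for~\ref{item:decomp2-2} is then a one-liner: since every projection edge survives in $G'$, the graph $G'[A\cup\bigcup_j C_j']/(\bigcup_j C_j')$ contains $G/(\bigcup_j C_j)$ as a subgraph, and the latter is connected because $G$ is. Your tree $T$ and port-counting achieve the same end but with more moving parts; the paper's projection scheme is both shorter and yields a uniform bound $|\varPhi_j|\le c+h$ rather than a global $O_{h,c}(1)$ total. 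One minor point: your handling of the inherited almost-embeddable structure on $H_j$ (``splitting a disconnected vortex into its connected pieces'') is a bit informal; the paper defers this to \cite{BandyapadhyayLLSJ22}, where the embeddable part of $C_j$ is shown to be $C_j\cap G_0$ together with some isolated vertices, so that both clauses of Fact~\ref{fact-minimal} are used.
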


\begin{proof}
Let $G$ be the connected graph given in the corollary.
Denote by $\mathsf{Vort} \subseteq V(G)$ the set of vortex vertices in $G$ and $A \subseteq V(G)$ the apex set of $G$.
Suppose $G_0$ is the embeddable part of $G$, and $\eta: G_0 \rightarrow \varSigma$ is the minimal partial embedding into a surface $\varSigma$ of genus $g \leq h$.
Also, let $p$ and $\varPhi \subseteq V(G)$ be as in the corollary.

Let $C_1,\dots,C_r$ be the connected components of $G-A$, which are apex-free $h$-almost-embeddable graphs.
We say an edge $(a,a')$ of $G[A]$ is \emph{redundant} if $a,a' \in N_G(C_i)$ for some $i \in [r]$.
Note that if we remove all redundant edges from $G$, the resulting graph is still connected.
Now consider an index $j \in [r]$.
Since $C_j$ is a subgraph of $G$, as observed in \cite{BandyapadhyayLLSJ22} (last paragraph of the proof of Lemma~4 in the arxiv version), we can obtain a (apex-free) $3h$-almost-embeddable structure of $C_j$ (from the $h$-almost-embeddable structure of $G$) in which
\begin{enumerate}[label = (\roman*)]
    \item the underlying surface is still $\varSigma$;
    \item the embeddable part is $C_j \cap G_0$ plus some isolated vertices;
    \item the partial embedding restricted to $C_j \cap G_0$ is the same as $\eta$;
    \item the vortex vertices are those in $C_j \cap \mathsf{Vort}$.
\end{enumerate}
Conditions (ii) and (iii) guarantee that the partial embedding in the almost-embeddable structure of $C_j$ is minimal by Fact~\ref{fact-minimal}.
For each vertex $a \in N_G(C_j) \subseteq A$, we take a vertex $v \in C_j$ neighboring to $a$ in $G$; for convenience, we call $v$ the \emph{projection image} of $a$ in $C_j$ and $av$ the \emph{projection edge} of $a$ to $C_j$.
Let $\varPi_j \subseteq C_j$ be the set of all projection images in $C_j$ and $\varPhi_j = (\varPhi \cap C_j) \cup \varPi_j$.
Since $N_G(C_j) \subseteq A$, we have $|\varPi_j| \leq |A| \leq h$ and thus $|\varPhi_j| \leq c+h$.
Using Lemma~\ref{lem-decomp1}, we can compute disjoint sets $Z_1^{(j)},\dots,Z_p^{(j)} \subseteq C_j \backslash N_G[\mathsf{Vort} \cup \varPhi_j]$ such that\
\begin{enumerate}[label = (\arabic*)]
 \item $\tw(C_j/(Z_i^{(j)} \backslash Z')) = O_{h,c}(p+|Z'|)$ for all $i \in [p]$ and $Z' \subseteq Z_i$, and
 \item $\varPhi$ is in one connected component of $C_j - \bigcup_{i=1}^p Z_i^{(j)}$.
\end{enumerate}
We then define $Z_i = \bigcup_{j=1}^r Z_i^{(j)}$ for all $i \in [p]$.
It is clear from our construction that $Z_1,\dots,Z_p \subseteq V(G) \backslash (N_G[\mathsf{Vort}] \cup A)$ and these sets are disjoint.
So it suffices to verify that $Z_1,\dots,Z_p$ satisfy conditions \ref{item:decomp2-1}-\ref{item:decomp2-3} in the corollary.

To verify \ref{item:decomp2-1}, consider an index $i \in [p]$ and a subset $Z' \subseteq Z_i$.
We have $\tw(G/(Z_i \backslash Z')) \leq \tw((G-A)/(Z_i \backslash Z')) + |A|$.
As $|A| \leq h$, we only need to bound $\tw((G-A)/(Z_i \backslash Z'))$.
Note that $(G-A)/(Z_i \backslash Z')$ is the disjoint union of the graphs $C_1/(Z_i^{(1)} \backslash Z'),\dots,C_r/(Z_i^{(r)} \backslash Z')$, each of which is of treewidth $O_{h,c}(p+|Z'|)$ by \ref{item:decomp1-1} of Lemma~\ref{lem-decomp1}.
Therefore, $\tw((G-A)/(Z_i \backslash Z')) = O_{h,c}(p+|Z'|)$.

To verify \ref{item:decomp2-2}, let $G'$ be the graph obtained from $G - \bigcup_{i=1}^p Z_i$ by deleting all edges $(a,v)$ where $a \in A$ and $v \in N_G(Z_i) \backslash A$.
For each $j \in [r]$, let $C_j'$ be the connected component of $C_j - \bigcup_{i=1}^p Z_i^{(j)}$ containing the vertices in $\varPhi_j$.
Note that if a vertex $a \in A$ is neighboring to $C_j$, then it is neighboring to $C_j'$ as $\varPhi_j$ contains the projection image of $a$ in $C_j$.
Also, the projection edge of $a$ to $C_j$ preserves in $G'$, because $Z_i^{(j)} \cap N_{C_j}[\varPhi_j] = \emptyset$ by Lemma~\ref{lem-decomp1} and thus the projection image of $a$ in $C_j$ is not contained in $N_G(Z_i) \backslash A$.
Therefore, the graph $G'[A \cup (\bigcup_{j=1}^r C_j')]/(\bigcup_{j=1}^r C_j')$ is isomorphic to $G/(\bigcup_{j=1}^r C_j)$.
Since $G$ is connected, $G/(\bigcup_{j=1}^r C_j)$ is also connected by Fact~\ref{fact-quotient} and thus $G'[A \cup (\bigcup_{j=1}^r C_j')]/(\bigcup_{j=1}^r C_j')$ is connected.
Using Fact~\ref{fact-quotient} again, we can deduce that $G'[A \cup (\bigcup_{j=1}^r C_j')]$ is connected, which implies that $\varPhi$ is contained in one connected component of $G'$.

To verify \ref{item:decomp2-3} is straightforward.
Consider a vertex $v \in \varPhi \backslash A$, which belongs to $C_j$ for some $j \in [r]$.
Then $v \in \varPhi^{(j)}$.
By Lemma~\ref{lem-decomp1}, for all $i \in [p]$, $Z_i^{(j)} \cap N_{C_j}[v] = \emptyset$ and thus $Z_i^{(j)} \cap N_G[v] = \emptyset$.
For $j' \in [r]$ other than $j$, it is clear that $Z_i^{(j')} \cap N_G[v] = \emptyset$.
Therefore, $Z_i \cap N_G[v] = \emptyset$.
\end{proof}

The rest of this section is dedicated to proving Lemma~\ref{lem-decomp1}.
Let $G$ be the graph as in the lemma.
Suppose $G_0$ is the embeddable part of $G$ and $\eta\colon G_0 \rightarrow \varSigma$ is the minimal partial embedding where $\varSigma$ is a surface of genus $g \leq h$.
We view $(G_0,\eta)$ as a $(\varSigma,x_0)$-embedded graph by (arbitrarily) picking a reference point $x_0 \in \varSigma$ not in the image of $\eta$.
Denote by $\mathsf{Vort} \subseteq V(G)$ the set of vortex vertices of $G$.
Also, let $p$ and $\varPhi$ be as in the lemma.
For each face $f \in F_\eta(G_0)$, we denote by $\tilde{\partial} f$ the union of $\partial f$ and the vortices of $G$ contained in $f$.
Clearly, if $f$ is not a vortex face, then $\tilde{\partial} f = \partial f$.

\begin{observation} \label{obs-vortexfaceconn}
$\tilde{\partial} f$ is connected for all $f \in F_\eta(G_0)$.
\end{observation}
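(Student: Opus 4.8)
The plan is to fix a face $f \in F_\eta(G_0)$, assume for contradiction that $\tilde{\partial} f$ is disconnected, and combine the minimality of $\eta$ (which controls the topology of the region $f$) with the connectivity of $G$. Write $\mathcal{V}_f$ for the set of vortices of $G$ contained in $f$, i.e.\ those whose disk $D_i$ lies in $\bar f$, so that $\tilde{\partial} f = \partial f \cup \bigcup_{G_i \in \mathcal{V}_f} G_i$. First I would record a normalization: if $G_i \in \mathcal{V}_f$, then every attachment vertex of $G_i$ (a vertex of $V(G_0)\cap V(G_i)$) lies on $\partial D_i \subseteq \bar f$ and, being a vertex of $G_0$, is incident to $f$; hence $V(G_0)\cap V(G_i)\subseteq V(\partial f)$, and therefore $V(\tilde{\partial} f)\cap V(G_0) = V(\partial f)$. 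A consequence is that any connected component of $\tilde{\partial} f$ disjoint from $\partial f$ would lie inside a single vortex and use no attachment vertex, hence would be a connected component of $G$; since $G$ is connected this forces $\tilde{\partial} f = G$, which is connected, and we are done. So from now on I may assume every connected component of $\tilde{\partial} f$ meets $\partial f$.

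The heart of the argument is the following topological claim, which is where minimality is used: the connected components of $\varSigma \setminus f^{\circ}$ are in bijection with the connected components of $\partial f$, each component of $\varSigma \setminus f^{\circ}$ containing all of exactly one component of $\partial f$. To prove it I would use that $f$ is a face of $(\partial f,\eta)$, let $g_1,\dots,g_t$ be its other faces, so that $\varSigma\setminus f^{\circ} = \eta(\partial f)\sqcup g_1^{\circ}\sqcup\cdots\sqcup g_t^{\circ}$. For each $\ell$, $\partial g_\ell$ is a subgraph of $\partial f$, and the definition of minimal embedding (applied with $o=f$) says $V(\partial g_\ell)$ lies in a single connected component $Q(g_\ell)$ of $\partial f$; hence $\eta(\partial g_\ell)\subseteq \eta(Q(g_\ell))$. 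Grouping the faces $g_\ell$ by the value of $Q(g_\ell)$ and setting, for each connected component $Q$ of $\partial f$, $R_Q := \eta(Q)\cup\bigcup_{Q(g_\ell)=Q} g_\ell^{\circ}$, one verifies that the sets $R_Q$ are nonempty, pairwise disjoint, connected (each $g_\ell^{\circ}$ is glued to $\eta(\partial g_\ell)\subseteq\eta(Q)$), and clopen in $\varSigma\setminus f^{\circ}$; thus they are precisely its connected components.

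Finally, to derive the contradiction, I would, assuming $\tilde{\partial} f$ disconnected, pick — over all pairs of distinct connected components of $\tilde{\partial} f$ — a shortest path $P$ in $G$ joining two of them. A short case analysis, using that an edge of $G_0$ incident to $f$ lies in $\partial f$, that a vortex edge puts both its endpoints into that vortex, and that every edge of a vortex in $\mathcal{V}_f$ lies in $\tilde{\partial} f$, shows that the endpoints of $P$ lie in $V(\partial f)$, that the internal vertices of $P$ lie outside $V(\tilde{\partial} f)$, and hence that every maximal subpath of $P$ running inside a vortex is inside a vortex not in $\mathcal{V}_f$ — i.e.\ inside a vortex whose disk lies in a face $\ne f$ and so in $\varSigma \setminus f^{\circ}$. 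Routing each such vortex subpath through the corresponding disk turns $P$ into a walk in $\varSigma\setminus f^{\circ}$ between the images of its two endpoints; by the claim of the previous paragraph these two endpoints then lie in the same connected component of $\partial f$, hence in the same connected component of $\tilde{\partial} f$, contradicting the choice of $P$.

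I expect the topological claim in the second paragraph to be the main obstacle: turning the combinatorial minimality condition into a statement about connected components of $\varSigma \setminus f^{\circ}$, and in particular treating carefully the edges of $\partial f$ that are incident to $f$ on both sides (so that $f$ genuinely is a face of $(\partial f,\eta)$ and the $R_Q$ behave as claimed), is the delicate point; the normalization step and the concluding step are routine once it is available.
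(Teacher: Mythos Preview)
Your proof is correct and rests on the same idea as the paper's --- using minimality to attach each face $g_\ell \in F_\eta(\partial f)\setminus\{f\}$ to a unique connected component of $\partial f$ --- but the execution is genuinely different. The paper stays purely combinatorial: it fixes a bipartition $V_1\sqcup V_2$ of $V(\tilde\partial f)$ with no crossing edges, then classifies \emph{every} vertex of $G$ as type-1 or type-2 (a vertex outside $\tilde\partial f$ lies in some $f'\in F_\eta(\partial f)\setminus\{f\}$, and by minimality $V(\partial f')$ sits in one component $C$ of $\partial f$, which in turn lies in one of $V_1,V_2$), and checks edge by edge that no edge of $G$ crosses types. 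You instead package the same association into a topological lemma about the component structure of $\varSigma\setminus f^\circ$ and finish with a shortest-path argument routed through the surface. Your intermediate lemma is slightly stronger (it identifies all components of $\varSigma\setminus f^\circ$, not just a bipartition), but that extra structure is not needed here and costs some care --- verifying that the $R_Q$ are clopen, and handling the routing of vortex subpaths through their disks. The paper's direct edge check avoids this overhead.

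Two minor slips worth tightening: in the normalization step you only obtain $V(\tilde\partial f)=V(G)$, not $\tilde\partial f=G$ as graphs, though this is enough since the hypothetical component already exhausts $V(\tilde\partial f)$; and in the routing step you should say explicitly that a vortex vertex $w\in V(G_i)\setminus V(G_0)$ on $P$ is assigned an arbitrary point of $D_i$, so that the resulting walk in $\varSigma\setminus f^\circ$ is well-defined before you invoke the topological claim.
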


\begin{proof}
Assume $\tilde{\partial} f$ is not connected.
We partition $V(\tilde{\partial} f)$ into two subsets $V_1$ and $V_2$ such that there exists no edge in $\tilde{\partial} f$ between $V_1$ and $V_2$.
Next, we classify all vertices of $G$ as \emph{type-1} vertices and \emph{type-2} vertices as follows.
A vertex in $V(\tilde{\partial} f)$ is of type-1 (resp., type-2) if it is contained in $V_1$ (resp., $V_2$).
Consider a vertex $v \in V(G) \backslash V(\tilde{\partial} f)$.
If $v$ is a vertex of $G_0$, then its image under $\eta$ is contained in some $f' \in F_\eta(\partial f) \backslash \{f\}$.
Since $\eta$ is a minimal embedding, $V(\partial f')$ is contained in one connected component $C$ of $\partial f$, and $C$ in turn contained in either $V_1$ or $V_2$.
We classify $v$ as type-1 (resp., type-2) if $C$ is contained in $V_1$ (resp., $V_2$).
If $v$ is a vortex vertex, then the vortex $v$ belongs to is contained in a face of $(G,\eta)$ other than $f$, which is in turn contained in some $f' \in F_\eta(\partial f) \backslash \{f\}$.
Similarly, we classify $v$ as type-1 (resp., type-2) if the connected component of $\partial f$ containing $V(\partial f')$ is contained in $V_1$ (resp., $V_2$).
One can easily check that there is no edge in $G$ between a type-1 vertex and a type-2 vertex.
Indeed, the edges of $\tilde{\partial} f$ do not connect type-1 vertices with type-2 vertices, as there exists no edge in $\tilde{\partial} f$ between $V_1$ and $V_2$.
The other edges of $G$ are either in $E(G_0) \backslash E(\partial f)$ or in vortices not contained in $f$.
An edge $e \in E(G_0) \backslash E(\partial f)$ has its image in some face $f' \in F_\eta(\partial f) \backslash \{f\}$.
If the connected component of $\partial f$ containing $V(\partial f')$ is contained in $V_1$ (resp., $V_2$), then the two endpoints $e$ are both type-1 (resp., type-2) vertices.
If $e$ is an edge in a vortex not contained in $f$, then that vortex is contained in a face of $(G,\eta)$ other than $f$, which is in turn contained in some $f' \in F_\eta(\partial f) \backslash \{f\}$.
Again, if the connected component of $\partial f$ containing $V(\partial f')$ is contained in $V_1$ (resp., $V_2$), then the two endpoints $e$ are both type-1 (resp., type-2) vertices.
Therefore, no edge in $G$ is between a type-1 vertex and a type-2 vertex.
Note that there exist at least one type-1 vertex and one type-2 vertex, because $V_1 \neq \emptyset$ and $V_2 \neq \emptyset$.
It follows that $G$ is not connected, contradicting with our assumption.
As such, $\tilde{\partial} f$ must be connected.
\end{proof}

In order to construct the sets $Z_1,\dots,Z_p$, consider the radial layering $L_1,\dots,L_m$ of $(G_0,\eta)$.
We classify these layers as bad and good layers as follows.
The first layer $L_1$ is always bad.
For $i \geq 2$, the layer $L_i$ is bad if $N_G[\varPhi] \cap L_i \neq \emptyset$ or $N_G[\mathsf{Vort}] \cap L_i \neq \emptyset$, and is good otherwise.
By Fact~\ref{fact-diff1}, if $N_G[\varPhi] \cap L_i \neq \emptyset$, then $L_{i-1} \cup L_i \cup L_{i+1}$ contains at least one vertex in $\varPhi$.
Therefore, there are at most $3|\varPhi| = 3c$ indices $i \in [m]$ such that $N_G[\varPhi] \cap L_i \neq \emptyset$.
Again by Fact~\ref{fact-diff1}, each vortex face $f \in F_\eta(G_0)$ is incident to at most two consecutive radial layers $L_i,L_{i+1}$, and hence the vertices in the vortices contained in $f$ are neighboring to at most four radial layers $L_{i-1},L_i,L_{i+1},L_{i+2}$.
Since the number of vortex faces is at most $h$, there are at most $4h$ indices $i \in [m]$ such that $N_G[\mathsf{Vort}] \cap L_i \neq \emptyset$.
It follows that the number of bad layers is at most $3c+4h+1$.

Before defining $Z_1,\dots,Z_p$, we first figure out which vertices of $V(G)$ are not included in any $Z_i$ and guarantee that these vertices connect $\varPhi$.
To this end, we shall iteratively define a sequence of sets $X_m,\dots,X_1$ of vertices in $G$, where $X_i \subseteq L_i$.
Suppose $X_{t+1},\dots,X_m$ have already been constructed, and we now construct $X_t$.
If $L_t$ is a bad layer, we simply set $X_t = L_t$.
Otherwise, $L_t$ is a good layer and we construct $X_t$ as follows.
Let $\mathcal{C}$ denote the set of connected components of $G[\mathsf{Vort} \cup (\bigcup_{i=t+1}^m X_i)]$.
For $C \in \mathcal{C}$ with $C \cap \varPhi \neq \emptyset$ and $N_G(C) \cap L_t \neq \emptyset$, we pick a vertex $v_C \in N_G(C) \cap L_t$.
Set $\varPhi_t = \{v_C \mid C \in \mathcal{C}\}$.
Observe that $|\varPhi_t| \leq |\varPhi| = c$.
Indeed, since $L_t$ is good, $L_t \cap \mathsf{Vort} = \emptyset$ and thus $L_t \cap (\mathsf{Vort} \cup (\bigcup_{j=i+1}^m X_j)) = \emptyset$.
Thus, the number of components $C \in \mathcal{C}$ with $C \cap \varPhi \neq \emptyset$ is at most $|\varPhi \backslash L_t|$, which implies $|\varPhi_t| \leq |\varPhi \backslash L_t| \leq |\varPhi| = c$.
As $L_t$ is a good layer, all faces of $(G_0,\eta)$ incident to $L_t$ do not contain vortices.
Since $\eta$ is a minimal embedding, by Observation~\ref{obs-vortexfaceconn}, this further implies that every face of $(G_0,\eta)$ incident to $L_t$ has a connected boundary.
Thus, we can apply Lemma~\ref{lem-key} on $G_0$ with $\varPhi = \varPhi_t$ to obtain a set $X \subseteq L_t$ containing $\varPhi_t$ and another set $L_t^+ \subseteq L_t$ satisfying the properties in the lemma.
We then set $X_t = X$.

By our construction, it is clear that $\varPhi \subseteq \mathsf{Vort} \cup (\bigcup_{i=1}^m X_i)$.
We then prove that all vertices in $\varPhi$ are contained in the same connected component of $G[\mathsf{Vort} \cup (\bigcup_{i=1}^m X_i)]$.
This is achieved by two steps.
In the first step, we show that every connected component of $G[\mathsf{Vort} \cup (\bigcup_{i=1}^m X_i)]$ which contains at least one vertex in $\varPhi$ must intersect $L_1$.
Note that $L_1$ is a bad layer and hence $L_1 = X_1 \subseteq \mathsf{Vort} \cup (\bigcup_{i=1}^m X_i)$.
In the second step, we then show that all vertices in $L_1$ lie in the same connected component of $G[\mathsf{Vort} \cup (\bigcup_{i=1}^m X_i)]$.

To do the first step, consider a connected component $C$ of $G[\mathsf{Vort} \cup (\bigcup_{i=1}^m X_i)]$ which contains at least one vertex in $\varPhi$.
We want to show $L_1 \cap C \neq \emptyset$.
Let $t$ be the smallest index such that $L_t \cap C \neq \emptyset$.
Assume $t>1$ for a contradiction.

\begin{observation}
$N_G(C) \cap L_{t-1} \neq \emptyset$.
\end{observation}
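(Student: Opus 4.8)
The plan is to pick a vertex $v\in C\cap L_t$ and to produce a neighbour of $C$ in $L_{t-1}$ by walking out of $v$ along a single face of $(G_0,\eta)$ that lies between the layers $L_{t-1}$ and $L_t$. Since $t\ge 2$ and $v\in L_t$, a shortest vertex-face path from $v$ to the outer face of $(G_0,\eta)$ has the form $v=v_t,\,f_{t-1},\,v_{t-1},\,f_{t-2},\dots$ with $v_i\in L_i$; in particular it exhibits a face $f:=f_{t-1}$ of $(G_0,\eta)$ incident to both $v$ and to some $v'\in L_{t-1}$. Because $v\in L_t$ and $v'\in L_{t-1}$ are both incident to $f$, Fact~\ref{fact-diff1} forces $V(\partial f)\subseteq L_{t-1}\cup L_t$. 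I then split into two cases depending on whether $L_t$ is good or bad.

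If $L_t$ is good, the conclusion is immediate from Lemma~\ref{lem-key}. Good layers satisfy $\mathsf{Vort}\cap L_t=\emptyset$, and $C\subseteq\mathsf{Vort}\cup\bigcup_i X_i$, so $v\in C\cap L_t\subseteq X_t$. Since $X_t$ was obtained by applying Lemma~\ref{lem-key} to $(G_0,\eta)$, condition~\ref{item:rcd-key-1} says that the connected component $D$ of $G_0[X_t]$ containing $v$ is neighbouring to $L_{t-1}$ in $G_0$. As $D$ is connected, contained in $\mathsf{Vort}\cup\bigcup_i X_i$, and meets $C$ at $v$, we get $D\subseteq C$, hence $N_G(C)\cap L_{t-1}\supseteq N_{G_0}(D)\cap L_{t-1}\neq\emptyset$.

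If $L_t$ is bad, then $X_t=L_t$, so every vertex of $L_t$ and every vortex vertex belongs to the graph $H:=G[\mathsf{Vort}\cup\bigcup_i X_i]$. Here I route through the face subgraph $\widetilde{\partial} f$ (the boundary $\partial f$ together with the vortices contained in $f$), which is connected by Observation~\ref{obs-vortexfaceconn} — this is precisely where the minimality of $\eta$ is used, since $\partial f$ alone may be disconnected. Both $v$ and $v'$ lie on $\widetilde{\partial} f$, so take a path $Q$ in $\widetilde{\partial} f$ from $v$ to $v'$. Since $V(\widetilde{\partial} f)\subseteq V(\partial f)\cup\mathsf{Vort}\subseteq L_{t-1}\cup L_t\cup\mathsf{Vort}$, we may let $w$ be the first vertex of $Q$ (starting at $v$) lying in $L_{t-1}$ and $w'$ its predecessor on $Q$. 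Every vertex of $Q$ strictly before $w$ lies in $L_t\cup\mathsf{Vort}=X_t\cup\mathsf{Vort}\subseteq V(H)$ and is joined to $v$ by an initial segment of $Q$, which is a path in $G$; hence all of these vertices — in particular $w'$ — lie in the component $C$ of $H$. As $w\in L_{t-1}$ and $C\cap L_{<t}=\emptyset$ we have $w\notin C$, and $w'w\in E(\widetilde{\partial} f)\subseteq E(G)$, so $w\in N_G(C)\cap L_{t-1}$.

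The main obstacle is the bad-layer case: one must work with $\widetilde{\partial} f$ rather than $\partial f$ (a vortex may sit inside $f$, and $\partial f$ may be disconnected), verify that $\widetilde{\partial} f$ is genuinely a subgraph of $G$ whose layer-$L_t$ vertices and vortex vertices all lie in $H$, and then check that the prefix of $Q$ is absorbed into the single component $C$. The existence of the intermediate face $f$ with $V(\partial f)\subseteq L_{t-1}\cup L_t$ via a shortest vertex-face path, as well as the good-layer case via Lemma~\ref{lem-key}, are routine by comparison.
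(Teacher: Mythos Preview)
Your proof is correct and follows essentially the same approach as the paper. Both split into the good/bad layer cases, handle the good case via condition~\ref{item:rcd-key-1} of Lemma~\ref{lem-key}, and handle the bad case by exploiting the connectedness of $\widetilde{\partial} f$ from Observation~\ref{obs-vortexfaceconn}; the only cosmetic difference is that in the bad case the paper uses a cut argument (pick any $u'\in V(\widetilde{\partial} f)\setminus C$ adjacent to $V(\widetilde{\partial} f)\cap C$, then note $u'\notin \mathsf{Vort}\cup L_t$ since $C$ is a full component, forcing $u'\in L_{t-1}$), whereas you trace an explicit path and take the first vertex hitting $L_{t-1}$.
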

\begin{proof}
If $L_t$ is a good layer, then $\mathsf{Vort} \cap L_t = \emptyset$.
Therefore, we have $L_t \cap C = X_t \cap C$, which implies $X_t \cap C \neq \emptyset$.
So $C$ contains at least one connected component of $G[X_t] = G_0[X_t]$.
By \ref{item:rcd-key-1} of Lemma~\ref{lem-key}, every connected component of $G_0[X_t]$ is neighboring to $L_{t-1}$, which implies $N_G(C) \cap L_{t-1} \neq \emptyset$.
If $L_t$ is a bad layer, then $X_t = L_t$.
Pick an arbitrary vertex $u \in L_t \cap C$.
As $t>1$, there must exist a face $f \in F_\eta(G_0)$ incident to $u$ and $L_{t-1}$.
Define $S = V(\tilde{\partial} f) \cap C$.
Note that $S \neq \emptyset$ as $u \in S$, and $V(\tilde{\partial} f) \backslash S \neq \emptyset$ as $L_{t-1} \cap V(\tilde{\partial} f) \neq \emptyset$ and $L_{t-1} \cap S \subseteq L_{t-1} \cap C = \emptyset$.
By Observation~\ref{obs-vortexfaceconn}, $\tilde{\partial} f$ is connected.
So there exists $u' \in V(\tilde{\partial} f) \backslash S$ that is neighboring to $S$ in $\tilde{\partial} f$ (and thus in $G$).
It follows that $u' \in N_G(C)$.
But $C$ is a connected component of $G[\mathsf{Vort} \cup (\bigcup_{i=1}^m X_i)]$, which implies $u' \notin \mathsf{Vort} \cup (\bigcup_{i=1}^m X_i)$ and in particular $u' \notin \mathsf{Vort} \cup X_t = \mathsf{Vort} \cup L_t$.
Since $f$ is incident to $u$ and $L_{t-1}$, we have $V(\partial f) \subseteq L_{t-1} \cup L_t$ by Fact~\ref{fact-diff1}, and hence $V(\tilde{\partial} f) \subseteq \mathsf{Vort} \cup L_t \cup L_{t-1}$.
This implies that $u' \in L_{t-1}$ and $N_G(C) \cap L_{t-1} \neq \emptyset$.
\end{proof}

Based on the observation $N_G(C) \cap L_{t-1} \neq \emptyset$, we deduce a contradiction as follows.
If $L_{t-1}$ is a bad layer, then $X_{t-1} = L_{t-1}$ and thus $N_G(C) \cap X_{t-1} \neq \emptyset$, which contradicts with the fact that $C$ is a connected component of $G[\mathsf{Vort} \cup (\bigcup_{i=1}^m X_i)]$.
If $L_{t-1}$ is a good layer, recall the set $\varPhi_{t-1}$ we define when constructing $X_{t-1}$.
As $C$ is disjoint from $L_1,\dots,L_{t-1}$, $C$ is in fact a connected component of $G[\mathsf{Vort} \cup (\bigcup_{i=t}^m X_i)]$.
Also, we have $C \cap \varPhi \neq \emptyset$ by assumption and $N_G(C) \cap L_{t-1} \neq \emptyset$ as shown above.
Therefore, $\varPhi_{t-1}$ includes a vertex $v_C \in N_G(C) \cap L_{t-1}$.
By construction, we have $\varPhi_{t-1} \subseteq X_{t-1}$ and thus $v_C \in X_{t-1}$.
This implies $N_G(C) \cap X_{t-1} \neq \emptyset$, which contradicts with the fact that $C$ is a connected component of $G[\mathsf{Vort} \cup (\bigcup_{i=1}^m X_i)]$.
In both cases, we have contradictions, so the assumption $t>1$ is wrong.
It follows that every connected component of $G[\mathsf{Vort} \cup (\bigcup_{i=1}^m X_i)]$ containing at least one vertex in $\varPhi$ intersects $L_1$.

We now do the second step, where we want to show that all vertices in $L_1$ lie in the same connected component of $G[\mathsf{Vort} \cup (\bigcup_{i=1}^m X_i)]$.
We have $L_1 = V(\partial o)$, where $o$ is the outer face of $G_0$.
It follows that $V(\tilde{\partial} o) \subseteq \mathsf{Vort} \cup L_1 = \mathsf{Vort} \cup X_1$.
Thus, $\tilde{\partial} o$ is a subgraph of $G[\mathsf{Vort} \cup (\bigcup_{i=1}^m X_i)]$, which is connected by Observation~\ref{obs-vortexfaceconn}.
Combining this with the first step, we can conclude that all vertices in $\varPhi$ are contained in the same connected component of $G[\mathsf{Vort} \cup (\bigcup_{i=1}^m X_i)]$.

Next, we are ready to construct the sets $Z_1,\dots,Z_p$.
Set $\Delta = p+3c+4h+1$.
We say a number $q \in [\Delta]$ is \emph{bad} if $q$ is congruent to the index of a bad layer modulo $\Delta$, i.e., $q \equiv i \pmod{\Delta}$ for some $i \in [m]$ such that $L_i$ is a bad layer, and is \emph{good} otherwise.
As argued before, there are at most $3c+4h+1$ bad layers, and therefore there are at most $3c+4h+1$ bad numbers in $[\Delta]$.
So we can always find $p$ good numbers $q_1,\dots,q_p \in [\Delta]$.
We then construct the sets $Z_1,\dots,Z_p$ by simply defining $Z_i$ as the union of $L_q \backslash (X_q \cup L_q^+)$ for all indices $q \in [m]$ congruent to $q_i$ modulo $\Delta$, i.e.,
\begin{equation}
 \label{eq:zi}
 Z_i \coloneqq \bigcup_{j=0}^{\lfloor(m-q_i)/\Delta\rfloor} (L_{j\Delta+q_i} \backslash (X_{j\Delta+q_i} \cup L_{j\Delta+q_i}^+)).
\end{equation}
Since the layers $L_1,\dots,L_m$ are disjoint, the sets $Z_1,\dots,Z_p$ are also disjoint.
Also, $Z_1,\dots,Z_p \subseteq V(G) \backslash N_G[\mathsf{Vort}]$, because $Z_1,\dots,Z_p$ only contain vertices in good layers.
Furthermore, $\varPhi$ is contained in one connected component of $G - \bigcup_{i=1}^p Z_i$, because $\mathsf{Vort} \cup (\bigcup_{i=1}^m X_i) \subseteq V(G) \backslash \bigcup_{i=1}^p Z_i$ and all vertices in $\varPhi$ are contained in the same connected component of $G[\mathsf{Vort} \cup (\bigcup_{i=1}^m X_i)]$.

Finally, we show that $\tw(G/(Z_i \backslash Z')) = O_{h,c}(p+|Z'|)$ for all $i \in [p]$ and $Z' \subseteq Z_i$, which is the most complicated part in this proof.
Without loss of generality, it suffices to show $\tw(G/(Z_1 \backslash Z')) = O_{h,c}(p+|Z'|)$ for all $Z' \subseteq Z_1$.
Let $G_1,\dots,G_r$ be the vortices of $G$ attached to disjoint facial disks $D_1,\dots,D_r$ in $(G_0,\eta)$ with witness pairs $(\tau_1,\mathcal{P}_1),\dots,(\tau_r,\mathcal{P}_r)$, where $r \leq h$.
We call the faces of $(G_0,\eta)$ containing $D_1,\dots,D_r$ \emph{vortex faces}.
We first add some ``virtual'' edges to $G_0$ as follows.
Consider an index $i \in [r]$.
Suppose $\tau_i = (v_{i,1},\dots,v_{i,\ell_i})$.
By definition, $v_{i,1},\dots,v_{i,\ell_i}$ are the vertices of $(G_0,\eta)$ that lie on the boundary of $D_i$, sorted in clockwise or counterclockwise order.
For convenience, we write $v_{i,0} = v_{i,\ell_i}$.
We then add the edges $(v_{i,j-1},v_{i,j})$ for all $j \in [\ell_i]$ to $G_0$, and call them \emph{virtual} edges.
Furthermore, we draw these virtual edges along the boundary of the disk $D_i$ (this is possible because $v_1,\dots,v_{\ell_i}$ are sorted along the boundary of $D_i$).
The images of these virtual edges then enclose the disk $D_i$.
We do this for all indices $i \in [r]$.
Let $G_0'$ denote the resulting graph after adding the virtual edges.
Since $D_1,\dots,D_r$ are disjoint facial disks in $(G_0,\eta)$, the images of the virtual edges do not cross each other or cross the original edges in $(G_0,\eta)$.
Therefore, the drawing of the virtual edges extends $\eta$ to an embedding of $G_0'$ to $\varSigma$; for simplicity, we still use the notation $\eta$ to denote this embedding.
By construction, it is clear that $D_1,\dots,D_r$ are faces of $(G_0',\eta)$.
Note that every face of $(G_0,\eta)$ that does not contain any vortices is preserved in $(G_0',\eta)$, while every vortex face is subdivided into multiple faces in $(G_0',\eta)$ by the virtual edges.
The next observation states that the part of $(G_0',\eta)$ inside each vortex face has a constant vertex-face diameter.

\begin{observation}\label{obs-invortex}
Let $f \in F_\eta(G_0)$ be a vortex face and $F \subseteq F_\eta(G_0')$ consist of the faces of $(G_0',\eta)$ contained in $f$.
Then for any two vertices $v,v' \in V(\partial f)$, there exists a VFA path in $(G_0',\eta)$ from $v$ to $v'$ of length $O(h)$ that does not visit any face in $F_\eta(G_0') \backslash F$.
\end{observation}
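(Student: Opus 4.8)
The plan is to use the vortex disks that lie inside the vortex face $f$ as ``hubs'' through which to route the VFA path. Write $D_1,\dots,D_{r'}$ (with $1\le r'\le r\le h$) for the vortex disks contained in $f$, let $C_j$ be the virtual cycle bounding $D_j$, and let $v_{j,1},\dots,v_{j,\ell_j}$ be the vertices of $C_j$; recall that each $D_j$ is itself a face of $(G_0',\eta)$ and hence belongs to $F$, so any two vertices of $C_j$ are joined by a VFA path of length $2$ through $D_j$. I would first record three elementary facts. First, each $v_{j,k}$ is a vertex of $G_0$ whose $\eta$-image lies in the closed face $f$, so $v_{j,k}\in V(\partial f)$; consequently every face in $F$ is incident only to vertices of $\partial f$, and so any VFA path of $(G_0',\eta)$ that uses only faces of $F$ automatically stays inside $V(\partial f)\cup F$. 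Second, every vertex $v\in V(\partial f)$ is incident to at least one face of $F$, since the open face $f$ of $(G_0,\eta)$ touches $v$ and is cut by the virtual cycles exactly into the faces of $F$. Third, every face of $F$ is incident to a vertex of some $C_j$: this is immediate for the disks $D_j$, and for a component $\phi$ of $\operatorname{int}(f)\setminus\bigcup_j D_j$ one checks that $\phi$ must touch some $D_j$ --- otherwise $\phi$ would be clopen in the connected set $\operatorname{int}(f)$, hence equal to $\operatorname{int}(f)$, contradicting $\operatorname{int}(D_1)\subseteq\operatorname{int}(f)$ --- and a face touching $D_j$ is incident to a vertex of $C_j$ because it then runs along the side of some edge of $C_j$ opposite $D_j$.

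The heart of the proof is the claim that the auxiliary graph $W$ with vertex set $\{C_1,\dots,C_{r'}\}$, and an edge $C_jC_{j'}$ whenever some face of $F$ is incident both to a vertex of $C_j$ and to a vertex of $C_{j'}$, is connected. I would argue this topologically. Suppose $W$ decomposes as $\mathcal W_1\sqcup\mathcal W_2$ with both parts nonempty and no edge between them. By the third fact above and the definition of $W$, every component of $\operatorname{int}(f)\setminus\bigcup_j D_j$ touches disks from exactly one of the two parts; call it type-$1$ or type-$2$ accordingly (so a disk $D_j$ with $C_j\in\mathcal W_i$ is adjacent only to type-$i$ components). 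Let $R_i$ be the closure of the union of the $\mathcal W_i$-disks and the type-$i$ components. Then $R_1\cup R_2=f$, while $R_1\cap R_2\cap\operatorname{int}(f)=\varnothing$: two vortex disks are disjoint; a $\mathcal W_1$-disk meets only type-$1$ components; and a type-$1$ and a type-$2$ component, were they to meet inside $f$, would have to meet on some disk $D_j$, forcing $C_j$ into both parts. Hence $R_1\cap\operatorname{int}(f)$ and $R_2\cap\operatorname{int}(f)$ would be disjoint, nonempty, and closed in $\operatorname{int}(f)$, contradicting connectedness of the open face $f$. Therefore $W$ is connected, and since $|V(W)|=r'\le h$ its diameter is at most $h-1$.

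Given the claim, the path is assembled routinely. For $v,v'\in V(\partial f)$, use the second fact to pick faces $\phi_v,\phi_{v'}\in F$ incident to $v$ and $v'$, and the third fact to pick a vertex $a\in V(\phi_v)$ lying on some cycle $C_{j_1}$ and a vertex $a'\in V(\phi_{v'})$ lying on some cycle $C_{j_2}$. Take a path $C_{j_1}=W_0,W_1,\dots,W_s=C_{j_2}$ in $W$ with $s\le h-1$; for each $t$ the edge $W_{t-1}W_t$ is witnessed by a face $\psi_t\in F$ incident to a vertex $b_t$ of $W_{t-1}$ and a vertex $c_t$ of $W_t$. Now concatenate: $v-\phi_v-a$; then, maintaining a current vertex on $W_{t-1}$, pass through the face $D_{W_{t-1}}$ to $b_t$ and through $\psi_t$ to $c_t$ on $W_t$; after $s$ such steps pass from $c_s$ through $D_{W_s}$ to $a'$; finally $a'-\phi_{v'}-v'$. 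Every face used belongs to $\{D_1,\dots,D_{r'}\}\cup\{\phi_v,\phi_{v'},\psi_1,\dots,\psi_s\}\subseteq F$, and the total length is $O(1)+O(s)+O(1)=O(h)$. The main obstacle is the connectedness of $W$; the rest is bookkeeping, the only delicate point being the local claim that a face touching a disk $D_j$ is incident to a vertex of $C_j$, which I would settle by inspecting the incidences along the edges of $C_j$.
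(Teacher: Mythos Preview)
Your proof is correct and follows essentially the same approach as the paper. The paper works with the VFI graph $G^*$ of $(G_0',\eta)$ restricted to $V(\partial f)\cup F$, observes that it is connected (since $\bigcup F=f$ is connected) and that every face in $F$ is within distance~$2$ of some disk $D_i$, and concludes that $G^*$ is covered by $O(h)$ subgraphs of diameter~$4$, giving $\mathsf{diam}(G^*)=O(h)$. Your auxiliary graph $W$ is precisely the intersection graph of these radius-$2$ balls around the $D_i$'s, and your topological clopen argument for the connectedness of $W$ is an explicit unpacking of what the paper leaves as ``one can easily verify that $G^*$ is connected''; the routing you describe at the end is exactly a walk through $O(h)$ of these balls.
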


\begin{proof}
Let $G^*$ be the VFI graph of $(G_0',\eta)$ restricted to $V(\partial f) \cup F$.
One can easily verify that $G^*$ is connected, because the union of the faces in $F$ is a connected region in $\varSigma$ (which is $f$).
We show that $\mathsf{diam}(G^*) = O(h)$.
Every face in $F$ is incident to some virtual edge added in $f$, and thus shares a common boundary vertex with some disk $D_i \in F$.
It follows that in $G^*$ every face in $F$ is within distance $2$ from some $D_i \in F$.
Therefore, $G^*$ can be covered by $O(h)$ subgraphs of diameter at most $4$, which implies $\mathsf{diam}(G^*) = O(h)$.
Now consider a shortest path $\pi$ from $v$ to $v'$ in $G^*$, which is a VFA path in $(G_0',\eta)$ that does not visit any face in $F_\eta(G_0') \backslash F$.
As $\mathsf{diam}(G^*) = O(h)$, the length of $\pi$ is $O(h)$.
\end{proof}

The reason for why we construct $G_0'$ is to relate the treewidth of $G/(Z_1 \backslash Z')$ with that of $G_0'/(Z_1 \backslash Z')$.
In this way, we can reduce the task to bounding $\tw(G_0'/(Z_1 \backslash Z'))$ without considering the vortices/apices of $G$.
Similar tricks are also used in previous work \cite{BandyapadhyayLLSJ22,DemaineFHT05}.

\begin{observation}\label{obs-GtoG0'}
$\tw(G/(Z_1 \backslash Z')) = O_h(\tw(G_0'/(Z_1 \backslash Z')))$.
\end{observation}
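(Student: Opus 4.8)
The plan is to strip the vortices (there are no apices, the structure being apex-free) so that the estimate reduces to a statement about the surface graph $G_0'$ and its contraction. First I would record the routine facts about $W := Z_1 \setminus Z'$. We have $W \subseteq Z_1 \subseteq V(G) \setminus N_G[\mathsf{Vort}]$, so $W$ meets neither any vortex $G_i$ nor any $G$-neighbour of a vortex vertex; since each boundary vertex $v_{i,j} \in V(G_0) \cap V(G_i)$ is itself a vortex vertex, the virtual-edge cycle $v_{i,1}\cdots v_{i,\ell_i}$ together with all of its neighbourhood also lies outside $W$. An edge of $G$ with both endpoints in $W$ cannot lie in a vortex (both endpoints would be vortex vertices), hence $G[W] = G_0[W] = G_0'[W]$; in particular the three contractions $G/W$, $G_0/W$, $G_0'/W$ identify exactly the same vertex classes, and they leave every vortex, every attaching disk $D_i$, and every boundary cycle untouched.

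Next I would reduce to a treewidth bound for a surface graph with vortices glued on. Since $G$ is apex-free, $G = G_0 \cup G_1 \cup \dots \cup G_r$ is a subgraph of $G^+ := G_0' \cup G_1 \cup \dots \cup G_r$, where $G_0'$ only adds the virtual edges drawn along the $\partial D_i$; as $G[W] = G^+[W]$, contracting $W$ preserves this inclusion, so $G/W$ is a subgraph of $G^+/W$ and $\tw(G/W) \le \tw(G^+/W)$. Because $W$ avoids all vortex vertices (in particular all $v_{i,j}$), the contraction commutes with the vortex gluing: $G^+/W = (G_0'/W) \cup G_1 \cup \dots \cup G_r$, and in the $\varSigma$-embedded graph $H := G_0'/W$ each $D_i$ is still a face bounded by the unchanged simple cycle $v_{i,1}\cdots v_{i,\ell_i}$ (its interior contains no edge of $G_0'$, and its boundary is disjoint from $W$). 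So it remains to prove $\tw(H \cup G_1 \cup \dots \cup G_r) = O_h(\tw(H))$.

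The last step is the standard vortex-attachment argument (as used in \cite{DemaineFHT05} and \cite{BandyapadhyayLLSJ22}), and this is where I expect the real work to sit. Fix a tree decomposition $(T,\beta)$ of $H$ of width $w := \tw(H)$. For a single vortex $G_i$ with witness pair $(\tau_i,\mathcal{P}_i)$, I would use that consecutive boundary vertices $v_{i,j-1}, v_{i,j}$ are adjacent in $H$ (a virtual edge), so the subtrees $T_{i,j} := \{t : v_{i,j} \in \beta(t)\}$ intersect consecutively around the cycle; concatenating these intersections yields a closed walk in $T$ passing, in cyclic order, through nodes whose bags contain $v_{i,1},\dots,v_{i,\ell_i}$. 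Threading the width-$\le h$ path decomposition of $G_i$ along this walk — giving the copy of a node visited ``at'' $v_{i,j}$ the extra bag content $\beta_i(u_{i,j})$, then pruning so that each vertex of $H$ again occupies a connected interval — produces a tree decomposition of $H \cup G_i$ of width $\le w + h + 1$ in which the new part hangs off $(T,\beta)$ as a pendant subtree. Since the $G_i$ are mutually disjoint and $r \le h$, repeating this for each $i$ on a separate pendant keeps the width at $w + O(h) = O_h(w)$. Combining with the two reductions above gives $\tw(G/W) \le \tw(H \cup G_1 \cup \dots \cup G_r) = O_h(\tw(G_0'/W))$, as required. The main obstacle is making the pruning precise, so that the interval property survives for the vertices of $H$ that lie in several of the $T_{i,j}$; I would handle this exactly as in the cited works, retaining in each pendant bag only the boundary vertices of $D_i$ that are relevant and recovering the remaining adjacencies of $H$ from the fact that they are already covered inside $(T,\beta)$.
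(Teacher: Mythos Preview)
Your proposal is correct and follows the same approach as the paper: use that $W = Z_1 \setminus Z'$ avoids $N_G[\mathsf{Vort}]$ to decouple the contraction from the vortices, then apply the standard vortex-attachment argument of \cite{DemaineFHT05} via the virtual-edge cycles on the $\partial D_i$. The paper's implementation differs only in that it adds each vortex bag $\beta_i(u_{i,j})$ directly to every node of $T(v_{i,j})$, obtaining width $O(hw)$ rather than the $w+O(h)$ you aim for with pendants; both are $O_h(w)$, so the conclusion is the same.
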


\begin{proof}
We first show that $G[Z_1 \backslash Z'] = G_0[Z_1 \backslash Z'] = G_0'[Z_1 \backslash Z']$.
As $G$ is apex-free, we have $G = G_0 \cup (\bigcup_{i=1}^h G_i)$.
The vertices in the intersection of $G_0$ and the vortices $G_1,\dots,G_h$ are those on the boundaries of $D_1,\dots,D_r$.
Hence, these vertices all lie in the bad layers of $G_0$, while $Z_1$ is the union of several good layers.
This implies $Z_1 \subseteq V(G_0) \backslash (\bigcup_{i=1}^h V(G_i))$ and thus $G[Z_1 \backslash Z'] = G_0[Z_1 \backslash Z']$.
To see $G_0[Z_1 \backslash Z'] = G_0'[Z_1 \backslash Z']$, recall that $G_0'$ is obtained from $G_0$ by adding some virtual edges.
The virtual edges are all on the boundaries of the vortex faces $D_1',\dots,D_h'$, and thus are disjoint from $Z_1$.
So we have $G_0[Z_1 \backslash Z'] = G_0'[Z_1 \backslash Z']$.
Based on this fact, we see that $G_0/(Z_1 \backslash Z')$ is a spanning subgraph of $G_0'/(Z_1 \backslash Z')$, and $G_0/(Z_1 \backslash Z')$ is a subgraph of $G/(Z_1 \backslash Z')$.
Therefore, the vertex set of $G_0'/(Z_1 \backslash Z')$ is a subset of the vertex set of $G/(Z_1 \backslash Z')$.

Now let $(T,\beta)$ be a tree decomposition of $G_0'/(Z_1 \backslash Z')$ of width $w = \tw(G_0'/(Z_1 \backslash Z'))$.
We are going to modify $(T,\beta)$ to a tree decomposition of $G/(Z_1 \backslash Z')$ of width $O(hw + h)$, which proves the claim.
To this end, we apply the same argument as in \cite[Lemma 5.8]{DemaineFHT05}.
For convenience, we do not distinguish the vertices in $V(G) \backslash (Z_1 \backslash Z')$ with their images in $G/(Z_1 \backslash Z')$.
For each vertex $v$ of $G_0'/(Z_1 \backslash Z')$, we use $T(v) \subseteq T$ to denote the set of nodes whose bags contain $v$, which is connected in $T$ by the definition of a tree decomposition.
Consider a vortex $G_i$ with the witness pair $(\tau_i,\mathcal{P}_i)$.
Suppose $\tau_i = (v_{i,1},\dots,v_{i,\ell_i})$.
Then $\mathcal{P}_i$ is a path decomposition of $G_i$ with path $P = (u_{i,1},\dots,u_{i,\ell_i})$ such that $v_{i,j} \in \beta(u_{i,j})$ for all $j \in [\ell_i]$.
We then add the bag $\beta(u_{i,j})$ of $\mathcal{P}_i$ to the bags of all nodes in $T(v_{i,j})$.
We do this for all vortices $G_1,\dots,G_r$.
After that, we add the apex set $A$ to the bags of all nodes in $T^*$.
It is easy to verify that after the modification $(T,\beta)$ is a tree decomposition of $G/(Z_1 \backslash Z')$.
Indeed, the bags of $(T,\beta)$ cover every edge of $G/(Z_1 \backslash Z')$: the edges in $G_0/(Z_1 \backslash Z')$ are covered by the original bags of $(T,\beta)$, the edges in each vortex $G_i$ are covered by the bags of the path decomposition $\mathcal{P}_i$ (which are added to the bags of the corresponding nodes in $T$), and the edges adjacent to the apex set $A$ are also covered because we add $A$ to all bags.
Furthermore, for any vertex $v$ of $G/(Z_1 \backslash Z')$, the nodes whose bags containing $v$ are connected in $T$; this follows from the fact that $(v_{i,1},\dots,v_{i,\ell_i})$ forms a path in $G_0'/(Z_1 \backslash Z')$ (which consists of virtual edges) and thus $\bigcup_{j=j^-}^{j^+} T(v_{i,j})$ is connected in $T$ for any $j^-,j^+ \in [\ell_i]$.
Finally, we observe that the width of $(T,\beta)$ is $O(hw+h)$.
Consider a node $t \in T$.
Originally, the size of the bag $\beta(t)$ is at most $w+1$.
If a vortex $G_i$ contains $c_i$ vertices in (the original) $\beta(t)$, then we added $c_i$ bags of the path decomposition $\mathcal{P}_i$ to $\beta(t)$.
Since the vortices are disjoint, each vertex in $\beta(t)$ can be contained in at most one vortex, which implies $\sum_{i=1}^h c_i \leq w+1$.
Therefore, we added at most $w+1$ bags of the path decompositions $\mathcal{P}_1,\dots,\mathcal{P}_r$ to $\beta(t)$, each of which has size $O(h)$.
So after this step, the size of $\beta(t)$ is $O(hw)$.
Then after we added the apex set $A$ to $\beta(t)$, the size $\beta(t)$ is $O(hw+h)$ because $|A| \leq h$.
\end{proof}

Now it remains to bound $\tw(G_0'/(Z_1 \backslash Z'))$.
Recall $Z_1 = \bigcup_{j=0}^{\lfloor(m-q)/\Delta\rfloor} (L_{j\Delta+q} \backslash (X_{j\Delta+q} \cup L_{j\Delta+q}^+))$ for some good number $q \in [\Delta]$ (see Equation \eqref{eq:zi}).
For notational simplicity, we define $L_i = X_i = L_i^+ = \emptyset$ for all integers $i < 0$ and write $i_j = (j-1) \Delta + q$ for $j \in \mathbb{N}$.
Then we have $Z_1 = \bigcup_{j=1}^{m'} (L_{i_j} \backslash (X_{i_j} \cup L_{i_j}^+))$ for $m' = \lfloor (m-q)/\Delta \rfloor + 1$.
For any $j \in [m']$, $L_{i_j}$ is a good layer.
Hence $X_{i_j}$ and $L_{i_j}^+$ satisfy the conditions in Lemma~\ref{lem-key}.
Let $o_{i_j}$ be the outer face of $(G_0[L_{\geq i_j}],\eta)$, which is also the outer face of $(G_0'[L_{\geq i_j}],\eta)$.
Note that $(G_0'[L_{\geq i_j}],\eta)$ is an outer-preserving extension of $(G_0[L_{i_j}],\eta)$.
So by \ref{item:rcd-key-2} of Lemma~\ref{lem-key}, for every connected component $C$ of $G_0'[L_{i_j}^+ \cup L_{> i_j}]$, $N_{G_0'}(C) \subseteq V(\partial f)$ for some face $f \in F_\eta(\partial o_{i_j}) \backslash \{o_{i_j}\}$.

For each $j \in [m']$ and each face $f \in F_\eta(\partial o_{i_j}) \backslash \{o_{i_j}\}$, we define a set $\kappa(f) \subseteq V(\partial f) \backslash L_{i_j}^+$ as follows.
Denote by $\mathcal{C}_f$ the set of connected components of $\partial f - (L_{i_j}^+ \cup X_{i_j} \cup Z')$.
Note that $V(\partial f) \backslash (L_{i_j}^+ \cup X_{i_j} \cup Z') \subseteq Z_1 \backslash Z'$.
Therefore, every $C \in \mathcal{C}_f$ is contracted into one vertex in $G_0'/(Z_1 \backslash Z')$.
In each $C \in \mathcal{C}_f$, we keep a representative vertex $\xi_C \in C$.
We then define $\kappa(f) = \{\xi_C: C \in \mathcal{C}_f\} \cup ((V(\partial f) \backslash L_{i_j}^+) \cap (X_{i_j} \cup Z'))$.
The following observation bounds the size of $\kappa(f)$.

\begin{observation} \label{obs-kappa}
$|\kappa(f)| = O_{g,c}(|Z'|)$.
\end{observation}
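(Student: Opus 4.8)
The plan is to prove Observation~\ref{obs-kappa} by splitting $\kappa(f)$ into its two defining pieces and bounding each, using only the structural guarantees already in hand: conditions~\ref{item:rcd-key-3} and~\ref{item:rcd-key-4} of Lemma~\ref{lem-key}, the local degree bounds of Lemma~\ref{lem-degree}, and the component-counting estimate of Fact~\ref{fact-components}. Recall that $\kappa(f) = \{\xi_C : C \in \mathcal{C}_f\} \cup W$, where $\mathcal{C}_f$ is the set of connected components of $\partial f - (L_{i_j}^+ \cup X_{i_j} \cup Z')$ and $W \coloneqq (V(\partial f) \backslash L_{i_j}^+) \cap (X_{i_j} \cup Z')$. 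Thus $|\kappa(f)| \le |\mathcal{C}_f| + |W|$, and it suffices to show $|W| = O_{g,c}(1+|Z'|)$ and $|\mathcal{C}_f| = O_{g,c}(1+|Z'|)$.

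First I would bound $W$. Since $(V(\partial f)\backslash L_{i_j}^+)\cap X_{i_j} = V(\partial f)\cap(X_{i_j}\backslash L_{i_j}^+)$, condition~\ref{item:rcd-key-3} of Lemma~\ref{lem-key} (instantiated with $t=i_j$, $X=X_{i_j}$, $L_t^+=L_{i_j}^+$, which is legitimate because $L_{i_j}$ is a good layer) gives $|V(\partial f)\cap(X_{i_j}\backslash L_{i_j}^+)| = O_{g,c}(1)$; the remaining vertices of $W$ lie in $Z'$, so $|W| \le O_{g,c}(1) + |Z'|$. Next I would bound $|\mathcal{C}_f|$. Put $H \coloneqq \partial f - L_{i_j}^+$. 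Condition~\ref{item:rcd-key-4} of Lemma~\ref{lem-key} says $H$ has only $O_{g,c}(1)$ connected components. Applying Lemma~\ref{lem-degree} to the outer face $o_{i_j}$ of the $(\varSigma,x_0)$-embedded graph $(G_0[L_{\geq i_j}],\eta)$ shows that $\partial f$, and hence its induced subgraph $H$, has maximum degree $O(g)$ and at most $O(g)$ vertices of degree $\geq 3$. Now $\partial f - (L_{i_j}^+ \cup X_{i_j} \cup Z') = H - W$, and by the previous step $|W| = O_{g,c}(1) + |Z'|$, so Fact~\ref{fact-components} bounds the number of new components created by deleting $W$ from $H$. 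Combining, $|\mathcal{C}_f| \le O_{g,c}(1) + (\text{new components}) $, which I will argue is $O_{g,c}(1+|Z'|)$; together with $|W| = O_{g,c}(1+|Z'|)$ this yields $|\kappa(f)| = O_{g,c}(1+|Z'|) = O_{g,c}(|Z'|)$, as claimed (the additive $O_{g,c}(1)$, coming from the representatives $\xi_C$ when $Z'=\emptyset$, being absorbed by the $O_{g,c}(\cdot)$ convention used throughout this section).

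The one point that requires genuine care — and the closest thing to an obstacle — is that the \emph{bare statement} of Fact~\ref{fact-components} only asserts that $H-W$ has at most $O_{\alpha,\beta,\gamma}(1)$ more components than $H$, with $\gamma$ (the number of components of $H[W]$) appearing inside an unspecified constant; read that way the bound would be a function of $|Z'|$ of unknown growth. What I actually need is the quantitatively \emph{linear} form established in the proof of Fact~\ref{fact-components}, namely that deleting a set with $\gamma$ components from a graph of maximum degree $\alpha$ having $\beta$ vertices of degree $\geq 3$ creates at most $O(\gamma\cdot\alpha^{O(\beta)})$ new components. With $\alpha,\beta = O(g)$ this is $O_g(\gamma) \le O_g(|W|) = O_g\big(O_{g,c}(1)+|Z'|\big) = O_{g,c}(1+|Z'|)$, which is exactly what keeps the final estimate linear in $|Z'|$ rather than exponential. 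Apart from this, the argument is pure bookkeeping: all the substantive geometric content is already encapsulated in Lemma~\ref{lem-key}\ref{item:rcd-key-3}--\ref{item:rcd-key-4} and Lemma~\ref{lem-degree}, so no further case analysis on the embedding should be needed.
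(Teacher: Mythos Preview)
Your proposal is correct and follows essentially the same approach as the paper: split $\kappa(f)$ into the representatives $\{\xi_C\}$ and the set $W$, bound $|W|$ via Lemma~\ref{lem-key}\ref{item:rcd-key-3}, bound $|\mathcal{C}_f|$ by starting from the $O_{g,c}(1)$ components of $\partial f - L_{i_j}^+$ (Lemma~\ref{lem-key}\ref{item:rcd-key-4}), invoking the degree bounds of Lemma~\ref{lem-degree}, and applying Fact~\ref{fact-components}. Your explicit remark that one needs the \emph{linear-in-$\gamma$} bound $O(\gamma\cdot\alpha^{O(\beta)})$ from the proof of Fact~\ref{fact-components} (rather than its weaker stated form $O_{\alpha,\beta,\gamma}(1)$) is a point the paper glosses over but also implicitly relies on.
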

\begin{proof}
We first notice that $|\mathcal{C}_f| = O_{g,c}(|Z'|)$.
By \ref{item:rcd-key-4} of Lemma~\ref{lem-key}, the number of connected components of $\partial f - L_{i_j}^+$ is $O_{g,c}(1)$.
Furthermore, by \ref{item:rcd-key-3} of Lemma~\ref{lem-key}, $|V(\partial f) \cap (X_{i_j} \backslash L_{i_j}^+)| = O_{g,c}(1)$, which implies $|V(\partial f) \cap ((X_{i_j} \cup Z') \backslash L_{i_j}^+)| = O_{g,c}(|Z'|)$.
Thus, $\partial f - (L_{i_j}^+ \cup X_{i_j} \cup Z')$ is a graph obtained from $\partial f - L_{i_j}^+$ by deleting $O_{g,c}(|Z'|)$ vertices.
According to Lemma~\ref{lem-degree}, the maximum degree of $\partial f$ (and thus $\partial f - L_{i_j}^+$) is $O(g)$ and there are at most $O(g)$ vertices in $\partial f$ (and thus $\partial f - L_{i_j}^+$) of degree at least 3.
By Fact~\ref{fact-components}, we further deduce that the number of connected components of $\partial f - (L_{i_j}^+ \cup X_{i_j} \cup Z')$ is $O_{g,c}(|Z'|)$, i.e., $|\mathcal{C}_f| = O_{g,c}(|Z'|)$.
To further bound $|\kappa(f)|$, we observe that $|(V(\partial f) \backslash L_{i_j}^+) \cap (X_{i_j} \cup Z')| = O_{g,c}(|Z'|)$, because $|V(\partial f) \cap (X_{i_j} \backslash L_{i_j}^+)| = O_{g,c}(1)$ by \ref{item:rcd-key-3} of Lemma~\ref{lem-key}.
Based on this fact and the bound $|\mathcal{C}_f| = O_{g,c}(|Z'|)$, we have $|\kappa(f)| = O_{g,c}(|Z'|)$.
\end{proof}

In order to bound $\tw(G_0'/(Z_1 \backslash Z'))$, we shall construct a tree decomposition $(T,\beta)$ for $G_0'/(Z_1 \backslash Z')$ in which each torso is of treewidth $O_{g,c}(p+|Z'|)$.
If such a tree decomposition exists, by Lemma~\ref{lem-torsotw}, we have $\tw(G_0'/(Z_1 \backslash Z')) = O_{g,c}(p+|Z'|)$.
The construction of $(T,\beta)$ is the following.
The depth of the tree $T$ is $m'$.
For all $j \in \{0\} \cup [m']$, the nodes in the $j$-th level of $T$ are one-to-one correspondence to the connected components of $G_0'[L_{i_j}^+ \cup L_{> i_j}]$.
Note that $G_0'[L_{i_0}^+ \cup L_{> i_0}] = G_0'$ is connected, so there is exactly one node in the $0$-th level of $T$, which is the root of $T$.
The parents of the nodes are defined as follows.
Consider a node $t \in V(T)$ in the $j$-th level for $j \geq 1$, and let $C_t$ be the connected component of $G_0'[L_{i_j}^+ \cup L_{> i_j}]$ corresponding to $t$.
Since $G_0'[L_{i_j}^+ \cup L_{> i_j}]$ is a subgraph of $G_0'[L_{i_{j-1}}^+ \cup L_{> i_{j-1}}]$, $C_t$ is contained in a unique connected component of $G_0'[L_{i_{j-1}}^+ \cup L_{> i_{j-1}}]$, which corresponds to a node $t'$ in the $(j-1)$-th level.
We then let $t'$ be the parent of $t$.
For each node $t \in V(T)$, we associate a set $V_t$ to $t$ defined as $V_t \coloneqq C_t \backslash (L_{i_{j+1}}^+ \cup L_{> i_{j+1}})$, where $j \in \{0\} \cup [m']$ is the number such that $t$ is in the $j$-th level of $T$ and $C_t$ is the connected component of $G_0'[L_{i_j}^+ \cup L_{> i_j}]$ corresponding to $t$.
One can easily verify that $\{V_t \mid t \in V(T)\}$ is a partition of $V(G_0')$.
In addition, we associate another set $U_t$ to each node $t \in V(T)$ defined as follows.
If $j = 0$ (i.e., $t$ is the roof of $T$), then set $U_t \coloneqq \emptyset$.
Otherwise, let $t'$ be the parent of $t$.
As $(G_0'[L_{\geq i_j}],\eta)$ is an outer-preserving extension of $(\partial o_{i_j},\eta)$, by \ref{item:rcd-key-2} of Lemma~\ref{lem-key}, there exists a face $f \in F_\eta(\partial o_{i_j}) \backslash \{o_{i_j}\}$ with $N_{G_0'}(C_t) \subseteq V(\partial f) \backslash L_{i_j}^+$.
We then let $U_t \coloneqq V_{t'} \cap \kappa(f)$.
Note that $|U_t| = O_{g,c}(|Z'|)$, as $|\kappa(f)| = O_{g,c}(|Z'|)$ by Observation~\ref{obs-kappa}.
Finally, we define $\beta(t) = \pi(V_t \cup U_t)$ as the bag of each node $t \in V(T)$, where $\pi: V(G_0') \rightarrow V(G_0'/(Z_1 \backslash Z'))$ is the quotient map for the contraction of $G_0'$ to $G_0'/(Z_1 \backslash Z')$.

\begin{observation} \label{obs-treedecomp}
$(T,\beta)$ is a tree decomposition of $G_0'/(Z_1 \backslash Z')$. Furthermore, for all $t \in V(T)$, we have $\sigma(t) = \pi(U_t)$ and thus $|\sigma(t)| = O_{g,c}(|Z'|)$.
\end{observation}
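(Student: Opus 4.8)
The plan is to verify the three tree-decomposition axioms for $(T,\beta)$ and then identify the adhesions, with everything resting on a single \emph{layer-localization} fact established first. That fact is: every edge of $G_0'$ joins vertices whose radial layers have indices differing by at most one. For the original edges of $G_0$ this is exactly Fact~\ref{fact-diff1}; every virtual edge lies on the boundary of a vortex disk $D_i$, hence has both endpoints in $V(\partial f)$ for the vortex face $f\supseteq D_i$, which by Fact~\ref{fact-diff1} is contained in two consecutive layers. Since the layers $L_{i_1},\dots,L_{i_{m'}}$ used to build $Z_1$ are spaced $\Delta\geq 2$ apart, any connected subgraph of $G_0'[Z_1\setminus Z']$ lies in a single layer $L_{i_{j_0}}$; in particular every class $\pi^{-1}(\bar w)$ that is genuinely contracted satisfies $\pi^{-1}(\bar w)\subseteq L_{i_{j_0}}\setminus(X_{i_{j_0}}\cup L_{i_{j_0}}^+)$ for some $j_0$. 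In the same vein I would record the layer ranges of the building blocks: for a node $t$ at level $j$ one has $V_t\subseteq L_{i_j}^+\cup L_{i_j+1}^{i_{j+1}-1}\cup(L_{i_{j+1}}\setminus L_{i_{j+1}}^+)$, whereas $U_t\subseteq\kappa(f)\subseteq V(\partial f)\setminus L_{i_j}^+\subseteq L_{i_j}\setminus L_{i_j}^+$ (using $V(\partial f)\subseteq V(\partial o_{i_j})=L_{i_j}$), and the sets $V_t$ partition $V(G_0')$.

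Given these, the axioms follow. Vertex coverage is immediate since $\{\pi(V_t)\}$ already covers $V(G_0'/(Z_1\setminus Z'))$ and $\pi(V_t)\subseteq\beta(t)$. For edge coverage, let $\bar u\bar v$ come from $uv\in E(G_0')$ with $u\in V_{t_u}$ at level $j_u$ and $v\in V_{t_v}$ at level $j_v\geq j_u$; the layer ranges plus layer-localization force $j_v\leq j_u+1$. If $j_v=j_u$, then $u,v$ lie in one component of $G_0'[L_{i_{j_u}}^+\cup L_{>i_{j_u}}]$, so $t_u=t_v$ and $\bar u,\bar v\in\pi(V_{t_u})\subseteq\beta(t_u)$. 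If $j_v=j_u+1$, the same component argument makes $t_u$ the parent $t'$ of $t_v$, and $u\in N_{G_0'}(C_{t_v})\subseteq V(\partial f)\setminus L_{i_{j_v}}^+$ for the face $f$ attached to $t_v$; I would then check $\pi(u)\in\pi(U_{t_v})$ by cases: if $u\in X_{i_{j_v}}\cup Z'$ then $u\in\kappa(f)$, so $u\in V_{t_u}\cap\kappa(f)=U_{t_v}$; otherwise $u$ lies in a component $C$ of $\partial f-(L_{i_{j_v}}^+\cup X_{i_{j_v}}\cup Z')$, which is contained in $Z_1\setminus Z'$ hence entirely contracted by $\pi$, and $C$ is connected and disjoint from $L_{i_{j_v}}^+\cup L_{>i_{j_v}}$, so $C\subseteq V_{t_u}$ and $\pi(u)=\pi(\xi_C)\in\pi(V_{t_u}\cap\kappa(f))=\pi(U_{t_v})$. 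Since also $v\in V_{t_v}$ (its layer index is at most $i_{j_v}+1<i_{j_v+1}$), we get $\bar u,\bar v\in\beta(t_v)$.

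For the connectivity axiom, fix $\bar w$ and set $W=\pi^{-1}(\bar w)$; the layer ranges show $W\subseteq V_{t_0}$ for a unique node $t_0$, and $W$ can meet $U_t$ only when $t$ is a child of $t_0$, so $\{t:\bar w\in\beta(t)\}$ is a star centered at $t_0$, hence a subtree of $T$. It remains to compute $\sigma(t)$ for a non-root $t$ at level $j$ with parent $t'$ (for the root, $\sigma(t)=\emptyset=\pi(U_t)$). Since $U_t=V_{t'}\cap\kappa(f)\subseteq V_{t'}$ we have $\pi(U_t)\subseteq\pi(V_{t'})\subseteq\beta(t')$ and $\pi(U_t)\subseteq\beta(t)$, so $\pi(U_t)\subseteq\sigma(t)$. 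For the reverse inclusion, take $\bar x\in\sigma(t)$ with $W=\pi^{-1}(\bar x)$ meeting both $V_t\cup U_t$ and $V_{t'}\cup U_{t'}$: a singleton class lies in the genuine intersection $(V_t\cup U_t)\cap(V_{t'}\cup U_{t'})$, which equals $U_t$ because $V_t\cap V_{t'}$, $V_t\cap U_{t'}$, $U_t\cap U_{t'}$ are empty (partition property; disjointness of $L_{i_j}$ from $L_{i_{j-1}}$) while $U_t\cap V_{t'}=U_t$; a contracted class is confined to one layer $L_{i_{j_0}}$, and a short layer-index count — meeting $V_t\cup U_t$ forces $i_{j_0}\in\{i_j,i_{j+1}\}$, meeting $V_{t'}\cup U_{t'}$ forces $i_{j_0}\in\{i_{j-1},i_j\}$, so $i_{j_0}=i_j$ and $W\cap U_t\neq\emptyset$. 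Either way $\bar x\in\pi(U_t)$, so $\sigma(t)=\pi(U_t)$ and $|\sigma(t)|\leq|U_t|\leq|\kappa(f)|=O_{g,c}(|Z'|)$ by Observation~\ref{obs-kappa}.

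The step I expect to be the main obstacle is edge coverage in the case $j_v=j_u+1$: one must argue not merely that $u$ is contracted but that a representative of its contracted class can be chosen inside $V_{t_u}$, so that $\bar u$ genuinely appears in $\beta(t_v)=\pi(V_{t_v}\cup U_{t_v})$. This is precisely where the design of $\kappa(f)$ (one representative per contracted component of $\partial f$, together with all surviving ``special'' vertices) and the containment $C\subseteq V_{t_u}$ are used, and the virtual-edge case of the layer-localization fact also requires care.
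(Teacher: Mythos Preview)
Your proposal is correct and follows essentially the same approach as the paper's proof: both arguments exploit that each $\pi$-class is confined to a single good layer $L_{i_{j_0}}$ (hence to a single $V_{t_0}$), and both handle the cross-level edge-coverage case by showing that the shallower endpoint's image lands in $\pi(U_{t_v})$ via the representative $\xi_C\in\kappa(f)\cap V_{t'}$. Your explicit upfront layer-localization framework and recorded layer ranges for $V_t,U_t$ give a slightly cleaner organization of the same ideas.
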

\begin{proof}
As $\{V_t \mid t \in V(T)\}$ is a partition of $V(G_0')$, every vertex $v \in G_0'/(Z_1 \backslash Z')$ is contained in $\beta(t)$ for some $t \in V(T)$ with $V_t \cap \pi^{-1}(\{v\}) \neq \emptyset$.
If $\pi^{-1}(\{v\})$ is a single vertex $u \in V(G_0')$, then there exists a unique node $t \in V(T)$ such that $u \in V_t$.
By our construction, $v$ is only contained in $\beta(t)$ and possibly the bags of the children of $t$.
Thus, the nodes in $T$ whose bags contain $v$ are connected.
If $\pi^{-1}(\{v\})$ is a connected component of $G_0'[Z_1 \backslash Z']$, then it is a connected component of $G_0'[L_{i_j} \backslash (X_{i_j} \cup L_{i_j}^+)]$ for some $j \in [m']$, as the layers $L_{i_1},\dots,L_{i_{m'}}$ are non-adjacent by Fact~\ref{fact-diff1}.
In this case, $\pi^{-1}(\{v\})$ is contained in a connected component of $G_0'[L_{i_{j-1}}^+ \cup L_{>i_{j-1}}]$, which corresponds to a node $t \in T$ in the $(j-1)$-th level, and we have $\pi^{-1}(\{v\}) \subseteq V_t$.
So again, $v$ is contained in $\beta(t)$ and possibly the bags of some children of $t$.
So the nodes in $T$ whose bags contain $v$ are connected.

It suffices to verify that for any edge $vv'$ of $G_0'/(Z_1 \backslash Z')$, there exists a node in $T$ whose bag contains both $v$ and $v'$.
There exists $uu' \in E(G_0')$ such that $v = \pi(u)$ and $v' = \pi(u')$.
If $u,u' \in V_t$ for some $t \in V(T)$, then $v,v' \in \beta(t)$ and we are done.
Assume $\{u,u'\} \nsubseteq V_t$ for all $t \in V(T)$.
Let $j \in \{0\} \cup [m']$ be the largest index such that $\{u,u'\} \cap (L_{i_j}^+ \cup L_{>i_j}) \neq \emptyset$.
Without loss of generality, assume $u \in L_{i_j}^+ \cup L_{>i_j}$.
There exists a node $t \in V(T)$ in the $j$-th level such that $u$ is contained in $C_t$, the connected component of $G_0'[L_{i_j}^+ \cup L_{> i_j}]$ corresponding to $t$.
By the choice of $j$, we have $u,u' \notin L_{i_{j+1}}^+ \cup L_{>i_{j+1}}$, which implies $u \in V_t$ and $v \in \beta(t)$.
It remains to show $v' \in \beta(t)$.
If $u' \in C_t$, then $u' \in V_t$ as $u' \notin L_{i_{j+1}}^+ \cup L_{>i_{j+1}}$, which contradicts with the assumption $\{u,u'\} \nsubseteq V_t$.
So $u' \notin C_t$.
As such, $t$ is not the root of $t$ and has a parent $t'$ in $T$.
Observe that $u' \in C_{t'}$.
Indeed, $u' \in L_{>i_{j-1}}$ as it is a neighbor of $u$ and $u \in L_{\geq i_j}$.
Furthermore, as $uu' \in E(G_0')$ and $u,u' \in L_{>i_{j-1}}$, $u,u'$ belong to the same connected component of $G_0'[L_{i_{j-1}}^+ \cup L_{>i_{j-1}}]$, which is $C_{t'}$ because $u \in C_t \subseteq C_{t'}$.
On the other hand, $u' \notin L_{i_j}^+ \cup L_{>i_j}$, since $u' \in N_{G_0}(C_t)$ and $C_t$ is a connected component of $G_0'[L_{i_j}^+ \cup L_{> i_j}]$.
It follows that $u' \in V_{t'}$.
By construction, $U_t = V_{t'} \cap \kappa(f)$ for some face $f \in F_\eta(\partial o_{i_j}) \backslash \{o_{i_j}\}$ with $N_{G_0'}(C_t) \subseteq V(\partial f) \backslash L_{i_j}^+$.
So we have $u' \in N_{G_0'}(C_t) \subseteq V(\partial f) \backslash L_{i_j}^+$.
If $u' \in \kappa(f)$, then $u' \in V_{t'} \cap \kappa(f) = U_t$ and thus $v' \in \beta(t)$.
If $u' \notin \kappa(f)$, then $u' \in V(\partial f) \backslash (L_{i_j}^+ \cup X_{i_j} \cup Z')$, since $(V(\partial f) \backslash L_{i_j}^+) \cap (X_{i_j} \cup Z') \subseteq \kappa(f)$.
In this case, there exists $C \in \mathcal{C}_f$ such that $u' \in C$.
Recall that $\xi_C \in C$ is the representative of $C$.
Note that $C$ is contracted into one vertex in $G_0'/(Z_1 \backslash Z)$ and thus $\pi(\xi_C) = \pi(u') = v'$.
Since $C$ is connected, $C \subseteq C_{t'}$ and thus $\xi_C \in C_{t'}$.
As $\xi_C \notin L_{i_j}^+$, we further have $\xi_C \in V_{t'}$.
Also, $\xi_C \in \kappa(f)$ by definition.
Therefore, $\xi_C \in V_{t'} \cap \kappa(f) = U_t$, which implies $v' = \pi(\xi_C) \in \beta(t)$.

To see $\sigma(t) = \pi(U_t)$, suppose $t$ is at the $j$-th level of $T$.
If $j = 0$, then $t$ is the root and $\sigma(t) = \emptyset$ and $U_t = \emptyset$.
Otherwise, let $t' \in V(T)$ be the parent of $t$.
We have $\sigma(t) = \beta(t) \cap \beta(t') = (\pi(V_t) \cup \pi(U_t)) \cap \beta(t')$.
Consider a vertex $v \in \sigma(t)$, and assume $v \notin \pi(U_t)$.
Then we must have $v \in \pi(V_t) \cap \beta(t')$.
If $\pi^{-1}(\{v\})$ is a single vertex of $u \in V(G_0')$, then $u \in V_t$ and thus $u \notin V_{t'} \cup U_{t'}$, which implies $v \notin \beta(t')$, contradicting our assumption.
If $\pi^{-1}(\{v\})$ is a connected component of $G_0'[Z_1 \backslash Z']$, then it must be a connected component of $G_0'[L_{i_j} \backslash (X_j \cup L_{i_j}^+)]$, because $v \in \pi(V_t)$.
In this case, since $\pi^{-1}(\{v\}) \subseteq L_{i_j}$, $\pi^{-1}(\{v\}) \cap (V_{t'} \cup U_{t'}) = \emptyset$.
As such, $v \notin \beta(t')$, contradict our assumption.
Therefore, $v \in \pi(U_t)$ and $\sigma(t) \subseteq \pi(U_t)$.
The inclusion $\pi(U_t) \subseteq \sigma(t)$ is obvious, by the fact that $\pi(U_t) \subseteq \beta(t)$ and $\pi(U_t) \subseteq \beta(t')$.
Finally, since $|U_t| = O_{g,c}(|Z'|)$ as argued before, we have $|\sigma(t)| = O_{g,c}(|Z'|)$.
\end{proof}

As $(T,\beta)$ is a tree decomposition of $G_0'/(Z_1 \backslash Z')$, it suffices to show $\tw(\tor(t)) = O_{g,c}(p+|Z'|)$ for all $t \in V(T)$.
Consider a node $t \in V(T)$ in the $j$-th level of $T$.
In the above observation, it has been shown that $|\sigma(t)| = O_{g,c}(|Z'|)$.
So we only need to show $\tw(\tor(t) - \sigma(t)) = O_{g,c}(p+|Z'|)$.

For notational convenience, set $\phi = i_j$ and $\psi = i_{j+1}$.
What we are going to do is to build a graph that contains $\tor(t) - \sigma(t)$ as a minor, and bound the treewidth of that graph.
Let $(\varGamma,\eta)$ be a subgraph of $(G_0'[L_\phi^\psi],\eta)$ obtained by removing all edges in $E(L_\psi) \backslash E(\partial o_\psi)$.
Note that the edges removed are exactly those embedded in the interiors of the faces $f \in F_\eta(\partial o_{\psi}) \backslash \{o_{\psi}\}$.
This implies that every face in $F_\eta(\partial o_{\psi}) \backslash \{o_{\psi}\}$ is also a face of $(\varGamma,\eta)$.
Recall the set $\kappa(f) \subseteq V(\partial f) \backslash L_{\psi}^+$ defined for each $f \in F_\eta(\partial o_{\psi}) \backslash \{o_{\psi}\}$.
For convenience, we set $\kappa(f) = \emptyset$ for all faces $f$ of $(\varGamma,\eta)$ other than the ones in $F_\eta(\partial o_{\psi}) \backslash \{o_{\psi}\}$.
Clearly, $\kappa(f) \subseteq V(\partial f) = V(\varGamma)$ for all $f \in F_\eta(\varGamma)$.
We then define $\varGamma^\kappa$ as the graph obtained from $\varGamma$ by making $\kappa(f)$ a clique for all $f \in F_\eta(\varGamma)$.

\begin{observation} \label{obs-kappaminor}
$\varGamma^\kappa$ contains $\tor(t) - \sigma(t)$ as a minor.
\end{observation}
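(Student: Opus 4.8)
The plan is to invoke Fact~\ref{fact-minormap} with the vertex subset $V_t \subseteq V(\varGamma^\kappa)$ and the map $\rho\colon V_t \to V(\tor(t)-\sigma(t))$ obtained by restricting to $V_t$ the quotient map $\pi\colon V(G_0') \to V(G_0'/(Z_1\setminus Z'))$. The first task is to see this is legitimate, i.e.\ that $V(\tor(t)-\sigma(t)) = \pi(V_t)$ and $\pi^{-1}(\pi(V_t)) = V_t$ (so $\rho$ is a well-defined surjection and $\rho^{-1}(W) = \pi^{-1}(W)$ for every $W \subseteq \pi(V_t)$). Both are pure layer bookkeeping: a non-singleton fiber of $\pi$ is a connected component of $G_0'[Z_1\setminus Z']$, and since $Z_1$ is supported on the pairwise non-adjacent layers $L_{i_1},\dots,L_{i_{m'}}$ (Fact~\ref{fact-diff1}) such a fiber lies in one layer; combining $V_t \subseteq L_\phi^+ \cup L_{\phi+1}^{\psi-1} \cup (L_\psi\setminus L_\psi^+)$, the disjointness of $(Z_1\setminus Z')\cap L_\phi$ from $L_\phi^+$, and $C_t \subseteq L_\phi^+\cup L_{>\phi}$ forces every fiber meeting $V_t$ to be contained in $V_t$ and to lie inside $L_\psi$. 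On the other hand $U_t \subseteq V(\partial o_\phi)\setminus L_\phi^+ \subseteq L_\phi$, so $\pi(U_t)\cap\pi(V_t) = \emptyset$; with $\sigma(t) = \pi(U_t)$ and $\beta(t) = \pi(V_t)\cup\pi(U_t)$ from Observation~\ref{obs-treedecomp} this gives $V(\tor(t)-\sigma(t)) = \beta(t)\setminus\sigma(t) = \pi(V_t)$.

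By Fact~\ref{fact-minormap} it then remains to show that $\varGamma^\kappa[\pi^{-1}(W)]$ is connected whenever $(\tor(t)-\sigma(t))[W]$ is connected, for $W\subseteq\pi(V_t)$. By the standard argument for lifting connectivity through a contraction (as in the proof of Fact~\ref{fact-quotient}) it suffices to check: (a) each fiber $\pi^{-1}(v)$, $v\in\pi(V_t)$, induces a connected subgraph of $\varGamma^\kappa$; and (b) for every edge of $\tor(t)-\sigma(t)$ there is an edge of $\varGamma^\kappa$ running between the two corresponding fibers. The structural fact driving both is that passing from $G_0'[L_\phi^\psi]$ to $\varGamma$ only deletes $L_\psi$-chords of the inner faces of $\partial o_\psi$, while every edge of every $\partial f$ lies in $E(\partial o_\psi)$ and is therefore retained; hence $\partial f$, and with it every component of $\mathcal{C}_f = \partial f - (L_\psi^+\cup X_\psi\cup Z')$, stays connected in $\varGamma$, and each $\kappa(f)$ is a clique of $\varGamma^\kappa$.

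Granting this, (a) and (b) follow by a short case analysis. For (a): a non-singleton fiber is a component $C$ of $G_0'[Z_1\setminus Z']$ contained in $L_\psi\setminus(X_\psi\cup L_\psi^+\cup Z')$, hence a union of $\mathcal{C}_f$-components; a surviving internal edge of $C$ already connects vertices inside one such $\mathcal{C}_f$-component, and a deleted internal edge $ab$ of $C$ is a chord of some inner face $f$ with $a,b\in V(\partial f)\setminus(L_\psi^+\cup X_\psi\cup Z')$, so $a$ and $b$ are joined inside $C$ to the representatives of their $\mathcal{C}_f$-components, which in turn are joined by a $\kappa(f)$-clique edge — patching these recovers connectivity of $C$ in $\varGamma^\kappa$. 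For (b): an edge of $\tor(t)-\sigma(t)$ is either a clique edge on a child adhesion $\sigma(s) = \pi(U_s)$ with $U_s = V_t\cap\kappa(f_s)$, handled by choosing fiber-representatives in $U_s\subseteq\kappa(f_s)$; or the image of a $G_0'$-edge $ab$, in which case either $ab$ survives in $\varGamma$, or $ab$ is a deleted chord of an inner face $f$, where one checks that $a\in X_\psi\cup Z'$ puts $a$ itself into $\kappa(f)\cap\pi^{-1}(\pi(a))$, while $a\in Z_1\setminus Z'$ puts the representative of $a$'s $\mathcal{C}_f$-component into $\kappa(f)\cap\pi^{-1}(\pi(a))$ — and symmetrically for $b$ — so a $\kappa(f)$-clique edge connects the two fibers.

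I expect the main obstacle to be exactly this last bookkeeping: verifying that \emph{every} connection (a fiber-internal edge or the edge realizing a torso edge) destroyed in passing to $\varGamma$ is repaired, either directly or via surrogates lying in the same fibers, by a common $\kappa(f)$-clique. This is what pins down the precise form of $\kappa(f)$ — the $\mathcal{C}_f$-representatives together with $(V(\partial f)\setminus L_\psi^+)\cap(X_\psi\cup Z')$ — and it is the only place where one must argue carefully rather than just track layer indices; the rest (the identification $V(\tor(t)-\sigma(t))=\pi(V_t)$ and the reduction to (a) and (b)) is routine given the earlier facts.
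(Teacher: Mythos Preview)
Your proposal is correct and follows essentially the same approach as the paper: both invoke Fact~\ref{fact-minormap} with the restriction $\pi_{|V_t}$, establish $\pi^{-1}(\beta(t)\setminus\sigma(t))=V_t$ via layer bookkeeping, reduce to the vertex-fiber and edge-fiber conditions (your (a) and (b)), and handle each case by routing through $\mathcal{C}_f$-representatives and $\kappa(f)$-clique edges when a $G_0'$-edge is a deleted $L_\psi$-chord. The only differences are cosmetic---you front-load the fiber localization (showing non-singleton fibers meeting $V_t$ lie in $L_\psi\setminus L_\psi^+$) whereas the paper verifies this inline, and your phrasing ``a union of $\mathcal{C}_f$-components'' is slightly loose (it is a union over possibly several faces $f$), but the edge-by-edge argument you actually give is sound and matches the paper's.
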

\begin{proof}
The vertex set of $\tor(t) - \sigma(t)$ is $\beta(t) \backslash \sigma(t)$.
By our construction, we have $\pi^{-1}(\beta(t)) = \pi^{-1}(\pi(V_t) \cup \pi(U_t)) = \pi^{-1}(\pi(V_t)) \cup \pi^{-1}(\pi(U_t))$.
It is clear that $\pi^{-1}(\pi(V_t)) = V_t$ and $V_t \cap \pi^{-1}(\pi(U_t)) = \emptyset$.
As $\sigma(t) = \pi(U_t)$ by Observation~\ref{obs-treedecomp}, we have
\begin{equation*}
    \pi^{-1}(\beta(t) \backslash \sigma(t)) = \pi^{-1}(\beta(t)) \backslash \pi^{-1}(\sigma(t)) = V_t \backslash \pi^{-1}(\pi(U_t)) = V_t.
\end{equation*}
Note that $V_t \subseteq L_\phi^\psi = V(\varGamma^\kappa)$.
Consider the map $\pi_{|V_t}: V_t \rightarrow \beta(t) \backslash \sigma(t)$, which is the restriction of $\pi$ on $V_t$.
By Fact~\ref{fact-minormap}, it suffices to show that $\varGamma^\kappa[\pi^{-1}(V)]$ is connected for all $V \subseteq \beta(t) \backslash \sigma(t)$ such that $(\tor(t) - \sigma(t))[V]$ is connected (note that $\pi^{-1}(V) = \pi_{|V_t}^{-1}(V)$).
Equivalently, this is saying that $\varGamma^\kappa[\pi^{-1}(\{v\})]$ is connected for all vertices $v \in \beta(t) \backslash \sigma(t)$ and $\varGamma^\kappa[\pi^{-1}(\{v,v'\})]$ for all edges $(v,v') \in E(\tor(t) - \sigma(t))$.

Consider a vertex $v \in \beta(t) \backslash \sigma(t)$.
If $\pi^{-1}(\{v\})$ is a single vertex, then $\varGamma^\kappa[\pi^{-1}(\{v\})]$ is connected.
Otherwise, $\pi^{-1}(\{v\})$ is a connected component of $G_0'[Z_1 \backslash Z']$.
In this case, $\pi^{-1}(\{v\}) \subseteq L_{\psi} \backslash L_{\psi}^+$.
As $G_0'[\pi^{-1}(\{v\})]$ is connected, to show $\varGamma^\kappa[\pi^{-1}(\{v\})]$ is connected, it suffices to show that for any edge $uu' \in E(G_0')$ such that $u,u' \in \pi^{-1}(\{v\})$, $u$ and $u'$ are contained in the same connected component of $\varGamma^\kappa[\pi^{-1}(\{v\})]$.
If $uu' \in E(\varGamma^\kappa)$, we are done.
Otherwise, $uu' \in E(L_{\psi}) \backslash E(\partial o_{\psi})$.
In this case, $uu'$ is embedded in some face $f \in F_\eta(\partial o_{\psi}) \backslash \{o_{\psi}\}$, and thus $u,u' \in V(\partial f) \backslash (L_{\psi}^+ \cup X_{\psi} \cup Z')$.
It follows that $u \in C$ and $u' \in C'$ for some $C,C' \in \mathcal{C}_f$.
We have the representative vertices $\xi_C$ and $\xi_{C'}$.
By definition, $\pi(C) = \pi(C') = \{v\}$.
So $u,u',\xi_C,\xi_{C'} \in \pi^{-1}(\{v\})$.
Note that $(\xi_C,\xi_{C'}) \in E(\varGamma^\kappa)$ because $\xi_C,\xi_{C'} \in \kappa(f)$.
So it remains to show that $u$ (resp., $u'$) is in the same connected component as $\xi_C$ (resp., $\xi_{C'}$) in $\varGamma^\kappa[\pi^{-1}(\{v\})]$.
This is true because $C$ and $C'$ are connected components of $\partial f - (L_{\psi}^+ \cup X_{\psi} \cup Z')$, which is a subgraph of $\varGamma^\kappa[\pi^{-1}(\{v\})]$.

Next, consider an edge $vv' \in E(\tor(t) - \sigma(t))$.
We have already shown that both $\varGamma^\kappa[\pi^{-1}(\{v\})]$ and $\varGamma^\kappa[\pi^{-1}(\{v'\})]$ are connected.
To see $\varGamma^\kappa[\pi^{-1}(\{v,v'\})]$ is connected, it suffices to show that $\pi^{-1}(\{v\})$ and $\pi^{-1}(\{v'\})$ are neighboring in $\varGamma^\kappa$.
There are two cases: $vv' \in E(G_0'/(Z_1 \backslash Z'))$ and $v,v' \in \sigma(s)$ for some child $s$ of $t$ in $T$.
Suppose $vv' \in E(G_0'/(Z_1 \backslash Z'))$.
Then $\pi^{-1}(\{v\})$ and $\pi^{-1}(\{v'\})$ are neighboring in $G_0'$.
So there exists an edge $uu' \in G_0'[V_t]$ such that $u \in \pi^{-1}(\{v\})$ and $u' \in \pi^{-1}(\{v'\})$.
If $uu' \in E(\varGamma^\kappa)$, we are done.
Otherwise, $uu'$ is embedded in some face $f \in F_\eta(\partial o_{\psi}) \backslash \{o_{\psi}\}$, and thus $u,u' \in V(\partial f) \backslash (L_{\psi}^+ \cup X_{\psi} \cup Z')$.
Now we can apply the same argument as above.
We have $u \in C$ and $u' \in C'$ for some $C,C' \in \mathcal{C}_f$ with representative vertices $\xi_C$ and $\xi_{C'}$.
By definition, $\pi(C) = \{v\}$ and $\pi(C') = \{v'\}$.
Thus, $\xi_C \in \pi^{-1}(\{v\})$ and $\xi_{C'} \in \pi^{-1}(\{v'\})$.
As $(\xi_C,\xi_{C'}) \in E(\varGamma^\kappa)$, we know that $\pi^{-1}(\{v\})$ and $\pi^{-1}(\{v'\})$ are neighboring in $\varGamma^\kappa$.
Next, suppose $v,v' \in \sigma(s)$ for some child $s$ of $t$.
In this case, $v,v' \in \pi(U_s) = \pi(V_t \cap \kappa(f)) \subseteq \pi(\kappa(f))$ for some $f \in F_\eta(\partial o_{\psi}) \backslash \{o_{\psi}\}$.
So there exist $u,u' \in \kappa(f)$ such that $u \in \pi^{-1}(\{v\})$ and $u' \in \pi^{-1}(\{v'\})$.
Note that $uu' \in E(\varGamma^\kappa)$ by the construction of $\varGamma^\kappa$.
So $\pi^{-1}(\{v\})$ and $\pi^{-1}(\{v'\})$ are neighbors in $\varGamma^\kappa$.
\end{proof}

Finally, it suffices to show that $\tw(\varGamma^\kappa) = O_{h,c}(p+|Z'|)$.
By Lemma~\ref{lem-twdiam2}, it suffices to bound $\mathsf{diam}_{w_\kappa}^*(\varGamma,\eta)$, where $w_\kappa: F_\eta(\varGamma) \rightarrow \mathbb{N}$ is the weight function defined as $w_\kappa(f) = |\kappa(f)|$.

\begin{observation}
$\mathsf{diam}_{w_\kappa}^*(\varGamma,\eta) = O_{h,c}(p+|Z'|)$.
\end{observation}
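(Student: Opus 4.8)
The plan is to bound $\mathsf{diam}_{w_\kappa}^*(\varGamma,\eta)$ by routing every vertex and every face of $(\varGamma,\eta)$ cheaply to the outer boundary layer $L_\phi$, and then noting that in the weighted vertex-face metric this layer is ``small''. First I would record two structural facts. (i) The graph $\varGamma$ uses only the consecutive radial layers $L_\phi,L_{\phi+1},\dots,L_\psi$, and $\psi-\phi=\Delta=p+3c+4h+1=O_{h,c}(p)$; moreover the outer face $o_\phi$ of $(\varGamma,\eta)$ satisfies $V(\partial o_\phi)=L_\phi$, since passing from $G_0'[L_{\ge\phi}]$ to $\varGamma$ only deletes the inner layers $L_{>\psi}$ and the edges in $E(L_\psi)\setminus E(\partial o_\psi)$, all of which lie ``inside'' $o_\psi$ and hence cannot expose a deeper vertex to the outer face. (ii) The only faces of $(\varGamma,\eta)$ of positive $w_\kappa$-weight are the inner faces of $\partial o_\psi$; each such face $f$ has $V(\partial f)\subseteq V(\partial o_\psi)=L_\psi$, and $w_\kappa(f)=|\kappa(f)|=O_{g,c}(|Z'|)$ by Observation~\ref{obs-kappa}.

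Next I would prove the routing claim: for every vertex $v\in V(\varGamma)$, say $v\in L_b$ with $\phi\le b\le\psi$, there is a vertex-face alternating path in $(\varGamma,\eta)$ from $v$ to a vertex of $L_\phi$ of length at most $2(b-\phi)+1=O_{h,c}(p)$ all of whose faces have $w_\kappa$-weight $0$. Such a path is the standard monotone radial path: since the vertex-face distance from $v$ to the outer face equals $2b-1$, there is a face $f_b$ incident both to $v$ and to a vertex of $L_{b-1}$; iterating one obtains vertices $v=v_b,v_{b-1},\dots,v_\phi$ with $v_{b'}\in L_{b'}$, together with faces $f_{b'}$ incident to $v_{b'}$ and $v_{b'-1}$, followed by the step $v_\phi\to o_\phi$. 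Each $f_{b'}$ is incident to a vertex of $L_{b'-1}$ with $b'-1<\psi$, so $V(\partial f_{b'})\not\subseteq L_\psi$; hence $f_{b'}$ is not an inner face of $\partial o_\psi$ and carries weight $0$, and it survives as a face of $\varGamma$ because the edges deleted in forming $\varGamma$ have both endpoints in $L_\psi$ and lie strictly inside inner faces of $\partial o_\psi$, none of which is a face between two consecutive layers. (This is the usual monotone-path property of radial layerings that underlies $\tw=O_g(\mathsf{diam}^*)$ in \cite{Eppstein00,BandyapadhyayLLSJ22}; it is available here because $L_\psi$ is a good layer, so by Observation~\ref{obs-vortexfaceconn} and minimality every face incident to $L_\psi$ has connected boundary.)

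With these pieces the bound follows by triangle-inequality bookkeeping. Given $a,a'\in V(\varGamma)\cup F_\eta(\varGamma)$, first move $a$ to a vertex $v_a$: if $a$ is an inner face $f$ of $\partial o_\psi$, take any $v_a\in V(\partial f)\subseteq L_\psi$ at $w_\kappa$-cost $1+w_\kappa(f)=O_{g,c}(|Z'|)$; if $a$ is any other face, take any incident vertex $v_a$ at cost $1$; if $a$ is a vertex, set $v_a=a$. Do the same for $a'$, obtaining $v_{a'}$. Then concatenate: the monotone path from $v_a$ to some $x\in L_\phi$, the length-$2$ path $x\to o_\phi\to x'$ for some $x'\in L_\phi$ (of $w_\kappa$-cost $2$, as $o_\phi$ is weightless and incident to all of $L_\phi$), and the reverse of the monotone path from $v_{a'}$ to $x'$. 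Summing the $w_\kappa$-costs gives $O_{g,c}(|Z'|)+O_{h,c}(p)+2+O_{h,c}(p)+O_{g,c}(|Z'|)=O_{h,c}(p+|Z'|)$, which proves $\mathsf{diam}_{w_\kappa}^*(\varGamma,\eta)=O_{h,c}(p+|Z'|)$, as claimed.

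The step I expect to be the main obstacle is the bookkeeping around the edge deletion defining $\varGamma$: one must verify carefully that passing from $G_0'[L_\phi^\psi]$ to $\varGamma$ neither destroys the between-consecutive-layers faces used by the monotone paths nor merges any of them with the (weighted) inner faces of $\partial o_\psi$. This is precisely where it matters that $L_\psi$ is a good layer, so that $L_\psi$ contains no vortex vertices or endpoints of virtual edges and the deleted edges are genuine $G_0$-edges of $E(L_\psi)\setminus E(\partial o_\psi)$ embedded in the interiors of those inner faces. A minor additional point is that $\varGamma$ need not be connected; but the routing above passes through the common outer face $o_\phi$, so it is still a valid vertex-face alternating path between any two objects and the estimate applies verbatim.
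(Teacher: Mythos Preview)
Your overall strategy---routing every vertex via a monotone radial path to $L_\phi$, then through $o_\phi$, and accounting separately for the single heavy face possibly met at the endpoints---is exactly the paper's approach, and your bookkeeping with the triangle inequality is correct.

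There is, however, a genuine gap in the routing step. You build the monotone path using faces $f_{b'}$ of $(G_0,\eta)$ and then assert that each such $f_{b'}$ ``survives as a face of $\varGamma$''. But $\varGamma$ is a subgraph of $G_0'$, not of $G_0$, and $G_0'$ was obtained from $G_0$ by adding virtual edges along the boundaries of the vortex disks $D_1,\dots,D_r$. The layers strictly between $L_\phi$ and $L_\psi$ may well include bad layers and hence be incident to vortex faces; if some $f_{b'}$ is a vortex face of $(G_0,\eta)$, it is subdivided by virtual edges in $(G_0',\eta)$ and does \emph{not} persist as a single face of $(\varGamma,\eta)$. Your justification (``the edges deleted in forming $\varGamma$ have both endpoints in $L_\psi$'') only addresses the passage from $G_0'[L_\phi^\psi]$ to $\varGamma$, not the passage from $G_0$ to $G_0'$; and your remark that $L_\psi$ is good is irrelevant here, since the problematic faces live between intermediate layers.

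The paper closes exactly this gap via Observation~\ref{obs-invortex}: when $f_{b'}$ is a vortex face, one replaces the single step $(v_{b'-1},f_{b'},v_{b'})$ by a VFA path of length $O(h)$ in $(G_0',\eta)$ that stays inside the faces of $(G_0',\eta)$ contained in $f_{b'}$. All such faces are light (they are incident to vertices outside $L_\psi$), so the weighted cost per layer remains $O(h)$ and the total is $O_h(\psi-\phi)=O_{h,c}(p)$. With this patch your argument goes through.
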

\begin{proof}
For convenience, we say a face of $(\varGamma,\eta)$ is \emph{heavy} if it is in $F_\eta(\partial o_\psi) \backslash \{o_\psi\}$, and \emph{light} otherwise.
The heavy faces of $(\varGamma,\eta)$ are exactly the ones whose $\kappa$-sets are nonempty, i.e., whose weight under $w_\kappa$ is nonzero.
The graphs $(\varGamma,\eta),(G_0'[L_\phi^\psi],\eta),(G_0[L_\phi^\psi],\eta)$ have the same outer face, which is $o_{\phi}$.
We show that for any vertex $v \in L_\phi^\psi$, there exists a VFA path in $(\varGamma,\eta)$ from $o_\phi$ to $v$ of (unweighted) length $O_h(p)$ that does not visit any heavy face.
This implies that the $w_\kappa$-weighted vertex-face distance between $o_\phi$ and any vertex $v \in L_\phi^\psi$ in $(\varGamma,\eta)$ is $O_h(p)$, and the $w_\kappa$-weighted vertex-face distance between $o_\phi$ and any face $f \in F_\eta(\varGamma)$ is $O_{h,c}(p+|Z'|)$ since $w_\kappa(f) = O_{h,c}(|Z'|)$ for all $f \in F_\eta(\varGamma)$.
It in turn follows that $\mathsf{diam}_{w_\kappa}^*(\varGamma,\eta) = O_{h,c}(p+|Z'|)$.

Consider a vertex $v \in L_i$ where $\phi \leq i \leq \psi$.
The vertex-face distance between the outer face $o$ of $(G_0,\eta)$ and $v$ is $2i-1$.
So there exists a VFA path $(f_1,v_1,\dots,f_i,v_i)$ in $(G_0,\eta)$ with $f_1,\dots,f_i \in F_\eta(G_0)$ and $v_1,\dots,v_i \in V(G_0)$ such that $f_1 = o$ and $v_i = v$.
This is a shortest VFA path from $o$ to $v$, and thus each subpath $(f_1,v_1,\dots,f_j,v_j)$ for $j \in [i]$ is also a shortest VFA path from $o$ to $v_j$.
It follows that $v_j \in L_j$ for all $j \in [i]$.
We claim that for all $j \in \{\phi+1,\dots,i\}$, there exists a VFA path in $(\varGamma,\eta)$ from $v_{j-1}$ to $v_j$ of length $O(h)$ which does not visit any heavy face.
By Fact~\ref{fact-diff1}, $\partial f_j \subseteq L_{j-1} \cup L_j$.
Therefore, $f_j$ is also a face of $(G_0[L_\phi^\psi],\eta)$.
Note that $f_j$ is not contained in any heavy face of $(\varGamma,\eta)$, as it is incident to $v_{j-1}$.
If $f_j$ is not a vortex face, then it is also a face of $\varGamma$, which is light.
Thus, $(v_{j-1},f_j,v_j)$ is the VFA path we want.
If $f_j$ is a vortex face, then it contains several faces of $(G_0'[L_\phi^\psi],\eta)$.
Let $F \subseteq F_\eta(G_0'[L_\phi^\psi])$ be the set of these faces.
By Observation~\ref{obs-invortex}, there exists a VFA path $\pi$ in $(G_0'[L_\phi^\psi],\eta)$ from $v_{j-1}$ to $v_j$ of length $O(h)$ that does not visit any face of $(G_0'[L_\phi^\psi],\eta)$ other than those in $F$.
Note that the faces in $F$ preserve in $(\varGamma,\eta)$, and they are light faces of $(\varGamma,\eta)$.
Therefore, $\pi$ is a VFA path in $(\varGamma,\eta)$ from $v_{j-1}$ to $v_j$ of length $O(h)$ which does not visit any heavy face.
It follows that there exists a VFA path in $(\varGamma,\eta)$ from $v_\phi$ to $v_i$ of length $O(h(i-\phi))$ which does not visit any heavy face.
We have $i-\phi \leq \psi - \phi \leq p$, and thus $h(i-\phi) = O_h(p)$.
Finally, since $v_\phi$ is incident to $o_\phi$, we see the existence of a VFA path in $(\varGamma,\eta)$ from $o_\phi$ to $v_i$ of length $O_h(p)$.
\end{proof}

Combining the above observation with Lemma~\ref{lem-twdiam2}, we have $\tw(\varGamma^\kappa) = O_{h,c}(p+|Z'|)$.
By Observation~\ref{obs-kappaminor}, we further have $\tw(\tor(t) - \sigma(t)) = O_{h,c}(p+|Z'|)$ and thus $\tw(\tor(t)) = O_{h,c}(p+|Z'|)$.
It follows that $\tw(G_0'/(Z_1 \backslash Z')) = O_{h,c}(p+|Z'|)$, according to Lemma~\ref{lem-torsotw}.
Finally, using Observation~\ref{obs-GtoG0'}, we have $\tw(G/(Z_1 \backslash Z')) = O_{h,c}(p+|Z'|)$, which completes the proof of Lemma~\ref{lem-decomp1}.

\subsection{Decomposing minor-free graphs}
In this section, we complete the proof of our main theorem (Theorem~\ref{thm-contraction}), based on the result of the previous section.
To this end, we need to introduce the famous Robertson-Seymour decomposition of an $H$-minor-free graph.
By simply combining the results of \cite{BandyapadhyayLLSJ22,DemaineHK05} with Lemma~\ref{lem-minimal}, we can obtain the following strong version of Robertson-Seymour decomposition.

\begin{lemma}[Robertson-Seymour decomposition] \label{lem-rsdecomp}
Let $H$ be a fixed graph.
Any connected $H$-minor-free graph $G$ admits a tree decomposition $(T,\beta)$ satisfying the following two conditions (where $h>0$ is a constant only depending on $H$).
\begin{enumerate}[label = (RS.\arabic*)]
    \item\label{item:rs-decomposition-1} For all $t \in V(T)$, $\tor(t)$ admits an $h$-almost-embeddable structure in which the partial embedding is minimal, and for every child $s$ of $t$ in $T$, all but at most three vertices in $\sigma(s)$ are vortex vertices or apices of $\tor(t)$.
    Also, $|\sigma(t)| \leq h$ and all vertices in $\sigma(t)$ are apices of $\tor(t)$.
    \item\label{item:rs-decomposition-2} For all $t \in V(T)$, $G[\gamma(t) \backslash \sigma(t)]$ is connected, and $\sigma(t) \subseteq N_G(\gamma(t) \backslash \sigma(t))$.
\end{enumerate}
Furthermore, given an $H$-minor-free graph $G$, the tree decomposition (together with the almost-embeddable structures of the torsos) can be computed in polynomial time.
\end{lemma}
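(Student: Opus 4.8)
The plan is to take a constructive form of the strengthened Robertson--Seymour structure theorem as a black box and then apply two kinds of modification: a purely combinatorial surgery on the decomposition tree to guarantee~\ref{item:rs-decomposition-2}, and a torso-by-torso adjustment of the almost-embeddable structures to guarantee the ``minimal partial embedding'' and ``adhesion consists of apices'' parts of~\ref{item:rs-decomposition-1}. The starting point is~\cite{DemaineHK05} (in the variant with the three-vertex property, also attributed to~\cite{DeVosDOSRSV04}): for a fixed $H$ there is a constant $h_0=h_0(H)$ such that every $H$-minor-free graph $G$ admits, computable in polynomial time, a tree decomposition $(T,\beta)$ in which every adhesion has size at most $h_0$, every torso $\tor(t)$ carries an $h_0$-almost-embeddable structure, and for every node $t$ and every child $s$ the adhesion $\sigma(s)$ meets the embeddable part of that structure in at most three vertices. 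The single observation that makes all subsequent modifications interact cleanly is that none of them ever \emph{enlarges} the embeddable part of a torso, so ``at most three vertices of each child's adhesion lie in the embeddable part'' is a monotone invariant that can only improve.

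For~\ref{item:rs-decomposition-2} I would preprocess $(T,\beta)$ in two rounds, essentially as also done in~\cite{BandyapadhyayLLSJ22}. First, to make $G[\gamma(t)\setminus\sigma(t)]$ connected for every $t$: processing nodes top-down, whenever $G[\gamma(t)\setminus\sigma(t)]$ has components $C_1,\dots,C_k$ with $k\ge 2$, replace the subtree $T_t$ by $k$ disjoint copies, the $i$-th obtained from $T_t$ by intersecting every bag with $C_i\cup\sigma(t)$ and re-attaching it to the parent $t'$ of $t$. One checks this is still a tree decomposition of $G$ (crucially $C_i\cap\beta(t')=\emptyset$, so every copy keeps adhesion $\sigma(t)$), that each new torso is an induced subgraph of an old torso and hence still $O_{h_0}(1)$-almost-embeddable with the three-vertex property inherited, and that the process terminates in polynomial time because $\sum_t(\text{number of components of }G[\gamma(t)\setminus\sigma(t)])$ is an invariant of the operation that upper-bounds the number of nodes. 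Second, to achieve $\sigma(t)\subseteq N_G(\gamma(t)\setminus\sigma(t))$: delete from all bags of $T_t$ every $v\in\sigma(t)$ with no neighbour in $\gamma(t)\setminus\sigma(t)$; its remaining incident edges are covered in $\beta(t')$ or by bags outside $T_t$, so the result is still a valid tree decomposition, and no set $\gamma(s)\setminus\sigma(s)$ changes, so the connectivity just established is preserved. At the root $\sigma=\emptyset$ and $\gamma=V(G)$, so both conditions hold there since $G$ is connected; after routine removal of redundant (e.g.\ empty) leaves, all of~\ref{item:rs-decomposition-2} holds.

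For the remaining content of~\ref{item:rs-decomposition-1} I would then, for each node $t$ independently, modify only the almost-embeddable structure of $\tor(t)$ (leaving $T$ and all bags untouched, so~\ref{item:rs-decomposition-2} is unaffected). First move every vertex of $\sigma(t)$ into the apex set --- any vertex can be moved into the apex set by deleting it from the embeddable part, or from its vortex together with the corresponding path-decomposition bags --- so that $\sigma(t)$ consists entirely of apices and still $|\sigma(t)|\le h_0$; this only removes vertices from the embeddable part, so the three-vertex property survives. Then apply Lemma~\ref{lem-minimal} to obtain an $O_{h_0}(1)$-almost-embeddable structure of $\tor(t)$ whose partial embedding is minimal; by the ``furthermore'' clause of Lemma~\ref{lem-minimal} every apex stays an apex (so $\sigma(t)$ is still entirely apices) and every vortex vertex becomes a vortex vertex or an apex, so the embeddable part only loses vertices and each child's adhesion still meets it in at most three vertices. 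Performing this for all (polynomially many) $t$ and letting $h$ be the resulting constant, which depends only on $H$, establishes~\ref{item:rs-decomposition-1}.

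The main obstacle is not any individual step but the compatibility bookkeeping: checking that the tree surgery of the second paragraph preserves almost-embeddability of all torsos along with the three-vertex property, that the two structural guarantees of~\ref{item:rs-decomposition-1} coexist, and that the adjustments of the third paragraph do not re-break the three-vertex property or the connectivity conditions of~\ref{item:rs-decomposition-2}. The monotonicity observation (every operation only shrinks embeddable parts, and induced subgraphs of almost-embeddable graphs remain almost-embeddable with a bounded blow-up in parameters) is exactly what disposes of these points; carrying out the tree-decomposition and torso bookkeeping in full --- including the invariant argument for polynomial termination of the component-splitting --- is the bulk of the actual proof.
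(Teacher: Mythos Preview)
Your proposal is correct and takes essentially the same approach as the paper: start from the strengthened Robertson--Seymour decomposition of~\cite{DemaineHK05} with the three-vertex property, apply the tree-surgery of~\cite{BandyapadhyayLLSJ22} to obtain~\ref{item:rs-decomposition-2} (noting torsos become subgraphs of old torsos so almost-embeddability and the three-vertex property persist), and finally invoke Lemma~\ref{lem-minimal} torso by torso for the minimality of the partial embedding, using its ``furthermore'' clause to preserve the apex/vortex structure. The only cosmetic difference is that the paper assumes~\cite{DemaineHK05} already places $\sigma(t)$ in the apex set, whereas you insert this as an explicit step; both are fine and the overall argument is the same.
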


\begin{proof}
The existence of a tree decomposition $(T,\beta)$ of $G$ in which $\tor(t)$ is $h$-almost-embeddable and $|\sigma(t)| \leq h$ for all $t \in V(T)$ follows from the profound work of Robertson and Seymour \cite{RobertsonS03a}.
Polynomial-time algorithms for computing a Robertson-Seymour decomposition (and the almost-embeddable structures of the torsos) are also known \cite{DemaineHK05,GroheKR13,KawarabayashiW11}.
The algorithm in \cite{DemaineHK05} guarantees the tree decomposition satisfies condition \ref{item:rs-decomposition-1} except that the partial embeddings of the $h$-almost-embeddable structure of $\tor(t)$ are not necessarily minimal.
The recent work \cite{BandyapadhyayLLSJ22} (Lemma~2.2, or Lemma~4 in the arxiv version) shows how to modify a given tree decomposition of $G$ to make it satisfy condition \ref{item:rs-decomposition-2}.
Each torso $\tor(t)$ of the modified tree decomposition is a subgraph of a torso $\tor(t')$ of the original tree decomposition satisfying $\sigma(t) \subseteq \sigma(t')$, and for every child $s$ of $t$, $\sigma(s) \subseteq \sigma(s')$ for some child $s'$ of $t'$.
Also, it is shown in \cite{BandyapadhyayLLSJ22} (last paragraph of the proof of Lemma~4 in the arxiv version) that given a graph $P$ with an $h$-almost-embeddable structure, for every subgraph $Q$ of $P$, one can compute a $O_h(1)$-almost-embeddable structure in which the vortex vertices (resp., apices) of $Q$ are exactly the vortex vertices (resp., apices) of $P$ that are preserved in $Q$.
Therefore, if we apply the modification of \cite{BandyapadhyayLLSJ22} to a tree decomposition $(T,\beta)$ of $G$ that already satisfies \ref{item:rs-decomposition-1} except the minimality of the partial embeddings, the new tree decomposition satisfies both conditions (except the minimality of the partial embeddings).
Finally, we use Lemma~\ref{lem-minimal} to obtain a new $O_h(1)$-almost-embeddable structure of each torso $\tor(t)$ in which the partial embedding is minimal.
We claim that condition \ref{item:rs-decomposition-1} and \ref{item:rs-decomposition-2} hold using the new almost-embeddable structures.
Condition \ref{item:rs-decomposition-2} still holds, as it is independent of the almost-embeddable structures of the torsos.
By Lemma~\ref{lem-minimal}, the apices in the old structure are also apices in the new structure and the vortex vertices in the old structure become vortex vertices or apices in the new structure.
Therefore, condition \ref{item:rs-decomposition-1} holds with the new almost-embeddable structures of $\tor(t)$.
\end{proof}

Let $G$ be an $H$-minor-free graph.
Without loss of generality, we can assume $G$ is connected.
We first compute the tree decomposition $(T,\beta)$ of $G$ in Lemma~\ref{lem-rsdecomp} as well as the almost-embeddable structures of the torsos.
Condition \ref{item:rs-decomposition-2} of Lemma~\ref{lem-rsdecomp} implies the following property of the tree decomposition $(T,\beta)$.

\begin{observation} \label{obs-noadhtos}
For all $t \in V(T)$, $\tor(t) - \sigma(t)$ is connected and $\sigma(t) \subseteq N_{\tor(t)}(\beta(t) \backslash \sigma(t))$.
\end{observation}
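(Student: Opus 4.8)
The plan is to derive both assertions directly from condition~\ref{item:rs-decomposition-2} and elementary properties of tree decompositions; the almost-embeddable structure of the torsos (condition~\ref{item:rs-decomposition-1}) plays no role. I will use two standard facts about $(T,\beta)$: for every child $s$ of $t$ one has $\gamma(s)\cap\beta(t)=\sigma(s)$, and every vertex of $\gamma(s)\setminus\sigma(s)$ (equivalently, of $\gamma(s)\setminus\beta(t)$) occurs only in bags of the subtree $T_s$, so that every $G$-edge incident to such a vertex is covered by a bag of $T_s$. I may also assume $\beta(t)\neq\sigma(t)$ for every non-root $t$, since a node whose bag equals its adhesion can be suppressed (by~\ref{item:rs-decomposition-2} it has a unique child) without affecting~\ref{item:rs-decomposition-1} or~\ref{item:rs-decomposition-2}.

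For the connectivity of $\tor(t)-\sigma(t)$, I would invoke Fact~\ref{fact-connintorso} with $U\coloneqq\gamma(t)\setminus\sigma(t)$, which induces a connected subgraph of $G$ by~\ref{item:rs-decomposition-2}. Since $\beta(t)\subseteq\gamma(t)$, we have $U\cap\beta(t)=\beta(t)\setminus\sigma(t)$, so $\tor(t)[U\cap\beta(t)]$ is exactly $\tor(t)-\sigma(t)$. Its vertex set $\beta(t)\setminus\sigma(t)$ is disjoint from $\sigma(t)$ and, by the reduction above, nonempty; hence no connected component can meet $\sigma(t)$, so the second alternative of Fact~\ref{fact-connintorso} is impossible and $\tor(t)-\sigma(t)$ is connected.

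For $\sigma(t)\subseteq N_{\tor(t)}(\beta(t)\setminus\sigma(t))$, fix $x\in\sigma(t)$; by~\ref{item:rs-decomposition-2} there is an edge $xy\in E(G)$ with $y\in\gamma(t)\setminus\sigma(t)$. If $y\in\beta(t)$, then $y\in\beta(t)\setminus\sigma(t)$ and $xy$ is already an edge of $G[\beta(t)]\subseteq\tor(t)$, so we are done. Otherwise $y\in\gamma(s)$ for some child $s$ of $t$, and since $y\notin\beta(t)$ we get $y\in\gamma(s)\setminus\sigma(s)$; thus the bag covering $xy$ lies in $T_s$, which forces $x\in\gamma(s)\cap\beta(t)=\sigma(s)$. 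Now pick any $w\in\beta(t)\setminus\sigma(t)$ and a path $P$ from $y$ to $w$ in $G[\gamma(t)\setminus\sigma(t)]$ (using its connectedness), and let $z$ be the first vertex of $P$, read from $y$, that does not lie in $\gamma(s)\setminus\sigma(s)$; such a $z$ exists and $z\neq y$ because $w\in\beta(t)$ is not in $\gamma(s)\setminus\sigma(s)$. The vertex preceding $z$ on $P$ lies in $\gamma(s)\setminus\sigma(s)$, so the edge joining it to $z$ is covered by a bag of $T_s$, whence $z\in\gamma(s)$; combined with $z\notin\gamma(s)\setminus\sigma(s)$ this gives $z\in\sigma(s)\subseteq\beta(t)$. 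Since $z$ lies on $P$ we also have $z\notin\sigma(t)$, so $z\in\sigma(s)\cap(\beta(t)\setminus\sigma(t))$ and in particular $z\neq x$. As $\sigma(s)$ is a clique in $\tor(t)$, we conclude $xz\in E(\tor(t))$ with $z\in\beta(t)\setminus\sigma(t)$, so $x\in N_{\tor(t)}(\beta(t)\setminus\sigma(t))$.

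The only delicate point is the separation bookkeeping: everything hinges on the fact that any edge, or any subpath of $P$, that leaves $\gamma(s)\setminus\sigma(s)$ must do so through $\sigma(s)$, which is precisely what certifies $x\in\sigma(s)$ and locates the witness $z$ in $\sigma(s)\cap(\beta(t)\setminus\sigma(t))$. The harmless reduction eliminating nodes with $\beta(t)=\sigma(t)$ is also needed, since for such a node (when $\sigma(t)\neq\emptyset$) the second assertion would be literally false. Apart from these, both parts are routine.
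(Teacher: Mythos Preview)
Your proposal is correct. For the connectivity of $\tor(t)-\sigma(t)$ you argue exactly as the paper does, via Fact~\ref{fact-connintorso} applied to $U=\gamma(t)\setminus\sigma(t)$.

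For the second assertion the approaches diverge. The paper simply applies Fact~\ref{fact-connintorso} once more, now with $U=(\gamma(t)\setminus\sigma(t))\cup\{v\}$ for each $v\in\sigma(t)$: since $G[U]$ is connected and $U\cap\beta(t)=(\beta(t)\setminus\sigma(t))\cup\{v\}$ meets $\sigma(t)$ only in $v$, the second alternative of the fact forces a single component, hence $\tor(t)[(\beta(t)\setminus\sigma(t))\cup\{v\}]$ is connected and $v$ has the required neighbour. Your argument instead traces a path in $G[\gamma(t)\setminus\sigma(t)]$ and uses the adhesion $\sigma(s)$ of the relevant child as an explicit separator to locate a torso-neighbour $z\in\sigma(s)\cap(\beta(t)\setminus\sigma(t))$. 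This is correct and self-contained, but it is essentially a by-hand reproof of what Fact~\ref{fact-connintorso} already packages; the paper's route is a couple of lines. One point in your favour: you flag the degenerate case $\beta(t)=\sigma(t)$ and dispose of it by a suppression argument, whereas the paper's proof tacitly assumes $\beta(t)\setminus\sigma(t)\neq\emptyset$ when concluding $v\in N_{\tor(t)}(\beta(t)\setminus\sigma(t))$ from connectedness.
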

\begin{proof}
By condition \ref{item:rs-decomposition-2} of Lemma~\ref{lem-rsdecomp}, $G[\gamma(t) \backslash \sigma(t)]$ is connected.
So by Fact~\ref{fact-connintorso}, either $\tor(t)[\beta(t) \cap (\gamma(t) \backslash \sigma(t))] = \tor(t) - \sigma(t)$ is connected or every connected component of $\tor(t) - \sigma(t)$ intersects $\sigma(t)$.
The latter is false, and thus $\tor(t) - \sigma(t)$ is connected.
To see $\sigma(t) \subseteq N_{\tor(t)}(\beta(t) \backslash \sigma(t))$, consider a vertex $v \in \sigma(t)$.
Again, by condition \ref{item:rs-decomposition-2} of Lemma~\ref{lem-rsdecomp}, $v \in \sigma(t) \subseteq N_G(\gamma(t) \backslash \sigma(t))$.
This implies $G[(\gamma(t) \backslash \sigma(t)) \cup \{v\}]$ is connected, because $G[\gamma(t) \backslash \sigma(t)]$ is connected.
So by Fact~\ref{fact-connintorso}, either $\tor(t)[(\gamma(t) \backslash \sigma(t)) \cup \{v\}]$ is connected or every connected component of $\tor(t)[(\gamma(t) \backslash \sigma(t)) \cup \{v\}]$ intersects $\sigma(t)$.
If $\tor(t)[(\gamma(t) \backslash \sigma(t)) \cup \{v\}]$ is connected, we directly have $v \in N_{\tor(t)}(\beta(t) \backslash \sigma(t))$.
If every connected component of $\tor(t)[(\gamma(t) \backslash \sigma(t)) \cup \{v\}]$ intersects $\sigma(t)$, then $\tor(t)[(\gamma(t) \backslash \sigma(t)) \cup \{v\}]$ can only have one connected component, because $(\gamma(t) \backslash \sigma(t)) \cup \{v\}$ only intersects $\sigma(t)$ at one vertex, $v$.
Thus, $\tor(t)[(\gamma(t) \backslash \sigma(t)) \cup \{v\}]$ is connected and $v \in N_{\tor(t)}(\beta(t) \backslash \sigma(t))$.
\end{proof}

In order to construct the sets $Z_1,\dots,Z_p \subseteq V(G)$, we consider every node $t \in V(T)$, and compute $p$ sets $Z_1^{(t)},\dots,Z_p^{(t)} \subseteq \beta(t) \backslash \sigma(t)$ as follows.
Let $t \in V(T)$ be a node and $A$ be the apex set of $\tor(t)$.
We say an edge $aa'$ of $\tor(t)[A]$ is \textit{redundant} if there exists a vertex $v \in \beta(t) \backslash A$ adjacent to both $a$ and $a'$ in $\tor(t)$.
Define $G_t$ as the graph obtained from $\tor(t)$ by removing all redundant edges.
The next observation shows that $G_t$ ``inherits'' the nice properties of $\tor(t)$ in Observation~\ref{obs-noadhtos}.

\begin{observation} \label{obs-noadhGt}
For all $t \in V(T)$, $G_t - \sigma(t)$ is connected and $\sigma(t) \subseteq N_{G_t}(\beta(t) \backslash \sigma(t))$.
\end{observation}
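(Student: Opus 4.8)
The plan is to deduce Observation~\ref{obs-noadhGt} from Observation~\ref{obs-noadhtos} by showing that passing from $\tor(t)$ to $G_t$ (i.e.\ deleting the redundant edges) never destroys the connectivity we need, because every redundant edge can be ``rerouted'' through a vertex lying outside the apex set, and such a vertex in particular lies outside $\sigma(t)$.

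First I would record two bookkeeping facts. By condition~\ref{item:rs-decomposition-1} of Lemma~\ref{lem-rsdecomp}, every vertex of $\sigma(t)$ is an apex of $\tor(t)$, so $\sigma(t) \subseteq A$; consequently any $v \in \beta(t) \setminus A$ automatically satisfies $v \in \beta(t) \setminus \sigma(t)$. Second, if $aa'$ is a redundant edge witnessed by some $v \in \beta(t) \setminus A$ adjacent in $\tor(t)$ to both $a$ and $a'$, then the edges $av$ and $a'v$ are \emph{not} edges of $\tor(t)[A]$ (since $v \notin A$) and hence are not redundant, so they survive in $G_t$; thus in $G_t$ the vertices $a$ and $a'$ remain joined by the two-edge path through $v$, and this detour avoids $\sigma(t)$ whenever $a,a',v \notin \sigma(t)$.

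With these in hand both claims follow by a short rerouting argument. For connectivity of $G_t - \sigma(t)$: given $u, w \in \beta(t) \setminus \sigma(t)$, Observation~\ref{obs-noadhtos} provides a $u$--$w$ path in $\tor(t) - \sigma(t)$; every redundant edge $aa'$ on this path has $a,a' \notin \sigma(t)$ and a witness $v \in \beta(t) \setminus A \subseteq \beta(t) \setminus \sigma(t)$, so replacing each such edge by the detour through its witness yields a $u$--$w$ walk entirely inside $G_t - \sigma(t)$. For the second statement: given $v \in \sigma(t)$, Observation~\ref{obs-noadhtos} gives a neighbor $u \in \beta(t) \setminus \sigma(t)$ of $v$ in $\tor(t)$; if the edge $vu$ is non-redundant it already lies in $G_t$, and if $vu$ is redundant then its witness $w$ lies in $\beta(t) \setminus A \subseteq \beta(t) \setminus \sigma(t)$ and is a neighbor of $v$ in $G_t$; either way $v \in N_{G_t}(\beta(t) \setminus \sigma(t))$.

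I do not expect a genuine obstacle here: the whole content is the observation that redundant edges, by definition, live inside the apex set $A$, while $\sigma(t) \subseteq A$ and the rerouting witnesses are chosen from $\beta(t) \setminus A$. The only thing requiring care is the case analysis keeping track of which vertices sit in $\sigma(t)$, in $A$, or in $\beta(t) \setminus A$, so that the detours are provably contained in $\beta(t) \setminus \sigma(t)$.
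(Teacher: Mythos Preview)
Your proposal is correct and follows essentially the same argument as the paper: record $\sigma(t)\subseteq A$ from condition~\ref{item:rs-decomposition-1}, observe that any redundant edge $aa'$ has a witness $v\in\beta(t)\setminus A\subseteq\beta(t)\setminus\sigma(t)$ whose incident edges $av,a'v$ are non-redundant, and use this to reroute. The paper's proof is organized identically, with the same two-case analysis for the second claim.
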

\begin{proof}
By Observation~\ref{obs-noadhtos}, $\tor(t) - \sigma(t)$ is connected.
Because $G_t$ is obtained by removing redundant edges from $\tor(t)$, to show $G_t - \sigma(t)$ is connected, it suffices to show that for any redundant edge $(a,a')$ where $a,a' \in \beta(t) \backslash \sigma(t)$, $a$ and $a'$ are in the same connected component of $G_t - \sigma(t)$.
If $aa'$ is redundant, by definition there exists $v \in \beta(t) \backslash A$ neighboring to both $a$ and $a'$ in $\tor(t)$.
By condition \ref{item:rs-decomposition-1} of Lemma~\ref{lem-rsdecomp}, $\sigma(t) \subseteq A$ and thus $v \in \beta(t) \backslash \sigma(t)$.
Since $v \notin A$, the edges $av$ and $a'v$ are both not redundant and thus preserve in $G_t - \sigma(t)$.
Therefore, $a$ and $a'$ lie in the same connected component of $G_t - \sigma(t)$.

To see $\sigma(t) \subseteq N_{G_t}(\beta(t) \backslash \sigma(t))$, consider a vertex $a \in \sigma(t)$.
By Observation~\ref{obs-noadhtos}, $\sigma(t) \subseteq N_{\tor(t)}(\beta(t) \backslash \sigma(t))$.
Thus, $a$ has a neighbor $a' \in \beta(t) \backslash \sigma(t)$ in $\tor(t)$.
If $aa'$ is not redundant, then it preserves in $G_t$, which implies $a \in N_{G_t}(\beta(t) \backslash \sigma(t))$.
Otherwise, there exists $v \in \beta(t) \backslash A$ such that $av \in E(\tor(t))$.
Note that $av$ is not redundant as $v \notin A$.
By condition \ref{item:rs-decomposition-1} of Lemma~\ref{lem-rsdecomp}, $\sigma(t) \subseteq A$ and thus $v \in \beta(t) \backslash \sigma(t)$.
Therefore, $a \in N_{G_t}(\beta(t) \backslash \sigma(t))$.
\end{proof}

For each $v \in \sigma(t)$, we pick an arbitrary vertex $v' \in \beta(t) \backslash \sigma(t)$ adjacent to $v$ in $G_t$ and call $v'$ the \textit{witness neighbor} of $v$ (by the above observation, such a witness neighbor exists).
Define $\varPhi^{(t)} \subseteq \beta(t) \backslash \sigma(t)$ as the set of witness neighbors of all vertices in $\sigma(t)$.
We have $|\varPhi^{(t)}| \leq |\sigma(t)| \leq h$ by \ref{item:rs-decomposition-1} of Lemma~\ref{lem-rsdecomp}.
Next, we apply Corollary~\ref{cor-decomp2} on $G = G_t - \sigma(t)$ with the set $\varPhi = \varPhi^{(t)}$, and define $Z_1^{(t)},\dots,Z_p^{(t)}$ as the sets $Z_1,\dots,Z_p$ obtained by the corollary.
Note that $G_t - \sigma(t)$ is obtained from $\tor(t)$ by deleting some apices and some edges between apices, so it inherits the almost-embeddable structure of $\tor(t)$, in which the partial embedding is minimal.
Also, $G_t - \sigma(t)$ is connected by Observation~\ref{obs-noadhtos}.
Therefore, Corollary~\ref{cor-decomp2} is applicable on $G_t - \sigma(t)$.

Next, we define a coloring $\mathsf{col}\colon V(T) \rightarrow [p]$ which assigns each node of $T$ one of the colors $1,\dots,p$.
The coloring is defined in a top-down manner in $T$.
If $t$ is the root, define $\mathsf{col}(t)$ as an arbitrary number in $[p]$.
Otherwise, let $t'$ be the parent of $t$ and suppose $\mathsf{col}(t')$ is already defined.

\begin{observation}
There exists at most one index $i \in [p]$ such that the clique $\tor(t')[\sigma(t)]$ contains an edge of $G_{t'}[Z_i^{(t')}]$.
\end{observation}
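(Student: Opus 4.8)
The plan is to bound $|\sigma(t)\cap\bigcup_{i=1}^p Z_i^{(t')}|$ by $3$ and then finish with a trivial pigeonhole argument. First I would recall that $Z_1^{(t')},\dots,Z_p^{(t')}$ are by construction exactly the sets produced by applying Corollary~\ref{cor-decomp2} to the graph $G_{t'}-\sigma(t')$. Hence, writing $\mathsf{Vort}'$ for the set of vortex vertices and $A'$ for the apex set of $G_{t'}-\sigma(t')$, the corollary gives $Z_i^{(t')}\subseteq V(G_{t'}-\sigma(t'))\setminus(N_{G_{t'}-\sigma(t')}[\mathsf{Vort}']\cup A')$ for every $i\in[p]$; in particular $\bigcup_{i=1}^p Z_i^{(t')}$ is disjoint from $\mathsf{Vort}'\cup A'$.

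Next I would translate $\mathsf{Vort}'$ and $A'$ back into the almost-embeddable structure of $\tor(t')$. By condition~\ref{item:rs-decomposition-1} of Lemma~\ref{lem-rsdecomp} we have $\sigma(t')\subseteq A$, where $A$ is the apex set of $\tor(t')$, and the apex set of an almost-embeddable structure is disjoint from the embeddable part and from all vortices, so no vortex vertex of $\tor(t')$ lies in $\sigma(t')$. Since $G_{t'}-\sigma(t')$ is obtained from $\tor(t')$ by deleting the vertices of $\sigma(t')$ (all apices) together with some edges between apices, the statement used in the proof of Lemma~\ref{lem-rsdecomp} (a subgraph $Q$ of an $h$-almost-embeddable graph $P$ admits an $O_h(1)$-almost-embeddable structure whose vortex vertices, resp.\ apices, are exactly those of $P$ surviving in $Q$) lets us take the structure of $G_{t'}-\sigma(t')$ with $\mathsf{Vort}'$ equal to the full set of vortex vertices of $\tor(t')$ and $A'=A\setminus\sigma(t')$. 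Therefore $\bigcup_{i=1}^p Z_i^{(t')}$ contains no vortex vertex of $\tor(t')$ and no apex of $\tor(t')$. Applying condition~\ref{item:rs-decomposition-1} once more to the parent--child pair $(t',t)$, all but at most three vertices of $\sigma(t)$ are vortex vertices or apices of $\tor(t')$, whence $|\sigma(t)\cap\bigcup_{i=1}^p Z_i^{(t')}|\le 3$.

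Finally I would conclude. If the clique $\tor(t')[\sigma(t)]$ contains an edge of $G_{t'}[Z_i^{(t')}]$, then its two distinct endpoints both lie in $\sigma(t)\cap Z_i^{(t')}$, so $|\sigma(t)\cap Z_i^{(t')}|\ge 2$. If there were two distinct such indices $i\ne j$, then disjointness of $Z_i^{(t')}$ and $Z_j^{(t')}$ would force $|\sigma(t)\cap\bigcup_{\ell=1}^p Z_\ell^{(t')}|\ge |\sigma(t)\cap Z_i^{(t')}|+|\sigma(t)\cap Z_j^{(t')}|\ge 4$, contradicting the bound of the previous paragraph; hence at most one index $i$ works. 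I do not expect a genuine obstacle here: the only step requiring care is matching the notions of vortex vertex and apex for $G_{t'}-\sigma(t')$ with those for $\tor(t')$, which is precisely what the cited fact from the proof of Lemma~\ref{lem-rsdecomp} provides, together with the inclusion $\sigma(t')\subseteq A$.
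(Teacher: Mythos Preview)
Your proposal is correct and follows essentially the same approach as the paper: bound $|\sigma(t)\cap\bigcup_{i=1}^p Z_i^{(t')}|\le 3$ using condition~\ref{item:rs-decomposition-1} together with the fact that the $Z_i^{(t')}$ avoid vortex vertices and apices, then conclude by pigeonhole. The paper is slightly terser about why the vortex vertices and apices of $G_{t'}-\sigma(t')$ coincide with those of $\tor(t')$ (it simply notes that $G_{t'}-\sigma(t')$ is obtained from $\tor(t')$ by deleting some apices and some edges between apices, so it inherits the structure), whereas you invoke the subgraph fact from the proof of Lemma~\ref{lem-rsdecomp}; both justifications are fine.
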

\begin{proof}
By \ref{item:rs-decomposition-1} of Lemma~\ref{lem-rsdecomp}, $\tor(t')[\sigma(t)]$ is a clique in $\tor(t')$ in which all but at most three vertices are vortex vertices and apices of $\tor(t')$.
On the other hand, by Corollary~\ref{cor-decomp2}, $Z_1^{(t')},\dots,Z_p^{(t')}$ do not contain vortex vertices or apices of $G_{t'}$, and thus do not contain vortex vertices or apices of $\tor(t')$.
Therefore, $\bigcup_{i=1}^p Z_i^{(t')}$ contains at most three vertices in $\sigma(t)$.
As $Z_1^{(t')},\dots,Z_p^{(t')}$ are disjoint, it follows that there exists at most one index $i \in [p]$ such that $|\sigma(t) \cap Z_i^{(t')}| \geq 2$.
This proves the observation since $\tor(t')[\sigma(t)]$ contains an edge of $G_{t'}[Z_i^{(t')}]$ only if $|\sigma(t) \cap Z_i^{(t')}| \geq 2$.
\end{proof}
If there exists some $i \in [p]$ such that $\tor(t')[\sigma(t)]$ contains an edge of $G_{t'}[Z_i^{(t')}]$, we then set $\mathsf{col}(t) = i$; by the above observation, such an index $i$ is unique (if it exists).
Otherwise, $\tor(t')[\sigma(t)]$ is edge-disjoint from all of $\tor(t')[Z_1^{(t')}],\dots,\tor(t')[Z_p^{(t')}]$.
In this case, we set $\mathsf{col}(t) = \mathsf{col}(t')$.

Finally, we construct the partition $Z_1,\dots,Z_p$ in Theorem~\ref{thm-contraction} as follows.
Let $T_i = \mathsf{col}^{-1}(\{i\})$ be the set of nodes of $T$ colored with $i$.
For each node $t \in V(T)$, let $R^{(t)}$ be the complement of $\bigcup_{i=1}^p Z_i^{(t)}$ in $\beta(t) \backslash \sigma(t)$, i.e., $R^{(t)} \coloneqq (\beta(t) \backslash \sigma(t)) \backslash (\bigcup_{i=1}^p Z_i^{(t)})$.
We then define
\begin{equation}
 Z_i \coloneqq \left(\bigcup_{t \in T} Z_i^{(t)}\right) \cup \left(\bigcup_{t \in T_i} R^{(t)}\right)
\end{equation}
for every $i \in [p]$.
From our construction, one can easily verify the following fact.
\begin{observation}
$Z_1,\dots,Z_p$ is a partition of $V(G)$.
\end{observation}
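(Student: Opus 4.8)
The plan is to verify the two defining properties of a partition --- that the $Z_i$ cover $V(G)$ and that they are pairwise disjoint --- by tracking each vertex to the unique node of $T$ that ``owns'' it. The ingredients I would assemble first are: Fact~\ref{fact-partition}, which says that $\{\beta(t) \backslash \sigma(t) : t \in V(T)\}$ is a partition of $V(G)$; the disjointness of $Z_1^{(t)},\dots,Z_p^{(t)}$ provided by Corollary~\ref{cor-decomp2}; the fact that, by definition, $R^{(t)}$ is the complement of $\bigcup_{i=1}^p Z_i^{(t)}$ in $\beta(t) \backslash \sigma(t)$, so that $Z_1^{(t)},\dots,Z_p^{(t)},R^{(t)}$ together partition $\beta(t) \backslash \sigma(t)$; and the fact that $\mathsf{col}\colon V(T) \to [p]$ is a genuine function, hence $T_1,\dots,T_p$ is a partition of $V(T)$.

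For coverage, given $v \in V(G)$ I would invoke Fact~\ref{fact-partition} to find the unique $t \in V(T)$ with $v \in \beta(t) \backslash \sigma(t)$; then either $v \in Z_i^{(t)}$ for some $i$, whence $v \in Z_i$, or $v \in R^{(t)}$, whence $v \in Z_{\mathsf{col}(t)}$ since $t \in T_{\mathsf{col}(t)}$. For disjointness, suppose $v \in Z_i \cap Z_j$ with $i \neq j$. Since $Z_i^{(s)}, R^{(s)} \subseteq \beta(s) \backslash \sigma(s)$ for every $s$, Fact~\ref{fact-partition} forces the membership $v \in Z_i$ to be witnessed only at the unique node $t$ owning $v$: either $v \in Z_i^{(t)}$, or $t \in T_i$ and $v \in R^{(t)}$; similarly for $Z_j$. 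Splitting into cases, $v \in Z_i^{(t)} \cap Z_j^{(t)}$ contradicts Corollary~\ref{cor-decomp2}; $v \in Z_i^{(t)}$ together with $v \in R^{(t)}$ contradicts $R^{(t)} \cap \bigcup_k Z_k^{(t)} = \emptyset$ (and the symmetric case is identical); and $v \in R^{(t)}$ with $t \in T_i \cap T_j$ contradicts the single-valuedness of $\mathsf{col}$. Every case is impossible, so the $Z_i$ are disjoint.

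I do not expect any real obstacle; the statement is a purely combinatorial consequence of the construction. The only step needing a moment's care is the ``localization'' observation --- that, because $Z_i^{(s)}$ and $R^{(s)}$ live inside $\beta(s) \backslash \sigma(s)$, a vertex's membership in $Z_i$ can only arise from its owning node --- since this is precisely what lets the local disjointness from Corollary~\ref{cor-decomp2} and the single-valuedness of $\mathsf{col}$ combine into the desired global disjointness.
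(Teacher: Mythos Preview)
Your proposal is correct and follows essentially the same approach as the paper: localize each vertex to its unique owning node $t$ via Fact~\ref{fact-partition}, use that $Z_1^{(t)},\dots,Z_p^{(t)},R^{(t)}$ partition $\beta(t)\backslash\sigma(t)$, and invoke the single-valuedness of $\mathsf{col}$ to conclude. The paper's proof is slightly more terse in the disjointness argument (it handles uniqueness in one line per case rather than by explicit contradiction), but the logic is identical.
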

\begin{proof}
Consider a vertex $v \in V(G)$.
We need to show that there exists a unique index $i \in [p]$ such that $v \in Z_i$.
By Fact~\ref{fact-partition}, $\{\beta(t) \backslash \sigma(t) \mid t \in V(T)\}$ forms a partition of $V(G)$.
So there exists a unique node $t \in V(T)$ such that $v \in \beta(t) \backslash \sigma(t)$.
By our construction, $Z_1^{(t)},\dots,Z_p^{(t)},R^{(t)}$ is a partition of $\beta(t) \backslash \sigma(t)$.
If $v \in Z_i^{(t)}$, then $v \in Z_i$ and $v \notin Z_j$ for all $j \in [p]$ other than $i$.
If $v \in R^{(t)}$, then $v \in Z_i$ for $i = \mathsf{col}(t)$ and $v \notin Z_j$ for all $j \in [p]$ other than $i$ (simply because $t \notin T_j$).
\end{proof}

\begin{observation} \label{obs-connectedbottom}
Let $i \in [p]$ and $t \in V(T)$.
For any edge $vv'$ of $\tor(t)[Z_i^{(t)}]$, $v$ and $v'$ lie in the same connected component of $G[(\bigcup_{s \in S} (\gamma(s) \backslash \sigma(s)) \cap Z_i) \cup \{v,v'\}]$, where $S$ consists of all children $s$ of $t$ in $T$ satisfying $v,v' \in \sigma(s)$.
\end{observation}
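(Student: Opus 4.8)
The plan is to prove the statement by induction on the height of $t$ in $T$, reducing the required connectivity in $G$ to connectivity inside the torso of an appropriate child of $t$ and then recursing to replace ``fake'' torso edges by genuine paths of $G$. The base case is when $vv'\in E(G)$ (which always holds when $t$ is a leaf, since then $\tor(t)=G[\beta(t)]$ has no fake edges): here $v$ and $v'$ are already adjacent inside $G[\{v,v'\}]$, which is an induced subgraph of $G[(\bigcup_{s\in S}(\gamma(s)\setminus\sigma(s))\cap Z_i)\cup\{v,v'\}]$, so there is nothing to prove (in particular if $S=\emptyset$). So assume $vv'$ is a fake edge; then some child $s$ of $t$ has $v,v'\in\sigma(s)$, i.e.\ $s\in S$, and it suffices to connect $v$ and $v'$ inside $G[((\gamma(s)\setminus\sigma(s))\cap Z_i)\cup\{v,v'\}]$, which is contained in the set of the statement.

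The first key point is that $\mathsf{col}(s)=i$. By Corollary~\ref{cor-decomp2} the sets $Z_1^{(t)},\dots,Z_p^{(t)}$ avoid the apex set of $\tor(t)$ (they lie in $\beta(t)\setminus\sigma(t)$, and the apex set of $G_t-\sigma(t)$ is $A\setminus\sigma(t)$ with $\sigma(t)\subseteq A$ by \ref{item:rs-decomposition-1}), so neither $v$ nor $v'$ is an apex of $\tor(t)$; hence $vv'$ is not a redundant edge and therefore survives in $G_t$, i.e.\ $vv'\in E(G_t)$. Thus $vv'$ is simultaneously an edge of the clique $\tor(t)[\sigma(s)]$ and of $G_t[Z_i^{(t)}]$, and by the definition of $\mathsf{col}$ this forces $\mathsf{col}(s)=i$. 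Consequently $s\in T_i$, so $R^{(s)}\subseteq Z_i$; moreover $\varPhi^{(s)}\subseteq R^{(s)}$, because by Corollary~\ref{cor-decomp2}\ref{item:decomp2-3} every non-apex vertex of $\varPhi^{(s)}$ avoids all $Z_j^{(s)}$ and the apex vertices of $\varPhi^{(s)}$ do so too (the $Z_j^{(s)}$ are apex-free), hence $\varPhi^{(s)}\cap\bigcup_j Z_j^{(s)}=\emptyset$ and $\varPhi^{(s)}\subseteq(\beta(s)\setminus\sigma(s))\setminus\bigcup_j Z_j^{(s)}=R^{(s)}\subseteq Z_i$. Now let $v^\ast,(v')^\ast\in\varPhi^{(s)}$ be the witness neighbours of $v$ and $v'$ chosen at $s$, so $vv^\ast,v'(v')^\ast\in E(G_s)$; by Corollary~\ref{cor-decomp2}\ref{item:decomp2-2} applied at $s$, $v^\ast$ and $(v')^\ast$ lie in one connected component of the graph $G'_s$ of the corollary, which is a subgraph of $G_s$ with vertex set $R^{(s)}=(\beta(s)\setminus\sigma(s))\setminus\bigcup_j Z_j^{(s)}$. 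Splicing a $v^\ast$--$(v')^\ast$ path in $G'_s$ between the edges $vv^\ast$ and $v'(v')^\ast$ yields a walk $W$ from $v$ to $v'$ in $\tor(s)$ all of whose vertices lie in $\{v,v'\}\cup R^{(s)}\subseteq\{v,v'\}\cup((\gamma(s)\setminus\sigma(s))\cap Z_i)$, and $W$ does not use the fake edge $vv'$ of $\tor(t)$.

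It then remains to replace $W$ by a genuine walk of $G$. Each edge $ab$ of $W$ is either an edge of $G$, which we keep, or a fake edge of $\tor(s)$, in which case $a,b\in\sigma(s'')$ for some child $s''$ of $s$ and $a,b\in\{v,v'\}\cup R^{(s)}\subseteq Z_i\cap\beta(s)$, so $ab$ is an edge of $\tor(s)[Z_i\cap\beta(s)]$ witnessed by $s''$; the induction hypothesis applied at $s$ (which has strictly smaller height than $t$) then supplies a path from $a$ to $b$ inside $G[((\gamma(s'')\setminus\sigma(s''))\cap Z_i)\cup\{a,b\}]$, all of whose vertices lie in $((\gamma(s)\setminus\sigma(s))\cap Z_i)\cup\{v,v'\}$ because $\gamma(s'')\setminus\sigma(s'')\subseteq\gamma(s)\setminus\sigma(s)$ and $a,b\in\{v,v'\}\cup R^{(s)}$. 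Substituting these paths for the fake edges of $W$ gives a walk from $v$ to $v'$ in $G[((\gamma(s)\setminus\sigma(s))\cap Z_i)\cup\{v,v'\}]$, closing the induction. To make the induction self-supporting, the statement being inducted on should be phrased for every edge of $\tor(t)[Z_i\cap\beta(t)]$ (and every witnessing child), which subsumes the observation since $Z_i^{(t)}\subseteq Z_i\cap\beta(t)$.

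The step I expect to be the main obstacle is exactly the recursion in the last paragraph: propagating the ``correct colour'' down the tree. When $ab$ is a fake edge of $W$ with $a,b\in R^{(s)}$, the top-level argument for $\mathsf{col}(s)=i$ (via non-redundancy of an edge inside a Corollary-2 set) no longer applies verbatim, so one must re-establish that the child $s''$ witnessing $ab$ again receives colour $i$; this is where the ``at most three vertices of $\sigma(s'')$ lie in the embeddable part'' property from \ref{item:rs-decomposition-1}, the disjointness of the colour classes, and a careful accounting of which vertices of $\sigma(s'')$ can be vortex vertices or apices come in, together with checking that the paths produced by Corollary~\ref{cor-decomp2}\ref{item:decomp2-2} (the reason apex-incident edges are deleted from $G'_s$) and the witness-neighbour edges stay inside $Z_i$. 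Getting this bookkeeping right, rather than any single geometric fact, is the crux.
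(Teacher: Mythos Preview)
Your approach is essentially the paper's: pick a witnessing child $s$, show $\mathsf{col}(s)=i$, pass to witness neighbours in $\varPhi^{(s)}$, connect them via the path supplied by Corollary~\ref{cor-decomp2}\ref{item:decomp2-2}, and recurse on the fake edges of the resulting torso walk. You have also correctly located the crux (propagating the colour $i$ to the grandchild $s''$) and named the right ingredients for it.

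The one place where your write-up does not close is the inductive carrier. Your proposed generalisation---induct on ``every edge of $\tor(t)[Z_i\cap\beta(t)]$ and every witnessing child''---is too strong to be self-supporting and plausibly false. To prove it at $t$ you would need, for an \emph{arbitrary} such edge $ab$ witnessed by $s$, to show $\mathsf{col}(s)=i$; but your direct argument uses that $a,b\in Z_i^{(t)}$ (so that $ab$ is a genuine edge of $G_t[Z_i^{(t)}]$), and this fails once $a$ or $b$ lies in $\sigma(t)$ or in $R^{(t)}$. In the recursion the edges of $W$ have endpoints in $\{v,v'\}\cup R^{(s)}\subseteq\sigma(s)\cup R^{(s)}$, i.e.\ precisely not in $Z_i^{(s)}$, so the hypothesis you would be assuming at $s$ is the one you cannot verify.

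The paper's fix is to change the inductive statement to: \emph{for every $t\in T_i$ and every $v,v'\in\sigma(t)\cap Z_i$, the two vertices are connected in $G[((\gamma(t)\setminus\sigma(t))\cap Z_i)\cup\{v,v'\}]$}. The hypothesis $t\in T_i$ is exactly what makes the recursion go through: when one descends along a fake witness-neighbour edge $uv$ or a fake edge $ww'$ of the $G'$-path to a child $s$, the case analysis you sketched (non-redundancy when both endpoints are apices; condition~\ref{item:decomp2-3} of Corollary~\ref{cor-decomp2} for a non-apex witness neighbour; the three-vertex bound of \ref{item:rs-decomposition-1} together with the deleted apex-incident edges of $G'$ for path edges) shows that $\tor(t)[\sigma(s)]$ is edge-disjoint from every $G_t[Z_j^{(t)}]$, whence the \emph{fallback} rule gives $\mathsf{col}(s)=\mathsf{col}(t)=i$, and one can recurse at $s\in T_i$. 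The Observation is then an immediate corollary: your own argument gives $\mathsf{col}(s)=i$ for the first step, and the claim at $s$ finishes. So replace your generalised hypothesis by the paper's claim and the proof you outlined goes through.
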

\begin{proof}
We first prove the following statement, and show it implies the observation.

\begin{claim}
 \label{claim:realize-fake-edge}
 For all $i \in [p]$ and $t \in T_i$, any two vertices $v,v' \in \sigma(t) \cap Z_i$ lie in the same connected component of $G[((\gamma(t) \backslash \sigma(t)) \cap Z_i) \cup \{v,v'\}]$.
\end{claim}
\begin{claimproof}
We shall apply induction on the depth of $t$ in $T$.
Suppose the statement holds for all children of $t$, and we show it also holds for $t$.
Let $Z(t,v,v') \coloneqq ((\gamma(t) \backslash \sigma(t)) \cap Z_i) \cup \{v,v'\}$.
Also, let $u \in \varPhi^{(t)}$ (resp., $u' \in \varPhi^{(t)}$) be the witness neighbor of $v$ (resp., $v'$).
Note that $u,u' \in \varPhi^{(t)} \subseteq R^{(t)}$.
Since $t \in T_i$, we have $u,u' \in Z_i$ and thus $u,u' \in Z(t,v,v')$.

We first show that $u$ (resp., $u'$) and $v$ (resp., $v'$) lie in the same connected component of $G[Z(t,v,v')]$.
Without loss of generality, it suffices to consider $u$ and $v$.
Note that $uv$ is an edge of $G_t$.
If $uv \in E(G)$, we are done.
Otherwise, $u,v \in \sigma(s)$ for some child $s$ of $t$.
We claim that $\sigma(s)$ is disjoint from all of $Z_1^{(t)},\dots,Z_p^{(t)}$.
Let $A$ be the apex set of $\tor(t)$.
We have $v \in \sigma(t) \subseteq A$ by condition \ref{item:rs-decomposition-1} of Lemma~\ref{lem-rsdecomp}.
If $u \in A$, then $uv$ is not a redundant edge of $\tor(t)$ as it preserves in $G_t$.
In this case, no vertex in $\beta(t) \backslash \sigma(t)$ is adjacent to both $u$ and $v$ in $\tor(t)$.
It follows that $\sigma(s) \subseteq \sigma(t)$, because $\tor(t)[\sigma(s)]$ is a clique.
As such, $\sigma(s)$ is disjoint from all of $Z_1^{(t)},\dots,Z_p^{(t)}$.
If $u \notin A$, then $u \in \varPhi^{(t)} \backslash (A \backslash \sigma(t))$.
Note that $A \backslash \sigma(t)$ is the apex set of $G_t - \sigma(t)$.
By condition \ref{item:decomp2-3} of Corollary~\ref{cor-decomp2}, for all $i \in [p]$, no vertex in $Z_i^{(t)}$ is neighboring to $\varPhi^{(t)} \backslash (A \backslash \sigma(t))$ in $G_t - \sigma(t)$.
In particular, $u$ is not neighboring to $Z_i^{(t)}$ in $G_t - \sigma(t)$ (and thus in $\tor(t)$).
This implies that $\sigma(s)$ is disjoint from all of $Z_1^{(t)},\dots,Z_p^{(t)}$, as $\tor(t)[\sigma(s)]$ is a clique.
We then have $\tor(t)[\sigma(s)]$ is edge-disjoint from all of $\tor(t)[Z_1^{(t)}],\dots,\tor(t)[Z_p^{(t)}]$.
Therefore, $\mathsf{col}(s) = \mathsf{col}(t) = i$, and $s \in T_i$.
As $v \in Z_i$ and $u \in \varPhi^{(t)} \subseteq R^{(t)} \subseteq Z_i$, we have $u,v \in \sigma(s) \cap Z_i$.
By our induction hypothesis, $u$ and $v$ are in the same connected component of $G[Z(s,u,v)]$.
Note that $Z(s,u,v) \subseteq Z(t,v,v')$, so $u$ and $v$ lie in the same connected component of $G[Z(t,v,v')]$.
For the same reason, $u'$ and $v'$ lie in the same connected component of $G[Z(t,v,v')]$.

In order to show $v$ and $v'$ lie in the same connected component of $G[Z(t,v,v')]$, it now suffices to show $u$ and $u'$ lie in the same connected component of $G[Z(t,v,v')]$.
We have $u,u' \in \varPhi^{(t)}$, and by condition \ref{item:decomp2-2} of Corollary~\ref{cor-decomp2}, $u$ and $u'$ are in the same connected component of $G'$, the graph obtained from $G_t - (\bigcup_{j=1}^p Z_j^{(t)} \cup \sigma(t))$ by deleting the ``bad'' edges, which are those connecting one vertex in $A \backslash \sigma(t)$ and one neighbor of $\bigcup_{j=1}^p Z_j^{(t)}$ that is not in $A \backslash \sigma(t)$.
So we only need to show that the two endpoints of every edge of $G'$ are in the same connected component of $G[Z(t,v,v')]$.
Consider an edge $ww' \in E(G')$.
If $ww' \in E(G)$, we are done.
Otherwise, $w,w' \in \sigma(s)$ for some child $s$ of $t$.
We claim that $\tor(t)[\sigma(s)]$ is edge-disjoint from all of $G_t[Z_1^{(t)}],\dots,G_t[Z_p^{(t)}]$.
Assume this is not the case, and assume $\tor(t)[\sigma(s)]$ contains an edge $e$ of $G_t[Z_1^{(t)}]$ without loss of generality.
Then $w,w'$ and the two endpoints of $e$ form a clique in $\tor(t) - \sigma(t)$.
Since the endpoints of $e$ are not in $A$, an edge connecting $w$ (resp., $w'$) and an endpoint of $e$ preserves in $G_t$.
Therefore, in $G_t - \sigma(t)$, both $w$ and $w'$ are neighboring to the two endpoints of $e$.
Let $\mathsf{Vort} \subseteq \beta(t)$ be the set of vortex vertices of $\tor(t)$, which is also the set of vortex vertices of $G_t - \sigma(t)$.
Observe that $w,w' \notin \mathsf{Vort}$, because $Z_1^{(t)}$ is not neighboring to $\mathsf{Vort}$ in $G_t - \sigma(t)$ by Corollary~\ref{cor-decomp2}.
If $w,w' \in A$, then $ww'$ is not redundant as $ww' \in E(G_t)$.
By definition, $w$ and $w'$ do not have common neighbors in $\beta(t) \backslash A$.
But this contradicts with the fact that the endpoints of $e$ are in $\beta(t) \backslash A$ and are neighboring to both $w$ and $w'$.
If $w,w' \notin A$, then $w,w' \notin \mathsf{Vort} \cup A$.
However, this contradicts with condition \ref{item:rs-decomposition-1} of Lemma~\ref{lem-rsdecomp}, because $\sigma(s)$ contains at least four vertices outside $\mathsf{Vort} \cup A$, i.e., $w,w'$, and the two endpoints of $e$ (which are vertices in $Z_1^{(t)}$ and are not in $\mathsf{Vort} \cup A$ by Corollary~\ref{cor-decomp2}).
Therefore, we must have one of $w,w'$ in $A$ and the other not in $A$.
Without loss of generality, assume $w \in A$ and $w' \notin A$.
It follows that $w'$ is not neighboring to $\bigcup_{j=1}^p Z_j^{(t)}$ in $G_t - \sigma(t)$, for otherwise $ww'$ is a ``bad'' edge and cannot preserve in $G'$.
But this contradicts with the fact that $w'$ is neighboring to the endpoints of $e$ in $G_t - \sigma(t)$, which are vertices in $Z_1^{(t)}$.
As a result, $\tor(t)[\sigma(s)]$ is edge-disjoint from all of $G_t[Z_1^{(t)}],\dots,G_t[Z_p^{(t)}]$, which implies $\mathsf{col}(s) = \mathsf{col}(t) = i$.
Note that $w,w' \in R^{(t)} \subseteq Z_i$ as $t \in T_i$, and thus $v,v' \in \sigma(s) \cap Z_i$.
By our induction hypothesis, $w$ and $w'$ are in the same connected component of $G[Z(s,w,w')]$.
Note that $Z(s,w,w') \subseteq Z(t,v,v')$, so $u$ and $v$ lie in the same connected component of $G[Z(t,v,v')]$.
\end{claimproof}

Finally, we show the statement in the observation.
Consider an edge $vv'$ of $\tor(t)[Z_i^{(t)}]$.
Let $S$ be the set of children $s$ of $t$ satisfying $v,v' \in \sigma(s)$.
If $vv' \in E(G)$, we are done.
Otherwise, $S \neq \emptyset$ and we arbitrarily pick $s \in S$.
The clique $\tor(t)[\sigma(s)]$ is not edge-disjoint from $\tor(t)[Z_i^{(t)}]$ as it contains the edge $vv'$.
Thus, $s \in T_i$.
Since $v,v' \in \sigma(s) \cap Z_i$, by Claim \ref{claim:realize-fake-edge}, $v$ and $v'$ lie in the same connected component of $G[Z(s,v,v')]$.
Recall that $Z(s,v,v') = ((\gamma(s) \backslash \sigma(s)) \cap Z_i) \cup \{v,v'\}$.
So $v$ and $v'$ are in the same connected component of $G[(\bigcup_{s \in S} (\gamma(s) \backslash \sigma(s)) \cap Z_i) \cup \{v,v'\}]$.
\end{proof}

Next, we show that $\tw(G/(Z_i \backslash Z')) = O_h(p+|Z'|)$ for all $i \in [p]$ and $Z' \subseteq Z_i$.
Without loss of generality, we only need to consider the case $i=1$.
Let $\pi\colon V(G) \rightarrow V(G/(Z_1 \backslash Z'))$ be the quotient map for the contraction of $G$ to $G/(Z_1 \backslash Z')$.
For each $t \in V(T)$, let $\beta^*(t) = \pi(\beta(t))$.
By Fact~\ref{fact-inducedtd}, $(T,\beta^*)$ is a tree decomposition of $G/(Z_1 \backslash Z')$.
We use the notations $\sigma^*(t)$, $\gamma^*(t)$, $\tor^*(t)$ to denote the adhesion, $\gamma$-set, and torso of a node $t \in V(T)$ in the tree decomposition $(T,\beta^*)$, in order to distinguish from those for $(T,\beta)$.
We observe the following simple fact.

\begin{observation} \label{obs-adhesion}
For all $t \in V(T)$, $\sigma^*(t) = \pi(\sigma(t))$.
\end{observation}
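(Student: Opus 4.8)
The plan is to unwind the definitions and then appeal to the standard fact that a connected vertex set which meets the bags of two adjacent nodes of a tree decomposition must also meet their common part. First I would dispose of the root: if $t$ is the root of $T$ then $\sigma(t) = \emptyset$ and, by our convention for $(T,\beta^*)$, also $\sigma^*(t) = \emptyset$, so the equality is trivial. So assume $t$ has a parent $t'$. By definition $\sigma^*(t) = \beta^*(t) \cap \beta^*(t') = \pi(\beta(t)) \cap \pi(\beta(t'))$, whereas $\pi(\sigma(t)) = \pi(\beta(t) \cap \beta(t'))$, and the inclusion $\pi(\sigma(t)) \subseteq \sigma^*(t)$ is immediate from monotonicity of images.

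For the reverse inclusion, take $v \in \sigma^*(t)$, so that $\pi^{-1}(v)$ meets both $\beta(t)$ and $\beta(t')$. Since $G/(Z_1 \setminus Z')$ is obtained by contracting the connected components of $G[Z_1 \setminus Z']$, the set $\pi^{-1}(v)$ is either a single vertex of $G$ or a connected component of $G[Z_1 \setminus Z']$; in either case $G[\pi^{-1}(v)]$ is connected. The one ingredient I would establish (or simply cite, as it is classical) is the separation property of $(T,\beta)$: writing $\gamma(t)$ for the $\gamma$-set of $t$, one has $\gamma(t) \cap \beta(t') = \sigma(t)$ and no edge of $G$ joins $\gamma(t) \setminus \sigma(t)$ to $V(G) \setminus \gamma(t)$; both are short consequences of axiom (iii) of tree decompositions, obtained by inspecting where a bag containing the relevant vertex, resp.\ a bag witnessing the relevant edge, lies relative to the subtree $T_t$. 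Granting this, choose $a \in \pi^{-1}(v) \cap \beta(t) \subseteq \pi^{-1}(v) \cap \gamma(t)$, choose $b \in \pi^{-1}(v) \cap \beta(t')$, and fix a path $P$ from $a$ to $b$ inside $G[\pi^{-1}(v)]$. If $\pi^{-1}(v)$ were disjoint from $\sigma(t)$, then $P$ would avoid $\sigma(t)$; since $P$ starts at $a \in \gamma(t) \setminus \sigma(t)$ and no edge leaves $\gamma(t) \setminus \sigma(t)$ to the outside of $\gamma(t)$, all of $P$, and in particular $b$, would lie in $\gamma(t) \setminus \sigma(t)$ — contradicting $b \in \beta(t') \cap \gamma(t) = \sigma(t)$. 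Hence $\pi^{-1}(v) \cap \sigma(t) \neq \emptyset$, i.e.\ $v \in \pi(\sigma(t))$, which completes the reverse inclusion and the proof.

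I do not anticipate a genuine obstacle here: the statement is essentially a compatibility check between the quotient map $\pi$ and adhesions, and the only non-formal step is the separation property of the tree decomposition, which is standard and amounts to one line of argument. An alternative, even more compact route would be to invoke the ``connected subgraph meets a connected subtree of $T$'' lemma directly, applied to the connected set $\pi^{-1}(v)$ and the adjacent nodes $t,t'$, concluding that $\pi^{-1}(v)$ must meet $\beta(t) \cap \beta(t')$; I would nonetheless keep the edge-separation version above since it is self-contained and uses only what is already in the paper, together with Fact~\ref{fact-inducedtd}, which guarantees that $(T,\beta^*)$ is a tree decomposition of $G/(Z_1 \setminus Z')$ so that $\sigma^*(t)$ is meaningful.
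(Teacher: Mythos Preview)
Your proof is correct and follows essentially the same approach as the paper's: both handle the root trivially, note the easy inclusion $\pi(\sigma(t)) \subseteq \sigma^*(t)$, and for the reverse inclusion use that $G[\pi^{-1}(\{v\})]$ is connected (the paper cites Fact~\ref{fact-quotient}, you argue it directly) together with the separation property of $(T,\beta)$ at the edge $tt'$. The only difference is presentational: the paper leaves the ``connected set meeting $\beta(t)$ and $\beta(t')$ must meet $\sigma(t)$'' step implicit in a single ``Therefore'', whereas you spell it out via the edge-separation and path argument.
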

\begin{proof}
If $t$ is the root of $t$, then $\sigma^*(t) = \sigma(t) = \emptyset$.
Otherwise, let $t'$ be the parent of $t$.
By definition, $\sigma^*(t) = \beta^*(t) \cap \beta^*(t') = \pi(\beta(t)) \cap \pi(\beta(t'))$ and $\sigma(t) = \beta(t) \cap \beta(t')$.
So it is clear that $\pi(\sigma(t)) \subseteq \sigma^*(t)$.
To see $\sigma^*(t) \subseteq \pi(\sigma(t))$, consider a vertex $v \in \pi(\beta(t)) \cap \pi(\beta(t'))$.
Since $v \in \pi(\beta(t)) \cap \pi(\beta(t'))$, $\pi^{-1}(\{v\}) \cap \beta(t) \neq \emptyset$ and $\pi^{-1}(\{v\}) \cap \beta(t') \neq \emptyset$.
Furthermore, by Fact~\ref{fact-quotient}, $G[\pi^{-1}(\{v\})]$ is connected.
Therefore, $\pi^{-1}(\{v\}) \cap \sigma(t) \neq \emptyset$ and hence $v \in \pi(\sigma(t))$.
\end{proof}

By Lemma~\ref{lem-torsotw}, it suffices to show that $\tw(\tor^*(t)) = O_h(p+|Z'|)$ for all $t \in V(T)$.
Consider a node $t \in V(T)$.
In order to bound $\tw(\tor^*(t))$, we shall construct a graph of bounded treewidth that contains $\tor^*(t)$ as a minor.
Define $\mathsf{TOR}(t)$ as the graph obtained from $\tor(t)$ by making $\sigma(t)$ a clique.
Let $S_\mathsf{bad}$ be the set of children $s$ of $t$ satisfying $(\gamma(s) \backslash \sigma(s)) \cap Z' \neq \emptyset$.
Note that $|S_\mathsf{bad}| \leq |Z'|$ as the sets $\gamma(s) \backslash \sigma(s)$ are disjoint for the children $s$ of $t$.
Set $Z'' = Z' \cup (\bigcup_{s \in S_\mathsf{bad}} \sigma(s))$.
Since $|\sigma(s)| \leq h$ for all $s \in S_\mathsf{bad}$ and $|S_\mathsf{bad}| \leq |Z'|$, by \ref{item:rs-decomposition-1} of Lemma~\ref{lem-rsdecomp}, we have $|Z''| = O_h(|Z'|)$.
Consider the graph $G^* = \mathsf{TOR}(t)/(Z_1^{(t)} \backslash (Z'' \cap Z_1^{(t)}))$.
Because $Z_1^{(t)} \subseteq \beta(t) \backslash \sigma(t)$, the vertices in $\sigma(t)$ do not get contracted in $G^*$.
So the vertices in $\sigma(t)$ are also vertices in $G^*$.
We observe that $\tw(G^*) = O_h(p+|Z'|)$.
Indeed, $G^* - \sigma(t)$ is isomorphic to $(\tor(t)-\sigma(t))/(Z_1^{(t)} \backslash (Z'' \cap Z_1^{(t)}))$, and $(G_t - \sigma(t))/(Z_1^{(t)} \backslash (Z'' \cap Z_1^{(t)}))$ can be obtained from $(\tor(t)-\sigma(t))/(Z_1^{(t)} \backslash (Z'' \cap Z_1^{(t)}))$ by deleting the $O_h(1)$ redundant edges.
Therefore, $(G_t - \sigma(t))/(Z_1^{(t)} \backslash (Z'' \cap Z_1^{(t)}))$ can be viewed as a graph obtained from $G^*$ by deleting $O_h(1)$ vertices and edges, since $\sigma(t) \leq h$ by \ref{item:rs-decomposition-1} of Lemma~\ref{lem-rsdecomp}.
So the difference between $\tw((G_t - \sigma(t))/(Z_1^{(t)} \backslash (Z'' \cap Z_1^{(t)})))$ and $\tw(G^*)$ is $O_h(1)$.
Since $|Z''| = O_h(|Z'|)$, by condition \ref{item:decomp2-1} of Corollary~\ref{cor-decomp2}, we have
\begin{equation*}
    \tw((G_t - \sigma(t))/(Z_1^{(t)} \backslash (Z'' \cap Z_1^{(t)}))) = O_h(p+|Z'|),
\end{equation*}
which implies $\tw(G^*) = O_h(p+|Z'|)$.
Finally, we show that $G^*$ contains $\tor^*(t)$ as a minor.
To this end, we need the following observation.

\begin{observation} \label{obs-TORconnected}
For any $V \subseteq \beta^*(t)$, if $\tor^*(t)[V]$ is connected, then $\mathsf{TOR}(t)[\pi_{|\beta(t)}^{-1}(V)]$ is also connected, where $\pi_{|\beta(t)}: \beta(t) \rightarrow \beta^*(t)$ is the map $\pi$ restricted to $\beta(t)$.
\end{observation}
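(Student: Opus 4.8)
The plan is to reduce the connectivity of $\mathsf{TOR}(t)[\pi_{|\beta(t)}^{-1}(V)]$ to two local facts — one about single vertices of $\beta^*(t)$, one about single edges of $\tor^*(t)$ — and then stitch these together along a path in $\tor^*(t)[V]$. For $v \in \beta^*(t)$ write $H_v \coloneqq \pi^{-1}(\{v\}) \cap \beta(t) = \pi_{|\beta(t)}^{-1}(\{v\})$, which is nonempty since $\beta^*(t) = \pi(\beta(t))$. First I would show that $\mathsf{TOR}(t)[H_v]$ is connected: either $\pi^{-1}(\{v\})$ is a single vertex, in which case this is trivial, or $\pi^{-1}(\{v\})$ is a connected component of $G[Z_1 \backslash Z']$, so $G[\pi^{-1}(\{v\})]$ is connected and Fact~\ref{fact-connintorso} applied with $U = \pi^{-1}(\{v\})$ says that either $\tor(t)[H_v]$ is connected or every connected component of $\tor(t)[H_v]$ meets $\sigma(t)$; in the latter case the vertices of $\sigma(t) \cap H_v$ form a nonempty clique in $\mathsf{TOR}(t)$, which merges all those components, so $\mathsf{TOR}(t)[H_v]$ is connected in both cases.

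Next I would take an arbitrary edge $vv'$ of $\tor^*(t)$ (so $v \neq v'$) and prove that $\mathsf{TOR}(t)[H_v \cup H_{v'}]$ is connected, distinguishing the two ways such an edge can arise. If $vv' \in E(G/(Z_1\backslash Z'))$, pick $uu' \in E(G)$ with $\pi(u) = v$ and $\pi(u') = v'$; then $W \coloneqq \pi^{-1}(\{v\}) \cup \pi^{-1}(\{v'\})$ induces a connected subgraph of $G$, so Fact~\ref{fact-connintorso} with $U = W$ gives, exactly as above, that $\mathsf{TOR}(t)[W \cap \beta(t)]$ is connected, and $W \cap \beta(t) = H_v \cup H_{v'}$. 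If instead $v, v' \in \sigma^*(s)$ for some child $s$ of $t$, then by Observation~\ref{obs-adhesion} $\sigma^*(s) = \pi(\sigma(s))$, so there are $u \in H_v \cap \sigma(s)$ and $u' \in H_{v'} \cap \sigma(s)$; since $\sigma(s)$ is a clique in $\tor(t) \subseteq \mathsf{TOR}(t)$ and $u \neq u'$, the edge $uu'$ joins $H_v$ to $H_{v'}$, and combined with the previous paragraph (each of $\mathsf{TOR}(t)[H_v]$, $\mathsf{TOR}(t)[H_{v'}]$ connected) this shows $\mathsf{TOR}(t)[H_v \cup H_{v'}]$ is connected.

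Finally I would assemble the conclusion. Given $V \subseteq \beta^*(t)$ with $\tor^*(t)[V]$ connected, fix $a, b \in \pi_{|\beta(t)}^{-1}(V)$ and a path $\pi(a) = v_0, v_1, \dots, v_k = \pi(b)$ in $\tor^*(t)[V]$. By the two previous paragraphs each $\mathsf{TOR}(t)[H_{v_i}]$ is connected and each $\mathsf{TOR}(t)[H_{v_i} \cup H_{v_{i+1}}]$ is connected; since consecutive pieces share the nonempty set $H_{v_i}$, a routine induction gives that $\mathsf{TOR}(t)\bigl[\bigcup_{i=0}^k H_{v_i}\bigr]$ is connected. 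As $\bigcup_i H_{v_i} \subseteq \pi_{|\beta(t)}^{-1}(V)$ with $a \in H_{v_0}$ and $b \in H_{v_k}$, this yields a path from $a$ to $b$ inside $\mathsf{TOR}(t)[\pi_{|\beta(t)}^{-1}(V)]$, proving that graph is connected.

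I expect the only genuinely delicate point to be the bookkeeping in the subcase $vv' \in E(G/(Z_1\backslash Z'))$: the chosen preimage edge $uu'$ of $G$ need not lie in $\beta(t)$, so one cannot directly exhibit an edge of $\mathsf{TOR}(t)$ between $H_v$ and $H_{v'}$; this is why the argument passes to the full preimage set $W$ and again invokes Fact~\ref{fact-connintorso}. Everything else is bookkeeping that should go through smoothly.
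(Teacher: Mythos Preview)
Your proposal is correct and follows essentially the same approach as the paper: reduce to showing that preimages of single vertices and single edges of $\tor^*(t)$ are connected in $\mathsf{TOR}(t)$, handle the single-vertex case via Fact~\ref{fact-connintorso} plus the $\sigma(t)$-clique in $\mathsf{TOR}(t)$, handle the edge case by splitting into ``real'' edges of $G/(Z_1\backslash Z')$ (again via Fact~\ref{fact-connintorso} applied to the union of preimages) versus adhesion-clique edges (via Observation~\ref{obs-adhesion} and the $\sigma(s)$-clique), and then stitch along a path. The only cosmetic differences are that the paper cites Fact~\ref{fact-quotient} for the connectivity of $G[\pi^{-1}(\{v,v'\})]$ rather than exhibiting the witness edge $uu'$, and states the vertex/edge reduction as an equivalence rather than spelling out the path induction.
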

\begin{proof}
The statement in the observation is equivalent to saying that $\mathsf{TOR}(t)[\pi_{|\beta(t)}^{-1}(\{v\})]$ is connected for all $v \in \beta^*(t)$ and $\mathsf{TOR}(t)[\pi_{|\beta(t)}^{-1}(\{v,v'\})]$ is connected for all edges $vv' \in E(\tor^*(t))$.
Consider a vertex $v \in \beta^*(t)$.
We know that $G[\pi^{-1}(\{v\})]$ is connected and $\pi_{|\beta(t)}^{-1}(\{v\}) = \pi^{-1}(\{v\}) \cap \beta(t)$.
By Fact~\ref{fact-connintorso}, $\tor(t)[\pi_{|\beta(t)}^{-1}(\{v\})]$ is connected or every connected component of $\tor(t)[\pi_{|\beta(t)}^{-1}(\{v\})]$ intersects $\sigma(t)$.
In the former case, we are done, because $\mathsf{TOR}(t)[\pi_{|\beta(t)}^{-1}(\{v\})]$ can be obtained from $\tor(t)[\pi_{|\beta(t)}^{-1}(\{v\})]$ by adding edges.
In the latter case, we can also deduce that $\mathsf{TOR}(t)[\pi_{|\beta(t)}^{-1}(\{v\})]$ is connected, as every connected component of $\tor(t)[\pi_{|\beta(t)}^{-1}(\{v\})]$ intersects $\sigma(t)$ and $\mathsf{TOR}(t)[\sigma(t)]$ is a clique.
Next, consider an edge $vv' \in E(\tor^*(t))$.
If $vv'$ is an edge of $G/(Z_1 \backslash Z')$, then the graph $G[\pi^{-1}(\{v,v'\})]$ is connected by Fact~\ref{fact-quotient} and $\pi_{|\beta(t)}^{-1}(\{v,v'\}) = \pi^{-1}(\{v,v'\}) \cap \beta(t)$.
In this case, we can apply exactly the same argument as above to show that $\mathsf{TOR}(t)[\pi_{|\beta(t)}^{-1}(\{v,v'\})]$ is connected.
Otherwise, $v,v' \in \sigma^*(s)$ for some child $s$ of $t$.
By Observation~\ref{obs-adhesion}, $\sigma^*(s) = \pi(\sigma(s))$.
Thus, both $\pi_{|\beta(t)}^{-1}(\{v\})$ and $\pi_{|\beta(t)}^{-1}(\{v'\})$ intersect $\sigma(s)$.
We have already shown that $\mathsf{TOR}(t)[\pi_{|\beta(t)}^{-1}(\{v\})]$ and $\mathsf{TOR}(t)[\pi_{|\beta(t)}^{-1}(\{v'\})]$ are connected.
As both of the them intersect $\sigma(s)$ and $\mathsf{TOR}(t)[\sigma(s)]$ is a clique, we deduce that $\mathsf{TOR}(t)[\pi_{|\beta(t)}^{-1}(\{v,v'\})]$ is connected.
\end{proof}

\begin{observation}
$G^*$ contains $\tor^*(t)$ as a minor and thus $\tw(\tor^*(t)) = O_h(p+|Z'|)$.
\end{observation}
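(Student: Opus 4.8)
The plan is to realize $\tor^*(t)$ as a minor of $G^*$ using the minor criterion of Fact~\ref{fact-minormap}, and then conclude $\tw(\tor^*(t)) \le \tw(G^*) = O_h(p+|Z'|)$ since treewidth does not increase under taking minors. Write $\rho'\colon \beta(t) \rightarrow V(G^*)$ for the quotient map of the contraction $\mathsf{TOR}(t) \rightarrow G^* = \mathsf{TOR}(t)/(Z_1^{(t)} \backslash (Z'' \cap Z_1^{(t)}))$, so that $V(G^*) = \rho'(\beta(t))$, and recall that $\pi_{|\beta(t)}\colon \beta(t) \rightarrow \beta^*(t)$ (the restriction of the quotient map of $G \rightarrow G/(Z_1 \backslash Z')$) is surjective onto $\beta^*(t) = V(\tor^*(t))$. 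The first step is to show that $\pi_{|\beta(t)}$ \emph{factors through} $\rho'$: there is a (necessarily surjective) map $\rho\colon V(G^*) \rightarrow \beta^*(t)$ with $\rho \circ \rho' = \pi_{|\beta(t)}$. Then $U \coloneqq V(G^*)$ and $\rho$ will be the data fed into Fact~\ref{fact-minormap}.

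The second step is well-definedness of $\rho$, i.e.\ that $\rho'(u) = \rho'(u')$ implies $\pi(u) = \pi(u')$. Two such vertices lie in one connected component of $\mathsf{TOR}(t)[Z_1^{(t)} \backslash Z'']$; since $Z_1^{(t)} \cap \sigma(t) = \emptyset$, the edges of $\mathsf{TOR}(t)$ inside $Z_1^{(t)}$ are exactly the edges of $\tor(t)[Z_1^{(t)}]$, so it suffices to handle one edge $vv'$ of $\tor(t)[Z_1^{(t)}]$ with $v,v' \notin Z''$. Because $Z'' = Z' \cup \bigcup_{s \in S_\mathsf{bad}} \sigma(s)$, no child $s$ of $t$ with $v,v' \in \sigma(s)$ can lie in $S_\mathsf{bad}$, hence $(\gamma(s) \backslash \sigma(s)) \cap Z' = \emptyset$ for every such $s$. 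Observation~\ref{obs-connectedbottom} then puts $v$ and $v'$ in one connected component of $G[(\bigcup_{s \in S}(\gamma(s) \backslash \sigma(s)) \cap Z_1) \cup \{v,v'\}]$, and this vertex set is contained in $Z_1 \backslash Z'$ (using $Z_1^{(t)} \subseteq Z_1$, $Z' \subseteq Z''$, and $(\gamma(s) \backslash \sigma(s)) \cap Z' = \emptyset$), so $\pi(v) = \pi(v')$. Concatenating along a path in $\mathsf{TOR}(t)[Z_1^{(t)} \backslash Z'']$ gives $\pi(u) = \pi(u')$, as required.

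The third step verifies the connectivity condition of Fact~\ref{fact-minormap}. Let $V' \subseteq \beta^*(t)$ with $\tor^*(t)[V']$ connected. Unwinding definitions, $\rho'^{-1}(\rho^{-1}(V')) = \pi_{|\beta(t)}^{-1}(V')$, and Observation~\ref{obs-TORconnected} says $\mathsf{TOR}(t)[\pi_{|\beta(t)}^{-1}(V')]$ is connected; applying Fact~\ref{fact-quotient} to the contraction $\mathsf{TOR}(t) \rightarrow G^*$ with quotient map $\rho'$ and $U' = \rho^{-1}(V')$ then yields that $G^*[\rho^{-1}(V')]$ is connected. Thus Fact~\ref{fact-minormap} gives that $G^*$ contains $\tor^*(t)$ as a minor, and combined with the already-established bound $\tw(G^*) = O_h(p+|Z'|)$ this finishes the statement.

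The main obstacle is the well-definedness of $\rho$: one must control exactly which pairs the torso-level contraction $\mathsf{TOR}(t) \rightarrow G^*$ identifies, and this is precisely where the ``fake'' clique edges of $\tor(t)$ could spoil everything, since a contracted fake edge need not be an edge of $G$. Everything hinges on the fact that every fake edge of $\tor(t)[Z_1^{(t)}]$ that survives in $\mathsf{TOR}(t)[Z_1^{(t)} \backslash Z'']$ is nevertheless realized by a genuine path inside $G[Z_1 \backslash Z']$, which in turn relies on the careful construction of $Z_1$ (Observation~\ref{obs-connectedbottom}) together with the enlargement of $Z'$ to $Z''$ by the adhesions of the ``bad'' children; the remaining steps are routine bookkeeping with the quotient maps and Facts~\ref{fact-quotient}--\ref{fact-minormap}.
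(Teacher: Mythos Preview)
Your proposal is correct and follows essentially the same route as the paper: define the quotient map $\rho'$ (the paper calls it $\theta$) for $\mathsf{TOR}(t)\to G^*$, prove it factors $\pi_{|\beta(t)}$ via Observation~\ref{obs-connectedbottom} and the enlargement $Z'\leadsto Z''$, and then invoke Observation~\ref{obs-TORconnected} with Facts~\ref{fact-quotient}--\ref{fact-minormap} to certify the minor.
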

\begin{proof}
Let $\theta\colon \beta(t) \to V(G^*)$ be the quotient map for the contraction of $\mathsf{TOR}(t)$ to $G^*$.
We show that if two vertices $v,v' \in \beta(t)$ satisfying $\theta(v) = \theta(v')$, then $\pi(v) = \pi(v')$.
If $v = v'$, we are done.
Otherwise, since $\theta(v) = \theta(v')$, $v$ and $v'$ lie in the same connected component of $\mathsf{TOR}(t)[(Z_1^{(t)} \backslash (Z'' \cap Z_1^{(t)})]$, which is isomorphic to $\tor(t)[(Z_1^{(t)} \backslash (Z'' \cap Z_1^{(t)})]$ because $Z_1^{(t)} \cap \sigma(t) = \emptyset$.
So it suffices to show that the two endpoints of every edge in $\tor(t)[(Z_1^{(t)} \backslash (Z'' \cap Z_1^{(t)})]$ have the same image under $\pi$.
In other words, we can assume $vv'$ is an edge of $\tor(t)[(Z_1^{(t)} \backslash (Z'' \cap Z_1^{(t)})]$ and thus an edge of $\tor(t)[Z_1^{(t)}]$.
By Observation~\ref{obs-connectedbottom}, $v$ and $v'$ are in the same connected component of $G[(\bigcup_{s \in S} (\gamma(s) \backslash \sigma(s)) \cup \{v,v'\}) \cap Z_1]$, where $S$ consists of all children $s$ of $t$ satisfying $v,v' \in \sigma(s)$.
Since $v,v' \notin Z''$, for any child $s$ of $t$ satisfying $v,v' \in \sigma(s)$, we have $s \notin S_\mathsf{bad}$.
Thus, $S \cap S_\mathsf{bad} = \emptyset$ and $(\gamma(s) \backslash \sigma(s)) \cap Z' = \emptyset$ for all $s \in S$.
It follows that $(\bigcup_{s \in S} (\gamma(s) \backslash \sigma(s)) \cup \{v,v'\}) \cap Z_1 \subseteq Z_1 \backslash Z'$, which implies $v$ and $v'$ are in the same connected component of $G[Z_1 \backslash Z']$, and hence $\pi(v) = \pi(v')$.

We have seen that for all $v,v' \in \beta(t)$, $\theta(v) = \theta(v')$ implies $\pi(v) = \pi(v')$.
Therefore, there exists a unique map $\rho\colon V(G^*) \to \beta^*(t)$ satisfying $\pi_{|\beta(t)} = \rho \circ \theta$, which is surjective.
Note that $\beta^*(t) = V(\tor^*(t))$.
By Fact~\ref{fact-minormap}, to see $\tor^*(t)$ is a minor of $G^*$, it suffices to show that $G^*[\rho^{-1}(V)]$ is connected for all $V \subseteq \beta^*(t)$ such that $\tor^*(t)[V]$ is connected.
Consider a subset $V \subseteq \beta^*(t)$ such that $\tor^*(t)[V]$ is connected.
We have $\rho^{-1}(V) = \theta(\pi_{|\beta(t)}^{-1}(V))$.
By Observation~\ref{obs-TORconnected}, $\mathsf{TOR}(t)[\pi_{|\beta(t)}^{-1}(V)]$ is connected.
Furthermore, by Fact~\ref{fact-quotient}, $G^*[\rho^{-1}(V)]$ is connected iff $\mathsf{TOR}(t)[\pi_{|\beta(t)}^{-1}(V)]$ is connected.
Note that $\pi_{|\beta(t)}^{-1}(V) = \theta^{-1}(\rho^{-1}(V))$.
Therefore, $G^*[\rho^{-1}(V)]$ is connected.
Finally, since $\tor^*(t)$ is a minor of $G^*$, we have $\tw(\tor^*(t)) \leq \tw(G^*) = O_h(p+|Z'|)$.
\end{proof}

The above observation implies $\tw(G/(Z_1 \backslash Z')) = O_h(p+|Z'|)$ by Lemma~\ref{lem-torsotw}, because $(T,\beta^*)$ is a tree decomposition of $G/(Z_1 \backslash Z')$.
This completes the proof of Theorem~\ref{thm-contraction}.

\section{Applications}
\label{sec-app}

In this section, we briefy discuss some applications of Theorem~\ref{thm-contraction} in parameterized complexity building on \cite{BandyapadhyayLLSJ22,MarxMNT22}.
Indeed, Theorem~\ref{thm-contraction} directly results in parameterized algorithms of running time $n^{O(\sqrt{k})}$ or $2^{\widetilde{O}(\sqrt{k})} \cdot n^{O(1)}$ for a broad class of vertex/edge deletion problems on $H$-minor-free graphs.
For example, this includes all problems that can be formulated as {\sc Permutation CSP Deletion} or {\sc 2-Conn Permutation CSP Deletion} \cite{MarxMNT22}.

An instance of the (binary) CSP problem is specified by a triple $\varGamma = (X,D,\mathcal{C})$ where $X$ is a finite set of variables, $D$ is a finite domain, and $\mathcal{C}$ is a set of constraints each of which is of the form $c = (x,y,R)$ where $x,y \in X$ and $R \subseteq D^2$.
We say $\varGamma$ is a \emph{permutation CSP instance} if for every constraint $c = (x,y,R) \in \mathcal{C}$ it holds that $|\{b \in D \mid (a,b) \in R\}| \leq 1$ and $|\{b \in D \mid (b,a) \in R\}| \leq 1$ for all $a \in D$.
We write $|\varGamma| = |X| + |D|$ as the \emph{size} of $\varGamma$.
An \emph{assignment} of $\varGamma$ is a function $\alpha: Y \rightarrow D$ on a subset $Y \subseteq X$, and we say $\alpha$ is \emph{satisfying} if for all $c = (x,y,R) \in \mathcal{C}$ such that $x,y \in Y$, we have $(\alpha(x),\alpha(y)) \in R$.
A permutation CSP instance $\varGamma = (X,D,\mathcal{C})$ naturally induces a graph $G_\varGamma$ with vertex set $X$ in which two variables $x,y \in X$ are connected by an edge if there exists $c \in \mathcal{C}$ such that $c = (x,y,R)$ for some $R \subseteq D^2$.
We say the permutation CSP instance $\varGamma$ is $H$-\emph{minor-free} if its underlying graph $G_\varGamma$ is $H$-minor-free.
For a subset $\mathcal{C}' \subseteq \mathcal{C}$ of constraints, we write $\varGamma - \mathcal{C}' = (X,D,\mathcal{C} \backslash \mathcal{C}')$.
For a subset $X' \subseteq X$ of variables, we write $\varGamma - X' = (X \backslash X',D,\mathcal{C} \backslash \mathcal{C}_{X'})$, where $\mathcal{C}_{X'} \subseteq \mathcal{C}$ consists of all constraints that involve at least one variable from $X'$.

Let $\varGamma = (X,D,\mathcal{C})$ be a permutation CSP instance.
A \emph{size constraint} for $\varGamma$ is a pair $(w,\delta)$ where $w\colon X \times D \to \mathbb{N}$ is a weight function and $\delta \in \mathbb{N}$ is a threshold.
We say an assignment $\alpha\colon Y \to D$ of $\varGamma$ \emph{respects} the size constraint $(w,\delta)$ if $\sum_{y \in Y} w(y,\alpha(y)) \leq \delta$.
We say $\varGamma$ is \emph{satisfiable} subject to $(w,\delta)$ on a subset $Y \subseteq X$ if there is a satisfying assignment $\alpha\colon Y \to D$ of $\varGamma$ that respects $(w,\delta)$.

\subsection{Permutation CSP Deletion Problems}

In \cite{MarxMNT22} a subset of the authors define the problems {\sc Permutation CSP Vertex Deletion} and {\sc Permutation CSP Edge Deletion}.
These problems essentially ask whether one can remove from a given permutation CSP instance a small number of variables (resp., constraints), or equivalently vertices (resp., edges) in the underlying graph, such that the resulting instance is satisfiable on every connected component of the underlying graph.
The formal definitions are the following\footnote{We remark that our definition of (2-connected) permutation CSP deletion is a simplified version of the original definition in \cite{MarxMNT22}, which is already sufficient for our applications.
The original definition is more complicated and allows multiple size constraints of a more general form.}.

The {\sc Permutation CSP Vertex Deletion} problem is a parameterized problem that takes as input a tuple $(\varGamma,w,\delta,k)$, where $\varGamma = (X,D,\mathcal{C})$ is a permutation CSP instance, $(w,\delta)$ is a size constraint for $\varGamma$, and $k$ is the solution size parameter.
The goal is to find a subset $X' \subseteq X$ with $|X'| \leq k$ such that $\varGamma - X'$ is satisfiable subject to $(w,\delta)$\footnote{Strictly speaking, $(w,\delta)$ is a size constraint for $\varGamma$ instead of $\varGamma - X'$. But it naturally induces a size constraint for $\varGamma - X'$ by restricting $w$ to $(X \backslash X') \times D$. For convenience, we still denote it by $(w,\delta)$ here.} on every subset $Y \subseteq X \backslash X'$ satisfying that $G_{\varGamma - X'}[Y]$ is a connected component of $G_{\varGamma - X'}$.

Similarly, the {\sc Permutation CSP Edge Deletion} problem takes the same input, but the goal is to find a subset $\mathcal{C}' \subseteq \mathcal{C}$ with $|\mathcal{C}'| \leq k$ such that $\varGamma - \mathcal{C}'$ is satisfiable subject to $(w,\delta)$ on every subset $Y \subseteq X$ satisfying that $G_{\varGamma - \mathcal{C}'}[Y]$ is a connected component of $G_{\varGamma - \mathcal{C}'}$.

\begin{theorem}\label{thm-permCSP}
 Let $H$ be a fixed graph.
 Then the {\sc Permutation CSP Vertex Deletion} (or the {\sc Permutation CSP Edge Deletion}) problem on $H$-minor-free instances $(\varGamma,w,\delta,k)$ can be solved in $(|\varGamma|+\delta)^{O(\sqrt{k})}$ time.
\end{theorem}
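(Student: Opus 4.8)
The plan is to derive Theorem~\ref{thm-permCSP} from Theorem~\ref{thm-contraction} by following the reduction of \cite{MarxMNT22}. That paper proves the same statement for planar instances, and the only place planarity enters is through the existence of a polynomial-time computable robust (vertex) contraction decomposition; every subsequent step --- the pigeonhole argument, the enumeration of the solution's trace on one class, the ``safe contraction'' lemma for permutation CSPs, and the treewidth dynamic program --- is oblivious to the underlying graph class. So the proof consists of plugging our $H$-minor-free RCD into that pipeline and checking that the interfaces agree.

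Concretely, given an $H$-minor-free instance $(\varGamma,w,\delta,k)$ with underlying graph $G = G_\varGamma$, set $p = \lceil\sqrt{k}\rceil$ and invoke Theorem~\ref{thm-contraction} to compute in polynomial time a partition $Z_1,\dots,Z_p$ of $V(G)$ with $\tw(G/(Z_i\setminus Z')) = O(p+|Z'|)$ for all $i\in[p]$ and $Z'\subseteq Z_i$. For the vertex-deletion version, any hypothetical solution $S\subseteq X$ with $|S|\le k$ has $|S\cap Z_i|\le k/p\le\sqrt{k}$ for some $i$, since $Z_1,\dots,Z_p$ partition $V(G)$; for the edge-deletion version, at most $\sqrt{k}$ deleted constraints have both endpoints in $Z_i$ for some $i$, since each edge lies inside at most one class. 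We then branch over $i$ ($O(\sqrt{k})$ choices) and over the solution's trace on $Z_i$ --- a subset of at most $\sqrt{k}$ vertices of $Z_i$, respectively at most $\sqrt{k}$ edges of $G[Z_i]$ --- which gives $|\varGamma|^{O(\sqrt{k})}$ candidates because $|V(G)|\le|\varGamma|$. For each candidate, let $Z'$ be the guessed deleted vertices (respectively, the endpoints of the guessed deleted edges, so $|Z'|\le 2\sqrt{k}$); contracting $Z_i\setminus Z'$ yields, by the robustness guarantee, a graph of treewidth $O(p+|Z'|) = O(\sqrt{k})$.

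It remains to transfer the problem to this bounded-treewidth graph. As shown in \cite{MarxMNT22}, contracting an edge of the underlying graph that is \emph{not} removed by the solution is a safe operation for permutation CSP deletion: because $\varGamma$ is a permutation CSP, merging two adjacent variables that survive in the same connected component can be simulated by an equivalent permutation CSP on the contracted graph whose merged super-variable carries a domain obtained by composing the two permutation constraints and a weight function obtained by adding the two variables' contributions to the size constraint; this is computable in polynomial time and blows up $|\varGamma|+\delta$ only polynomially. Applying this to all the (non-solution) edges contracted in $Z_i\setminus Z'$ produces an instance of treewidth $O(\sqrt{k})$, on which Permutation CSP Deletion is solved by the standard dynamic program over a tree decomposition in time $(|\varGamma|+\delta)^{O(\tw)} = (|\varGamma|+\delta)^{O(\sqrt{k})}$, the $\delta$ appearing because the accumulated weight is tracked pseudo-polynomially in the DP states. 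Multiplying the $|\varGamma|^{O(\sqrt{k})}$ branching factor by the per-branch cost gives the claimed running time. The edge-deletion version is handled by the symmetric argument, exactly as in \cite{MarxMNT22}.

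I expect that no single calculation is the bottleneck here; the real work is organizational. One has to state the ``safe contraction'' lemma precisely for permutation CSP deletion (and its size-constrained, component-wise variant), and confirm that the bounded-treewidth dynamic program of \cite{MarxMNT22} --- including how it enforces satisfiability \emph{per connected component} and respects the threshold $\delta$ --- goes through verbatim for general (not merely planar) bounded-treewidth permutation CSP instances. Since \cite{MarxMNT22} already carries all of this out and uses planarity only to obtain its contraction decomposition, the sole genuinely new ingredient is Theorem~\ref{thm-contraction}, which we substitute for that step.
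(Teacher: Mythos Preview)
Your proposal is correct and follows essentially the same approach as the paper: both invoke Theorem~\ref{thm-contraction} with $p=\lceil\sqrt{k}\rceil$, use pigeonhole to find a class $Z_i$ on which the solution has a small trace, enumerate that trace, and then exploit the permutation-CSP property (each contracted component has at most $|D|$ satisfying assignments) to reduce to a bounded-treewidth DP. The paper explicitly notes that \cite{MarxMNT22} already carries out the reduction modulo the robust contraction decomposition, which is exactly the organizational point you make.
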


\begin{proof}
 It is shown in \cite{MarxMNT22} that if Theorem~\ref{thm-contraction} (which is stated as a conjecture in \cite{MarxMNT22}) is true, then {\sc Permutation CSP Vertex/Edge Deletion} on $H$-minor-free instances can be solved in $(|\varGamma|+\delta)^{O(\sqrt{k})}$ time, which completes the proof.
 In the following, we briefly sketch the details  (the same ideas are also used in \cite{BandyapadhyayLLSJ22}).

 We focus on the vertex-deletion version.
 Let $(\varGamma,w,\delta,k)$ be the input instance and let $G = G_\varGamma$ be the underlying graph of $\varGamma$ which is $H$-minor-free.
 Using Theorem~\ref{thm-contraction}, we can compute the partition $Z_1,\dots,Z_p \subseteq V(G)$ for $p = \lceil\sqrt{k}\rceil$.
 Suppose $S \subseteq V(G)$ is an unknown solution for $(\varGamma,w,\delta,k)$.
 Since $|S| \leq k$ there exists some $i \in [p]$ such that $|Z_i \cap S| \leq \sqrt{k}$.
 We guess the index $i$ and the vertices in $Z_i \cap S$.
 Note that the number of guesses is bounded by $k \cdot |\varGamma|^{\sqrt{k}} = |\varGamma|^{O(\sqrt{k})}$.
 Now suppose we already know $i$ and $Z_i \cap S$.
 Thus, we know that there exists a feasible solution that is disjoint from $Z_i \backslash (Z_i \cap S)$.
 We can find such a solution by applying the standard dynamic programming approach on a tree decomposition of $G/(Z_i \backslash (Z_i \cap S))$ with running time $(|D|+\delta)^{O(w)} \cdot |\varGamma|^{O(1)}$ where $w$ is the width of the tree decomposition and $D$ is the domain of $\varGamma$.
 By Theorem~\ref{thm-contraction}, $\tw(G/(Z_i \backslash (Z_i \cap S))) = O(\sqrt{k})$ and the desired $(|\varGamma|+\delta)^{O(\sqrt{k})}$-time follows.

 The key insight for the DP on the tree decomposition is the following.
 Since $Z_i \backslash (Z_i \cap S)$ is disjoint from the solution, these vertices are ``undeletable'' variables of $\varGamma$.
 Since we restrict our attention to \emph{permutation} CSP, each connected component of $G[Z_i \backslash (Z_i \cap S)]$ can have only $|D|$ satisfying assignments (since the value of one variable already determines the values of the others).
 This essentially allows us to treat each connected component of $G[Z_i \backslash (Z_i \cap S)]$ as a single vertex (or variable).
 Equivalently, we can contract the part $Z_i \backslash (Z_i \cap S)$ in $G$ and do DP on a tree decomposition of $G/(Z_i \backslash (Z_i \cap S))$.
\end{proof}

Many natural problems can be formulated as {\sc Permutation CSP Vertex/Edge Deletion} problems including the following examples (see \cite{MarxMNT22} for more details):
\begin{itemize}
 \item {\sc Odd Cycle Transversal}: The input consists of a graph $G$ and a parameter $k$. The goal is to find a subset $S \subseteq V(G)$ of at most $k$ vertices such that $G - S$ contains no odd cycle.
 \item {\sc Vertex Multiway Cut}: The input consists of a graph $G$, a terminal set $T \subseteq V(G)$, and a parameter $k$. The goal is to find a subset $S \subseteq V(G)$ of at most $k$ vertices such that no two vertices in $T$ lie in the same connected component of $G - S$.
 \item {\sc Group Feedback Vertex Set}: The input consists of a graph $G$ with a map $\lambda:V(G) \times V(G) \rightarrow \varGamma$ where $\varGamma$ is a group and $\lambda(u,v) = (\lambda(v,u))^{-1}$ and a parameter $k$. The goal is to find a subset $S \subseteq V(G)$ of at most $k$ vertices such that $G - S$ contains no \emph{non-null} cycle where a cycle $(v_0,v_1,\dots,v_r=v_0)$ is non-null if $\prod_{i=1}^r \lambda(v_{i-1},v_i) = \mathbf{1}$.
 \item {\sc Component Order Connectivity}: The input consists of a graph $G$, a threshold $\delta$, and a parameter $k$. The goal is to find a subset $S \subseteq V(G)$ of at most $k$ vertices such that every component of $G-S$ is of size at most $\delta$.
 \item The edge-deletion version of the above problems where the goal is to find a subset $S \subseteq E(G)$ of at most $k$ \emph{edges} such that $G-S$ satisfies the corresponding properties.
\end{itemize}

\begin{corollary}
 There exist algorithms with running time $n^{O(\sqrt{k})}$ for {\sc Odd Cycle Transversal}, {\sc Vertex Multiway Cut}, {\sc Vertex Multicut}, {\sc Group Feedback Vertex Set}, {\sc Component Order Connectivity}, and the edge-deletion version of these problems on $H$-minor-free graphs, where $n$ is the number of vertices of the input graph and $k$ is the solution-size parameter.
\end{corollary}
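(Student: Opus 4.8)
The plan is to observe that each problem in the list is an instance of {\sc Permutation CSP Vertex Deletion} or {\sc Permutation CSP Edge Deletion} whose size parameter $|\varGamma|+\delta$ is polynomially bounded in $n$, and then to apply Theorem~\ref{thm-permCSP}. These formulations are developed in \cite{MarxMNT22} (and several of them also in \cite{BandyapadhyayLLSJ22}); I would recall them and check the size bounds. For {\sc Odd Cycle Transversal} one takes the variable set $V(G)$, domain $D=\{0,1\}$, a constraint $(u,v,\{(0,1),(1,0)\})$ for each edge $uv$, and the trivial size constraint $w\equiv 0$, $\delta=0$: deleting a vertex set (resp.\ edge set) so that every connected component of the resulting instance becomes satisfiable is exactly computing an odd cycle transversal (resp.\ an edge bipartization set), since a connected graph is properly $2$-colourable iff it is bipartite. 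For {\sc Vertex/Edge Multiway Cut} and {\sc Vertex/Edge Multicut} one uses, following \cite{MarxMNT22}, a domain of size $n^{O(1)}$ recording for each vertex enough information to certify the required separation (essentially which terminal, or which side of which terminal pair, it is reachable from), again with a trivial size constraint. For {\sc Group Feedback Vertex Set} the domain is the group, the constraint on an edge $uv$ forces $\alpha(v)=\lambda(u,v)\cdot\alpha(u)$ (a permutation constraint), and satisfiability of a connected component is equivalent to that component having no non-null cycle; restricting to the subgroup generated by the labels that actually occur keeps $|D|=n^{O(1)}$. Finally, {\sc Component Order Connectivity} becomes the permutation CSP with a single-element domain and no nontrivial constraints, weight $w\equiv 1$, and threshold $\delta$: here a connected component $Y$ is ``satisfiable subject to $(w,\delta)$'' precisely when $|Y|\le\delta$, so a feasible deletion set is exactly one after which every component has at most $\delta$ vertices, and we may assume $\delta<n$ (otherwise $S=\emptyset$ works).

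In every case the underlying graph $G_\varGamma$ is a subgraph of the input graph $G$, hence $H$-minor-free, and $|\varGamma|+\delta=|X|+|D|+\delta=n^{O(1)}$. Substituting this into Theorem~\ref{thm-permCSP} yields running time $(|\varGamma|+\delta)^{O(\sqrt{k})}=n^{O(\sqrt{k})}$ for each of these problems and for its edge-deletion version, which is the claimed bound.

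The work here is entirely bookkeeping; the two points that need a little care — neither of them new — are (i) matching the ``satisfiable on every connected component'' semantics of {\sc Permutation CSP Deletion} to each target problem (immediate for the cut problems; for {\sc Group Feedback Vertex Set} it is the standard fact that a connected group-labelled graph has a consistent potential iff every cycle is null; for {\sc Component Order Connectivity} it relies on the size constraint being verified per component, as in the definition of the problem) and (ii) keeping $|D|$ polynomial, the only slightly delicate instance being {\sc Group Feedback Vertex Set}, where one passes to the subgroup generated by the edge labels. Since all of this is carried out in \cite{MarxMNT22}, the proof amounts to quoting those encodings and combining them with Theorem~\ref{thm-permCSP}.
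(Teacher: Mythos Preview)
Your treatment of {\sc Odd Cycle Transversal}, {\sc Vertex Multiway Cut}, {\sc Group Feedback Vertex Set}, and {\sc Component Order Connectivity} (and their edge-deletion versions) is correct and matches the paper's proof: each is cast as a {\sc Permutation CSP Vertex/Edge Deletion} instance with $|\varGamma|+\delta = n^{O(1)}$, and Theorem~\ref{thm-permCSP} gives the $n^{O(\sqrt{k})}$ bound.

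There is, however, a genuine gap for {\sc Vertex Multicut} (and {\sc Edge Multicut}). You assert that, ``following \cite{MarxMNT22}'', Multicut admits a permutation-CSP encoding with domain of size $n^{O(1)}$, but neither \cite{MarxMNT22} nor the present paper does this; the paper explicitly \emph{excludes} Multicut from the list of problems handled via Theorem~\ref{thm-permCSP}. The obstruction is real: the natural encoding (record, for each vertex, which side of each of the $r$ terminal pairs it lies on, with identity constraints on edges) needs domain $\{0,1\}^r$, and $r$ can be as large as $\Theta(n^2)$, so $|D|$ is not polynomial in $n$. There is no evident way to compress this while keeping both the permutation property of the constraints and the ``satisfiable on every connected component'' semantics. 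The paper instead handles Multicut by reusing the \emph{mechanism} behind Theorem~\ref{thm-permCSP} directly: apply Theorem~\ref{thm-contraction} with $p=\lceil\sqrt{k}\rceil$, guess $i$ and $Z_i\cap S$, contract $Z_i\setminus S$, and run a dedicated bounded-treewidth DP for Multicut on the contracted graph (cf.\ \cite[Section~5.3]{BandyapadhyayLLSJ22}). Your argument needs this separate treatment for Multicut rather than an appeal to Theorem~\ref{thm-permCSP}.
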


\begin{proof}
 All problems except {\sc Vertex Multicut} (and {\sc Edge Multicut}) can be formulated as {\sc Permutation CSP Vertex/Edge Deletion} problems, so the $n^{O(\sqrt{k})}$-time algorithms directly follow from Theorem~\ref{thm-permCSP}.
 The problem {\sc Vertex Multicut} (and {\sc Edge Multicut}) can be solved in exactly the same way as described in the proof of Theorem~\ref{thm-permCSP} (see also \cite[Section~5.3]{BandyapadhyayLLSJ22}).
\end{proof}

Using the minor-preserving (quasi-)polynomial kernels from \cite{MarxMNT22} or the (quasi-)polynomial-size candidate sets given in \cite{BandyapadhyayLLSJ22}, we can also obtain (randomized) subexponential-time FPT algorithms for these problems.
Here the notation $\widetilde{O}(\cdot)$ hides $\log k$ factors.

\begin{corollary}
 There exist randomized algorithms with running time $2^{\widetilde{O}(\sqrt{k})} \cdot n^{O(1)}$ for {\sc Odd Cycle Transversal}, {\sc Vertex Multiway Cut}, {\sc Group Feedback Vertex Set} (for a fixed group), and the edge-deletion version of these problems on $H$-minor-free graphs, where $n$ is the number of vertices of the input graph and $k$ is the solution-size parameter.
\end{corollary}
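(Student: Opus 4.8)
The plan is to combine Theorem~\ref{thm-permCSP} with a size-reduction preprocessing step, following the recipe sketched at the end of Section~\ref{sec-app}. All problems in the statement (together with their edge-deletion variants) are, up to the mild modification for \textsc{Vertex Multiway Cut} that is already handled inside the proof of Theorem~\ref{thm-permCSP}, special cases of \textsc{Permutation CSP Vertex/Edge Deletion} in which the domain $D$ has size $O(1)$ (the group is fixed for \textsc{Group Feedback Vertex Set}; $|D|=2$ for \textsc{Odd Cycle Transversal}; \textsc{Vertex Multiway Cut} uses a constant-size domain in the construction of \cite{MarxMNT22}) and in which no size constraint is needed, so one may take $\delta=0$. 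Consequently Theorem~\ref{thm-permCSP} already solves each of them in $(|\varGamma|+\delta)^{O(\sqrt{k})}=n^{O(\sqrt{k})}$ time; the only remaining task is to first bring the instance size down to quasi-polynomial in $k$.

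The first step I would take is to apply, to each of these problems, the minor-preserving (randomized) quasi-polynomial kernelization of \cite{MarxMNT22}. This runs in $n^{O(1)}$ time and outputs an equivalent instance $(\varGamma',w',0,k')$ with $k'\le k$ whose underlying graph $G'$ is still $H$-minor-free — the crucial ``minor-preserving'' property, without which Theorem~\ref{thm-contraction}, and hence Theorem~\ref{thm-permCSP}, would not apply to the reduced instance — and with $|V(G')|=k^{\textup{polylog}(k)}$, so $|\varGamma'|=k^{\textup{polylog}(k)}$. For the cut-type problems (\textsc{Vertex Multiway Cut} and the multicut variants) one has no deterministic polynomial kernel and must instead use randomized (quasi-)polynomial kernels of Wahlström type, cf.~\cite{Wahlstrom20}; this is the source of the word ``randomized'' in the statement. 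Alternatively, and this is the route of \cite{BandyapadhyayLLSJ22}, in place of kernelizing one computes a candidate set of size $2^{\textup{polylog}(k)}\cdot n^{O(1)}$ guaranteed to contain a solution if one exists, and runs the same dynamic programming over the contraction decomposition of Theorem~\ref{thm-contraction} restricted to that set; the arithmetic below is unchanged.

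The second step is to run the algorithm of Theorem~\ref{thm-permCSP} on the kernel. Its running time is $(|\varGamma'|+0)^{O(\sqrt{k})}=\bigl(k^{\textup{polylog}(k)}\bigr)^{O(\sqrt{k})}=2^{O(\sqrt{k}\cdot\textup{polylog}(k))}=2^{\widetilde{O}(\sqrt{k})}$; adding the $n^{O(1)}$ cost of the kernelization phase gives the claimed $2^{\widetilde{O}(\sqrt{k})}\cdot n^{O(1)}$ bound. It is worth recalling (from the proof of Theorem~\ref{thm-permCSP}) why the contraction is sound: contracting $Z_i\setminus(Z_i\cap S)$ for a guessed $i$ and $Z_i\cap S$ only merges undeletable variables, each connected component of $G[Z_i\setminus(Z_i\cap S)]$ has at most $|D|$ satisfying assignments because $\varGamma$ is a \emph{permutation} CSP, and $\tw\bigl(G/(Z_i\setminus(Z_i\cap S))\bigr)=O(\sqrt{k})$ by Theorem~\ref{thm-contraction}, so the bounded-treewidth dynamic program costs $|D|^{O(\sqrt{k})}\cdot|\varGamma'|^{O(1)}$, absorbed into the bound since $|D|=O(1)$ and $|\varGamma'|=k^{\textup{polylog}(k)}$.

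The step I expect to require the most care is not any treewidth or contraction argument — those are supplied wholesale by Theorem~\ref{thm-contraction} and Theorem~\ref{thm-permCSP} — but the per-problem bookkeeping that certifies, for each item in the list, that (i) it is (a mild variant of) \textsc{Permutation CSP Vertex/Edge Deletion} with $|D|=O(1)$ and $\delta=0$, and (ii) a \emph{minor-preserving} kernel (or candidate set) of quasi-polynomial size exists, possibly only via randomization. Both (i) and (ii) are established in \cite{MarxMNT22,BandyapadhyayLLSJ22}; the only new observation is that, with Theorem~\ref{thm-contraction} now proved unconditionally, the conditional statements of \cite{MarxMNT22} become unconditional for $H$-minor-free inputs.
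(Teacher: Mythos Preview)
Your proposal is correct and follows exactly the route the paper takes: the paper's entire argument for this corollary is the single sentence preceding it (``Using the minor-preserving (quasi-)polynomial kernels from \cite{MarxMNT22} or the (quasi-)polynomial-size candidate sets given in \cite{BandyapadhyayLLSJ22}\dots''), and you have correctly unpacked that sentence into kernelize-then-apply-Theorem~\ref{thm-permCSP}. One small inaccuracy worth noting: for \textsc{Vertex Multiway Cut} the domain $D$ is the terminal set and need not have size $O(1)$, but this does not affect your argument since after kernelization $|D|\le|\varGamma'|=k^{\textup{polylog}(k)}$ and the bound $(|\varGamma'|)^{O(\sqrt{k})}=2^{\widetilde{O}(\sqrt{k})}$ goes through unchanged.
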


\subsection{Two-Connected Permutation CSP Deletion Problems}

To extend the scope of problems covered by this approach, \cite{MarxMNT22} also considers the {\sc 2-Conn Permutation CSP Deletion} problem.
The problem is similar to {\sc Permutation CSP Deletion} defined above, but the satisfiability we care about is on the 2-connected components of the underlying graph (after deletion), instead of connected components.

The {\sc 2-Conn Permutation CSP Vertex Deletion} problem is a parameterized problem that takes as input a tuple $(\varGamma,w,\delta,k)$, where $\varGamma = (X,D,\mathcal{C})$ is a permutation CSP instance, $(w,\delta)$ is a size constraint for $\varGamma$, and $k$ is the solution size parameter.
The goal is to find a subset $X' \subseteq X$ with $|X'| \leq k$ such that $\varGamma - X'$ is satisfiable subject to $(w,\delta)$\footnote{Strictly speaking, $(w,\delta)$ is a size constraint for $\varGamma$ instead of $\varGamma - X'$. But it naturally induces a size constraint for $\varGamma - X'$ by restricting $w$ to $(X \backslash X') \times D$. For convenience, we still denote it by $(w,\delta)$ here.} on every subset $Y \subseteq X \backslash X'$ satisfying that $G_{\varGamma - X'}[Y]$ is a 2-connected component of $G_{\varGamma - X'}$.

Similarly, the {\sc 2-Conn Permutation CSP Edge Deletion} problem takes the same input, but the goal is to find a subset $\mathcal{C}' \subseteq \mathcal{C}$ with $|\mathcal{C}'| \leq k$ such that $\varGamma - \mathcal{C}'$ is satisfiable subject to $(w,\delta)$ on every subset $Y \subseteq X$ satisfying that $G_{\varGamma - \mathcal{C}'}[Y]$ is a 2-connected component of $G_{\varGamma - \mathcal{C}'}$.

\begin{theorem}\label{thm-2conn}
 Let $H$ be a fixed graph.
 Then the {\sc 2-Conn Permutation CSP Vertex Deletion} (or the {\sc 2-Conn Permutation CSP Edge Deletion}) problem on $H$-minor-free instances $(\varGamma,w,\delta,k)$ can be solved in $(|\varGamma|+\delta)^{O(\sqrt{k})}$ time.
\end{theorem}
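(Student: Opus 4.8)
\noindent The plan is to reuse, essentially as a black box, the reduction of \cite{MarxMNT22}: it is proved there that if the robust contraction decomposition theorem (stated as a conjecture in \cite{MarxMNT22}, and now established here as Theorem~\ref{thm-contraction}) holds for $H$-minor-free graphs, then both {\sc 2-Conn Permutation CSP Vertex Deletion} and {\sc 2-Conn Permutation CSP Edge Deletion} on $H$-minor-free instances $(\varGamma,w,\delta,k)$ are solvable in $(|\varGamma|+\delta)^{O(\sqrt k)}$ time. Since Theorem~\ref{thm-contraction} is now unconditional, the statement follows immediately. Below I sketch the pipeline and indicate where it differs from the proof of Theorem~\ref{thm-permCSP}.

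First I would apply Theorem~\ref{thm-contraction} to the underlying graph $G=G_\varGamma$ with $p=\lceil\sqrt k\rceil$, obtaining a partition $Z_1,\dots,Z_p$ of $V(G)$ with $\tw(G/(Z_i\setminus Z'))=O(p+|Z'|)$ for all $i\in[p]$ and $Z'\subseteq Z_i$. For an unknown solution $S$ (at most $k$ variables in the vertex case, at most $k$ constraints in the edge case), an averaging argument gives an index $i$ whose trace of $S$ has size $O(\sqrt k)$: in the vertex case $|Z_i\cap S|\le\sqrt k$, and in the edge case the sets $E(G[Z_1]),\dots,E(G[Z_p])$ are pairwise disjoint, so some $Z_i$ contains at most $\sqrt k$ solution constraints among its internal edges, exactly as in \cite{MarxMNT22,BandyapadhyayLLSJ22}. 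Guessing $i$ and this trace costs $(|\varGamma|+\delta)^{O(\sqrt k)}$ guesses; for a correct guess we know there is a feasible solution disjoint from the undeletable remainder $Z'$ of $Z_i$, and contracting $Z'$ in $G$ yields, by robustness, a host graph of treewidth $O(\sqrt k)$ on which to run a dynamic program.

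The only content beyond Theorem~\ref{thm-permCSP} is the bounded-treewidth DP for satisfiability of the \emph{two-connected} components of the contracted, deleted instance, which I would import verbatim from \cite{MarxMNT22}. Two points require care. First, contracting a connected undeletable set of variables must be shown safe: since $\varGamma$ is a permutation CSP, each connected component of the undeletable part has at most $|D|$ satisfying assignments (the value of one variable propagates to all others), so it collapses to a single ``super-variable''; the DP tables then carry, for each such collapsed component, its $O(|D|)$ admissible assignments together with the weight each contributes towards the size constraint, so that collapsing does not change which $2$-connected components of the deleted graph are satisfiable subject to $(w,\delta)$. Second, the DP operates over the block--cut tree of the contracted graph rather than over its connected components: at each bag it records the compatibility of partial assignments on the adhesion within the block currently being assembled, together with a rounded profile of the accumulated weight, giving time $(|D|+\delta)^{O(w)}\cdot|\varGamma|^{O(1)}$ for width $w$. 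With $w=O(\sqrt k)$ this is $(|\varGamma|+\delta)^{O(\sqrt k)}$, and the number of guesses multiplies in without changing the exponent. The edge-deletion version is obtained either by the same scheme or by the standard reduction to the vertex-deletion version, preserving $H$-minor-freeness and the parameter up to constants.

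The main obstacle, i.e.\ the only step that is genuinely more than a restatement of the connected case, is checking that the block-structured DP still certifies satisfiability correctly after the contraction of undeletable components. However, since the reduction of \cite{MarxMNT22} uses Theorem~\ref{thm-contraction} purely as a black box, nothing in its analysis needs to be re-examined for our particular decomposition, and the argument transfers without change.
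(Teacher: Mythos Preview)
Your top-level argument---that the theorem follows by citing \cite{MarxMNT22} as a black box now that Theorem~\ref{thm-contraction} is proved---is correct and is exactly what the paper does.

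However, the sketch you give of \emph{how} the algorithm from \cite{MarxMNT22} works is not accurate, and the inaccuracy is precisely at the point you flag as the ``main obstacle.'' Your pipeline is: contract the undeletable part $Z'$, then run a DP over the block--cut tree of the contracted graph. The problem is that contracting a connected undeletable set can merge several $2$-connected components of $G-S$ into a single block of the contracted graph (a path in $Z'$ may pass through a cut vertex of $G-S$), so the block structure of $(G/Z')-S$ need not correspond to the block structure of $G-S$. Consequently, satisfiability on each block of the contracted graph is \emph{not} the same question as satisfiability on each block of the original deleted graph, and your claim that ``collapsing does not change which $2$-connected components of the deleted graph are satisfiable'' does not hold in general.

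The actual algorithm in \cite{MarxMNT22}, as the paper briefly indicates, does not proceed this way. It first uses Theorem~\ref{thm-contraction} to derive a ``body guessing'' lemma, which lets one enumerate in $(|\varGamma|+\delta)^{O(\sqrt{k})}$ time candidates for (essentially) the $2$-connected components of $G-S$, and then runs a DP on a bounded-treewidth structure. So while your black-box invocation is fine for the proof as written, the mechanism you describe would not work on its own; if you want to sketch the contents of the black box, you should point to the body-guessing step rather than a direct block--cut-tree DP after contraction.
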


\begin{proof}
 Again, it is shown in \cite{MarxMNT22} that if Theorem~\ref{thm-contraction} (which is stated as a conjecture in \cite{MarxMNT22}) is true, then {\sc 2-Conn Permutation CSP Vertex/Edge Deletion} on $H$-minor-free instances can be solved in $(|\varGamma|+\delta)^{O(\sqrt{k})}$ time, which completes the proof.
 Unlike Theorem~\ref{thm-permCSP}, the algorithm for {\sc 2-Conn Permutation CSP Vertex/Edge Deletion} is rather complicated.
 Roughly speaking, it is designed by first applying Theorem~\ref{thm-contraction} to obtain a so-called ``body guessing'' lemma and then using the body guessing lemma to (essentially) guess the 2-connected components in the solution (together with a DP on the tree decomposition).
 Discussing the technical details of this algorithm is out of the scope of this paper, and we refer the interested reader to \cite{MarxMNT22}.
\end{proof}

The following problems can be formulated as 2-connected permutation CSP deletion problems (see \cite{MarxMNT22} for more details):
\begin{itemize}
 \item {\sc Subset Feedback Vertex Set}: The input consists of a graph $G$, a terminal set $T \subseteq V(G)$, and a parameter $k$.
 The goal is to find a subset $S \subseteq V(G)$ of at most $k$ vertices such that every cycle in $G - S$ is disjoint from $T$.
 \item {\sc Subset Odd Cycle Transversal}: The input consists of a graph $G$, a terminal set $T \subseteq V(G)$, and a parameter $k$. The goal is to find a subset $S \subseteq V(G)$ of at most $k$ vertices such that every odd cycle in $G - S$ is disjoint from $T$.
 \item {\sc Subset Group Feedback Vertex Set}: The input consists of a graph $G$ with a map $\lambda\colon V(G) \times V(G) \rightarrow \varGamma$ where $\varGamma$ is a group and $\lambda(u,v) = (\lambda(v,u))^{-1}$, a terminal set $T \subseteq V(G)$, and a parameter $k$. The goal is to find a subset $S \subseteq V(G)$ of at most $k$ vertices such that every non-null cycle in $G - S$ is disjoint from $T$.
 \item {\sc 2-Conn Component Order Connectivity}:  The input consists of a graph $G$, a threshold $\delta$, and a parameter $k$. The goal is to find a subset $S \subseteq V(G)$ of at most $k$ vertices such that every every 2-connected component of $G-S$ is of size at most $\delta$.
 \item The edge-deletion version of the above problems where the goal is to find a subset $S \subseteq E(G)$ of at most $k$ \emph{edges} such that $G-S$ satisfies the corresponding properties.
\end{itemize}

\begin{corollary}
 There exist algorithms with running time $n^{O(\sqrt{k})}$ for {\sc Subset Feedback Vertex Set}, {\sc Subset Odd Cycle Transversal}, {\sc Subset Group Feedback Vertex Set}, {\sc 2-Conn Component Order Connectivity}, and the edge-deletion version of these problems on $H$-minor-free graphs, where $n$ is the number of vertices of the input graph and $k$ is the solution-size parameter.
\end{corollary}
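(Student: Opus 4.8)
The plan is to derive this corollary directly from Theorem~\ref{thm-2conn} by exhibiting each listed problem as an instance of {\sc 2-Conn Permutation CSP Vertex Deletion} (respectively, {\sc 2-Conn Permutation CSP Edge Deletion}) of polynomial size. First I would recall the encodings from \cite{MarxMNT22}: for {\sc Subset Feedback Vertex Set}, {\sc Subset Odd Cycle Transversal}, {\sc Subset Group Feedback Vertex Set}, and {\sc 2-Conn Component Order Connectivity}, the set of variables is the vertex set $V(G)$ of the input graph $G$, and the constraint (hyper)graph $G_\varGamma$ coincides with $G$ — so whenever $G$ is $H$-minor-free, the produced permutation CSP instance is $H$-minor-free, as required by Theorem~\ref{thm-2conn}. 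The satisfiability condition imposed on a $2$-connected block $Y$ of $G-S$ captures exactly the local obstruction: e.g.\ for {\sc Subset Feedback Vertex Set} the block $Y$ must be terminal-free unless $|Y|\le 2$ (a block carrying a cycle has at least three vertices, and a $2$-connected graph on $\ge 3$ vertices has a cycle through any prescribed vertex), and analogously with ``odd cycle'' / ``non-null cycle'' for the other two subset variants; for {\sc 2-Conn Component Order Connectivity} the constraint is simply $|Y|\le\delta$, encoded through the size constraint $(w,\delta)$.

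The second step is to check that $|\varGamma|+\delta = n^{O(1)}$ in each encoding, so that the bound of Theorem~\ref{thm-2conn} becomes $n^{O(\sqrt k)}$. We have $|X|=n$ always. For {\sc Subset Feedback Vertex Set}, {\sc Subset Odd Cycle Transversal} and {\sc 2-Conn Component Order Connectivity} the domain $D$ has constant (or at worst polynomial) size and the threshold $\delta$ is at most $n$; for {\sc Subset Group Feedback Vertex Set} the group is supplied as part of the input via the labelling $\lambda\colon V(G)\times V(G)\to\varGamma$, so only at most $n^2$ group elements are relevant and the effective domain has size $n^{O(1)}$. Hence $(|\varGamma|+\delta)^{O(\sqrt k)} = n^{O(\sqrt k)}$ in all cases, which gives the claimed running time for the vertex-deletion versions. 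The edge-deletion versions are obtained in the same way through {\sc 2-Conn Permutation CSP Edge Deletion} (again with a polynomial-size, $H$-minor-free instance), and Theorem~\ref{thm-2conn} applies verbatim.

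The only point requiring care is the correctness of the CSP formulations themselves — namely that each ``hitting'' or ``component-size'' property of $G-S$ is faithfully expressed as satisfiability of a binary \emph{permutation} CSP on every $2$-connected block, of polynomial size; but this is precisely what is established in \cite{MarxMNT22}, so it may be invoked rather than reproved here. Consequently there is no genuinely new obstacle: all the difficulty is already absorbed into Theorem~\ref{thm-contraction} and Theorem~\ref{thm-2conn}, and this corollary is a routine instantiation. (I would also remark that the corresponding $2^{\widetilde O(\sqrt k)}\cdot n^{O(1)}$ statements follow analogously by combining Theorem~\ref{thm-contraction} with the (quasi-)polynomial kernels / candidate-set bounds of \cite{BandyapadhyayLLSJ22,MarxMNT22}, as in the previous subsection.)
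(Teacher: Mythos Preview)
Your proposal is correct and follows exactly the paper's approach: the paper does not give a separate proof of this corollary at all, simply noting before the statement that the listed problems can be formulated as {\sc 2-Conn Permutation CSP Vertex/Edge Deletion} (with a reference to \cite{MarxMNT22}) and letting Theorem~\ref{thm-2conn} do the work. Your write-up is in fact more detailed than the paper's, spelling out why $|\varGamma|+\delta = n^{O(1)}$ in each case; the closing remark about $2^{\widetilde O(\sqrt k)}\cdot n^{O(1)}$ bounds is extraneous here, as that is the content of the subsequent corollary.
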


Using the minor-preserving (quasi-)polynomial kernels for {\sc Subset Feedback Vertex Set} and a reduction from {\sc Subset Feedback Edge Set} to  {\sc Subset Feedback Vertex Set} given in \cite{MarxMNT22}, we can also obtain (randomized) subexponential-time FPT algorithms for these two problems.

\begin{corollary}
 There exist randomized algorithms with running time $2^{\widetilde{O}(\sqrt{k})} \cdot n^{O(1)}$ for {\sc Subset Feedback Vertex Set} and {\sc Subset Feedback Edge Set} on $H$-minor-free graphs, where $n$ is the number of vertices of the input graph and $k$ is the solution-size parameter.
\end{corollary}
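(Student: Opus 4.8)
The plan is to compose the polynomial-time $n^{O(\sqrt{k})}$ algorithm obtained above (via Theorem~\ref{thm-2conn}, using that \textsc{Subset Feedback Vertex Set} is an instance of \textsc{2-Conn Permutation CSP Vertex Deletion}) with a small kernel. First I would handle \textsc{Subset Feedback Vertex Set}. By the results of \cite{MarxMNT22}, this problem admits a \emph{minor-preserving} randomized kernel of (quasi-)polynomial size on $H$-minor-free graphs: in randomized polynomial time one obtains an equivalent instance $(G',T',k')$ with $k'\le k$, with $G'$ still $H$-minor-free, and with $|V(G')| = k^{\mathrm{polylog}(k)}$. Running the $n^{O(\sqrt{k})}$ algorithm on $(G',T',k')$ then takes time $|V(G')|^{O(\sqrt{k'})} = 2^{O(\sqrt{k}\cdot\mathrm{polylog}(k))} = 2^{\widetilde{O}(\sqrt{k})}$; adding the randomized polynomial time spent on the original $n$-vertex graph yields the claimed $2^{\widetilde{O}(\sqrt{k})}\cdot n^{O(1)}$ bound.

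For \textsc{Subset Feedback Edge Set} I would first apply the polynomial-time reduction from \textsc{Subset Feedback Edge Set} to \textsc{Subset Feedback Vertex Set} given in \cite{MarxMNT22}. This reduction changes the solution-size parameter by at most a constant factor, and since it acts by a local gadget replacement of edges it keeps the instance in a minor-closed class: an $H$-minor-free input is mapped to an $H'$-minor-free instance for some fixed $H'$ depending only on $H$. Feeding the output into the algorithm of the previous paragraph gives the same $2^{\widetilde{O}(\sqrt{k})}\cdot n^{O(1)}$ running time for \textsc{Subset Feedback Edge Set}.

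The only non-routine ingredients are those imported from \cite{MarxMNT22}: that the (quasi-)polynomial kernel for \textsc{Subset Feedback Vertex Set} on $H$-minor-free graphs can be made minor-preserving, so that Theorem~\ref{thm-2conn} still applies to the kernelized instance, and that the \textsc{Subset Feedback Edge Set} $\to$ \textsc{Subset Feedback Vertex Set} reduction preserves membership in such a minor-closed class. The hard part will be checking that these external statements hold in exactly the cited form; once they are in place, the running-time composition is immediate, with $\widetilde{O}(\cdot)$ absorbing both the $\mathrm{polylog}(k)$ overhead of the kernel size and the $\log$ factor hidden in $n^{O(\sqrt{k})} = 2^{O(\sqrt{k}\log n)}$ evaluated at $n = |V(G')|$.
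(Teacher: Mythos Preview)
Your approach matches the paper's: compose the $n^{O(\sqrt{k})}$ algorithm from Theorem~\ref{thm-2conn} with the minor-preserving (quasi-)polynomial kernel for \textsc{Subset Feedback Vertex Set} from \cite{MarxMNT22}, and handle \textsc{Subset Feedback Edge Set} via the reduction to \textsc{Subset Feedback Vertex Set} also given in \cite{MarxMNT22}. The paper states this in one sentence without spelling out the running-time calculation, so your proposal is essentially an expanded version of the same argument.
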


\bibliographystyle{plainurl}
\bibliography{refs}

\end{document}